\numberwithin{equation}{section}%
\newcommand{\Z}{\mathbb{Z}}
\newcommand{\C}{\mathbb{C}}
\newcommand{\R}{\mathbb{R}}
\renewcommand{\i}{\mathbf{i}}
\DeclareMathOperator{\sgn}{\mathrm{sgn}}
\newcommand{\al}{\alpha}
\newcommand{\si}{\sigma}
\newcommand{\la}{\lambda}
\newcommand{\be}{\beta}
\newcommand{\ga}{\boldsymbol\gamma}
\newcommand{\Weyl}[1]{\ensuremath{\mathbb{W}}^{#1}}%Weyl chamber
\newcommand{\tWeyl}[1]{\ensuremath{\widetilde{\mathbb{W}}}^{#1}}%ASEP Weyl chamber
\newcommand{\z}{{\vec z}}
\newcommand{\vxi}{{\vec \xi}}
\newcommand{\vom}{{\vec \varsigma}}
\newcommand{\om}{{\varsigma}}
\newcommand{\w}{{\vec w}}
\newcommand{\n}{{\vec n}}
\renewcommand{\c}{{\vec c}}
\newcommand{\y}{{\vec y}}
\newcommand{\x}{{\vec x}}
\newcommand{\Psibwd}{\Psi^\mathrm{bwd}}
\newcommand{\Psicfwd}{\Psi^\mathrm{cfwd}}
\newcommand{\Psifwd}{\Psi^\mathrm{fwd}}
\newcommand{\Psir}{\Psi^{r}}
\newcommand{\Psird}{\Psi^{r;\dilp}}
\newcommand{\Psil}{\Psi^{\ell}}
\newcommand{\Psild}{\Psi^{\ell;\dilp}}
\newcommand{\Phir}{\Phi^{r}}
\newcommand{\Phird}{\Phi^{r;\dilp}}
\newcommand{\Phil}{\Phi^{\ell}}
\newcommand{\Phild}{\Phi^{\ell;\dilp}}
\newcommand{\Psiasep}{\Psi^{\mathrm{ASEP}}}
\newcommand{\Phiasep}{\Phi^{\mathrm{ASEP}}}
\newcommand{\Refl}{\mathcal{R}} %reflection operator
\newcommand{\Psiswap}{\mathcal{P}} %operator which swaps Psi's
\newcommand{\ev}{\mathsf{ev}_{\mu,\nu}}%eigenvalues
\newcommand{\evasep}{\mathsf{ev}_{\mathrm{ASEP}}}%eigenvalues
\newcommand{\Sm}{\mathcal{S}_{q,\nu}}%S-matrix
\newcommand{\Smasep}{\mathcal{S}_{\mathrm{ASEP}}}%S-matrix
\newcommand{\Smd}{\mathcal{S}_{q,\nu,\dilp}}%S-matrix
\newcommand{\Smv}{\mathcal{S}_{6V}}%S-matrix
\newcommand{\Vand}{\mathbf{V}}%Vandermonde determinant
\let\oldphi\phi \let\phi\varphi \let\varphi\oldphi%better \phi, for transition probabilities of q,mu,nu Boson
\newcommand{\Yb}[1]{\ensuremath{\mathbb{Y}}^{#1}} %space of vectors \y
\newcommand{\Ybn}[1]{\ensuremath{\mathbb{Y}}_{#1}} %space of vectors \y
\newcommand{\Xbn}[1]{\ensuremath{\mathbb{X}}_{#1}} %state space of TASEP \x
\newcommand{\st}{\mathfrak{m}_{q,\nu}} % stationary measure of q,mu,nu Boson
\newcommand{\std}{\mathfrak{m}_{q,\nu,\dilp}} % stationary measure of q,mu,nu Boson
\newcommand{\Pld}{\mathcal{F}^{q,\nu}} %direct  transform
\newcommand{\Pli}{\mathcal{J}^{q,\nu}} %inverse  transform
\newcommand{\Plspatial}{\mathcal{K}^{q,\nu}} %K  transform
\newcommand{\Plspectral}{\mathcal{M}^{q,\nu}} %M  transform
\newcommand{\Plde}{\mathcal{F}} %direct  transform
\newcommand{\Plie}{\mathcal{J}} %inverse  transform
\newcommand{\Plspatiale}{\mathcal{K}} %K  transform
\newcommand{\Plspectrale}{\mathcal{M}} %M  transform
\newcommand{\Across}{\mathsf{A}} % TW cross-term
\newcommand{\Wc}{\mathcal{W}} %Functions in spatial variables
\newcommand{\tWc}{\widetilde{\mathcal{W}}} %Functions in spatial variables
\newcommand{\Cc}{\mathcal{C}} %Functions in spectral variables
\newcommand{\Plm}{\mathsf{m}} %Plancherel measure for spectral bilinear pairing
\newcommand{\llangle}[0]{\ensuremath{\big\langle}} % angle bracket
\newcommand{\rrangle}[0]{\ensuremath{\big\rangle}} % angle bracket
\newcommand{\xinu}{\Xi} %\xi^\nu map between two forms of spectral variables
\newcommand{\xiasep}{\Xi_{\mathrm{ASEP}}} %\xi^\nu map between two forms of spectral variables, ASEP case
\newcommand{\LHS}[1]{\mathsf{LHS}\big[#1\big]} %left-hand side of the spectral biorthogonality
\newcommand{\Hbwd}{\ensuremath{\mathcal{H}^{\mathrm{bwd}}_{q,\mu,\nu}}} %backward Markov operator
\newcommand{\Hfwd}{\ensuremath{\mathcal{H}^{\mathrm{fwd}}_{q,\mu,\nu}}} %forward Markov operator
\newcommand{\Hcfwd}{\ensuremath{\mathcal{H}^{\mathrm{cfwd}}_{q,\mu,\nu}}} %cforward Markov operator
\newcommand{\Abwd}{\ensuremath{\mathcal{A}^{\mathrm{bwd}}_{q,\mu,\nu}}} %backward Markov operator piece
\newcommand{\Lbwd}{\ensuremath{\mathcal{L}^{\mathrm{bwd}}_{q,\mu,\nu}}} %backward Markov free operator
\newcommand{\Lcfwd}{\ensuremath{\mathcal{L}^{\mathrm{cfwd}}_{q,\mu,\nu}}} %cforward Markov free operator
\newcommand{\nbwd}{\nabla^{\mathrm{bwd}}_{\mu,\nu}} %backward mu nu operator
\newcommand{\nfwd}{\nabla^{\mathrm{fwd}}_{\mu,\nu}} %forward mu nu operator
\newcommand{\fbwd}{f^\mathrm{bwd}} %backward one-particle eigenfunctions
\newcommand{\Bbwd}{\mathcal{B}^{\mathrm{bwd}}_{q,\nu}} %forward boundary conditions
\newcommand{\Hasep}{\ensuremath{\mathcal{H}_{\tau}^{\mathrm{ASEP}}}} %ASEP operators
\newcommand{\Lasep}{\ensuremath{\mathcal{L}_{\tau}^{\mathrm{ASEP}}}}
\newcommand{\nasep}{\nabla^{\mathrm{ASEP}}_{\tau}} %asep mu nu operator
\newcommand{\Hxxz}{\ensuremath{\mathcal{H}_{\svd}^{\mathrm{XXZ}}}} %backward Markov operator
\newcommand{\tHxxz}{\ensuremath{\widetilde{\mathcal{H}}_{\svd}^{\mathrm{XXZ}}}} %backward Markov operator
\newcommand{\Id}{\mathrm{Id}} %Identity operator
\newcommand{\qhwt}{\mathsf{wt}} % notation for q-Hahn weight function
\newcommand{\Dil}[1]{\mathscr{G}_{#1}} %dilation operator
\newcommand{\dilp}{\boldsymbol\theta} %dilation parameter
\newcommand{\pasep}{\mathsf{p}}
\newcommand{\qasep}{\mathsf{q}}
\newcommand{\svd}{\boldsymbol{\Delta}} %Six vertex delta
\newcommand{\svpar}{\boldsymbol{\delta}} %one more parameter in six-vertex model
\newtheorem{proposition}{Proposition}[section]
\newtheorem{lemma}[proposition]{Lemma}
\newtheorem{corollary}[proposition]{Corollary}
\newtheorem{theorem}[proposition]{Theorem}
\theoremstyle{definition}
\newtheorem{definition}[proposition]{Definition}
\newtheorem{remark}[proposition]{Remark}
\begin{document}

\title[Spectral theory for interacting particle systems]{Spectral theory for interacting particle systems\\solvable by coordinate Bethe ansatz}
\author[A. Borodin]{Alexei Borodin}
\address{A. Borodin,
Massachusetts Institute of Technology,
Department of Mathematics,
77 Massachusetts Avenue, Cambridge, MA 02139-4307, USA,
and Institute for Information Transmission Problems, Bolshoy Karetny per. 19, Moscow 127994, Russia}
\email{borodin@math.mit.edu}

\author[I. Corwin]{Ivan Corwin}
\address{I. Corwin, Columbia University,
Department of Mathematics,
2990 Broadway,
New York, NY 10027, USA,
and Clay Mathematics Institute, 10 Memorial Blvd. Suite 902, Providence, RI 02903, USA,
and
Institut Henri Poincar\'e,
11 Rue Pierre et Marie Curie, 75005 Paris, France,
and Massachusetts Institute of Technology,
Department of Mathematics,
77 Massachusetts Avenue, Cambridge, MA 02139-4307, USA}
\email{ivan.corwin@gmail.com}

\author[L. Petrov]{Leonid Petrov}
\address{L. Petrov, University of Virginia, Department of Mathematics,
141 Cabell Drive, Kerchof Hall,
P.O. Box 400137,
Charlottesville, VA 22904-4137, USA
and
Northeastern University, Department of Mathematics, 360 Huntington ave.,
Boston, MA 02115, USA, and Institute for Information Transmission Problems, Bolshoy Karetny per. 19, Moscow, 127994, Russia}
\email{lenia.petrov@gmail.com}

\author[T. Sasamoto]{Tomohiro Sasamoto}
\address{T. Sasamoto,
Department of Physics, Tokyo Institute of Technology
2-12-1 Ookayama, Meguro-ku, Tokyo, 152-8550, JAPAN}

\email{sasamoto@phys.titech.ac.jp}

\date{}
\maketitle

\begin{abstract}
We develop spectral theory for the $q$-Hahn stochastic particle system introduced recently by Povolotsky. That is, we establish a Plancherel type isomorphism result which implies completeness and biorthogonality statements for the Bethe ansatz eigenfunctions of the system. Owing to a Markov duality with the $q$-Hahn TASEP (a discrete-time generalization of TASEP with particles' jump distribution being the orthogonality weight for the classical $q$-Hahn orthogonal polynomials), we write down moment formulas which characterize the fixed time distribution of the $q$-Hahn TASEP with general initial data.

The Bethe ansatz eigenfunctions of the $q$-Hahn system degenerate into eigenfunctions of other (not necessarily stochastic) interacting particle systems solvable by the coordinate Bethe ansatz. This includes the ASEP, the (asymmetric) six-vertex model, and the Heisenberg XXZ spin chain (all models are on the infinite lattice). In this way, each of the latter systems possesses a spectral theory, too. In particular, biorthogonality of the ASEP eigenfunctions which follows from the corresponding $q$-Hahn statement implies symmetrization identities of Tracy and Widom (for ASEP with either step or step Bernoulli initial configuration) as corollaries. Another degeneration takes the $q$-Hahn system to the $q$-Boson particle system (dual to $q$-TASEP) studied in detail in our previous paper (2013).

Thus, at the spectral theory level we unify two discrete-space regularizations of the Kardar--Parisi--Zhang equation / stochastic heat equation, namely, $q$-TASEP and ASEP.
\end{abstract}

\setcounter{tocdepth}{1}
\tableofcontents
\setcounter{tocdepth}{2}

\section{Introduction} % (fold)
\label{sec:introduction}

How to decompose functions onto distinguished bases --- harmonic analysis --- is a central theme touching many fields of mathematics. In this work we will answer this question for a class of bases which arise from the coordinate Bethe ansatz on $\Z$, and in so doing tie together (through various degenerations of our results) certain problems in probability, quantum integrable systems, symmetric functions and combinatorics. By working at a high algebraic level, our results are achieved through relatively soft methods (some of which do not survive various degenerations) such as contour deformations and residue computations.

The most basic instance of harmonic analysis --- Fourier analysis --- involves decomposing $L^2(\R)$ functions onto exponentials (or sine/cosines). Exponential functions are distinguished for many reasons. In particular, they are eigenfunctions for the Laplacian on $\R$ (as well as for the whole algebra of differential operators with constant coefficients that commute with the Laplacian). In a different direction, they are the characters of the (abelian) additive group of reals. Harmonic analysis generalizes in many directions, the simplest such involving higher dimensional $L^2(\R^k)$ functions. Still in the abelian setting, $\R^k$ may also be replaced by a locally compact abelian group.

Generalizing into the non-commutative setting of representations of Lie groups/algebras over real, complex, finite or p-adic fields, symmetric spaces, and further yet, to Hecke algebras of different sorts, the role of exponentials (or multivariate exponentials) is played by a hierarchy of symmetric polynomials and their limits that arise as characters and spherical functions. Like the exponentials, these polynomials are eigenfunctions for certain families of commuting operators. The operators, however, are generally non-local in variables indices, so unlike the (local) $k$-dimensional Laplacian, these operators mix all $k$-coordinates in non-trivial manners.

Locality is considered to be essential for physical models of the real world (with the notable exception of Coulomb and gravitational interactions), and it is highly desirable within models for equilibrium and non-equilibrium statistical and quantum mechanical systems. Thus, it is of paramount importance to discover such systems (described by Hamiltonians which often can be considered as versions of the Laplacian) whose eigenfunctions are explicit and form nice functional bases.

Early examples of such local, diagonalizable systems include the Heisenberg XXX and XXZ spin chains, as well as the transfer matrix for the six-vertex (or square ice) model. The coordinate Bethe ansatz, introduced in pioneering work of Bethe \cite{Bethe1931}, provides means to write down eigenfunctions for these Hamiltonians. It is worth noting that most presently known examples solvable by the coordinate Bethe ansatz are one-dimensional. 

Motivated by the study of finite volume statistical or quantum mechanical systems, research in this area dealt primarily with the problem of diagonalizing these operators acting on finite lattices with particular types of boundary conditions. These boundary conditions introduce restrictions on the eigenfunctions which go by the name of Bethe equations and typically are hard to analyze.

The motivation behind the present work is more probabilistic --- we seek to study fluctuation limit theorems 
for (stochastic) interacting particle systems and random growth processes on $\Z$ whose stochastic generators can be diagonalized via coordinate Bethe ansatz. While in physics it is typical to first 
consider finite systems and then take the limit of infinite system size,
working directly on $\Z$ (or~$\R$) is quite natural probabilistically. Moreover, certain Markov dualities enable us to reduce considerations involving infinite particle systems to those involving a finite (though arbitrarily large) number of particles.

Working on $\Z$ simplifies the coordinate Bethe ansatz as there are no boundary conditions, and hence no Bethe equations. On the other hand, there are now infinitely many eigenfunctions and the challenge becomes to figure out which ones participate in the diagonalization and with respect to which (spectral or Plancherel) measure. Results of this kind (generalized by the present work) regarding the XXX/XXZ model and quantum delta Bose gas go back to early work of Babbitt-Thomas \cite{BabbittThomas}, Babbitt-Gutkin \cite{BabbittGutkin},
Gutkin \cite{Gutkin}, Oxford \cite{Oxford1979}, Heckman-Opdam \cite{HeckmannOpdam1997}. 

In this paper we develop the above mentioned theory for what at the moment looks like the most general class of eigenfunctions diagonalizing vertex type models (and their degenerations) arising from the quantum affine algebra $U_q(\widehat{sl_2})$. In particular, these (and their degenerations) diagonalize Povolotsky's $q$-Hahn Boson system, the q-Boson system of \cite{SasamotoWadati1998}, \cite{BorodinCorwinPetrovSasamoto2013}, the six-vertex model, the Heisenberg XXZ and XXX spin chains (equivalently, the asymmetric simple exclusion process), Van Diejen's discrete delta Bose gas \cite{vanDiejen2004HL}, a semi-discrete delta Bose gas from \cite{BorodinCorwin2011Macdonald} and the continuous delta Bose gas. Using our results along with methods developed in related works (e.g. \cite{BorodinCorwin2011Macdonald}, \cite{BorodinCorwinFerrari2012}, \cite{BorodinCorwinFerrariVeto2013}) in the past few years, it is possible to access very precise asymptotic information about some of these systems (for various types of initial data) and probe universal limits as well as new phenomena.

The eigenfunctions we study are interesting in their own right. They form a one-parameter generalizations of the Hall-Littlewood symmetric polynomials. Interestingly, this is a different direction of generalization than that of the celebrated Macdonald symmetric polynomials. A particular degeneration ($\nu=0$) is, however, closely related to $t=0$ Macdonald symmetric polynomials \cite{BorodinCorwin2013discrete} (also known as $q$-Whittaker functions). It remains a mystery as to whether these two classes of symmetric polynomials (one arising in relation to non-local operators and the other in relation to local operators) can be united under a single generalization.

\subsection{Main results for the $q$-Hahn system eigenfunctions} % (fold)
\label{sub:main_results_for_the_qhahn}

The $q$-Hahn system
introduced by Povolotsky \cite{Povolotsky2013}
is a discrete-time
stochastic Markov dynamics on
$k$-particle configurations on $\Z$
(where $k\ge1$ is arbitrary) in which multiple particles at a
site are allowed (in fact, it is
a totally asymmetric zero-range process, see \cite{Liggett1985} for a general background).
At each step of the $q$-Hahn system dynamics,
independently at every occupied site
$i\in\Z$ with $y_i\ge1$ particles,
one randomly selects $s_i\in\{0,1,\ldots,y_i\}$
particles according to the probability distribution
\begin{align*}
	\phi_{q,\mu,\nu}(s_i\mid y_i)=\mu^{s_i}\frac{(\nu/\mu;q)_{s_i}(\mu;q)_{y_i-s_i}}{(\nu;q)_{y_i}}\frac{(q;q)_{y_i}}{(q;q)_{s_i}(q;q)_{y_i-s_i}},
	\qquad (a;q)_{n}:=\prod_{j=1}^{n}(1-aq^{j-1}),
\end{align*}
where $0<q<1$ and $0\le \nu\le\mu<1$
are three parameters of the model.
These selected $s_i$ particles
are immediately moved to the left (i.e., to site $i-1$).
This update occurs in parallel for each site.
See Fig.~\ref{fig:qBoson} in \S \ref{sub:hopping_distribution_and_q_hahn_orthogonality_weights},
left panel.

Configurations can be encoded by
vectors $\n=(n_1\ge \ldots\ge n_k)$, $n_i\in\Z$,
where $n_i$ is the position of the $i$-th particle from the right.
We denote by $\Weyl{k}$ the space of all such vectors.
Therefore, the backward Markov transition operator
$\Hbwd$ of the $q$-Hahn stochastic process
acts on the space of
compactly supported functions in the spatial variables $\n$.
We denote the latter space by $\Wc^{k}$.
The left and right eigenfunctions of the
operator $\Hbwd$ are, respectively\footnote{Note that the eigenfunctions do not belong to the space $\Wc^{k}$.}
\begin{align}
	\begin{array}{>{\displaystyle}rc>{\displaystyle}l}
	\Psil_{\z}(\n)&=&
	\sum_{\sigma\in S(k)}\prod_{1\le B<A\le k}
	\frac{z_{\sigma(A)}-qz_{\sigma(B)}}
	{z_{\sigma(A)}-z_{\sigma(B)}}\prod_{j=1}^{k}
	\left(\frac{1-z_{\sigma(j)}}{1-\nu z_{\sigma(j)}}\right)^{-n_j};
	\\
	\Psir_{\z}(\n)&=&
	(-1)^k(1-q)^{k}q^{\frac{k(k-1)}{2}}\st(\n)
	\sum_{\sigma\in S(k)}\prod_{1\le B<A\le k}
	\frac{z_{\sigma(A)}-q^{-1}z_{\sigma(B)}}
	{z_{\sigma(A)}-z_{\sigma(B)}}\prod_{j=1}^{k}
	\left(\frac{1-z_{\sigma(j)}}{1-\nu z_{\sigma(j)}}\right)^{n_j},
	\end{array}
	\label{eigen_intro}
\end{align}
where
$\z=(z_1,\ldots,z_k)\in(\C\setminus\{1,\nu^{-1}\})^{k}$,
and
$\st(\n)$ is an explicit quantity given in \eqref{stationary_measure}.
Here and below $S(k)$ is the symmetric group of all permutations of $\{1,2,\ldots,k\}$.
The corresponding eigenvalues are
$\ev(\z):=\prod_{j=1}^{k}\frac{1-\mu z_j}{1-\nu z_j}$.
Eigenfunctions \eqref{eigen_intro} were obtained in \cite{Povolotsky2013} by applying the coordinate
Bethe ansatz (a procedure dating back
to \cite{Bethe1931}) to the operator $\Hbwd$.
These eigenfunctions are also related to a deformation
of an affine Hecke algebra \cite{Takeyama2014}.
The latter object also leads to a stochastic interacting particle
system which is a continuous-time limit of the $q$-Hahn system.
Remarkably, the $q$-Hahn eigenfunctions depend only on the parameters
$(q,\nu)$, thus making $\mu$ an additional free parameter.
This implies that for fixed $q$ and
$\nu$ the operators $(\Hbwd)_{\mu\in[\nu,1)}$
form a commuting family.

Let $\Pld$ be the \emph{direct $q$-Hahn transform}
which takes a function $f\in\Wc^{k}$ in the spatial variables
$\n$ and produces a function in the spectral variables $\z$
according to
\begin{align*}
	(\Pld f)(\z)=\llangle f,\Psir_{\z}\rrangle_{\Wc^k}
	=\sum_{\n\in\Weyl{k}}f(\n)\Psir_{\z}(\n).
\end{align*}
The pairing above is the (obvious)
bilinear pairing in the space $\Wc^{k}$
\eqref{W_pairing}. The function
$\Pld f$ is a symmetric Laurent polynomial
in $(1-z_j)/(1-\nu z_j)$, $j=1,\ldots,k$.
We denote the space of
such Laurent polynomials by $\Cc^{k}_{z}$.

Let $\Pli$ be the \emph{inverse $q$-Hahn transform}
which maps Laurent polynomials $G\in\Cc^{k}_{z}$
to functions from $\Wc^{k}$
according to the following nested contour integration formula:
\begin{align*}
	(\Pli G)(
	\n)=\oint_{\ga_1}\frac{dz_1}{2\pi\i}
	\ldots
	\oint_{\ga_k}\frac{dz_k}{2\pi\i}
	\prod_{1\le A<B\le k}\frac{z_A-z_B}{z_A-qz_B}
	\prod_{j=1}^{k}
	\frac{1}{(1-z_j)(1-\nu z_j)}
	\left(\frac{1-z_j}{1-\nu z_j}\right)^{-n_j}
	G(\z).
\end{align*}
The contour $\ga_k$ is a small circle around $1$,
$\ga_A$ contains $q\ga_B$ contour for all $B>A$,
and all contours do not contain $\nu^{-1}$
(see also Definition \ref{def:contours}).
One can interpret $\Pli$
as a bilinear pairing
$\llangle G,\Psil(\n)\rrangle_{\Cc_{z}^k}$,
where $\Psil(\n)$ is viewed as a function in $\z$.
This pairing is defined in terms of integration
in which all variables belong to the same
contour (and not to various nested contours),
see \S \ref{sub:spectral_bilinear_pairing}.

The main results of the present paper
concerning the $q$-Hahn eigenfunctions \eqref{eigen_intro} are the following:
\begin{enumerate}[\bf1.]
	\item {\rm{}\bf{}(Plancherel formulas)}
	The transforms $\Pld$ and $\Pli$ are mutual inverses in the sense that
	$\Pli\Pld$ acts as the identity
	on $\Wc^{k}$,
	and $\Pld\Pli$
	is the identity map on
	$\Cc^{k}_{z}$.
	\item {\rm{}\bf{}(Plancherel isomorphism theorem)} The two function spaces
	$\Wc^{k}$ and $\Cc^{k}_{z}$
	are isomorphic as linear spaces
	with bilinear forms $(f,g)\mapsto\llangle f,\Psiswap^{-1}
	g\rrangle_{\Wc^{k}}$
	and $(F,G)\mapsto\llangle F,G\rrangle_{\Cc^{k}_{z}}$,
	where $\Psiswap$ is the operator in $\Wc^{k}$
	which swaps left and right eigenfunctions:
	$\big(\Psiswap^{-1} \Psil_{\z}\big)(\n)=\Psir_{\z}(\n)$.
	The map $\Psiswap$ also has a simple independent definition,
	see \eqref{Psiswap_operator} below.

	\item {\rm{}\bf{}(Completeness of the Bethe ansatz)}
	Any compactly supported function
	$f(\n)$ can be expressed through the eigenfunctions as
	\begin{align*}
		f(\n)&=\oint_{\ga_1}\frac{dz_1}{2\pi\i}
		\ldots
		\oint_{\ga_k}\frac{dz_k}{2\pi\i}
		\prod_{1\le A<B\le k}\frac{z_A-z_B}{z_A-qz_B}
		\prod_{j=1}^{k}
		\frac{1}{(1-z_j)(1-\nu z_j)}
		\left(\frac{1-z_j}{1-\nu z_j}\right)^{-n_j}
		\sum_{\y\in\Weyl{k}}f(\y)\Psir_{\z}(\y).
	\end{align*}
	\item {\rm{}\bf{}(Spatial biorthogonality)}
	The left and right eigenfunctions
	are biorthogonal viewed as elements of $\Cc^{k}_{z}$:
	$\llangle \Psil(\n),\Psir(\vec m)\rrangle_{\Cc_{z}^k}=\mathbf{1}_{\vec m=\n}$.\footnote{Here and below $\mathbf{1}_{\{\cdot\cdot\cdot\}}$
	denotes the indicator function.}

	\item {\rm{}\bf{}(Spectral biorthogonality)}
	Viewed as
	functions in the spatial variables,
	the left and right
	eigenfunctions are biorthogonal in the following
	formal way:
	\begin{align*}
		\sum_{\n\in\Weyl{k}}
		\Psir_{\z}(\n)\Psil_{\w}(\n)
		\Vand(\z)\Vand(\w)=
		(-1)^{\frac{k(k-1)}{2}}
		\prod_{j=1}^{k}(1-z_j)(1-\nu z_j)
		\prod_{A\ne B}(z_A-qz_B)
		\det[\delta({z_i-w_j})]_{i,j=1}^{k},
	\end{align*}
	where $\Vand(\z)=\prod_{1\le i<j\le k}(z_i-z_j)$ is the Vandermonde determinant.
	The above identity should be understood in a certain integrated sense.
	The simplest such interpretation is to multiply
	both sides by Laurent polynomials
	(not necessarily symmetric)
	in $(1-z_i)/(1-\nu z_i)$ and $(1-w_j)/(1-\nu w_j)$,
	respectively, and integrate all $z_i$ and $w_j$
	over a small circle around~$1$.
\end{enumerate}

We establish the spatial Plancherel formula
(that $\Pli\Pld$ is the identity map on the space of compactly supported
functions in the spatial variables $\n$)
in Theorem \ref{thm:spatial_Plancherel}.
Our argument relies on residue considerations
involving
shrinking (to $1$) or expanding (to $\nu^{-1}$) of
integration contours
in $\Pli$. The nested contour form
of the integration in
$\Pli$ is especially well adapted to this proof (there are other
ways to write down $\Pli$, see \S \ref{sub:various_forms_of_contour_integration}).

The spectral Plancherel formula
(that $\Pld\Pli$ is the identity map on $\Cc^{k}_{z}$, see Theorem \ref{thm:spectral_Plancherel})
is derived from the spectral biorthogonality (Theorems \ref{thm:spectral_biorthogonality}
and \ref{thm:spectral_biorthogonality_xi}). The latter
statement follows from the existence of one-parameter family
of commuting operators $(\Hbwd)_{\mu\in[\nu,1)}$ with different eigenvalues,
which produces many relations satisfied by the eigenfunctions \eqref{eigen_intro}.

The Plancherel isomorphism theorem is a direct consequence of the
two Plancherel formulas.
The need of the swapping operator $\Psiswap$
is evident since the natural basis of indicator functions
$\{\mathbf{1}_{\x}(\n)\}_{\x\in\Weyl{k}}$ in $\Wc^{k}$
is orthogonal, and the left and right eigenfunctions
are biorthogonal in $\Cc^{k}_{z}$.

The completeness of the Bethe ansatz for the $q$-Hahn system
(Corollary \ref{cor:completeness})
and the spatial biorthogonality of the eigenfunctions (Corollary \ref{cor:C_biorthogonality})
readily follow from the spatial Plancherel formula.

\smallskip

One immediate application of our main results
is the solution of the forward and backward Kolmogorov equations
for the $q$-Hahn stochastic process with general initial data.
By a duality result of \cite{Corwin2014qmunu},
the $q$-Hahn backward Kolmogorov equations
govern evolution of $q$-moments
of the $q$-Hahn Totally Asymmetric Simple Exclusion Process (TASEP).
The latter system (first introduced in \cite{Povolotsky2013}) is a discrete-time
three-parameter
generalization
of the usual continuous-time TASEP
with particles' jump distribution being the orthogonality weight for the classical $q$-Hahn orthogonal polynomials
(\S \ref{sub:hopping_distribution_and_q_hahn_orthogonality_weights}).
Consequently, this provides nested contour integral expressions
for $q$-moments
of the $q$-Hahn TASEP with general initial data
(see \S \ref{sub:nested_contour_integral_formulas_for_the_q_hahn_tasep}).

\smallskip

In principle, one could use moment formulas to address
asymptotic questions about the $q$-Hahn TASEP with various types of
initial data
and obtain results of
Kardar--Parisi--Zhang (KPZ) universality type.
See \cite{Veto2014qhahn}
for the treatment of the $q$-Hahn TASEP with step initial condition
(based on the moment and Fredholm determinantal
formulas of \cite{Corwin2014qmunu}),
and also
\cite{TW_ASEP2},
\cite{TW_total_current2009},
\cite{TW_ASEP4},
\cite{AmirCorwinQuastel2011},
\cite{CorwinQuastel2013},
\cite{BorodinCorwinFerrari2012},
\cite{FerrariVeto2013},
\cite{BorodinCorwinFerrariVeto2013},
\cite{MorenoQuastelRemenik2014OCYKPZ},
\cite{BCG6V}
for other systems (note that most of these models
are diagonalized by degenerations of the $q$-Hahn eigenfunctions,
cf. \S \ref{sub:degenerations}
below).
We do not pursue large time asymptotic problems in the present paper.

% subsection main_results_for_the_qhahn (end)

\begin{figure}[p]
	\begin{center}
		\scalebox{.86}{\begin{tikzpicture}[
		    scale=1.4, very thick,
		    axis/.style={thick, ->, >=stealth'},
		    block/.style ={rectangle, draw=red, align=center, rounded corners, minimum height=1em},
		    sblock/.style ={rectangle, draw=red, shade, shading=axis,
		    left color=lightgray, right color=white,
    		shading angle=45, align=center, rounded corners, minimum height=1em},
			miniblock/.style ={rectangle, draw=none, align=center},
			rotblock/.style ={rectangle, draw=none, align=center, rotate=90},
			rotblock1/.style ={rectangle, draw=none, align=center, rotate=90},
			rotblock2/.style ={rectangle, draw=none, align=center, rotate=90}]

			\def\cqh{(-.5,1)}
			\def\sv{(-.7,-3)}
			\def\xxz{(-1.3,-6)}
			\def\asep{(5.7,-6)}
			\def\qh{(6.5,1)}
			\def\qb{(7,-3)}
			\def\hl{(3,-9.8)}
			\def\ocy{(8,-9.7)}
			\def\kpz{(5,-13.5)}

			\draw \cqh node[sblock] (cqh) {\textbf{Conjugated $q$-Hahn, \S \ref{sec:spectral_theory_for_the_conjugated_hahn_boson_operator}}
			\\
			$\displaystyle\sum_{\sigma\in S(k)}\prod_{B<A}
			\frac{z_{\sigma(A)}-qz_{\sigma(B)}}
			{z_{\sigma(A)}-z_{\sigma(B)}}\prod_{j=1}^{k}
			\left(\frac{\dilp-z_{\sigma(j)}}{1-\nu z_{\sigma(j)}}\right)^{-n_j}$\\
			\hfill$n_i\in\Z$, $n_1\ge \ldots\ge n_k$\hspace{10pt}{\color{blue}($q,\nu,\dilp$)}};
			\draw \qh node[block] (qhahn) {\textbf{$q$-Hahn, \S\S \ref{sec:definition_of_eigenfunctions}--\ref{sec:the_q_mu_nu_boson_process_and_coordinate_bethe_ansatz}}
			\\
			$\displaystyle\sum_{\sigma\in S(k)}\prod_{B<A}
			\frac{z_{\sigma(A)}-qz_{\sigma(B)}}
			{z_{\sigma(A)}-z_{\sigma(B)}}\prod_{j=1}^{k}
			\left(\frac{1-z_{\sigma(j)}}{1-\nu z_{\sigma(j)}}\right)^{-n_j}$\\
			\hfill$n_i\in\Z$, $n_1\ge \ldots\ge n_k$
			\hspace{10pt}{\color{blue}($q,\nu$)}};
			\draw \sv node[sblock] (6vv) {\textbf{Asymmetric six-vertex model, \S \ref{sec:application_to_six_vertex_model}}
			\\
			$\displaystyle\sum_{\sigma\in S(k)}\prod_{B<A}
			\frac{z_{\sigma(B)}-qz_{\sigma(A)}}
			{z_{\sigma(B)}-z_{\sigma(A)}}\prod_{j=1}^{k}
			\left(\frac{\dilp-z_{\sigma(j)}}{1-z_{\sigma(j)}/(q\dilp)}\right)^{-x_j}$\\
			\hfill$x_i\in\Z$, $x_1< \ldots< x_k$\hspace{10pt}{\color{blue}($q,\dilp$)}};
			\draw \asep node[block] (asep) {\textbf{ASEP, stochastic six-vertex model,
			\S \ref{sec:application_to_asep}}
			\\
			$\displaystyle\sum_{\sigma\in S(k)}\prod_{B<A}
			\frac{z_{\sigma(B)}-\tau z_{\sigma(A)}}
			{z_{\sigma(B)}-z_{\sigma(A)}}\prod_{j=1}^{k}
			\left(\frac{1+z_{\sigma(j)}}{1+z_{\sigma(j)}/\tau}\right)^{-x_j}$\\
			\hfill$x_i\in\Z$, $x_1< \ldots< x_k$\hspace{10pt}{\color{blue}($\tau$)}};
			\draw \xxz node[sblock] (xxz) {\textbf{Heisenberg XXZ spin chain,
			\S \ref{sec:application_to_six_vertex_model}}
			\\
			$\displaystyle\sum_{\sigma\in S(k)}\prod_{B<A}
			\frac{z_{\sigma(B)}-\dilp^{-2} z_{\sigma(A)}}
			{z_{\sigma(B)}-z_{\sigma(A)}}\prod_{j=1}^{k}
			\left(\frac{1-\dilp^{-1} z_{\sigma(j)}}{\dilp^{-1}-z_{\sigma(j)}}\right)^{-x_j}$\\
			\hfill$x_i\in\Z$, $x_1< \ldots< x_k$\hspace{10pt}{\color{blue}($\dilp$)}
			};
			\draw \qb node[block] (qboson) {\textbf{$q$-Boson, \S \ref{sub:boson_eigenfunctions}}
			\\
			$\displaystyle\sum_{\sigma\in S(k)}\prod_{B<A}
			\frac{z_{\sigma(A)}-qz_{\sigma(B)}}
			{z_{\sigma(A)}-z_{\sigma(B)}}\prod_{j=1}^{k}
			\left(1-z_{\sigma(j)}\right)^{-n_j}$\\
			\hfill$n_i\in\Z$, $n_1\ge \ldots\ge n_k$
			\hspace{10pt}{\color{blue}($q$)}};
			\draw \hl node[sblock] (hl) {\textbf{Van Diejen's delta Bose gas}\\\hfill
			\textbf{\S\S \ref{ssub:_q_boson_to_van_diejen_s_delta_bose_gas}--\ref{ssub:van_diejen_s_delta_bose_gas_to_kpz}}
			\\
			$\displaystyle\sum_{\sigma\in S(k)}\prod_{B<A}
			\frac{z_{\sigma(A)}-qz_{\sigma(B)}}
			{z_{\sigma(A)}-z_{\sigma(B)}}\prod_{j=1}^{k}
			z_{\sigma(j)}^{-n_j}$\\
			\hfill$n_i\in\Z$, $n_1\ge \ldots\ge n_k$\hspace{10pt}{\color{blue}($q$)}\\
			\textbf{Hall-Littlewood polynomials}};
			\draw \ocy node[sblock] (semid) {\textbf{Semi-discrete delta Bose gas}\\\hfill\textbf{\S \ref{ssub:boson_to_semi_discrete_directed_polymer}}
			\\
			$\displaystyle\sum_{\sigma\in S(k)}\prod_{B<A}
			\frac{z_{\sigma(A)}-z_{\sigma(B)}-c}
			{z_{\sigma(A)}-z_{\sigma(B)}}\prod_{j=1}^{k}
			z_{\sigma(j)}^{-n_j}$\\
			\hfill$n_i\in\Z$, $n_1\ge \ldots\ge n_k$
			\hspace{10pt}{\color{blue}($c$)}};
			\draw \kpz node[sblock] (kpz) {\textbf{Continuous delta Bose gas}, \textbf{\S \ref{ssub:_semi_discrete_directed_polymer_to_kpz}}
			\\
			$\displaystyle\sum_{\sigma\in S(k)}\prod_{B<A}
			\frac{z_{\sigma(A)}-z_{\sigma(B)}-\tilde c}
			{z_{\sigma(A)}-z_{\sigma(B)}}\prod_{j=1}^{k}
			e^{x_jz_{\sigma(j)}}$\\
			\hfill$x_i\in\R$, $x_1\le \ldots\le x_k$
			\hspace{10pt}{\color{blue}($\tilde c$)}};

			\draw[->] (cqh.south) -- (6vv.north) node [midway, miniblock,xshift=-35] {$\nu=1/(q\dilp)$
			\\(or $\nu=q/\dilp$)};

			\draw[->] (cqh.east) -- (qhahn.west) node [midway, yshift=10] {$\dilp=1$};
			\draw[->] (6vv.south) -- ([xshift=-65]asep.north)
			node [midway, yshift=-7, miniblock, xshift=-30] {$\dilp=1$, $q=\tau$};

			\draw[->] ([xshift=-40]qhahn.south)
			to [in=90,out=-135] ([xshift=-20]qboson.west) to [in=120,out=-90]
			([xshift=-40]asep.north)  node [miniblock, yshift=160, xshift=-56]
			{$\nu=1/q=1/\tau$\\(or $\nu=q=1/\tau$)};
			\draw[->] (qhahn.south) -- (qboson.north)
			node [midway,xshift=-22] {$\nu=0$};
			\draw[->] (6vv.south) -- (xxz.north) node [midway, miniblock, xshift=-42]
			{$q=1/\dilp^{2}$\\
			(or $q=\dilp^{2}$)};

			\draw[<->] ([yshift=20]xxz.east) -- ([yshift=20]asep.west)
			node [midway, rotblock, xshift=-27] {$\dilp=1/\sqrt\tau$,\\
			conjugate};
			
			\draw[->,  dashed]
			([xshift=60]qboson.south) to
			[in=70,out=-90] ([yshift=-20,xshift=4]asep.east)
			to [in=90,out=-110] (hl.north)
			node [xshift=5,yshift=45] {$|z_j|\gg1$};
			\draw[->,  dashed]
			([xshift=60]qboson.south) -- ([xshift=40]semid.north)
			node [miniblock, xshift=-45, yshift=17]
			{$q=e^{-\epsilon}\to1$\\
			$1-z_j=O(\epsilon)$};
			\draw[->,  dashed] (hl.south) -- ([xshift=-40]kpz.north)
			node [midway, miniblock, xshift=-48]
			{$q=e^{-\epsilon}\to1$\\
			$1-z_j=O(\epsilon)$\\
			$n_j=O(\epsilon^{-1})$};
			% \draw[->, thick, decoration={snake,
			% amplitude = .4mm,
			% segment length = 2mm,
			% post length=0.9mm},decorate]
			\draw[->, dashed]
		    (semid.south) -- ([xshift=50]kpz.north)
			node[midway, miniblock, xshift=-58, yshift=-8]
			{
			$c=O(\epsilon)$\\
			$z_j=const+O(\epsilon)$\\
			$n_j=O(\epsilon^{-1})$};
			\draw[->, dashed] ([yshift=-18]asep.west)
			to [in=90,out=-135]
			([xshift=-40,yshift=-20]hl.west)
			to [in=180,out=-90]
			(kpz.west)
			node at ([xshift=-33,yshift=55]hl.west) [miniblock]
			{$\tau=e^{-\sqrt \epsilon}$\\
			$z_j=const+O(\sqrt \epsilon)$\\
			$x_j=O(\epsilon^{-1})$};
			\draw[dashed]
			([xshift=-18.1,yshift=-76.1]hl.west)
			to [in=-90,out=130]
			([xshift=-30]xxz.south);
			\node at ([xshift=-33,yshift=-79.5]hl.west) {\S \ref{ssub:asep_and_xxz_to_kpz_}};
			% \draw[->] ([xshift=-57]xxz.north) -- ([xshift=-57]6vv.south) node [midway, rotblock1]
			% {conjugate};
			% \draw[->] ([xshift=15]6vv.south) -- ([xshift=-50]asep.north) node [midway, rotblock2]
			% {conjugate};
		\end{tikzpicture}}
		\end{center}
  	\caption{A hierarchy of eigenfunctions of Hall-Littlewood type
  	(about the name see
  	Remark \ref{rmk:generic_linear_fractional_not_more_general}) possessing a Plancherel theory. All functions arise via coordinate Bethe ansatz for various integrable particle systems.
  	Only the left eigenfunctions are written down.
  	The use of the spatial variables $n_j$ vs. $x_j$
  	reflects literature conventions which are
  	not uniform throughout.
  	Shading of boxes indicates particle systems which are not necessarily stochastic
  	(however, they may be dual to stochastic processes such as the semi-discrete
  	stochastic heat equation or the continuous stochastic heat \mbox{equation / KPZ} equation).
  	Solid arrows mean straightforward degenerations of eigenfunctions, and dashed arrows correspond to scaling limits which are briefly discussed in \S\ref{sub:further_degenerations_of_eigenfunctions}.}
  	\label{fig:big_scheme}
\end{figure}
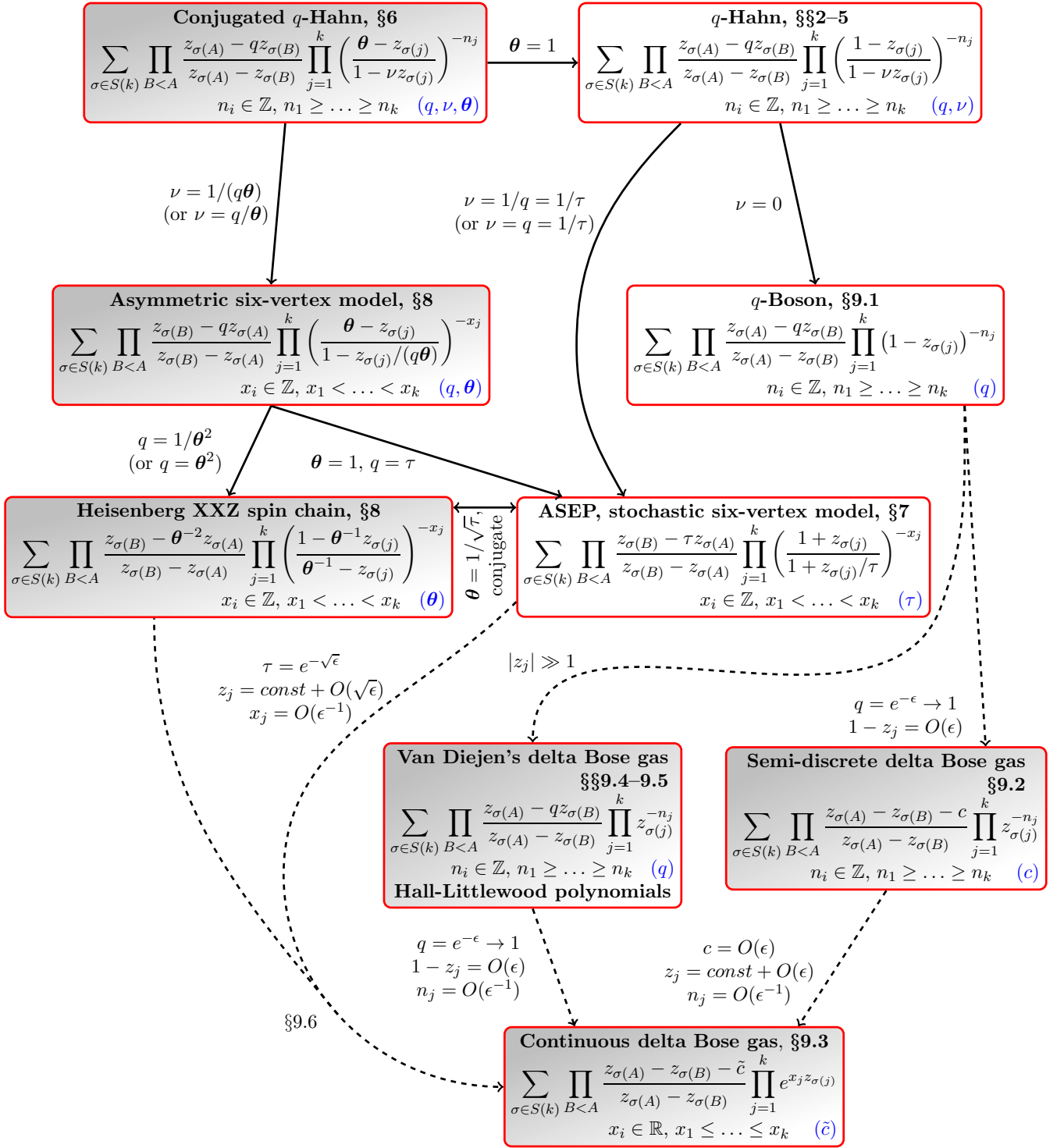

\subsection{Degenerations and limits of $q$-Hahn eigenfunctions} % (fold)
\label{sub:degenerations}

After establishing the main results concerning the eigenfunctions \eqref{eigen_intro},
we turn our attention to studying their degenerations, and discuss
various systems which are diagonalized by these.
We explore two ways to degenerate these eigenfunctions (corresponding to two arrows starting
from the ``$q$-Hahn'' block on Fig.~\ref{fig:big_scheme}):

\smallskip

The first way is to set $\nu=0$.
Then the eigenfunctions \eqref{eigen_intro}
turn into the Bethe ansatz eigenfunctions
of the (stochastic) $q$-Boson particle system
introduced in \cite{SasamotoWadati1998}.
This system is dual to the $q$-TASEP of \cite{BorodinCorwin2011Macdonald}
(see \cite{BorodinCorwinSasamoto2012}).
When $\nu=0$, the $q$-Hahn stochastic process itself becomes the system
dual to the
discrete-time geometric $q$-TASEP of \cite{BorodinCorwin2013discrete}.
A continuous-time limit of the latter is the $q$-TASEP.

The spectral theory for the $q$-Boson eigenfunctions
is the main subject of our previous paper \cite{BorodinCorwinPetrovSasamoto2013}.
Though results of the present paper imply the main results of
\cite{BorodinCorwinPetrovSasamoto2013}, the ideas of the proofs
in the two papers differ significantly
(see Remarks \ref{rmk:difference_in_proofs_of_Plancherel} and \ref{rmk:difference_with_BCPS_spectral}
for details).

The $q$-Boson system admits a scaling limit to
a semi-discrete delta Bose gas considered previously by Van Diejen \cite{vanDiejen2004HL},
as well as to another semi-discrete delta Bose gas that describes the evolution of moments of the semi-discrete stochastic heat equation (or equivalently, the O'Connell-Yor semi-discrete directed polymer partition function
\cite{OConnellYor2001}, \cite{Oconnell2009_Toda}).
See \cite{BorodinCorwinPetrovSasamoto2013} for details.
We briefly describe the corresponding limits of the $q$-Boson eigenfunctions
in \S \ref{ssub:boson_to_semi_discrete_directed_polymer} and \S \ref{ssub:_q_boson_to_van_diejen_s_delta_bose_gas}.

\smallskip

The second way is to set $\nu=1/q$.
As we explain in \S \ref{sec:application_to_asep},
under this degeneration the eigenfunctions \eqref{eigen_intro}
turn into the eigenfunctions of the ASEP,
another well-studied stochastic interacting particle system
\cite{Spitzer1970}
in which particles living on the lattice $\Z$
(at most one particle at a site)
randomly jump to the left by one at rate $\qasep$
or to the right by one at rate $\pasep$.
Then the remaining parameter $q$ in \eqref{eigen_intro}
would mean the ratio
$\pasep/\qasep$ (which is sometimes denoted by $\tau$).\footnote{However,
the ASEP itself does not seem to be
a degeneration of either
the $q$-Hahn system or the $q$-Hahn TASEP
(note that the ASEP is self-dual in various senses, cf.
\cite{Liggett1985}, \cite{Schutz1997dualityASEP}, \cite{BorodinCorwinSasamoto2012},
\cite{SasamotoASEPqj2014}).}
The weak ordering of the spatial variables
$\n=(n_1\ge \ldots\ge n_k)$
should be replaced by the strict one because
the constant $\st(\n)\vert_{\nu=1/q}$
in the right eigenfunction
vanishes unless $n_1> \ldots>n_k$. These strictly ordered spatial
variables encode locations of the ASEP particles.

All our main results for the $q$-Hahn eigenfunctions
have analogues for the ASEP eigenfunctions.
However, this degeneration is not always straightforward:
For instance, all nested contours in the inverse transform $\Pli$
and in the spatial Plancherel formula
should be replaced by a single small contour around a singularity.

Some parts of the ASEP spectral theory have been already established.
Namely, the spatial Plancherel formula for the ASEP
is equivalent to Tracy-Widom's solution of the
ASEP master equation \cite{TW_ASEP1} (see also \cite{Schutz1997exact} for two-particle case).
Moreover, the spectral Plancherel formula for the ASEP
implies as corollaries the symmetrization
identities
first obtained by Tracy and Widom
as
\cite[(1.6)]{TW_ASEP1}
(for ASEP with the step initial condition)
and
\cite[(9)]{TW_ASEP4}
(for the step Bernoulli initial condition).\footnote{In fact,
the $q$-Hahn spectral Plancherel formula
also provides certain symmetrization identities
generalizing the Tracy-Widom's ones.
These identities at the $q$-Hahn
level turn out to be significantly more complicated,
and
it remains unclear whether it is possible to use them in any asymptotic analysis.
See, however, the use of similar identities in \cite{K_Lee_qBoson2013}
at the $q$-Boson level
for
finding the
distribution of the leftmost particle's position in the $q$-Boson process.}
These symmetrization identities
served as a
crucial step towards
Fredholm determinantal formulas for the ASEP \cite{tracy2008fredholm}
and ultimately to proving its KPZ universality
\cite{TW_ASEP2}, \cite{TW_total_current2009} (see also
\cite{BCG6V} for a recent application of these symmetrization identities to the
six-vertex model).

In \S \ref{sec:application_to_six_vertex_model}
we consider
two other (non-stochastic) models
with strictly ordered spatial variables,
namely, the (asymmetric) six-vertex model
and the Heisenberg XXZ spin chain \cite{baxter2007exactly}
(both on the infinite lattice $\Z$).
Eigenfunctions of these two models are similar
to the ASEP ones, and, moreover,
arise as degenerations
of eigenfunctions of the $q$-Hahn transition operator
conjugated by the multiplication operator $(const)^{n_1+\ldots+n_k}$
(with a suitable choice of the constant).
This conjugated $q$-Hahn operator is no longer stochastic.
As we explain in \S \ref{sec:spectral_theory_for_the_conjugated_hahn_boson_operator},
the spectral theory in the conjugated case is essentially equivalent
to the one for the stochastic $q$-Hahn transition operator.
In this way we arrive at spectral theory results
for the six-vertex and the XXZ models.
We also comment on
Plancherel formulas for the XXZ model which already
appeared in
the works of Thomas, Babbitt, and Gutkin
\cite{BabbittThomas}, \cite{BabbittGutkin},
\cite{Gutkin}.

\smallskip

Both ways to degenerate the eigenfunctions \eqref{eigen_intro} described
above are unified again at the level of the
continuous delta Bose gas.
The latter system is dual
\cite{BertiniCancrini1995}, \cite[\S6]{BorodinCorwin2011Macdonald}
to the stochastic heat equation, or, via the Hopf-Cole transform, to the KPZ
equation. That is, eigenfunctions of both the
semi-discrete delta Bose gases mentioned above, as well as the ASEP eigenfunctions,
admit scaling limits to the
continuous delta Bose gas eigenfunctions (see \S \ref{ssub:_semi_discrete_directed_polymer_to_kpz},
\S \ref{ssub:van_diejen_s_delta_bose_gas_to_kpz}, and \S \ref{ssub:asep_and_xxz_to_kpz_}).
In this way, our results for the $q$-Hahn eigenfunctions
provide a unification
(at the spectral theory level)
of two discrete-space regularizations of the Kardar--Parisi--Zhang equation / stochastic heat equation, namely, ASEP and $q$-TASEP.

% subsection degenerations (end)

\subsection{Further directions and connections} % (fold)
\label{sub:further_directions_and_connections}

Let us mention (very briefly) other connections, as well as
possible directions of further study.

\smallskip

The stochastic $q$-Boson process (which is a continuous-time degeneration
of the $q$-Hahn system) is dual to the $q$-TASEP.
The latter dynamics was invented in \cite{BorodinCorwin2011Macdonald}
is a one-dimensional marginal of a
certain dynamics on interlacing integer arrays related to
Macdonald processes.
The results about Macdonald processes
developed in \cite{BorodinCorwin2011Macdonald}, \cite{BCGS2013}
provide another way of establishing moment
formulas for the $q$-TASEP
besides the duality approach used in \cite{BorodinCorwinSasamoto2012}.
Since then, other dynamics on interlacing arrays
with $q$-TASEP (and other integrable interacting particle systems)
as one-dimensional marginals were introduced, cf.
\cite{OConnellPei2012}, \cite{BorodinPetrov2013NN},
\cite{MatveevPetrov2014}.
See also \cite{BorodinCorwin2013discrete}
for more discussion on intrinsic connections between
$q$-TASEPs and Macdonald processes (and also Macdonald difference operators).

An intriguing problem is to find
suitable dynamics on interlacing arrays (or maybe another dynamical model
on two-dimensional particle configurations)
with the $q$-Hahn TASEP as a one-dimensional marginal.
This could potentially provide a ``$q$-Hahn extension''
of Macdonald processes.

\smallskip

The spectral theory at the $q$-Hahn level
also generalizes the setup of the six-vertex model (and the related ASEP and XXZ models).
The coordinate Bethe ansatz approach to the six-vertex model
initially performed by Lieb \cite{Lieb67}, \cite{baxter2007exactly}
was greatly generalized to
what is now known as the algebraic Bethe ansatz, cf.
\cite{Faddeev_Lectures},
\cite{reshetikhin2010lectures}.
It would be of interest to construct
an algebraic lifting of the Bethe ansatz
for the $q$-Hahn system.
% One instance of such a lifting could result in
% a reasonable characterization of all operators
% which commute with the $q$-Hahn transition operators
% $(\Hbwd)_{\mu\in[\nu,1)}$.
% Perhaps an investigation into connections of
% Hamiltonians of integrable particle systems
% with
% degenerations of the double affine Hecke algebra
% \cite{HeckmannOpdam1997},
% \cite{vanDiejen2004HL},
% \cite{Korff2013},
% \cite{Takeyama2012SD},
% \cite{Takeyama2014}
% could be helpful in this problem.
See also \cite{SasamotoWadati1998} on the algebraic
nature of the $q$-Boson Hamiltonian.

% subsection further_directions_and_connections (end)

\subsection{Outline}

The outline and scope of the paper is represented graphically
on Fig.~\ref{fig:big_scheme}.
In \S \ref{sec:definition_of_eigenfunctions}
we describe the
eigenfunctions of the $q$-Hahn stochastic particle system, and
introduce other necessary notation.
In
\S \ref{sec:main_results}
we present detailed statements of our main results, and
prove the spatial Plancherel formula.
In
\S \ref{sec:spectral_biorthogonality_of_eigenfunctions}
we prove the spectral biorthogonality of the $q$-Hahn eigenfunctions
which implies the spectral Plancherel formula.
In
\S \ref{sec:spectral_theory_for_the_conjugated_hahn_boson_operator}
we briefly describe modifications of our main results
corresponding to a conjugated $q$-Hahn operator
(which is no longer stochastic).
In
\S \ref{sec:application_to_asep}
we describe a spectral theory
for the ASEP eigenfunctions, and
match it to the work of Tracy and Widom on exact formulas for the ASEP
\cite{TW_ASEP1}, \cite{TW_ASEP4}.
In
\S \ref{sec:application_to_six_vertex_model}
(using formulas from \S \ref{sec:spectral_theory_for_the_conjugated_hahn_boson_operator}),
we explain how the eigenfunctions of the conjugated
$q$-Hahn operator degenerate to eigenfunctions
of the (asymmetric) six-vertex model
and of the Heisenberg XXZ spin chain, and produce spectral theory results
for the latter two systems.
In \S\ref{sub:further_degenerations_of_eigenfunctions}
we briefly describe further degenerations of the $q$-Hahn eigenfunctions.

\subsection{Acknowledgments}

This project was initiated at the Simons Symposium on the Kardar--Parisi--Zhang Equation.
The authors appreciate illuminating and helpful comments and discussions
with Donald Babbitt, Vadim Gorin,
Alexander Povolotsky,
Larry Thomas,
Craig Tracy, and Harold Widom.
We are grateful to anonymous reviewers for suggestions
on improving the presentation of our results.
We also thank Yier Lin for pointing out issues in the ASEP reduction of our results
(see Remark~\ref{rmk:update_ASEP_statement} for further discussion).

A.B. was partially supported by the NSF grant DMS-1056390.
I.C. was partially supported by the NSF through DMS-1208998 as well as by Microsoft Research and MIT through the Schramm Memorial Fellowship, by the Clay Mathematics Institute through the Clay Research Fellowship, and by the Institute Henri Poincare through the Poincare Chair.
T.S. was partially supported by KAKENHI 22740054 and Sumitomo Foundation.

% section introduction (end)

\section{Definition of eigenfunctions} % (fold)
\label{sec:definition_of_eigenfunctions}

The results in this and the next sections
depend on two parameters $(q,\nu)$, with $|q|<1$ and $0\le \nu<1$,
and on a fixed
integer $k\ge1$.

\subsection{Spatial and spectral variables} % (fold)
\label{sub:spatial_and_spectral_variables}

We will deal with $k$-particle configurations on
the lattice $\Z$. Ordered positions of particles are encoded by a vector
$\n\in\Weyl{k}$, where $\Weyl{k}$ is
the Weyl chamber:
\begin{align}\label{Weyl_k}
	\Weyl{k} := \big\{\n = (n_1,\ldots, n_k)\colon n_1\geq \cdots \geq n_k, \ n_i\in\Z\big\}.	
\end{align}
We will refer to $\n=(n_1,\ldots,n_k)$ as to the \emph{spatial variables}.

By $\Wc^{k}$ denote the space of all compactly supported functions $f\colon \Weyl{k}\to\C$.
The space $\Wc^{k}$ carries a natural symmetric bilinear pairing
\begin{align}\label{W_pairing}
	\llangle f,g\rrangle_{\Wc^k}:=\sum_{\n\in\Weyl{k}}f(\n)g(\n).
\end{align}

Define the map $\xinu\colon \C\to\C$ by
\begin{align}\label{xinu}
	\xinu(z):=\frac{1-z}{1-\nu z}.
\end{align}
Note that this is an involution (i.e., $\xinu\big(\xinu(z)\big)=z$)
which swaps pairs of points $0\leftrightarrow1$ and $\nu^{-1}\leftrightarrow\infty$
in $\overline\C:=\C\cup\{\infty\}$. Note that when $\nu=0$ (so $\nu^{-1}=\infty$),
the map $\xinu$ reduces to $z\mapsto 1-z$, which preserves~$\infty$.

For any two points $a\ne b\in\overline\C$, by $\Cc^{k}(a,b)$ denote the space of
symmetric functions $G(z_1,\ldots,z_k)$ on $\C^{k}$
which are Laurent polynomials in
\begin{align*}
	\frac{1-a^{-1}z_1}{1-b^{-1}z_1},\quad\ldots,\quad\frac{1-a^{-1}z_k}{1-b^{-1}z_k}.
\end{align*}
By agreement, if $a=\infty$, then the factor $1-a^{-1}z_j$ is not present;
if $a=0$,
then the same factor should be replaced simply by $(-z_j)$.

We will need two particular cases:
\begin{itemize}
	\item
	the space $\Cc^{k}_{z}:=\Cc^{k}(1,\nu^{-1})$ of symmetric functions
	in variables denoted by $z_1,\ldots,z_k$
	such that these functions are Laurent polynomials
	in the expressions $\dfrac{1-z_1}{1-\nu z_1},\ldots,\dfrac{1-z_k}{1-\nu z_k}$;
	\item
	the space $\Cc^{k}_{\xi}:=\Cc^{k}(0,\infty)$
	of symmetric Laurent polynomials in variables which we will denote by $\xi_1,\ldots,\xi_k$.
\end{itemize}
Note that the involution $\xinu$ interchanges the spaces $\Cc^{k}_z$ and $\Cc^{k}_\xi$ by swapping the variables
$z_j\leftrightarrow \xi_j$.

We will refer to either the variables $\z=(z_1,\ldots,z_k)$ or
$\vxi=(\xi_1,\ldots,\xi_k)$ (related to the corresponding spaces of symmetric Laurent polynomials)
as to the \emph{spectral variables}.

% subsection spatial_and_spectral_variables (end)

\subsection{Bilinear pairing in spectral variables} % (fold)
\label{sub:spectral_bilinear_pairing}

Here we introduce symmetric bilinear pairings on spaces of functions in spectral
variables (both $\z$ and $\vxi$). First, we need to define integration contours and a (complex-valued) Plancherel spectral measure.

\begin{definition}[Contours]\label{def:contours}
	Let $\ga_1,\ldots,\ga_k$ be positively oriented, closed contours chosen so that they all contain $1$
	and do not contain $\nu^{-1}$,
	so that the $\ga_A$ contour contains the image of $q$ times the $\ga_B$
	contour for all $B>A$, and so that $\ga_k$ is a small enough circle around 1
	that does not contain $q$.
	Let $\ga$ be a positively oriented closed contour which
	contains $1$, does not contain $\nu^{-1}$, and also contains its own image under the multiplication by $q$.
	See Fig.~\ref{fig:contours}.

	Clearly, it is possible to
	choose such contours for all $q\in\C$ with $|q|<1$.
	One way to see this possibility for complex $q$ is to consider the spiral
	$\{q^{\varkappa}\}_{\varkappa\in\R_{\ge0}}$, and modify the
	contours on Fig.~\ref{fig:contours} appropriately.
\end{definition}

\begin{figure}[htbp]
	\begin{center}
	\begin{tikzpicture}
		[scale=3]
		\def\pt{0.02}
		\def\q{.6}
		\draw[->, thick] (-.8,0) -- (2,0);
	  	\draw[->, thick] (0,-1.2) -- (0,1.2);
	  	\draw[fill] (1,0) circle (\pt) node [below] {1};
	  	\draw[fill] (\q,0) circle (\pt) node [below,yshift=-1] {$q$};
	  	\draw[fill] (\q*\q,0) circle (\pt) node [below,yshift=3,xshift=2] {$q^{2}$};
	  	\draw[fill] (0,0) circle (\pt) node [below left] {$0$};
	  	\draw[fill] (1.7,0) circle (\pt) node [below, yshift=3,xshift=3] {$\nu^{-1}$};
	  	\draw (1,0) circle (.2) node [below,xshift=9,yshift=16] {$\ga_3$};
	  	\draw[dotted] (\q,0) circle (.2*\q);
	  	\draw[dotted] (\q*\q,0) circle (.2*\q*\q);
	  	\draw (1/2,0) circle (1) node [below,xshift=80,yshift=50] {$\ga$};
	  	\draw[dotted] (1/2*\q,0) circle (\q);
	  	\draw (1/2+\q/2+.06,0) ellipse (.4 and .3) node [below,xshift=-10,yshift=27] {$\ga_{2}$};
	  	\draw (\q+.17,0) ellipse (.55 and .45) node [below,xshift=-20,yshift=35] {$\ga_{1}$};
	\end{tikzpicture}
	\end{center}
  	\caption{A possible choice of integration contours $\ga_1,\ga_2,\ga_3$, and $\ga$ for $k=3$ and $0<q<1$.
  	Contours $q\ga_1$, $q^{2}\ga_1$, and $q\ga$ are shown dotted.}
  	\label{fig:contours}
\end{figure}
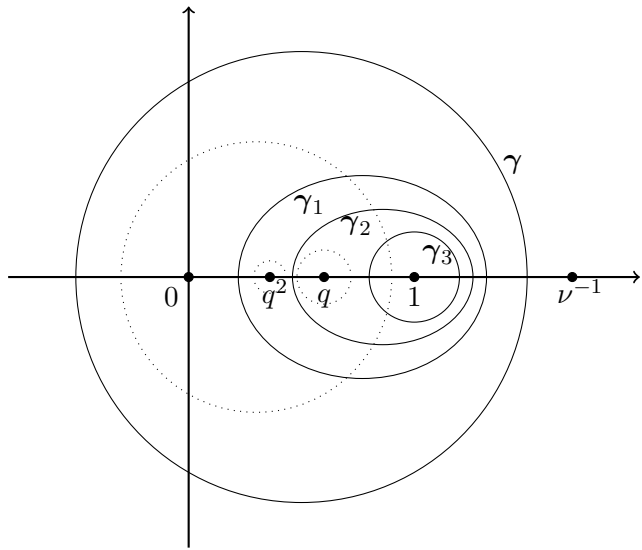

\begin{definition}[Plancherel measure]\label{def:Plancherel_measure}
	For each partition\footnote{That is, $\la=(\la_1\ge\la_2\ge \ldots\ge0)$, $\la_i\in\Z$,
	with $\la_1+\la_2+\ldots=k$.
	The number of nonzero components in $\la$ will be denoted by $\ell(\la)$.}  $\la\vdash k$ define
	\begin{align}\label{Plancherel_measure}
		d\Plm^{(q)}_\la(\w):=
		\frac{(1-q)^{k}(-1)^{k}q^{-\frac{k^2}2}}{m_1!m_2!\ldots}
		\det\left[\frac{1}{w_iq^{\la_i}-w_j}\right]_{i,j=1}^{\ell(\la)}
		\prod_{j=1}^{\ell(\la)}w_j^{\la_j}q^{\frac{\la_j^{2}}2}\frac{dw_j}{2\pi\i},
	\end{align}
	where $\w=(w_1,\ldots,w_{\ell(\la)})\in\C^{\ell(\la)}$, and $m_j$ is the number of components of $\la$
	equal to $j$ (so that $\la=1^{m_1}2^{m_2}\ldots$).

	The object \eqref{Plancherel_measure} may be viewed as a (complex-valued)
	Plancherel measure
	(see Remark \ref{rmk:name_for_Plancherel} below).
\end{definition}

If $\la\vdash k$, we will use the notation
\begin{align}\label{w_circ_la}
	\w\circ\la:=(w_1, qw_1,\ldots, q^{\la_1-1}w_1,w_2, qw_2,\ldots, q^{\la_{2}-1}w_2,\ldots,
	w_{\la_{\ell(\la)}}, qw_{\la_{\ell(\la)}},\ldots,
	q^{\la_{\ell(\la)}-1}w_{\la_{\ell(\la)}})\in\C^{k}.
\end{align}

The bilinear pairing on the space $\Cc^{k}_{z}$ is defined as
\begin{align}\label{Cz_pairing}
	\llangle F,G\rrangle_{\Cc_{z}^k}:=
	\sum_{\la\vdash k}\oint_{\ga_k}\ldots\oint_{\ga_k}
	d\Plm^{(q)}_\la(\w)
	\prod_{j=1}^{\ell(\la)}\frac{1}{(w_j;q)_{\la_j}(\nu w_j;q)_{\la_j}}
	F(\w\circ\la)
	G(\w\circ\la),\qquad F,G\in\Cc^{k}_{z}.
\end{align}
Here $(a;q)_{n}:=(1-a)(1-aq)\ldots(1-aq^{n-1})$ denotes the $q$-Pochhammer symbol as usual.

\begin{remark}\label{rmk:large_contour_Plancherel_measure}
	The same pairing can be rewritten in a simpler form
	in terms of integration over the large contour
	$\ga$ (this follows from Proposition \ref{prop:nesting_unnesting} below):
	\begin{align}\label{Cz_pairing_large}
		\llangle F,G\rrangle_{\Cc_{z}^k}=
		\oint_{\ga}\ldots\oint_{\ga}
		d\Plm^{(q)}_{(1^{k})}(\z)
		\prod_{j=1}^{k}\frac{1}{(1-z_j)(1-\nu z_j)}
		F(\z)
		G(\z),\qquad F,G\in\Cc^{k}_{z}.
	\end{align}

	The Plancherel measure $d\Plm^{(q)}_{(1^{k})}(\z)$ in \eqref{Cz_pairing_large}
	is simplified (with the help of the Cauchy determinant identity) to
	\begin{align}\label{dmu_large}
		d\Plm^{(q)}_{(1^{k})}(\z)=
		\frac{1}{k!}
		\frac{(-1)^{\frac{k(k-1)}{2}}\Vand(\z)^2}{\prod_{i\ne j}(z_i-qz_j)}
		\prod_{j=1}^{k}\frac{dz_j}{2\pi\i},
	\end{align} 	
	where $\Vand(\z)=\prod_{1\le i<j\le k}(z_i-z_j)$ is the Vandermonde determinant.
\end{remark}

There is a corresponding bilinear pairing on the space $\Cc_{\xi}^{k}$ as well. One can define
it by making a change of variables $\xinu$ in \eqref{Cz_pairing_large}:
\begin{align}\label{Cxi_pairing_large}
	\llangle F,G\rrangle_{\Cc_\xi^k}:=
	\frac{1}{k!(2\pi\i)^{k}(\nu-1)^{k}}
	\left(\frac{1-\nu}{1-q \nu}\right)^{k(k-1)}
	\oint_{\ga}\ldots\oint_{\ga}
	\prod_{i\ne j}\frac{\xi_i-\xi_j}{\Sm(\xi_i,\xi_j)}
	\prod_{j=1}^{k}\frac{d\xi_j}{\xi_j}
	F(\vxi)G(\vxi)
	,
\end{align}
where $F,G\in\Cc^{k}_{\xi}$, and
\begin{align}\label{quadratic_S_matrix}
	\Sm(\xi_1,\xi_2):=
	\frac{1-q}{1-q\nu}+\frac{q-\nu}{1-q\nu}\xi_2+
	\frac{\nu(1-q)}{1-q\nu}\xi_1\xi_2-
	\xi_1.
\end{align}
Here we have used
\begin{align*}
	\frac{\xinu(\xi_1)-\xinu(\xi_2)}{\xinu(\xi_1)-q\xinu(\xi_2)}
	=-\frac{1-\nu}{1-q \nu}\frac{\xi_1-\xi_2}{\Sm(\xi_1,\xi_2 )},
	\qquad
	\frac{1}{(1-\xinu(\xi))(1-\nu\xinu(\xi))}
	d\xinu(\xi)=\frac{d\xi}{(\nu-1)\xi}.
\end{align*}
In \eqref{Cxi_pairing_large}, $\ga$ can be taken to be the same contour containing $0$ and $1$ and
not containing $\nu^{-1}$ as in \eqref{Cz_pairing_large}
because $\xinu$ swaps $0\leftrightarrow1$ and
$\nu^{-1}\leftrightarrow\infty$ (note that this also means that
$\xinu$ does not change the orientation of the contour $\ga$).
Our definitions imply
\begin{align*}
	\llangle \xinu F, \xinu G\rrangle_{\Cc_{\xi}^k}=
	\llangle F, G\rrangle_{\Cc_{z}^k},\qquad
	F, G\in\Cc_{z}^{k}.
\end{align*}

In what follows we will use
both forms \eqref{Cz_pairing}
and \eqref{Cz_pairing_large}
of the pairing $\llangle \cdot,\cdot\rrangle_{\Cc_{z}^k}$, or the
equivalent pairing $\llangle \cdot,\cdot\rrangle_{\Cc_{\xi}^k}$
\eqref{Cxi_pairing_large}, depending on
convenience.

\begin{remark}\label{rmk:Povolotsky_parameters}
	The cross-term \eqref{quadratic_S_matrix} can be written in the form
	$\Sm(\xi_1,\xi_2)=\al\xi_1\xi_2+\be\xi_2+\gamma-\xi_1$,
	where $\al=\frac{\nu(1-q)}{1-q\nu}$,
	$\be=\frac{q-\nu}{1-q\nu}$, and $\gamma=\frac{1-q}{1-q\nu}$
	are the parameters
	from \cite{Povolotsky2013} (also used in \cite{Corwin2014qmunu}).
	Note that this means that $\al+\be+\gamma=1$ (see, however, \S \ref{sec:spectral_theory_for_the_conjugated_hahn_boson_operator} below on how to overcome this restriction).
	We do not use the parameters
	$\al,\be$, and $\gamma$ in the notation of the present paper.
\end{remark}

% subsection spectral_bilinear_pairing (end)

\subsection{Left and right eigenfunctions} % (fold)
\label{sub:eigenfunctions}

Here we define certain distinguished functions on $\C^{k}\times \Weyl{k}$
which are crucial to Plancherel isomorphism theorems between the spaces
$\Wc^{k}$ and $\Cc_{z}^{k}$ (Theorems \ref{thm:spatial_Plancherel}
and \ref{thm:spectral_Plancherel} below). These functions are right and
left Bethe ansatz eigenfunctions (sometimes also called Bethe wave functions)
of the $q$-Hahn operator
which depends on parameters $(q,\nu)$,
as well as on an additional parameter $\mu$
(see \S \ref{sec:the_q_mu_nu_boson_process_and_coordinate_bethe_ansatz} below for more details).
They were introduced by Povolotsky \cite{Povolotsky2013}.

We will use the \emph{left eigenfunctions} written in the following form:
\begin{align}\label{Psil}
	\Psil_{\z}(\n)&:=
	\sum_{\sigma\in S(k)}\prod_{1\le B<A\le k}
	\frac{z_{\sigma(A)}-qz_{\sigma(B)}}
	{z_{\sigma(A)}-z_{\sigma(B)}}\prod_{j=1}^{k}
	\left(\frac{1-z_{\sigma(j)}}{1-\nu z_{\sigma(j)}}\right)^{-n_j}.
\end{align}

To write down the right eigenfunctions, we will need additional notation.
For each $k$-tuple $\n=(n_1\ge\ldots\ge n_k)\in\Weyl{k}$, let
$\c(\n)=(c_1,\ldots,c_{M(\n)})$ denote its \emph{cluster sizes}, so that
\begin{align}\label{n_clusters}
	n_1=n_2=\ldots=n_{c_1}>n_{c_1+1}=\ldots=n_{c_1+c_2}
	>\ldots>n_{c_1+\ldots+c_{M(\n)-1}+1}=\ldots=n_{c_1+\ldots+c_{M(\n)}}
\end{align}
(of course, $c_1+\ldots+c_{M(\n)}=k$). For example, if $\n=(5,5,3,-1,-1)$, then
$\c(\n)=(2,1,2)$.

Next, for $\n\in\Weyl{k}$, define
\begin{align}\label{stationary_measure}
	\st(\n):=\prod_{j=1}^{M(\n)}\frac{(\nu;q)_{c_j}}{(q;q)_{c_j}}.
\end{align}
This product is related to the stationary measures
of the
$q$-Hahn stochastic particle system, see
\cite{Povolotsky2013}, \cite{Corwin2014qmunu}.

The \emph{right eigenfunctions} are, by definition,
\begin{align}\label{Psir}
	\Psir_{\z}(\n)&:=
	(-1)^k(1-q)^{k}q^{\frac{k(k-1)}{2}}\st(\n)
	\sum_{\sigma\in S(k)}\prod_{1\le B<A\le k}
	\frac{z_{\sigma(A)}-q^{-1}z_{\sigma(B)}}
	{z_{\sigma(A)}-z_{\sigma(B)}}\prod_{j=1}^{k}
	\left(\frac{1-z_{\sigma(j)}}{1-\nu z_{\sigma(j)}}\right)^{n_j}.
\end{align}
Note that as functions in $\z$, the left and right eigenfunctions
belong to the space $\Cc^{k}_{z}$
of symmetric Laurent polynomials in $\frac{1-z_i}{1-\nu z_i}$.
Indeed, this is because $\Vand(\z)\Psir_{\z}(\n)$
and
$\Vand(\z)\Psil_{\z}(\n)$
are skew-symmetric Laurent polynomials, and thus are
divisible by the Vandermonde determinant $\Vand(\z)$
with the ratios still in $\Cc^{k}_{z}$.

\begin{remark}\label{rmk:constant_to_make_nicer}
	The constant $(-1)^k(1-q)^{k}q^{\frac{k(k-1)}{2}}$ in front of $\Psir_{\z}(\n)$
	(not depending on $\n$)
	helps to make certain formulas below look nicer.
\end{remark}

% The eigenvalues of the $q$-Hahn system are
% \begin{align}\label{eigenvalues_S2}
% 	\Hbwd \Psil_{\z}=\ev(\z)\Psil_{\z},\qquad
% 	(\Hbwd)^{*} \Psir_{\z}=\ev(\z)\Psir_{\z},
% 	\qquad
% 	\ev(\z)=\prod_{j=1}^{k}\frac{1-\mu z_j}{1-\nu z_j},
% \end{align}
% where $\Hbwd$ is the backward
% operator\footnote{It is a
% stochastic (in other words, a Markov transition)
% operator if $0\le q<1$.} of the
% $q$-Hahn particle system acting on functions
% from $\Wc^{k}$, and $(\Hbwd)^{*}$ is the matrix transpose
% of $\Hbwd$.
% Here $\mu\in[\nu,1)$ is an additional free parameter
% which enters the $q$-Hahn operators, but not the
% eigenfunctions.
% See \S \ref{sec:the_q_mu_nu_boson_process_and_coordinate_bethe_ansatz} below for more detail.
%
% \smallskip

The functions $\Psil_\z$ and $\Psir_\z$ arise as left and right eigenfunctions
of the $q$-Hahn Markov transition operator, see
Propositions \ref{prop:Psibwd_eigenvalue} and
\ref{prop:Psifwd_eigenvalue} below.

The $q$-Hahn stochastic particle system
is PT-invariant,
i.e., invariant under joint space-reflection and time-reversal
(see \S \ref{sub:formal_stationary_vector_pt_symmetry_and_the_right_eigenfunctions}
below).
This PT-symmetry translates into a certain property of the eigenfunctions
which we are about to explain.

Let $\Refl$ be the \emph{reflection operator}
acting on $\Wc^{k}$ as
\begin{align}\label{reflection_operator}
	(\Refl f)(n_1,\ldots,n_k):=f(-n_k,\ldots,-n_1).
\end{align}
This is an involution, i.e., $\Refl ^{-1}=\Refl$.
One can readily check that the eigenfunctions satisfy
\begin{align}\label{Psi_PT_symmetry}
	\big(\Refl \Psil_{\z}\big)(\n)=(-1)^k(1-q)^{-k}\st^{-1}(\n)\Psir_{\z}(\n);\qquad
	\big(\Refl \Psir_{\z}\big)(\n)=(-1)^k(1-q)^k\st(\n)\Psil_{\z}(\n).
\end{align}
Therefore, the operator on $\Wc^{k}$ defined as
\begin{align}\label{Psiswap_operator}
	(\Psiswap f)(\n):=(-1)^{k}(1-q)^{-k}\st^{-1}(\n)
	(\Refl f)(\n)
\end{align}
swaps left and right eigenfunctions:
\begin{align}\label{Psiswap_property}
	\big(\Psiswap \Psir_{\z}\big)(\n)=\Psil_{\z}(\n),
	\qquad
	\big(\Psiswap^{-1} \Psil_{\z}\big)(\n)=\Psir_{\z}(\n).
\end{align}
Observe that
$\st(\n)$ is invariant under the reflection $\Refl$.
Note also that the operators $\Refl$
and $\Psiswap$ are symmetric
with respect to $\llangle\cdot,\cdot\rrangle_{\Wc^k}$
in the sense that
$\llangle f,\Refl g\rrangle_{\Wc^k}=
\llangle \Refl f, g\rrangle_{\Wc^k}$, and the same for $\Psiswap$.

\medskip

For future use, let us record how the eigenfunctions
look in the other spectral variables $\xi_j$
(see~\S \ref{sub:spatial_and_spectral_variables}).
We have
\begin{align}
	\Psil_{\xinu(\vxi)}(\n)&=
	\left(-\frac{1-q\nu}{1-\nu}\right)^{\frac{k(k-1)}{2}}
	\sum_{\sigma\in S(k)}\prod_{1\le B<A\le k}
	\frac{\Sm(\xi_{\sigma(A)},\xi_{\sigma(B)})}
	{\xi_{\sigma(A)}-\xi_{\sigma(B)}}
	\prod_{j=1}^{k}
	\xi_{\sigma(j)}^{-n_j};
	\label{Psil_xi}
	\\
	\Psir_{\xinu(\vxi)}(\n)&=
	(-1)^k(1-q)^{k}\st(\n)
	\left(\frac{1-q\nu}{1-\nu}\right)^{\frac{k(k-1)}{2}}
	\sum_{\sigma\in S(k)}\prod_{1\le B<A\le k}
	\frac{\Sm(\xi_{\sigma(B)},\xi_{\sigma(A)})}
	{\xi_{\sigma(A)}-\xi_{\sigma(B)}}
	\prod_{j=1}^{k}
	\xi_{\sigma(j)}^{n_j}.
	\label{Psir_xi}
\end{align}
Here $\Sm$ is given by \eqref{quadratic_S_matrix}.

% subsection eigenfunctions (end)

% section definition_of_eigenfunctions (end)

\section{Plancherel formulas} % (fold)
\label{sec:main_results}

\subsection{Direct and inverse transforms} % (fold)
\label{sub:transform_and_inverse_transform}

The \emph{direct $q$-Hahn transform} $\Pld$ takes
functions $f$ from the space $\Wc^{k}$
(of compactly supported functions
on the Weyl chamber $\Weyl{k}$)
to functions $\Pld f\in\Cc^{k}_{z}$ via
the bilinear pairing on $\Wc^{k}$:
\begin{align}\label{Pld_transform}
	(\Pld f)(\z):=\llangle f,\Psir_{\z}\rrangle_{\Wc^k}=
	\sum_{\n\in\Weyl{k}}f(\n)\Psir_{\z}(\n).
\end{align}

The (candidate) \emph{inverse $q$-Hahn transform}
$\Pli$
takes functions $G$ from the space $\Cc^{k}_{z}$
to functions $\Pli G\in\Wc^{k}$, and is defined as
\begin{align}\label{Pli_transform}
	(\Pli G)(\n):=
	\oint_{\ga_1}\frac{dz_1}{2\pi\i}
	\ldots
	\oint_{\ga_k}\frac{dz_k}{2\pi\i}
	\prod_{A<B}\frac{z_A-z_B}{z_A-qz_B}
	\prod_{j=1}^{k}
	\frac{1}{(1-z_j)(1-\nu z_j)}
	\left(\frac{1-z_j}{1-\nu z_j}\right)^{-n_j}
	G(\z).
\end{align}
The integration is performed over nested contours as in Definition \ref{def:contours}.

It is obvious that $\Pld$ maps $\Wc^{k}$ to $\Cc^{k}_{z}$.
To see that
$\Pli$ maps $\Cc^{k}_{z}$ to $\Wc^{k}$ requires simple residue calculus. Indeed,
if $n_k\le -M$ for a sufficiently large $M$ (this bound of course
depends on $G(\z)$), then the integrand
in \eqref{Pli_transform} is regular at $z_k=1$, and thus the integral vanishes.
On the other hand, if $n_1\ge M'$ for some other sufficiently large $M'$,
then the integrand is regular outside the $z_1$ contour
(namely, it is regular at points $z_1=\infty$ and $\nu^{-1}$),
and thus the integral also vanishes.
This implies that $\Pli G$
has a finite support in $\n$.

\begin{remark}
	It is convenient for us to work with the
	function spaces $\Wc^{k}$ and $\Cc^{k}_{z}$,
	because it eliminates
	analytic issues (such as proving convergence in \eqref{Pld_transform}).
	It seems plausible that the maps $\Pld$ and $\Pli$ can be extended to larger
	function spaces, but we do not focus on this question in \S \ref{sec:main_results}
	and \S \ref{sec:spectral_biorthogonality_of_eigenfunctions}.
	See, however, arguments in the proof of Proposition \ref{prop:analytic_argument} below
	which we need for applications of
	our main results.
\end{remark}

% subsection transform_and_inverse_transform (end)

\subsection{Operator $\Pli$ and bilinear pairing in $\Cc^{k}_{z}$} % (fold)
\label{sub:various_forms_of_contour_integration}

The operator $\Pli$ can be expressed through the bilinear pairing
in the space $\Cc^{k}_{z}$ with the help of the left eigenfunction $\Psil_{\z}(\n)$
(cf. \eqref{Pld_transform}):
\begin{align}\label{Pli_as_bilinear_pairing}
	(\Pli G)(\n)=\llangle G,\Psil(\n)\rrangle_{\Cc_{z}^k},
\end{align}
where $\Psil(\n)$ is viewed as the function in $\Cc^{k}_{z}$
which maps $\z$ to $\Psil_{\z}(\n)$.

Expanding the definition of the bilinear pairing
(given by \eqref{Cz_pairing} or \eqref{Cz_pairing_large}),
we can write
\begin{align}
	\label{Pli_and_Psil_large}
	(\Pli G)(\n)
	&=
	\oint_{\ga}\ldots\oint_{\ga}
	d\Plm^{(q)}_{(1^{k})}(\z)
	\prod_{j=1}^{k}\frac{1}{(1-z_j)(1-\nu z_j)}
	\Psil_{\z}(\n)
	G(\z)\\&=
	\sum_{\la\vdash k}\oint_{\ga_k}\ldots\oint_{\ga_k}
	d\Plm^{(q)}_\la(\w)
	\prod_{j=1}^{\ell(\la)}\frac{1}{(w_j;q)_{\la_j}(\nu w_j;q)_{\la_j}}
	\Psil_{\w\circ\la}(\n)
	G(\w\circ\la).
	\label{Pli_and_Psil_small}
\end{align}

The equivalence of formulas \eqref{Pli_transform},
\eqref{Pli_and_Psil_large}, and \eqref{Pli_and_Psil_small}
for the (candidate) inverse transform $\Pli$
follows from the next proposition:
\begin{proposition}
\label{prop:nesting_unnesting}
	Let $G\colon\C^{k}\to\C$ be a symmetric function.\footnote{In this proposition we do not
	require $G$ to belong to $\Cc^{k}_{z}$.}
	Consider contours $\ga_1,\ldots,\ga_k$ and $\ga$ as in
	Definition \ref{def:contours}.
	\begin{enumerate}[\rm{}(1)]
		\item If there exist deformations $D_j^{\mathrm{large}}$ of
		$\ga_j$ to $\ga$\footnote{That is,
		$D_j^{\mathrm{large}}(t)$ is a family of contours which
		continuously depend on an additional parameter $t\in[0,1]$
		such that $D_j^{\mathrm{large}}(0)=\ga_j$ and
		$D_j^{\mathrm{large}}(1)=\ga$.} so that for all $z_1,\ldots,z_{j-1},z_{j+1},\ldots
		z_k$ with $z_i\in\ga$ for $1\le i<j$ and $z_i\in\ga_i$ for $j<i\le k$,
		the function $z_j\mapsto
		\Vand(\z)G(z_1,\ldots,z_{j},\ldots,z_k)$ is holomorphic in a neighborhood
		of the area swept out by the deformation $D_j^{\mathrm{large}}$, then
		the nested contour integral formula \eqref{Pli_transform}
		for
		$(\Pli G)(\n)$ can be rewritten in the form \eqref{Pli_and_Psil_large}.
		\item
		If there exist deformations $D_j^{\mathrm{small}}$ of
		$\ga_j$ to $\ga_{k}$ so that for all $z_1,\ldots,z_{j-1},z_{j+1},\ldots
		z_k$ with $z_i\in\ga_i$ for $1\le i<j$ and $z_i\in\ga_k$ for $j<i\le k$,
		the function $z_j\mapsto
		\Vand(\z)G(z_1,\ldots,z_{j},\ldots,z_k)$ is holomorphic in a neighborhood
		of the area swept out by the deformation $D_j^{\mathrm{small}}$, then
		the nested contour integral formula \eqref{Pli_transform}
		for
		$(\Pli G)(\n)$ can be rewritten in the form \eqref{Pli_and_Psil_small}.
	\end{enumerate}
\end{proposition}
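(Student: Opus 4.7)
The plan is to prove both parts by contour deformation in the nested integral \eqref{Pli_transform}. In Part (1) the contours $\ga_j$ are deformed outward to the common large contour $\ga$ without crossing any singularities; in Part (2) they are shrunk inward to $\ga_k$, and the residues accumulated at the Bethe poles $z_A = qz_B$ organize combinatorially into the partition sum.

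For Part (1), I would deform the contours $\ga_j\to\ga$ in the reverse order $j=k,k-1,\ldots,1$. At each step, the potentially dangerous poles of the integrand in the variable $z_j$ are those of $1/(z_A-qz_B)$ with $j\in\{A,B\}$; by the original nesting $qz_B$ (for $B>A$) lies inside $\ga_A$, and it remains inside $\ga$ after $\ga_B$ has been moved (because $\ga\supset q\ga$). The hypothesis that $\Vand(\z)G(\z)$ is holomorphic on the swept region rules out any further singularities coming from $G$, so the integral is unchanged. Once all contours coincide with $\ga$, the symmetry of $G$ and of the measure allow one to symmetrize the remaining non-symmetric factor $\prod_{A<B}\frac{z_A-z_B}{z_A-qz_B}\prod_j\bigl(\frac{1-z_j}{1-\nu z_j}\bigr)^{-n_j}$ by averaging over $S(k)$. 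Using $\prod_{i\ne j}(z_i-z_j)=(-1)^{k(k-1)/2}\Vand(\z)^2$ together with definition \eqref{Psil} of $\Psil_{\z}(\n)$, a direct computation shows that this symmetrization equals $\frac{1}{k!}\frac{(-1)^{k(k-1)/2}\Vand(\z)^2}{\prod_{i\ne j}(z_i-qz_j)}\Psil_\z(\n)$, which matched with \eqref{dmu_large} reproduces the integrand of \eqref{Pli_and_Psil_large}.

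For Part (2), I would shrink the contours $\ga_1,\ga_2,\ldots,\ga_{k-1}$ successively to $\ga_k$. As $\ga_A$ is shrunk, the swept region between $\ga_A$ and $\ga_k$ will contain certain poles $z_A=qz_B$ (with $B>A$), and we collect the corresponding residues. Each such residue merges the variable $z_A$ into a $q$-string with $z_B$, and iterating the procedure partitions the $k$ integration variables into disjoint strings of the form $(w,qw,q^2w,\ldots,q^{\ell-1}w)$, with the free head variables $w$ remaining as integrations on $\ga_k$. Grouping the strings by their length yields a global partition $\la\vdash k$, and the factor $1/(m_1!m_2!\cdots)$ compensates for the overcounting among labelings of strings of equal length. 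The accumulated factors $\prod_j\frac{1}{(1-z_j)(1-\nu z_j)}$ evaluated on a string of length $\ell$ collapse to $\frac{1}{(w;q)_\ell(\nu w;q)_\ell}$, and the residue contributions from pairs of distinct strings combine, via the Cauchy determinant identity, into the determinant $\det\bigl[1/(w_iq^{\la_i}-w_j)\bigr]$ appearing in \eqref{Plancherel_measure}. The hypothesis of Part (2) again rules out any additional singularities from $G$, so no extra residues appear.

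The main obstacle is the combinatorial residue bookkeeping in Part (2): one must verify that the iterated residues at the Bethe poles assemble into precisely the determinantal form of $d\Plm^{(q)}_\la$, while keeping track of signs, powers of $q$ (in particular the $q^{k^2/2}$ factor), and $q$-Pochhammer products generated along the way. This is a standard but intricate calculation in the spirit of the residue identities familiar from the Macdonald process and coordinate Bethe ansatz literature. Part (1), by contrast, is essentially a routine symmetrization once the deformation has been justified by the hypothesis.
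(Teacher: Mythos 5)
Your Part (1) argument has a genuine gap: you deform the contours in the order $j=k,k-1,\ldots,1$ (innermost first), but with that order the deformation does cross poles of the cross-terms. When you expand $\ga_k$ to $\ga$ while $z_A$ ($A<k$) is still on $\ga_A$, the integrand has a pole in $z_k$ at $z_k=z_A/q$, which lies on $q^{-1}\ga_A$; the nesting of Definition \ref{def:contours} only guarantees $q\ga_k\subset\mathrm{int}(\ga_A)$ and gives no control of $q^{-1}\ga_A$, which in general cuts through the region between $\ga_k$ and $\ga$ (e.g.\ for the contours of Fig.~\ref{fig:contours}, the point $z_1\approx 0.22\in\ga_1$ gives $z_1/q\approx 0.37$, outside $\ga_3$ but inside $\ga$). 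Likewise, at your last step the claim that the pole $q z_B$ (with $z_B$ already on $\ga$) ``remains inside $\ga$'' is not what is needed: it must lie inside the contour $\ga_A$ being moved, and $q\ga\not\subset\ga_A$ in general, so again residues are crossed. Note also that the holomorphy hypothesis on $\Vand(\z)G(\z)$ is stated precisely for the configuration $z_i\in\ga$ for $i<j$, $z_i\in\ga_i$ for $i>j$, i.e.\ it is tailored to the order $j=1,2,\ldots,k$ (outermost first), which is the order the paper uses: then the poles $qz_B\in q\ga_B$ (for $B>j$) stay inside $\ga_j$ by nesting, and the poles $z_i/q$ with $z_i\in\ga$ (for $i<j$) lie on $q^{-1}\ga$, outside $\ga$, so nothing is crossed. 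Once the order is corrected, your symmetrization step (using $\prod_{i\ne j}(z_i-z_j)=(-1)^{k(k-1)/2}\Vand(\z)^2$ and \eqref{dmu_large}) is exactly the paper's proof of part (1).

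For Part (2), your shrinking-and-residues outline is the right strategy, but the proposal stops at the crucial point: the verification that the iterated residues at $z_A=qz_B$ assemble into strings and produce exactly the determinant, the $q$-Pochhammer factors, the symmetry factor $1/(m_1!m_2!\cdots)$ and the powers of $q$ in \eqref{Plancherel_measure} is deferred as ``standard but intricate.'' That bookkeeping \emph{is} the content of part (2); the paper does not reproduce it either, but it points to a complete proof in \cite[Lemma 3.4]{BorodinCorwinPetrovSasamoto2013} (see also \cite[Proposition 3.2.1]{BorodinCorwin2011Macdonald}), remarking that the only new ingredients here, the factors $(1-\nu z_j)^{-1}$, introduce no new poles inside the contours (since $\nu^{-1}$ lies outside all of them) and merely turn into the factors $(\nu w_j;q)_{\la_j}^{-1}$ on the strings. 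As written, your Part (2) is an outline rather than a proof; you should either carry out the induction with the residue bookkeeping in detail or invoke those references, adding the observation about the $\nu$-dependent factors.
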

	Note that if $G\in\Cc^{k}_{z}$,
	then it clearly
	satisfies the conditions of both parts of the proposition.
\begin{proof}
	This proposition is essentially contained in
	\cite{BorodinCorwinPetrovSasamoto2013} except for the additional factors
	$(1-\nu z_j)^{-1}$ in \eqref{Pli_transform} and \eqref{Pli_and_Psil_large},
	and $(\nu w_j;q)_{\la_j}^{-1}$ in \eqref{Pli_and_Psil_small}.
	Since $0\le \nu<1$ and thus the pole at $\nu^{-1}$
 	lies outside all contours $\ga_1,\ldots,\ga_k$, and $\ga$, these new
	factors do not affect the proofs
	given in \cite{BorodinCorwinPetrovSasamoto2013}.
	(Another way to say the same would be to incorporate these new
	factors inside the function $G(\z)$.)

	Part 1 of the proposition is \cite[Lemma 3.3]{BorodinCorwinPetrovSasamoto2013}.
	Its proof is fairly straightforward:
	deform contours $\ga_1,\ga_2,\ldots,\ga_k$
	(in this order) to $\ga$ using deformations $D_j^{\mathrm{large}}$.
	Now the integration is over the same contour $\ga$,
	so the integral is invariant under permutations of the $z_j$'s.
	Rewriting
	\begin{align*}
		\prod_{1\le A<B\le k}\frac{z_A-z_B}{z_A-qz_B}
		=
		\prod_{1\le A\ne B\le k}\frac{z_A-z_B}{z_A-qz_B}
		\prod_{1\le B<A\le k}\frac{z_A-qz_B}{z_A-z_B},
	\end{align*}
	we note that the symmetrization of the integrand over the whole
	symmetric group $S(k)$ gives $G(\z)$ times $\Psil_{\z}(\n)$ \eqref{Psil},
	times the corresponding Plancherel measure, see \eqref{dmu_large}.
	The factor $1/k!$ in \eqref{dmu_large} comes from the symmetrization.

	Part 2 of the proposition is
	\cite[Lemma 3.4]{BorodinCorwinPetrovSasamoto2013}
	(see also \cite[Proposition 7.4]{BorodinCorwinPetrovSasamoto2013}
	and \cite[Proposition 3.2.1]{BorodinCorwin2011Macdonald}). Its proof is not nearly as straightforward
	as that of part 1,
	as it involves keeping track of multiple
	residues arising in the course of
	deformations of the $\ga_j$'s to $\ga_k$.
	We omit the proof here.
\end{proof}

\begin{remark}\label{rmk:direct_inverse_xi}
	Similar direct and inverse $q$-Hahn transforms can be defined
	in the other spectral variables $\vxi=(\xi_1,\ldots,\xi_k)$,
	cf. \S \ref{sub:spatial_and_spectral_variables}.
	These transforms will go between
	the spaces $\Wc^{k}$ and
	$\Cc^{k}_{\xi}$ (recall that the latter
	is isomorphic to $\Cc^{k}_{z}$
	by means of
	$\xinu$).

	The direct transform $\Wc^{k}\to \Cc^{k}_{\xi}$
	looks as $f\mapsto \llangle f, \xinu\Psir_{\z}\rrangle_{\Wc^k}$ (cf. \eqref{Pld_transform}),
	where $\xinu\Psir_{\z}$ is the image of $\Psir$ under $\xinu$ applied in each variable.
	Writing the corresponding inverse transform
	$\Cc^{k}_{\xi}\to\Wc^{k}$ amounts to making the
	change of variables $\xinu$ under the integral in the definition of $\Pli$.
	This can be done similarly to \eqref{Cxi_pairing_large}
	if one uses formula \eqref{Pli_and_Psil_large}
	for $\Pli$.

	Analogues of our main results below
	(Theorems \ref{thm:spatial_Plancherel} and \ref{thm:spectral_Plancherel})
	will clearly hold for these transforms
	relative to the space
	$\Cc^{k}_{\xi}$, too.
	In fact, we will employ the variables $\vxi$
	to prove the spectral Plancherel formula (Theorem \ref{thm:spectral_Plancherel}).
\end{remark}

% subsection various_forms_of_contour_integration (end)

\subsection{The spatial Plancherel formula} % (fold)
\label{sub:spatial_plancherel_formula}

The composition $\Plspatial:=\Pli \Pld$ maps the space $\Wc^{k}$
of compactly supported functions in spatial variables
to itself via
\begin{align}
	\label{Plspatial}
	(\Plspatial f)(\n)&=(\Pli \Pld f)(\n)
	\\&=
	\oint_{\ga_1}\frac{dz_1}{2\pi\i}
	\ldots
	\oint_{\ga_k}\frac{dz_k}{2\pi\i}
	\prod_{1\le A<B\le k}\frac{z_A-z_B}{z_A-qz_B}
	\prod_{j=1}^{k}
	\frac{1}{(1-z_j)(1-\nu z_j)}
	\left(\frac{1-z_j}{1-\nu z_j}\right)^{-n_j}
	\llangle f,\Psir_{\z}\rrangle_{\Wc^k}.
	\nonumber
\end{align}

\begin{theorem}\label{thm:spatial_Plancherel}
	The direct $q$-Hahn transform $\Pld$
	induces an isomorphism between the space $\Wc^{k}$
	and its image inside $\Cc^{k}_{z}$, with the inverse given by $\Pli$. Equivalently,
	$\Plspatial$ \eqref{Plspatial} acts as the identity operator
	on $\Wc^{k}$. Moreover, $\Pld$
	acts on the bilinear pairing as follows:
	\begin{align}
		\llangle f,g\rrangle_{\Wc^k}=
		\llangle \Pld f,\Pld (\Psiswap g)\rrangle_{\Cc_{z}^k}
		\qquad
		\text{for all $f,g\in \Wc^{k}$.}
		\label{Plspatial_isomorphism}
	\end{align}
	Here $\Psiswap$ is the operator
	\eqref{Psiswap_operator}
	in $\Wc^{k}$
	which turns each right eigenfunction into the corresponding left one.
\end{theorem}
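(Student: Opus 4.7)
The plan is to verify $\Plspatial = \Id$ on the basis $\{\mathbf{1}_{\x}\}_{\x \in \Weyl{k}}$ of $\Wc^{k}$ by establishing the identity $(\Plspatial \mathbf{1}_{\x})(\n) = \mathbf{1}_{\x = \n}$, and then to derive \eqref{Plspatial_isomorphism} as a formal consequence. By linearity on this compactly supported basis, the point-wise identity is equivalent to the full first part of the theorem. Since $(\Pld \mathbf{1}_{\x})(\z) = \Psir_{\z}(\x)$, everything reduces to a single nested contour integral identity built from $\Psir_{\z}(\x)$ and the kernel of $\Pli$ in \eqref{Pli_transform}.

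The argument is based on residue calculus adapted to the nested contours. These contours are chosen so that inside $\ga_j$ the integrand has singularities only at $z_j = 1$ and at the points $z_j = q z_B$ for $B > j$, while the pole at $z_j = \nu^{-1}$ lies outside every contour. To prove vanishing when $\x \ne \n$, I would expand the outer contour $\ga_1$ outward toward $\nu^{-1}$ or shrink $\ga_k$ inward toward $1$, the choice depending on the sign of the difference of the leading coordinates. Since the involution $\xinu$ swaps $\{0,1\}$ with $\{\nu^{-1},\infty\}$, the factor $\xinu(z_j)^{x_j - n_j}/((1 - z_j)(1 - \nu z_j))$ is regular at the appropriate endpoint precisely when the exponent has the correct sign, which forces cancellation term-by-term in the symmetrization defining $\Psir_{\z}(\x)$. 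On the diagonal $\x = \n$, residues at $z_j = 1$ collapse inductively to the base case $k=1$, where a direct calculation gives $-(1-\nu)\oint_{\ga_1}\frac{dz}{2\pi\i}\,\xinu(z)^{x-n}/((1-z)(1-\nu z)) = \mathbf{1}_{x = n}$ (for $x < n$ the residue at $z=1$ produces a binomial coefficient $\binom{n-1-x}{n-x}$ which vanishes).

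The main obstacle will be the bookkeeping of iterated residues at the loci $z_A = qz_B$ that arise when $\ga_A$ is deformed past $q\ga_B$. These residues implement particle clustering, and across all $k!$ terms in the symmetrization inside $\Psir_{\z}(\x)$ they must reassemble into precisely the cluster factor $\st(\x)$ from \eqref{stationary_measure}, times $\mathbf{1}_{\x=\n}$. The inclusions $\ga_A \supset q\ga_B$ enforce a canonical order of residue-taking, which should drive an induction on $k$ peeling off one variable at a time while respecting the cluster structures of both $\x$ and $\n$; this is where the nested contour form of $\Pli$ is essential.

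Once $\Plspatial = \Id$ is established, \eqref{Plspatial_isomorphism} follows formally. Starting from $\llangle f, g\rrangle_{\Wc^k} = \sum_{\n} g(\n)\,(\Pli \Pld f)(\n)$, applying \eqref{Pli_as_bilinear_pairing}, and exchanging the finite sum over $\n$ with the contour integral in the pairing on $\Cc_{z}^{k}$, the inner expression becomes $\llangle \Pld f,\, \sum_\n g(\n) \Psil(\n)\rrangle_{\Cc_z^k}$. Using the swap identity $\Psil_{\z}(\n) = (\Psiswap \Psir_{\z})(\n)$ (with $\Psiswap$ acting in the spatial variable) together with the $\Wc^{k}$-symmetry of $\Psiswap$ noted after \eqref{Psiswap_property}, the inner sum equals $(\Pld(\Psiswap g))(\z)$, which yields \eqref{Plspatial_isomorphism}.
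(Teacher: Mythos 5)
Your reduction to the delta identity on the basis $\{\mathbf{1}_{\x}\}$ and your formal derivation of \eqref{Plspatial_isomorphism} from $\Plspatial=\Id$ are fine and match the paper's outline, and your $k=1$ computation is correct. However, the heart of the proof is left unproven, and the vanishing heuristic as you state it would not go through. For a fixed permutation $\sigma$ in the symmetrization \eqref{Psir_permutation_representation}, the integrand of the relevant nested integral \eqref{Plancherel_lemma} contains the ratio of cross-terms $\prod_{A<B}(z_{\sigma(A)}-qz_{\sigma(B)})/(z_A-qz_B)$, so whether a given contour $\ga_j$ can be shrunk to $1$ or expanded to $\nu^{-1}$ without crossing poles at $z_A=qz_B$ is \emph{permutation-dependent}: only the contours $\ga_{\sigma(i)}$ with $\sigma(i)$ a running maximum of the word $\sigma$ can be shrunk, and only those with $\sigma(i)$ a running minimum read from the right can be expanded. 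You cannot simply ``expand $\ga_1$ or shrink $\ga_k$ depending on the sign of the difference of the leading coordinates''. Moreover, even after extracting the resulting inequalities $n_{\sigma(i)}\ge y_i$ (resp.\ $\le$) from the deformable contours, these form only a partial set of constraints, and concluding $\n=\y$ requires a genuine combinatorial argument; in the paper this is the arrow-diagram / permutation-graph analysis (regions $B_\al$, connected components, dashed lines crossing the diagonal) occupying parts II--V of the proof of Lemma \ref{lemma:Plspatial_permutation_vanishing}. Your sketch acknowledges but does not supply this step.

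The on-diagonal case is also not settled by ``residues collapsing inductively to $k=1$''. When $\x=\n$ has nontrivial clusters, the surviving permutations are exactly those in $S_{\n}(k)$ permuting within clusters, and their total contribution must be evaluated explicitly: the paper does this (Lemma \ref{lemma:Plspatial_permutation_computation}) by summing the cross-terms within each cluster via the symmetrization identity \eqref{Mac_symm_identity} and then evaluating the remaining nested integral with equal exponents, a special case of Proposition \ref{prop:TW_qnu}; only then does the constant $(-1)^k(1-q)^{-k}\prod_j (q;q)_{c_j}/(\nu;q)_{c_j}$ emerge to cancel the prefactor and $\st(\n)$ in $\Psir_{\z}(\n)$. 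You correctly identify this ``reassembly into $\st(\x)$'' as the main obstacle, but identifying the obstacle is not the same as overcoming it: both the permutation-dependent deformation analysis with its combinatorial conclusion and the explicit cluster evaluation are missing, so the proposal as written has a genuine gap precisely where the theorem's content lies.
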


The rest of this subsection is devoted to the proof of the
theorem. In \S \ref{sub:completeness}
below
we present two immediate corollaries of this theorem.

\medskip

First, we note that \eqref{Plspatial_isomorphism}
readily follows from the fact that $\Plspatial=\Id$ on $\Wc^{k}$.
Fix $\y\in\Weyl{k}$, and let $g(\n):=\mathbf{1}_{\n=\y}$.
Then it follows from \eqref{Psiswap_property} that
$\big(\Pld (\Psiswap g)\big)(\z)=\llangle \Psir_{\z},\Psiswap g\rrangle_{\Wc^k}
=\llangle \Psiswap\Psir_{\z}, g\rrangle_{\Wc^k}=\Psil_{\z}(\y)$.
If $\Plspatial=\Id$, then we can write using \eqref{Pli_as_bilinear_pairing}:
\begin{align*}
	\llangle f, g\rrangle_{\Wc^k}&=
	f(\y)=
	(\Pli\Pld f)(\y)
	=
	\llangle \Pld f, \Psil(\y)\rrangle_{\Cc_{z}^k}
	=\llangle \Pld f,\Pld (\Psiswap g)\rrangle_{\Cc_{z}^k}.
\end{align*}
The case of general $g\in\Wc^{k}$ follows by linearity of our bilinear
pairings. About the need for the operator $\Psiswap$ in \eqref{Plspatial_isomorphism}
see also Remark \ref{rmk:biorthogonality_vs_orthogonality} below.

\smallskip

Our goal now is to show that $\Plspatial = \Id$ on $\Wc^{k}$.
It suffices to prove this operator identity
on a function $f(\n)=\mathbf{1}_{\n=\y}$
for any fixed $\y\in\Weyl{k}$.
Then in \eqref{Plspatial} we have
$\llangle f,\Psir_{\z}\rrangle_{\Wc^k}=\Psir_{\z}(\y)$,
so we need to show that
\begin{align}\label{Plancherel_identity_representation}
	\oint_{\ga_1}\frac{dz_1}{2\pi\i}
	\ldots
	\oint_{\ga_k}\frac{dz_k}{2\pi\i}
	\prod_{1\le A<B\le k}\frac{z_A-z_B}{z_A-qz_B}
	\prod_{j=1}^{k}
	\frac{1}{(1-z_j)(1-\nu z_j)}
	\left(\frac{1-z_j}{1-\nu z_j}\right)^{-n_j}
	\Psir_{\z}(\y)=\mathbf{1}_{\y=\n}.
\end{align}
We note that using \eqref{Pli_and_Psil_large}--\eqref{Pli_and_Psil_small}, one can rewrite
\eqref{Plancherel_identity_representation} in terms of integration over
$z_1,\ldots,z_k$ belonging to one and
the same contour (large $\ga$ or small $\ga_k$), cf. \eqref{completeness_big_1}--\eqref{completeness_small_2} below.

To prove \eqref{Plancherel_identity_representation},
rewrite the right eigenfunction in the following way:
\begin{align}
	\Psir_{\z}(\y)
	=
	(-1)^k(1-q)^{k}\st(\y)
	\sum_{\sigma\in S(k)}\prod_{1\le A<B\le k}
	\frac{z_{\sigma(A)}-qz_{\sigma(B)}}{z_{\sigma(A)}-z_{\sigma(B)}}
	\prod_{j=1}^{k}
	\left(\frac{1-z_{j}}{1-\nu z_{j}}\right)^{y_{\sigma^{-1}(j)}}.
	\label{Psir_permutation_representation}
\end{align}
We will show that
each summand in \eqref{Plancherel_identity_representation}
corresponding to a fixed permutation
$\sigma\in S(k)$ above vanishes unless $\n=\y$:
\begin{lemma}\label{lemma:Plspatial_permutation_vanishing}
	Let $\n,\y\in\Weyl{k}$.
	Then
	for any fixed $\sigma\in S(k)$, we have
	\begin{align}
		\oint_{\ga_1}\frac{dz_1}{2\pi\i}
		\ldots
		\oint_{\ga_k}\frac{dz_k}{2\pi\i}
		\prod_{A<B}\frac{z_A-z_B}{z_A-qz_B}
		\frac{z_{\sigma(A)}-qz_{\sigma(B)}}{z_{\sigma(A)}-z_{\sigma(B)}}
		\prod_{j=1}^{k}
		\frac{1}{(1-z_j)(1-\nu z_j)}\left(\frac{1-z_j}{1-\nu z_j}\right)^{-n_j+y_{\sigma^{-1}(j)}}
		=0
		\label{Plancherel_lemma}
	\end{align}
	unless $\y=\n$.
\end{lemma}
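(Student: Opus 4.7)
The plan is to simplify the integrand and then exploit the nested contour structure to extract per-permutation vanishing. First I would simplify the product of the two cross-term factors by splitting pairs $(A,B)$ with $A<B$ according to whether $(A,B)$ is an inversion of $\sigma^{-1}$. A pair-by-pair check (for each unordered $\{i,j\}$, the factor from the first product is $\frac{z_i-z_j}{z_i-qz_j}$ and the second product contributes either $\frac{z_i-qz_j}{z_i-z_j}$ or $\frac{z_j-qz_i}{z_j-z_i}$ depending on the relative order of $\sigma^{-1}(i),\sigma^{-1}(j)$) yields
\begin{align*}
\prod_{A<B}\frac{z_A-z_B}{z_A-qz_B}\cdot\prod_{A<B}\frac{z_{\sigma(A)}-qz_{\sigma(B)}}{z_{\sigma(A)}-z_{\sigma(B)}}=\prod_{(i,j)\in\mathrm{Inv}(\sigma^{-1})}\frac{qz_i-z_j}{z_i-qz_j},
\end{align*}
where $\mathrm{Inv}(\tau):=\{(i,j):i<j,\ \tau(i)>\tau(j)\}$. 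For $\sigma=\mathrm{id}$ the product is empty, the integral factorizes, and each univariate residue at $z_j=1$ evaluates to a nonzero constant times $\mathbf{1}_{n_j=y_j}$, giving $\mathbf{1}_{\n=\y}$.

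For general $\sigma$, I would analyze the two extremal variables. Consider the innermost $z_k$ on $\ga_k$. The cross-term poles $z_k=z_i/q$ (from $(i,k)\in\mathrm{Inv}(\sigma^{-1})$) lie on $\ga_i/q$, and the nesting $\ga_i\supset q\ga_k$ places them outside $\ga_k$; hence the only singularity inside $\ga_k$ is $z_k=1$. When $y_{\sigma^{-1}(k)}>n_k$, the factor $\bigl(\frac{1-z_k}{1-\nu z_k}\bigr)^{y_{\sigma^{-1}(k)}-n_k}$ has a zero at $z_k=1$ of order strictly larger than the simple pole of $\frac{1}{(1-z_k)(1-\nu z_k)}$, so the full integrand is holomorphic inside $\ga_k$ and the integral is zero. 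Dually, on the outermost contour $\ga_1$ one deforms outward past $\nu^{-1}$ to a circle of radius $R\to\infty$; the integrand decays like $O(|z_1|^{-2})$, so only the residue at $z_1=\nu^{-1}$ contributes, and the analogous cancellation shows it vanishes when $y_{\sigma^{-1}(1)}<n_1$. Two one-sided constraints are thus necessary: $y_{\sigma^{-1}(k)}\le n_k$ and $y_{\sigma^{-1}(1)}\ge n_1$.

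Next I would proceed by induction on $k$, with the base case $k=1$ a direct residue calculation. In the inductive step, in the simple-pole case $y_{\sigma^{-1}(k)}=n_k$, the residue at $z_k=1$ produces a $(k-1)$-dimensional integral in $z_1,\ldots,z_{k-1}$ whose integrand retains the shape derived above for the reduced permutation $\hat\sigma\in S(k-1)$ obtained by removing $\sigma^{-1}(k)$ from the domain of $\sigma$ and identifying the codomain $\{1,\ldots,k-1\}$ order-preservingly, multiplied by a residual factor $\prod_{i:(i,k)\in\mathrm{Inv}(\sigma^{-1})}\frac{qz_i-1}{z_i-q}$. Iterating this reduction accumulates the equalities $n_j=y_{\sigma^{-1}(j)}$ for all $j$; since $\n,\y\in\Weyl{k}$ are weakly decreasing sequences related by a permutation, this forces $\n=\y$.

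The main obstacle is the bookkeeping of the residual factors $\prod\frac{qz_i-1}{z_i-q}$ accumulated at each stage. These introduce new singularities (e.g.\ at $z_i=q$) inside intermediate contours that were absent originally, so one must either check directly that the extremal-vanishing analysis at the next layer is unaffected (since the new poles lie away from both $1$ and $\nu^{-1}$, the relevant exponent-cancellation arguments still apply), or strengthen the inductive hypothesis to carry such meromorphic factors through the recursion. The case $y_{\sigma^{-1}(k)}<n_k$ (higher-order pole at $z_k=1$) requires parallel handling, resolved by first performing an outward deformation on a suitably chosen variable to force the relevant exponent back into the simple-pole range before proceeding.
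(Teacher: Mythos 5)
Your opening reduction of the two cross-products to $\prod_{(i,j)\in\mathrm{Inv}(\sigma^{-1})}\frac{qz_i-z_j}{z_i-qz_j}$ is correct, as are the base case $\sigma=\mathrm{id}$ and the two extremal observations (the $z_k$-integral vanishes unless $y_{\sigma^{-1}(k)}\le n_k$; expanding $\ga_1$ outward gives vanishing unless $y_{\sigma^{-1}(1)}\ge n_1$). The genuine gap is the inductive step. Nonvanishing of the innermost integration yields only the \emph{inequality} $y_{\sigma^{-1}(k)}\le n_k$, never the equality your recursion needs: when $y_{\sigma^{-1}(k)}<n_k$ the pole at $z_k=1$ has order $n_k-y_{\sigma^{-1}(k)}+1\ge 2$, and since the inversion factors $qz_i-z_k$ depend nontrivially on $z_k$, the higher-order residue involves their $z_k$-derivatives at $1$ and is generically nonzero (the clean vanishing you computed for $\sigma=\mathrm{id}$ is special to the absence of cross-terms). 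Your proposed repair --- deforming some \emph{other} variable outward ``to force the relevant exponent back into the simple-pole range'' --- cannot work, because the exponent $-n_k+y_{\sigma^{-1}(k)}$ is fixed by $\n,\y,\sigma$ and is unchanged by any contour deformation. Hence the ``accumulated equalities $n_j=y_{\sigma^{-1}(j)}$'' are never established; and the two extremal inequalities alone are far from sufficient: for $k=3$, $\sigma=(1,3,2)$ they read $n_3\ge y_2$ and $n_1\le y_1$, which are satisfied by $\n=(0,0,0)$, $\y=(1,0,0)$ with $\n\ne\y$, so the induction is carrying all the weight and it is broken.

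There is a second, independent failure in how the residual factors $\prod_{i:(i,k)\in\mathrm{Inv}(\sigma^{-1})}\frac{qz_i-1}{z_i-q}$ are dismissed. Their poles at $z_i=q$ do lie ``away from $1$ and $\nu^{-1}$,'' but that is not the relevant criterion: the vanishing arguments require that $1$ (respectively $\nu^{-1}$) be the \emph{only} singularity inside (respectively outside) the contour being collapsed. Since $\ga_i$ contains $q\ga_k$, a small circle around $q$, the point $q$ lies strictly inside every $\ga_i$ with $i<k$; so at the next layer, regularity at $z_{k-1}=1$ no longer implies vanishing of the $z_{k-1}$-integral, and the exponent argument collapses. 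The paper's proof avoids both problems by a different route: it shrinks to $1$ every contour $\ga_{\sigma(i)}$ for which $i$ is a running maximum of the word $\sigma$ (the obstructing poles $z_{\sigma(i)}=qz_s$, $s>\sigma(i)$, are cancelled by the numerator), and expands to $\nu^{-1}$ every $\ga_{\sigma(i)}$ for which $i$ is a running minimum read from the right, thereby obtaining the full family of inequalities $n_{\sigma(i)}\ge y_i$, respectively $n_{\sigma(i)}\le y_i$; it then closes with a genuinely combinatorial argument (the arrow diagram and the regions built from the cycle structure of $\sigma$ in parts II--V of the proof) showing that this whole system of inequalities, together with the weak orderings of $\n$ and $\y$, forces $\n=\y$. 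Both the extraction of all these inequalities and that combinatorial step are missing from your proposal, and without them the per-permutation vanishing does not follow.
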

\begin{proof}
	We will split the proof into several parts.

	\smallskip\par\noindent\textbf{I. Shrinking and expanding contours.}
	Observe the following properties of
	the integral in \eqref{Plancherel_lemma} coming
	from the presence of powers of $(1-z_j)/(1-\nu z_j)$:
	\begin{enumerate}[(1)]
		\item If it is possible to shrink the contour $\ga_j$ to 1 (without picking any residues in the process),
		then
		in order for the left-hand side of
		\eqref{Plancherel_lemma}
		to be nonzero, we must have
		\begin{align*}
			-n_j+y_{\sigma^{-1}(j)}\le 0.
		\end{align*}
		\item If it is possible to expand the contour $\ga_j$ to $\nu^{-1}$
		(without picking any residues in the process; note that the integrand is regular at infinity and thus has to residue there),
		then
		in order for the left-hand side of
		\eqref{Plancherel_lemma}
		to be nonzero, we must have
		\begin{align*}
			-n_j+y_{\sigma^{-1}(j)}\ge 0.
		\end{align*}
	\end{enumerate}

	Which of the contours can be shrunk or expanded is determined by the
	product over inversions in the permutation $\sigma$:
	\begin{align}
		\prod_{A<B}\frac{z_A-z_B}{z_A-qz_B}
		\frac{z_{\sigma(A)}-qz_{\sigma(B)}}{z_{\sigma(A)}-z_{\sigma(B)}}=
		\sgn(\sigma)\prod_{A<B}
		\frac{z_{\sigma(A)}-qz_{\sigma(B)}}{z_A-qz_B}
		=
		\sgn(\sigma)\prod_{A<B\colon \sigma(A)>\sigma(B)}
		\frac{z_{\sigma(A)}-qz_{\sigma(B)}}{z_{\sigma(B)}^{}-qz_{\sigma(A)}^{}}.
		\label{Plancherel_lemma_proof1}
	\end{align}
	Namely:
	\begin{enumerate}[(1)]
		\item If for some $i\in\{1,2,\ldots,k\}$ one has $\sigma(i)>\sigma(1),\sigma(2),\ldots,\sigma(i-1)$
		(that is, if the position $i$ corresponds to the \emph{running maximum} in the permutation word $\sigma$),
		then the contour $\ga_{\sigma(i)}$ can be shrunk to $1$, and so $n_{\sigma(i)}\ge y_i$
		(or else the left-hand side of \eqref{Plancherel_lemma} vanishes).

		Indeed, this condition means that all numbers $s\in\{1,2,\ldots,k\}$ with $s> \sigma(i)$ lie to the right of
		$\sigma(i)$ in the permutation word $\sigma=(\sigma(1),\sigma(2),\ldots,\sigma(k))$.
		Therefore, the product $\prod_{A<B}(z_{\sigma(A)}-qz_{\sigma(B)})$ in the numerator in
		left-hand side
		of \eqref{Plancherel_lemma_proof1} contains all terms of the form $(z_{\sigma(i)}-qz_{\sigma(i)+1}),
		(z_{\sigma(i)}-qz_{\sigma(i)+2}),\ldots$. These terms cancel with the corresponding terms
		in the denominator, and so the integrand in \eqref{Plancherel_lemma} does not have
		poles at $z_{\sigma(i)}=qz_{\sigma(i)+1}, qz_{\sigma(i)+2},\ldots, qz_{k}$. Thus, it is possible to
		shrink the contour $\ga_{\sigma(i)}$ to $1$.

		\item If for some $i\in\{1,2,\ldots,k\}$ one has $\sigma(i)<\sigma(i+1),\sigma(i+2),\ldots,\sigma(k)$
		(that is, if the position $i$ corresponds to the \emph{running minimum} in the permutation word
		read from right to left),
		then the contour $\ga_{\sigma(i)}$ can be expanded to $\nu^{-1}$, and so $n_{\sigma(i)}\le y_i$
		(or else the left-hand side of \eqref{Plancherel_lemma} vanishes).

		The argument is similar to the case above: one can expand the contour $\ga_{\sigma(i)}$
		if the integrand does not have a pole at $z_{\sigma(i)}=q^{-1}z_{s}$ for all $s<\sigma(i)$.
	\end{enumerate}

	\smallskip\par\noindent\textbf{II. Arrow diagram of a permutation.}
	We will represent components of the vectors
	$\n=(n_1\ge n_2\ge \ldots\ge n_k)$
	and
	$\y=(y_1\ge y_2\ge \ldots\ge y_k)$
	graphically at two consecutive levels.
	We will draw an arrow from $a$ to $b$
	if we know that $a\ge b$.
	The condition $\n,\y\in\Weyl{k}$
	can be pictured by the following arrows
	(independently of the permutation $\sigma$):
	\begin{align*}
		\parbox{.8\textwidth}{\begin{tikzpicture}
				[scale=1,nodes={draw},very thick]
				\def\h{1.7}
				\def\x{1.5}
				\node[draw=none] at (-1*\x,\h) {$\n$};
				\node[circle] (x1) at (0,\h) {$n_1$};
				\node[circle] (x2) at (\x,\h) {$n_2$};
				\node[circle] (x3) at (2*\x,\h) {$n_3$};
				\node[circle] (x4) at (3*\x,\h) {$n_4$};
				\node[circle] (x5) at (4*\x,\h) {$n_5$};
				\node[circle] (x6) at (5*\x,\h) {$n_6$};
				\node[circle] (x7) at (6*\x,\h) {$n_7$};
				\node[circle] (x8) at (7*\x,\h) {$n_8$};
				\node[draw=none] at (-1*\x,0) {$\y$};
				\node[circle] (y1) at (0,0) {$y_1$};
				\node[circle] (y2) at (\x,0) {$y_2$};
				\node[circle] (y3) at (2*\x,0) {$y_3$};
				\node[circle] (y4) at (3*\x,0) {$y_4$};
				\node[circle] (y5) at (4*\x,0) {$y_5$};
				\node[circle] (y6) at (5*\x,0) {$y_6$};
				\node[circle] (y7) at (6*\x,0) {$y_7$};
				\node[circle] (y8) at (7*\x,0) {$y_8$};
				\draw[ultra thick, ->] (x1)--(x2);
				\draw[ultra thick, ->] (x2)--(x3);
				\draw[ultra thick, ->] (x3)--(x4);
				\draw[ultra thick, ->] (x4)--(x5);
				\draw[ultra thick, ->] (x5)--(x6);
				\draw[ultra thick, ->] (x6)--(x7);
				\draw[ultra thick, ->] (x7)--(x8);
				\draw[ultra thick, ->] (y1)--(y2);
				\draw[ultra thick, ->] (y2)--(y3);
				\draw[ultra thick, ->] (y3)--(y4);
				\draw[ultra thick, ->] (y4)--(y5);
				\draw[ultra thick, ->] (y5)--(y6);
				\draw[ultra thick, ->] (y6)--(y7);
				\draw[ultra thick, ->] (y7)--(y8);
				% \draw[ultra thick, ->, red] (x6) -- (y1);
				% \draw[ultra thick, ->, red] (x7) -- (y6);
				% \draw[ultra thick, ->, red] (x8) -- (y7);
				% \draw[ultra thick, ->, blue] (y3) -- (x1);
				% \draw[ultra thick, ->, blue] (y4) -- (x2);
				% \draw[ultra thick, ->, blue] (y8) -- (x4);
			\end{tikzpicture}}
	\end{align*}
	(Here
	and below we make illustrations for $k=8$.)
	It is convenient to relabel nodes in the upper row by
	$1,2,\ldots,k$, and in the lower row by $\sigma(1),
	\sigma(2),\ldots,\sigma(k)$. There are additional arrows between
	the lower and the upper level
	coming from expanding or shrinking the integration contours.
	These arrows will be drawn as follows (cf.
	Fig.~\ref{fig:arrow_diagram}):
	\begin{enumerate}[(1)]
		\item If $\sigma(i)$ is the running
		maximum at the lower level, connect it
		to $i$ at the upper level by a
		(\emph{red}) \emph{dashed arrow} connecting $i$ on the upper level
		(which is at position $i$) 
		with $i$ on the lower level
		(at position $\sigma(i)$).
		This corresponds to $n_{\sigma(i)}\ge y_i$.

		\item
		If $\sigma(i)$ is the running
		minimum at the lower level when the lower
		level is read from right to left, connect it
		to $i$ at the upper level by a
		(\emph{blue}) \emph{solid arrow} from 
		$i$ on the lower level (at position $\sigma(i)$)
		to $i$ on the upper level (the latter is at position $i$).
		This corresponds to $n_{\sigma(i)}\le y_{i}$.
	\end{enumerate}
	To complete the proof of the lemma, we need
	to show that such an arrow diagram always
	implies that $\n=\y$.
	(It is instructive and not difficult to check that the diagram
	on Fig.~\ref{fig:arrow_diagram}
	indeed implies the desired condition
	for our running example.)

	\begin{figure}[htbp]
		\begin{center}
		\parbox{.8\textwidth}{\begin{tikzpicture}
				[scale=1,nodes={draw},very thick]
				\def\h{3}
				\def\x{1.5}
				\node[draw=none] at (-1*\x,\h) {$\n$};
				\node[circle] (x1) at (0,\h) {$1$};
				\node[circle] (x2) at (\x,\h) {$2$};
				\node[circle] (x3) at (2*\x,\h) {$3$};
				\node[circle] (x4) at (3*\x,\h) {$4$};
				\node[circle] (x5) at (4*\x,\h) {$5$};
				\node[circle] (x6) at (5*\x,\h) {$6$};
				\node[circle] (x7) at (6*\x,\h) {$7$};
				\node[circle] (x8) at (7*\x,\h) {$8$};
				\node[draw=none] at (-1*\x,0) {$\y$};
				\node[circle] (y1) at (0,0) {$3$};
				\node[circle] (y2) at (\x,0) {$2$};
				\node[circle] (y3) at (2*\x,0) {$4$};
				\node[circle] (y4) at (3*\x,0) {$5$};
				\node[circle] (y5) at (4*\x,0) {$1$};
				\node[circle] (y6) at (5*\x,0) {$8$};
				\node[circle] (y7) at (6*\x,0) {$6$};
				\node[circle] (y8) at (7*\x,0) {$7$};
				\draw[ultra thick, ->] (x1)--(x2);
				\draw[ultra thick, ->] (x2)--(x3);
				\draw[ultra thick, ->] (x3)--(x4);
				\draw[ultra thick, ->] (x4)--(x5);
				\draw[ultra thick, ->] (x5)--(x6);
				\draw[ultra thick, ->] (x6)--(x7);
				\draw[ultra thick, ->] (x7)--(x8);
				\draw[ultra thick, ->] (y1)--(y2);
				\draw[ultra thick, ->] (y2)--(y3);
				\draw[ultra thick, ->] (y3)--(y4);
				\draw[ultra thick, ->] (y4)--(y5);
				\draw[ultra thick, ->] (y5)--(y6);
				\draw[ultra thick, ->] (y6)--(y7);
				\draw[ultra thick, ->] (y7)--(y8);
				\draw[line width = 2, ->, dashed, red] (x3) -- (y1);
				\draw[line width = 2, ->, dashed, red] (x4) -- (y3);
				\draw[line width = 2, ->, dashed, red] (x5) -- (y4);
				\draw[line width = 2, ->, dashed, red] (x8) -- (y6);
				\draw[line width = 2, ->, blue] (y5) -- (x1);
				\draw[line width = 2, ->, blue] (y7) -- (x6);
				\draw[line width = 2, ->, blue] (y8) -- (x7);
			\end{tikzpicture}}
			\end{center}
		\caption{Arrow diagram for $\sigma=(3,2,4,5,1,8,6,7)$.}
  		\label{fig:arrow_diagram}
	\end{figure}
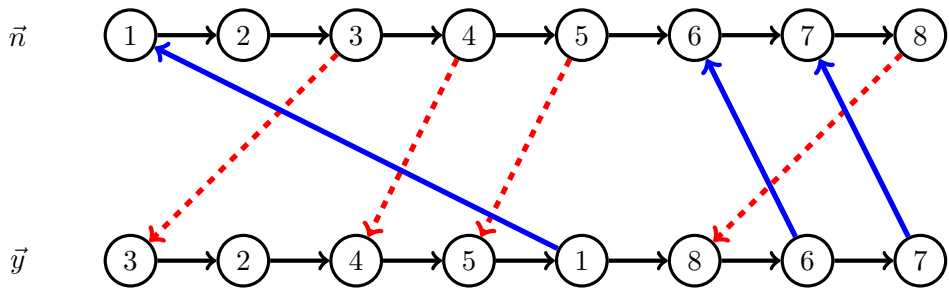

	\smallskip\par\noindent\textbf{III. One independent cycle of $\sigma$ and
	regions $B_\al$.}
	Let us partition $\{1,2,\ldots,k\}=C_1\sqcup C_2\sqcup \ldots \sqcup C_r$
	according to the representation of $\sigma$
	as a product of independent cycles. That is, each $C_\al$
	satisfies $\sigma(C_\al)=C_\al$,
	and is a minimal (i.e., indecomposable) subset with this property.
	
	Fix one of such parts $C_\al$.
	It is convenient to argue in terms of the
	graph of the permutation~$\sigma$.
	Such a graph is simply a collection of points
	$(i,\sigma(i))$, $i=1,\ldots,k$, on the plane
	(see Fig.~\ref{fig:sigma_graph}). 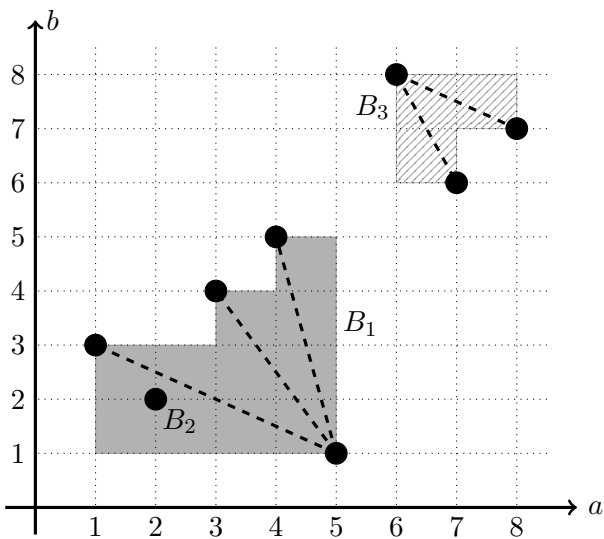
\begin{figure}[htbp]
		\begin{center}
		\begin{tikzpicture}
				[scale=.8]
				\def\h{.9};
				\draw[opacity=.6,gray,fill]
				(1,3*\h)--++(2,0)--++(0,\h)
				--++(1,0)--++(0,\h)
				--++(1,0)--++(0,-4*\h)
				--++(-4,0)--++(0,2*\h)
				--cycle;
				\draw[opacity=.6,gray,pattern=north east lines]
				(6,8*\h)
				--++(2,0)--++(0,-\h)
				--++(-1,0)--++(0,-\h)--++(-1,0)
				--cycle;
				\foreach \ll in {1,2,3,4,5,6,7,8}
				{
					\draw[dotted] (8.5,\ll*\h)--(0,\ll*\h) node[left] {\ll};
					\draw[dotted] (\ll,8.5*\h)--(\ll,0) node[below] {\ll};
				}
				\draw[->,very thick] (-.5,0)--(9,0) node [right] {$a$};
				\draw[->,very thick] (0,-.5*\h)--(0,9*\h)
				node [above right, yshift=-8] {$b$};
				\foreach \pt in {(1,3*\h),(2,2*\h),(3,4*\h),(4,5*\h),(5,1*\h),(6,8*\h),(7,6*\h),(8,7*\h)}
				{
					\draw[fill] \pt circle (.18);
				}
				\node at (5.4,3.4*\h) {$B_1$};
				\node at (2.4,1.6*\h) {$B_2$};
				\node at (5.6,7.4*\h) {$B_3$};
				\draw[dashed, very thick] (5,1*\h) -- (1,3*\h);
				\draw[dashed, very thick] (5,1*\h) -- (3,4*\h);
				\draw[dashed, very thick] (5,1*\h) -- (4,5*\h);
				\draw[dashed, very thick] (6,8*\h) -- (7,6*\h);
				\draw[dashed, very thick] (6,8*\h) -- (8,7*\h);
			\end{tikzpicture}
		\end{center}
	  	\caption{Graph of the permutation $\sigma=(3,2,4,5,1,8,6,7)$.
	  	Its decomposition into independent cycles looks as
	  	$\{1,2,\ldots,8\}=\{1,3,4,5\}\sqcup\{2\}\sqcup\{6,7,8\}.$
	  	The regions $B_1,B_2$, and $B_3$ are shown
	  	(note that $B_2=\{(2,2)\}\subset B_1$).
	  	Dashed lines corresponding to
	  	intersections of arrows in the arrow diagram
	  	are also shown.}
	  	\label{fig:sigma_graph}
	\end{figure}

	Define a region $B_\al$ in the same plane as
	\begin{align*}
		B_\al:=\{\text{points $(a,b)$}\colon
		\text{there exist $i,j\in C_\al$ such that
		$a\ge i$, $b\le \sigma(i)$ and $a\le j$, $b\ge \sigma(j)$}\}.
	\end{align*}
	Note that $B_\al$ can be represented as an intersection of
	two sets: the first set is a union of southeast cones
	attached to each point $(i,\sigma(i))$ with $i\in C_\al$, the second
	set is the analogous union of northwest cones (see Fig.~\ref{fig:sigma_graph}).

	Let us show that set $B_\al$ contains the whole diagonal
	$\{(a,a)\colon \min C_\al\le a\le \max C_\al\}$.
	We need to prove that
	(1) there exists a southeast cone covering $(a,a)$,
	i.e., a number $i\in C_\al$
	such that $i\le a$ and $\sigma(i)\ge a$; and
	(2) there exists a northwest cone covering $(a,a)$,
	i.e., a number $j\in C_\al$
	such that $j\ge a$ and $\sigma(j)\le a$.
	If $a=\min C_\al$, then $i=a$ and $j=\sigma^{-1}(a)$ would suffice,
	and case $a=\max C_\al$ is analogous.
	Assume that $\min C_\al<a<\max C_\al$,
	and that such $i$ (corresponding to a covering southeast
	cone) does not exist. Then for all $i\in C_\al$,
	the fact that $i\le a$ would imply that $\sigma(i)\le a$, too.
	This is not possible because $C_\al$ is an indecomposable subset with the
	property $\sigma(C_\al)=C_\al$. The existence of a covering
	northwest cone is established similarly.

	\smallskip\par\noindent\textbf{IV. Region $B$.}
	Define the region $B$ on the graph of the permutation
	$\sigma$ to be the union
	of all the regions $B_\al$ corresponding to independent
	cycles of $\sigma$. That is,
	\begin{align*}
		B=\{\text{points $(a,b)$}\colon
		\text{there exist $i$ and $j$ such that
		$i\le a$, $\sigma(i)\ge b$ and $j\ge a$, $\sigma(j)\le b$}\}.
	\end{align*}
	One readily sees that running maxima
	$\sigma(i)$
	in the permutation word $\sigma$
	correspond to northwest corners
	$(i,\sigma(i))$
	on the boundary of $B$; and running minima
	$\sigma(j)$
	of word $\sigma$
	read from right to left
	correspond to southeast corners
	$(j,\sigma(j))$ on the boundary
	of $B$. (This description of
	corners on the boundary does not always hold for
	the individual parts $B_\al$.)

	Let $B=\tilde B_1\sqcup \ldots\sqcup \tilde B_s$
	be the decomposition of the region $B$
	into connected subregions.
	Each $\tilde B_l$, $l=1,\ldots,s$
	is a union of some of the $B_\al$'s, $\al\in\{1,\ldots,r\}$
	(hence $s\le r$).
	In more detail, parts $B_\al$ and $B_\be$ belong to the same
	connected subregion $\tilde B_l$ if and only if
	the independent cycles $C_\al$ and $C_\be$ ``intersect''
	in the sense that
	$[\min C_\al,\max C_\al]\cap
	[\min C_\be,\max C_\be]\ne \varnothing$.
	For example, on Fig.~\ref{fig:sigma_graph}
	we have $B=(B_1\cup B_2)\sqcup B_3$.
	
	Since each $\tilde B_l$ is connected, it must
	contain the whole corresponding diagonal
	$\{(a,a)\colon m_l\le a\le M_l\}$, where
	\begin{align}
		m_l:=\min_{(a,b)\in \tilde B_l}a=
		\min_{(a,b)\in \tilde B_l}b,\qquad
		M_l:=
		\max_{(a,b)\in \tilde B_l}a=
		\max_{(a,b)\in \tilde B_l}b.\label{mlml_definition}
	\end{align}

	\smallskip\par\noindent\textbf{V. Dashed lines and intersections of arrows.}
	Let us connect a northwest boundary corner $(i,\sigma(i))$
	to a southeast boundary corner
	$(j,\sigma(j))$ by a \emph{dashed line} if
	$i<j$ and $\sigma(i)>\sigma(j)$, see Fig.~\ref{fig:sigma_graph}.
	Clearly, each dashed line must intersect the
	diagonal.
	While boundary corners
	correspond to arrows between the lower and the upper
	level in the arrow diagram as on Fig.~\ref{fig:arrow_diagram},
	these dashed lines correspond to intersections of
	two such arrows.
	It then readily follows from the arrow diagram
	that if
	a northwest corner $(i,\sigma(i))$ is connected to a
	southeast corner
	$(j,\sigma(j))$ by a dashed line, then
	\begin{align*}
		n_{\sigma(j)}=n_{\sigma(j)+1}=\ldots=n_{\sigma(i)}
		=y_{i}=y_{i+1}=\ldots=y_{j}.
	\end{align*}
	These equalities involve all $y$'s with labels
	belonging to the orthogonal projection of the dashed line onto
	the horizontal coordinate line; and all $n$'s
	with labels
	belonging to the analogous vertical projection.

	Now take any of the connected subregions $\tilde B_l$
	of $B$, where $l\in\{1,\ldots,s\}$. Note that
	the northwest corners $(i,\sigma(i))$ on the boundary
	of $\tilde B_l$
	must
	be located on or above the diagonal,
	while the southeast boundary corners
	$(j,\sigma(j))$ must be located on or below the diagonal.
	Since $\tilde B_l$ is connected and contains the whole corresponding
	diagonal from $m_l$ to $M_l$ \eqref{mlml_definition}, we conclude that
	one can get from any (northwest or southeast) boundary corner
	of $\tilde B_l$
	to \emph{any other} boundary corner along the dashed lines.
	This implies that
	\begin{align*}
		n_{m_l}=n_{m_l+1}=\ldots=n_{M_l}=
		y_{m_l}=y_{m_l+1}=\ldots=y_{M_l}.
	\end{align*}
	Because this has to hold for each
	connected subregion $\tilde B_l$, we see that
	$\n=\y$, and this completes the proof of the lemma.
\end{proof}

Let $\n\in\Weyl{k}$, and $\c=\c(\n)$ denote its cluster sizes
(as in \S \ref{sub:eigenfunctions}).
We say that $\sigma\in S(k)$ \emph{permutes within the clusters} of
$\n$ if $\sigma$ stabilizes
the sets
\begin{align*}
	\{1,2,\ldots,c_1\},\quad
	\{c_1+1,\ldots,c_1+c_2\},\quad
	\ldots,\quad
	\{c_1+\ldots+c_{M(\n)-1}+1,\ldots,k\}.
\end{align*}
Denote the set of all such permutations by $S_{\n}(k)$.
Every $\sigma\in S_\n(k)$ can be represented a product of
$M(\n)$ permutations $\sigma_1, \ldots, \sigma_{M(\n)}$,
where each $\sigma_i$
fixes all elements of $\{1,2,\ldots,k\}$
except those corresponding to the $i$-th cluster of $\n$. We will denote
the indexes within the $i$-th cluster by $C_i(\n)$,
and write $\sigma_i\in S_{\n,i}(k)$.
For example, if $k=5$ and $n_1=n_2=n_3>n_4=n_5$,
the permutation $\sigma\in S_\n(k)$ must stabilize the sets
$C_1(\n)=\{1,2,3\}$ and $C_2(\n)=\{4,5\}$; $S_{\n,1}(k)$ is the
set of all permutations of $\{1,2,3\}$, and $S_{\n,2}(k)$
is the set of all permutations of $\{4,5\}$.

Let us assume that $\y=\n$ in \eqref{Plancherel_lemma}.
By
considering the arrow diagram of $\sigma$
as in the
proof of Lemma \ref{lemma:Plspatial_permutation_vanishing},
one readily sees that the integral in
\eqref{Plancherel_lemma} vanishes unless $\sigma\in S_\n(k)$.
We will now compute the sum of these integrals
over all $\sigma\in S_\n(k)$.

\begin{lemma}\label{lemma:Plspatial_permutation_computation}
	For any $\n\in\Weyl{k}$,
	\begin{align}
		&\sum_{\sigma\in S_\n(k)}
		\oint_{\ga_1}\frac{dz_1}{2\pi\i}
		\ldots
		\oint_{\ga_k}\frac{dz_k}{2\pi\i}
		\prod_{A<B}\frac{z_A-z_B}{z_A-qz_B}
		\frac{z_{\sigma(A)}-qz_{\sigma(B)}}{z_{\sigma(A)}-z_{\sigma(B)}}
		\prod_{j=1}^{k}
		\frac{1}{(1-z_j)(1-\nu z_j)}\left(\frac{1-z_j}{1-\nu z_j}\right)^{-n_j+n_{\sigma^{-1}(j)}}
		\nonumber
		\\&\hspace{140pt}=
		(-1)^{k}(1-q)^{-k}
		\frac{(q;q)_{c_1}\ldots (q;q)_{c_{M(\n)}}}
		{(\nu;q)_{c_1}\ldots (\nu;q)_{c_{M(\n)}}}.
		\label{Plancherel_permuation_computation}
	\end{align}
\end{lemma}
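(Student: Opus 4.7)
My strategy is to exploit the cluster-preserving nature of $\sigma\in S_\n(k)$ so that the sum factorizes across clusters, and then evaluate each single-cluster integral by induction on the cluster size. Write $K_\sigma:=\prod_{A<B}\frac{z_A-z_B}{z_A-qz_B}\cdot\frac{z_{\sigma(A)}-qz_{\sigma(B)}}{z_{\sigma(A)}-z_{\sigma(B)}}$, and split the set of pairs $A<B$ into ``same cluster'' and ``cross cluster''. For a cross-cluster pair $A\in C_p(\n),\,B\in C_{p'}(\n)$ with $p<p'$, the cluster-preservation of $\sigma$ combined with the fact that clusters are blocks of consecutive indices forces $\sigma(A)<\sigma(B)$; moreover, the bijection $\sigma|_{C_p}\times\sigma|_{C_{p'}}$ of $C_p\times C_{p'}$ onto itself yields
\begin{align*}
\prod_{(A,B)\in C_p\times C_{p'}}\frac{z_{\sigma(A)}-qz_{\sigma(B)}}{z_{\sigma(A)}-z_{\sigma(B)}}=\prod_{(\alpha,\beta)\in C_p\times C_{p'}}\frac{z_\alpha-qz_\beta}{z_\alpha-z_\beta},
\end{align*}
which is $\sigma$-independent and exactly cancels the matching cross-cluster piece of $\prod_{A<B}\frac{z_A-z_B}{z_A-qz_B}$. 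Since $\prod_j\frac{1}{(1-z_j)(1-\nu z_j)}$ also factors per cluster and the contours $\ga_j$ within one cluster inherit the required nesting from Definition \ref{def:contours}, the left-hand side of \eqref{Plancherel_permuation_computation} equals $\prod_{p=1}^{M(\n)} I_{c_p}$, where $I_c$ is the single-cluster analogue with $c$ variables summed over all of $S(c)$.

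Inside a single cluster I would then invoke the classical $q$-identity
\begin{align*}
\sum_{\sigma\in S(c)}\prod_{A<B}\frac{z_{\sigma(A)}-qz_{\sigma(B)}}{z_{\sigma(A)}-z_{\sigma(B)}}=\frac{(q;q)_c}{(1-q)^c},
\end{align*}
verified by observing that the LHS is a symmetric rational function of total degree zero with cancelling poles at $z_A=z_B$, hence a constant; specializing $z_i=q^{i-1}$ zeroes out every non-identity term (any $\sigma\ne e$ admits $i<j$ with $\sigma(i)=\sigma(j)+1$, yielding the vanishing factor $z_{\sigma(i)}-qz_{\sigma(j)}$), while the identity permutation evaluates to the asserted product. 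This reduces $I_c=\frac{(q;q)_c}{(1-q)^c}\,J_c$ with
\begin{align*}
J_c:=\oint_{\ga_1}\cdots\oint_{\ga_c}\prod_{A<B}\frac{z_A-z_B}{z_A-qz_B}\prod_{j=1}^{c}\frac{dz_j/(2\pi\i)}{(1-z_j)(1-\nu z_j)}.
\end{align*}

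The remaining task is to establish $J_c=(-1)^c/(\nu;q)_c$ by induction on $c$, starting from $J_0=1$. The innermost integral over $\ga_c$ has only the interior pole $z_c=1$; its residue cancels $\prod_{A<c}(z_A-1)$ against $\prod_{A<c}(1-z_A)$ with sign $(-1)^{c-1}$ and contributes $-1/(1-\nu)$ from $1/[(1-z_c)(1-\nu z_c)]$, giving
\begin{align*}
J_c=\frac{(-1)^c}{1-\nu}\oint_{\ga_1}\cdots\oint_{\ga_{c-1}}\prod_{A<B<c}\frac{z_A-z_B}{z_A-qz_B}\prod_{j<c}\frac{dz_j/(2\pi\i)}{(z_j-q)(1-\nu z_j)}.
\end{align*}
The substitution $z_j=qu_j$ rescales each contour to $\ga_j/q$, which still satisfies the nesting condition of Definition \ref{def:contours} since $\ga_A\supset q\ga_B\Leftrightarrow \ga_A/q\supset q(\ga_B/q)$; it leaves the factor $\prod_{A<B<c}\frac{u_A-u_B}{u_A-qu_B}$ invariant, and turns $dz_j/[(z_j-q)(1-\nu z_j)]$ into $-du_j/[(1-u_j)(1-\nu q u_j)]$, producing an extra $(-1)^{c-1}$. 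The net outcome is $J_c=-\frac{1}{1-\nu}J_{c-1}\big|_{\nu\mapsto\nu q}$; the induction hypothesis then yields $J_c=(-1)^c/[(1-\nu)(\nu q;q)_{c-1}]=(-1)^c/(\nu;q)_c$. Assembling, the full integral equals $\prod_p\frac{(q;q)_{c_p}(-1)^{c_p}}{(1-q)^{c_p}(\nu;q)_{c_p}}$, which matches the right-hand side of \eqref{Plancherel_permuation_computation}. The main delicate points are the sign bookkeeping in the final step and the verification that $\ga_j/q$ preserves the nesting hypothesis; neither presents a serious obstacle once unpacked.
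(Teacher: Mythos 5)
Your proposal is correct, and its skeleton is the same as the paper's: you factorize the $\sigma$-sum across clusters (using that cluster-preserving $\sigma$ makes the cross-cluster cross-terms $\sigma$-independent, so they cancel against the reciprocal product in the integrand), you then apply the symmetrization identity $\sum_{\sigma\in S(c)}\prod_{A<B}\frac{z_{\sigma(A)}-qz_{\sigma(B)}}{z_{\sigma(A)}-z_{\sigma(B)}}=\frac{(q;q)_c}{(1-q)^c}$ within each cluster, and finally evaluate the remaining per-cluster integral to $(-1)^c/(\nu;q)_c$. The only genuine divergence is in the last two ingredients: the paper cites the identity from Macdonald III.1(1.4), whereas you re-derive it by the constant-plus-specialization $z_i=q^{i-1}$ argument; and the paper evaluates the cluster integral by quoting Proposition \ref{prop:TW_qnu} (case $c=1$, $\n=\vec 0$), whose proof peels off the \emph{outermost} variable via minus the residue at the external pole $\nu^{-1}$ and iterates with $c\mapsto qc$, while you peel off the \emph{innermost} variable at the internal pole $z_c=1$ and rescale $z\mapsto qz$, getting the recursion $J_c=-\tfrac{1}{1-\nu}J_{c-1}\vert_{\nu\mapsto q\nu}$. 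These are equivalent one-pole residue inductions, and yours has the advantage of being self-contained (it does not need the more general Proposition \ref{prop:TW_qnu}, which the paper needs elsewhere anyway). Two small points to tighten: (i) state explicitly that for $\sigma\in S_\n(k)$ the exponents $-n_j+n_{\sigma^{-1}(j)}$ vanish, so the power factors drop out — this is what lets the integrand factor over clusters; (ii) after rescaling (and already when restricting to a single non-innermost cluster), the contour family satisfies the nesting, contains $1$ and excludes the relevant $\nu$-pole, but the innermost contour is no longer a small circle around $1$ avoiding $q$ as literally required in Definition \ref{def:contours}; so phrase the induction for this slightly broader class of nested contours, for which your residue step goes through verbatim (the only pole of the innermost integrand inside its contour is still $z=1$, since the poles $z=z_A/q$ lie outside by the nesting).
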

\begin{proof}
	We have
	\begin{align*}
		&
		\sum_{\sigma\in S_\n(k)}
		\prod_{1\le A<B\le k}
		\frac{z_{\sigma(A)}-qz_{\sigma(B)}}{z_{\sigma(A)}-z_{\sigma(B)}}
		\prod_{j=1}^{k}
		\left(\frac{1-z_j}{1-\nu z_j}\right)^{-n_j+n_{\sigma^{-1}(j)}}
		\\&\hspace{60pt}=
		\prod_{1\le i<j\le M(\n)}
		\prod_{\substack{A\in C_i(\n)\\ B\in C_j(\n)}}
		\frac{z_A-qz_B}{z_A-z_B}
		\cdot
		\prod_{i=1}^{M(\n)}
		\sum_{\sigma_i\in S_{\n,i}(k)}
		\prod_{\substack{1\le A<B\le k\\
		A,B\in C_i(\n)}}
		\frac{z_{\sigma_i(A)}-qz_{\sigma_i(B)}}
		{z_{\sigma_i(A)}-z_{\sigma_i(B)}}.
	\end{align*}
	For the second product in the right-hand side above
	we may use the symmetrization identity
	\cite[III.1(1.4)]{Macdonald1995}
	which states that for any $\ell\ge1$,
	\begin{align}\label{Mac_symm_identity}
		\sum_{\omega\in S(\ell)}
		\frac{z_{\omega(A)}-qz_{\omega(B)}}
		{z_{\omega(A)}-z_{\omega(B)}}=\frac{(q;q)_{\ell}}{(1-q)^{\ell}}.
	\end{align}
	Thus, the left-hand side of \eqref{Plancherel_permuation_computation}
	takes the form
	\begin{align*}
		\frac{(q;q)_{c_1}\ldots (q;q)_{c_{M(\n)}}}
		{(1-q)^{k}}
		\oint_{\ga_1}\frac{dz_1}{2\pi\i}
		\ldots
		\oint_{\ga_k}\frac{dz_k}{2\pi\i}
		\prod_{i=1}^{M(\n)}
		\prod_{\substack{1\le A<B\le k\\
		A,B\in C_i(\n)}}\frac{z_A-z_B}{z_A-qz_B}
		\prod_{j=1}^{k}
		\frac{1}{(1-z_j)(1-\nu z_j)}.
	\end{align*}
	Now in the integral the integration variables corresponding
	to different clusters $C_i(\n)$ are independent,
	so the integral reduces to a product of
	$M(\n)$ smaller nested contour integrals of the same form.
	Each of these integrals is computed
	as follows:
	\begin{align*}
		\oint_{\ga_1}\frac{dz_1}{2\pi\i}
		\ldots
		\oint_{\ga_\ell}\frac{dz_\ell}{2\pi\i}
		\prod_{1\le A<B\le \ell}\frac{z_A-z_B}{z_A-qz_B}
		\prod_{j=1}^{k}
		\frac{1}{(1-z_j)(1-\nu z_j)}=\frac{(-1)^{\ell}}{(\nu;q)_{\ell}},
	\end{align*}
	this is a particular case $\n=(0,0,\ldots,0)$ and $c=1$
	of Proposition \ref{prop:TW_qnu} below.
	In this way we arrive at the desired right-hand side of
	\eqref{Plancherel_permuation_computation}.
\end{proof}

The next proposition is an explicit evaluation of a certain nested contour
integral. A particular case of it is used in the proof of Lemma \ref{lemma:Plspatial_permutation_computation} above. In its full generality this statement is employed in connection with the Tracy-Widom symmetrization
identities in \S \ref{sub:tracy_widom_symmetrization_identities} below (see also \S \ref{sub:symmetrization_formula}).

\begin{proposition}\label{prop:TW_qnu}
	For any $c\in\C\setminus\{\nu^{-1}, q^{-1}\nu^{-1},
	\ldots, q^{-(k-1)}\nu^{-1}\}$
	and any $\n\in\Weyl{k}$, we have
	\begin{align*}
		&\oint_{\ga_1}\frac{dz_1}{2\pi\i}
		\ldots
		\oint_{\ga_k}\frac{dz_k}{2\pi\i}
		\prod_{1\le A<B\le k}\frac{z_A-z_B}{z_A-qz_B}
		\prod_{j=1}^{k}\left(\frac{1-z_j}{1-\nu z_j}\right)^{-n_j}
		\frac{1}{(1-z_j)(1-c\nu  z_j)}
		\\&\hspace{220pt}=
		\frac{(-1)^{k}\nu^{n_1+n_2+\ldots+n_k}}{(c\nu;q)_{k}}\prod_{j=1}^{k}\left(\frac{1-cq^{j-1}}{1-c\nu q^{j-1}}\right)^{n_j}
		\mathbf{1}_{n_k\ge0}.
	\end{align*}
	Here the integration contours
	$\ga_1,\ldots,\ga_k$ are as in Definition \ref{def:contours},
	with an additional condition that they do
	not contain $c^{-1}\nu^{-1}$ (this is possible because of our restrictions on
	$c$; this could mean that each contour is a union of several disjoint
	simple contours).
\end{proposition}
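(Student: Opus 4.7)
The plan is to proceed by induction on $k$, with the indicator $\mathbf{1}_{n_k\ge 0}$ handled by an immediate contour-shrinking argument. Since $\n\in\Weyl{k}$ forces $n_k$ to be the minimal component, when $n_k<0$ the rewriting
\begin{align*}
\left(\frac{1-z_k}{1-\nu z_k}\right)^{-n_k}\frac{1}{1-z_k}=\frac{(1-z_k)^{|n_k|-1}}{(1-\nu z_k)^{|n_k|}}
\end{align*}
is regular at $z_k=1$. The nesting condition $\ga_A\supset q\ga_B$ for $A<B$ places each pole $z_k=z_A/q$ ($A<k$) outside $\ga_k$, while $\nu^{-1}$ and $(c\nu)^{-1}$ lie outside $\ga_k$ by hypothesis. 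Hence the $z_k$-integrand is holomorphic inside the small contour $\ga_k$, so the integral vanishes.

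For $n_k\ge 0$, which forces $n_j\ge 0$ for all $j$, I will evaluate the outermost $z_1$-integral by closing outward to infinity. The $z_1$-integrand decays like $O(z_1^{-2})$ at infinity---the bounded factors $\prod_{B\ge 2}\frac{z_1-z_B}{z_1-qz_B}\to q^{-(k-1)}$ and $\left(\frac{1-z_1}{1-\nu z_1}\right)^{-n_1}\to\nu^{n_1}$ are dominated by the $z_1^{-2}$ decay from $\frac{1}{(1-z_1)(1-c\nu z_1)}$---so $\oint_{\ga_1}$ equals minus the sum of residues at poles outside $\ga_1$. The rewriting $\left(\frac{1-z_1}{1-\nu z_1}\right)^{-n_1}=\frac{(1-\nu z_1)^{n_1}}{(1-z_1)^{n_1}}$ shows the would-be pole at $z_1=\nu^{-1}$ is in fact a zero of order $n_1\ge 0$, leaving $z_1=(c\nu)^{-1}$ as the sole contributing outside pole (a simple pole of $\frac{1}{1-c\nu z_1}$).

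Substituting $z_1=(c\nu)^{-1}$ into the cross-terms produces $\prod_{B\ge 2}\frac{1-c\nu z_B}{1-c\nu qz_B}$, and these factors telescope against the $\frac{1}{1-c\nu z_j}$ terms in the remaining integrand to leave $\prod_{j\ge 2}\frac{1}{1-cq\nu z_j}$. Writing $I_k(\n;c)$ for the left-hand side of the proposition, the overall minus sign from closing outward combined with the computed residue produces the recursion
\begin{align*}
I_k(\n;c)=-\frac{\nu^{n_1}}{1-c\nu}\left(\frac{1-c}{1-c\nu}\right)^{n_1}\, I_{k-1}\big((n_2,\ldots,n_k);\,cq\big).
\end{align*}
Feeding in the inductive hypothesis, applying the $q$-Pochhammer identity $(1-c\nu)(cq\nu;q)_{k-1}=(c\nu;q)_k$, and absorbing the sign via $(-1)\cdot(-1)^{k-1}=(-1)^k$ together with the index shift $cq\cdot q^{j-2}=cq^{j-1}$ yields the claimed closed form. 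The base case $k=1$ is the same calculation with no subsequent integrations, reproducing $-\nu^{n_1}(1-c)^{n_1}/(1-c\nu)^{n_1+1}$ directly.

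The main technical obstacle I anticipate is contour bookkeeping at the inductive step: the remaining contours $\ga_2,\ldots,\ga_k$ must still satisfy the nesting and avoidance conditions with the new parameter $cq$, in particular $(cq\nu)^{-1}=q^{-1}(c\nu)^{-1}$ must lie outside each of them. The hypothesis $c\notin\{\nu^{-1},q^{-1}\nu^{-1},\ldots,q^{-(k-1)}\nu^{-1}\}$ guarantees that all these obstacle points stay a positive distance from $1$, so one may choose the original $\ga_j$ small enough from the outset to clear them all simultaneously. Alternatively, both sides are rational in $c$ with matching pole structure, so establishing the identity for generic $c$ and extending by analytic continuation avoids this concern entirely.
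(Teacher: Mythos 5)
Your proof is correct and follows essentially the same route as the paper's: the indicator comes from the absence of a pole at $z_k=1$ when $n_k<0$, and for $n_k\ge 0$ the $z_1$-integral is evaluated as minus the residue at $z_1=(c\nu)^{-1}$ (the only pole outside $\ga_1$, since $n_1\ge0$ kills the pole at $\nu^{-1}$), yielding exactly the recursion with $c\mapsto qc$ and the same closed form by induction; the paper is just as terse as you are about the contour bookkeeping at later stages. The only caveat is that your $O(z_1^{-2})$ decay claim presumes $c\nu\ne 0$ (for $c=0$ the paper instead takes the residue at infinity, cf.\ its footnote), but this degenerate case is covered by the rationality-in-$c$ continuation you already propose.
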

By the very definition
of the inverse $q$-Hahn transform
\eqref{Pli_transform},
the left-hand side of the above identity can be interpreted as $(\Pli G)(\n)$,
where $G(\z)=\prod_{j=1}^{k}\frac{1-\nu z_j}{1-c\nu z_j}$.
Note however that this function $G$ does not belong to $\Cc^{k}_{z}$
(cf. the argument in Proposition \ref{prop:analytic_argument} below).
\begin{proof}
	First, note that
	the integrand has zero residue at $z_k=1$
	unless
	$n_k\ge0$, so let us assume $n_k\ge0$ (hence $n_j\ge0$ for all $j$).
	This implies that there is no pole at $z_j=\nu^{-1}$ for all $j$.
	
	Thus, the only pole of the integrand in $z_1$ outside the integration
	contour is at
	$z_1=c^{-1}\nu^{-1}$.\footnote{If $c\ne 0$, the
	integrand is regular at infinity.
	If $c=0$, then this only pole will be at $c^{-1}\nu^{-1}=\infty$.}
	So, we may evaluate the integral over $z_1$
	by taking minus
	the residue at $z_1=c^{-1}\nu^{-1}$:
	\begin{align*}
		&\oint_{\ga_1}\frac{dz_1}{2\pi\i}
		\ldots
		\oint_{\ga_k}\frac{dz_k}{2\pi\i}
		\prod_{1\le A<B\le k}\frac{z_A-z_B}{z_A-qz_B}
		\prod_{j=1}^{k}\left(\frac{1-z_j}{1-\nu z_j}\right)^{-n_j}
		\prod_{j=1}^{k}\frac{1}{(1-z_j)(1-c\nu z_j)}
		\\&\hspace{20pt}=
		-
		\left(-\frac{1}{c\nu(1-c^{-1}\nu^{-1})}\right)
		\left(\frac{1-c}{1-c\nu}\right)^{n_1}\nu^{n_1}
		\oint_{\ga_2}\frac{dz_2}{2\pi\i}
		\ldots
		\oint_{\ga_k}\frac{dz_k}{2\pi\i}
		\prod_{j=2}^{k}\frac{1-c\nu z_j}{1-q c \nu z_j}
		\\&\hspace{140pt}\times
		\prod_{2\le A<B\le k}\frac{z_A-z_B}{z_A-qz_B}
		\prod_{j=2}^{k}\left(\frac{1-z_j}{1-\nu z_j}\right)^{-n_j}
		\prod_{j=2}^{k}\frac{1}{(1-z_j)(1-c\nu z_j)}
		\\&\hspace{20pt}=
		-\frac{1}{1-c\nu}
		\left(\frac{1-c}{1-c\nu}\right)^{n_1}\nu^{n_1}
		\\&\hspace{60pt}\times\oint_{\ga_2}\frac{dz_2}{2\pi\i}
		\ldots
		\oint_{\ga_k}\frac{dz_k}{2\pi\i}
		\prod_{2\le A<B\le k}\frac{z_A-z_B}{z_A-qz_B}
		\prod_{j=2}^{k}\left(\frac{1-z_j}{1-\nu z_j}\right)^{-n_j}
		\prod_{j=2}^{k}\frac{1}{(1-z_j)(1-qc\nu z_j)}.
	\end{align*}
	We obtained a constant times a similar integral in $k-1$ variables
	$z_2,z_3,\ldots,z_k$, but with $c$ replaced by $qc$.
	Continuing an inductive evaluation of the integrals, we get the desired result.
\end{proof}

Now, looking at
\eqref{Psir_permutation_representation} and
\eqref{stationary_measure}, we see that
Lemmas \ref{lemma:Plspatial_permutation_vanishing} and \ref{lemma:Plspatial_permutation_computation}
imply identity
\eqref{Plancherel_identity_representation},
and thus the proof of
Theorem \ref{thm:spatial_Plancherel} is completed.

\begin{remark}\label{rmk:difference_in_proofs_of_Plancherel}
	Our proof of the spatial (direct) Plancherel
	formula differs from the one given for $\nu=0$
	in \cite[\S3.2]{BorodinCorwinPetrovSasamoto2013} (and it is not clear
	if the proof from \cite{BorodinCorwinPetrovSasamoto2013}
	can be directly adapted to the case $\nu\ne 0$).
	In the latter proof, a statement analogous to Lemma
	\ref{lemma:Plspatial_permutation_vanishing} was established
	via a so-called contour shift argument (also used,
	in particular, in the work of Heckman and Opdam
	\cite{HeckmannOpdam1997},
	and dating back to Helgason \cite{Helgason66}).
	The contour shift argument in \cite{BorodinCorwinPetrovSasamoto2013}
	employed the PT-symmetry property of the underlying
	$q$-Boson particle system
	(this property replaces the Hermitian symmetry used
	in contour shift arguments in earlier works).
	Instead of such an argument, above we have presented
	a direct combinatorial argument which
	took care of summands corresponding to each $\sigma\in S(k)$
	in \eqref{Psir_permutation_representation}
	separately. Note that our proof of
	Lemma \ref{lemma:Plspatial_permutation_vanishing}
	did not involve the PT-symmetry.
	On the other hand, the computation of the constant
	we performed in Lemma \ref{lemma:Plspatial_permutation_computation}
	is done somewhat in the spirit of \cite{BorodinCorwinPetrovSasamoto2013}.
\end{remark}

% subsection spatial_plancherel_formula (end)

\subsection{The spectral Plancherel formula} % (fold)
\label{sub:spectral_plancherel_formula}

In this subsection, we assume in addition that
$0<q<1$.

The composition $\Plspectral:=\Pld \Pli$ of the
$q$-Hahn transforms (the order is reversed
comparing to $\Plspatial=\Pli\Pld$, cf.
\S \ref{sub:spatial_plancherel_formula})
maps the space $\Cc^{k}_{z}$ (\S \ref{sub:spatial_and_spectral_variables}) to itself.
It acts as
\begin{align}\label{Plspectral}
	&(\Plspectral G)(\w)=(\Pld \Pli G)(\w)\\&
	\hspace{10pt}=
	\sum_{\n\in\Weyl{k}}
	\Psir_{\w}(\n)
	\oint_{\ga_1}\frac{dz_1}{2\pi\i}
	\ldots
	\oint_{\ga_k}\frac{dz_k}{2\pi\i}
	\prod_{A<B}\frac{z_A-z_B}{z_A-qz_B}
	\prod_{j=1}^{k}
	\frac{1}{(1-z_j)(1-\nu z_j)}
	\left(\frac{1-z_j}{1-\nu z_j}\right)^{-n_j}
	G(\z).
	\nonumber
\end{align}

\begin{theorem}\label{thm:spectral_Plancherel}
	The inverse $q$-Hahn transform $\Pli$ induces an isomorphism between the
	space $\Cc^{k}_{z}$ and its image inside $\Wc^{k}$, with the inverse given by $\Pld$.
	Equivalently, $\Plspectral$ acts as the identity operator
	on $\Cc^{k}_{z}$. Moreover, $\Pli$
	acts on the bilinear pairing
	as follows:
	\begin{align}
		\llangle F,G \rrangle_{\Cc_{z}^k}=
		\llangle \Pli F,\Psiswap^{-1}(\Pli G)\rrangle_{\Wc^k}
		\qquad\text{for all $F,G\in\Cc^{k}_{z}$}.
		\label{Plspectral_isomorphism}
	\end{align}
	Here $\Psiswap$ is the operator
	\eqref{Psiswap_operator}
	in $\Wc^{k}$
	which turns each right eigenfunction into the corresponding left one.
\end{theorem}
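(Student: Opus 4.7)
The plan is to derive Theorem~\ref{thm:spectral_Plancherel} from the spectral biorthogonality of the left and right eigenfunctions (item~5 of the introduction, i.e., Theorems~\ref{thm:spectral_biorthogonality} and \ref{thm:spectral_biorthogonality_xi}, to be established separately in~\S\ref{sec:spectral_biorthogonality_of_eigenfunctions}), which originates from the commuting family $(\Hbwd)_{\mu\in[\nu,1)}$ with distinct eigenvalues. With that biorthogonality as the key input, I first prove $\Pld\Pli = \Id$ on $\Cc^{k}_{z}$, and then deduce the pairing identity \eqref{Plspectral_isomorphism}.

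For the first step, fix $G\in\Cc^{k}_{z}$. Since $\Pli G$ has finite support by the residue argument in~\S\ref{sub:transform_and_inverse_transform}, \eqref{Pli_as_bilinear_pairing} gives
\[
(\Pld\Pli G)(\w)=\sum_{\n\in\Weyl{k}}\Psir_{\w}(\n)(\Pli G)(\n)=\sum_{\n}\Psir_{\w}(\n)\,\llangle G,\Psil(\n)\rrangle_{\Cc^{k}_{z}},
\]
a finite sum in $\n$. Formally interchanging this $\sum_\n$ with the contour integration hidden in $\llangle\cdot,\cdot\rrangle_{\Cc^{k}_{z}}$ recasts this as $\llangle G(\z),\sum_\n\Psir_\w(\n)\Psil_\z(\n)\rrangle_{\Cc^{k}_{z}}$, which by the spectral biorthogonality (in the integrated sense described in the introduction) equals $G(\w)$. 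The inner sum does not converge pointwise in $\z,\w$, so the interchange must be performed directly at the integrated level; the $\vxi$-variable formulation with the single-contour pairing \eqref{Cxi_pairing_large} is better suited for this than the nested-contour form, which is why the biorthogonality is proved in both versions.

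For the second step, a short computation using \eqref{Psiswap_operator} together with the PT-symmetry \eqref{Psi_PT_symmetry} and the reflection-invariance of $\st$ yields
\[
\bigl(\Psiswap^{-1}(\Pli G)\bigr)(\n)=\llangle G,\Psir(\n)\rrangle_{\Cc^{k}_{z}},
\]
where $\Psir(\n)\in\Cc^{k}_{z}$ denotes $\z\mapsto\Psir_\z(\n)$. Now using $\Pld\Pli=\Id$ applied to $F$, the expansion $F(\z)=\sum_\n (\Pli F)(\n)\Psir_{\z}(\n)$ is a \emph{finite} sum in $\Cc^{k}_{z}$, so the interchange of $\sum_\n$ with the pairing against $G$ is automatic, and
\[
\llangle\Pli F,\Psiswap^{-1}(\Pli G)\rrangle_{\Wc^k}=\sum_\n(\Pli F)(\n)\,\llangle G,\Psir(\n)\rrangle_{\Cc^{k}_{z}}=\Bigl\langle G,\sum_\n (\Pli F)(\n)\Psir(\n)\Bigr\rangle_{\Cc^{k}_{z}}=\llangle F,G\rrangle_{\Cc^{k}_{z}}.
\]

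The main obstacle is therefore establishing the spectral biorthogonality itself in a form strong enough to legitimize the formal sum-integral interchange in the first step; this is a nontrivial integrated identity among eigenfunctions and is the substance of the subsequent section. Once that is in hand, both $\Plspectral = \Id$ and the pairing identity follow via essentially formal manipulations, because the remaining sums over $\n$ are automatically finite.
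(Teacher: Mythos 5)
Your proposal is correct and takes essentially the same route as the paper: there, too, $\Plspectral=\Id$ is deduced by testing against an arbitrary $F\in\Cc^{k}_{z}$, interchanging the (effectively finite) sum over $\n$ with the $\w$-integration, and invoking Theorem \ref{thm:spectral_biorthogonality} with test functions lying outside $\Cc^{k}_{z}$ — exactly the reason that theorem is proved for a wider test-function class — after which \eqref{Plspectral_isomorphism} follows formally from $\Plspectral=\Id$. Your explicit verification that $\bigl(\Psiswap^{-1}(\Pli G)\bigr)(\n)=\llangle G,\Psir(\n)\rrangle_{\Cc_{z}^k}$ via \eqref{Psi_PT_symmetry}, together with the finite-support interchange, correctly supplies the step the paper only indicates as being ``similar'' to the spatial case.
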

\begin{proof}
	First, if $\Plspectral=\Id$ on $\Cc_{z}^{k}$, then one can obtain \eqref{Plspectral_isomorphism}
	in a way similar to \eqref{Plspatial_isomorphism}
	in the proof of the spatial Plancherel formula (Theorem \ref{thm:spatial_Plancherel}).

	\smallskip

	To establish the main part of the theorem, rewrite the nested contour
	integral in \eqref{Plspectral} in terms of integration over the large contour
	$\ga$:\footnote{This is possible because $G\in\Cc^{k}_{z}$,
	see \eqref{Pli_and_Psil_large} and Proposition \ref{prop:nesting_unnesting}.
	See also
	\eqref{dmu_large} for an explicit expression for the Plancherel
	measure corresponding to the large contour.}
	\begin{align*}
		&(\Plspectral G)(\w)
		=
		\sum_{\n\in\Weyl{k}}
		\Psir_{\w}(\n)
		\oint_{\ga}\ldots\oint_{\ga}
		\frac{d\z}{(2\pi\i)^{k}}
		\frac{1}{k!}
		\frac{(-1)^{\frac{k(k-1)}{2}}\Vand(\z)^2}{\prod_{i\ne j}(z_i-qz_j)}
		\Psil_{\z}(\n)G(\z)
		\prod_{j=1}^{k}\frac{1}{(1-z_j)(1-\nu z_j)}.
	\end{align*}
	To show that the right-hand side is equal to $G(\w)$, it suffices to establish
	the following integrated version of the above identity:
	For any $F\in\Cc^{k}_{z}$,
	\begin{align*}
		&
		\oint_{\ga}\ldots\oint_{\ga}
		\frac{d\w}{(2\pi\i)^{k}}
		F(\w)G(\w)
		=
		\oint_{\ga}\ldots\oint_{\ga}
		\frac{d\w}{(2\pi\i)^{k}}
		F(\w)
		\frac{(-1)^{\frac{k(k-1)}{2}}}{k!}
		\sum_{\n\in\Weyl{k}}
		\Psir_{\w}(\n)
		\\&\hspace{160pt}\times
		\oint_{\ga}\ldots\oint_{\ga}
		\frac{d\z}{(2\pi\i)^{k}}
		\frac{\Vand(\z)^2}{\prod_{i\ne j}(z_i-qz_j)}
		\Psil_{\z}(\n)G(\z)
		\prod_{j=1}^{k}\frac{1}{(1-z_j)(1-\nu z_j)}.
	\end{align*}
	It is possible to interchange the summation and the integration
	in the $\w$ variables (because of the finitely many nonzero terms in the
	sum). Thus, we must show that
	\begin{align}
		\nonumber
		&
		\oint_{\ga}\ldots\oint_{\ga}
		\frac{d\w}{(2\pi\i)^{k}}
		F(\w)G(\w)=
		\frac{(-1)^{\frac{k(k-1)}{2}}}{k!}
		\sum_{\n\in\Weyl{k}}
		\oint_{\ga}\ldots\oint_{\ga}
		\frac{d\w}{(2\pi\i)^{k}}
		\oint_{\ga}\ldots\oint_{\ga}
		\frac{d\z}{(2\pi\i)^{k}}
		\frac{\Vand(\z)^2}{\prod_{i\ne j}(z_i-qz_j)}
		\\&\hspace{180pt}\times
		\Psir_{\w}(\n)
		\Psil_{\z}(\n)
		F(\w)
		G(\z)
		\prod_{j=1}^{k}\frac{1}{(1-z_j)(1-\nu z_j)}.
		\label{spectral_Plancherel_via_biorthogonality_proof}
	\end{align}
	This statement follows from the spectral biorthogonality of the
	eigenfunctions, which we prove independently as Theorem
	\ref{thm:spectral_biorthogonality} below. Indeed, applying that theorem with the following two test functions\footnote{These functions do not belong to $\Cc^{k}_{z}$, but they satisfy the conditions of Theorem \ref{thm:spectral_biorthogonality}:
	the second function times
	$\Vand(\w)\prod_{j=1}^{k}(1-w_j)^{-M}(1-\nu w_j)^M$ is holomorphic
	in the closed exterior of the contour $\ga$ (including $\infty$);
	and the first function times $\Vand(\z)$
	is holomorphic between $\ga$ and $\ga'$
	(about the latter contour see Definition~\ref{def:contour_gap}).}
	\begin{align*}
		G(\z)\Vand(\z)
		\frac1{\prod_{i\ne j}(z_i-qz_j)}
		\prod_{j=1}^{k}\frac{1}{(1-z_j)(1-\nu z_j)}
		\qquad\text{and}\qquad
		F(\w)\frac{1}{\Vand(\w)},
	\end{align*}
	one can rewrite the right-hand side of \eqref{spectral_Plancherel_via_biorthogonality_proof}
	as
	\begin{align*}
		\frac{(-1)^{\frac{k(k-1)}{2}}}{k!}(-1)^{\frac{k(k-1)}2}
		\oint_{\ga}\ldots\oint_{\ga}
		\frac{d\w}{(2\pi\i)^{k}}
		G(\w)
		\sum_{\sigma\in S(k)}
		\sgn(\sigma)
		F(\sigma \w)\frac{\Vand(\w)}{\Vand(\sigma \w)},
	\end{align*}
	where $\sigma\w=(w_{\sigma(1)},\ldots,w_{\sigma(k)})$.
	The ratio of Vandermonde determinants gives $\sgn(\sigma)$,
	and so we arrive at the left-hand side of \eqref{spectral_Plancherel_via_biorthogonality_proof}.
	This completes the proof of the spectral Plancherel formula.
\end{proof}

% subsection spectral_plancherel_formula (end)

\subsection{Completeness and spatial biorthogonality} % (fold)
\label{sub:completeness}

Here we record two immediate corollaries
of the spatial Plancherel formula
(Theorem \ref{thm:spatial_Plancherel}).
The first is the
\emph{completeness}
of the coordinate Bethe ansatz for the $q$-Hahn stochastic particle system
(discussed in
\S \ref{sec:the_q_mu_nu_boson_process_and_coordinate_bethe_ansatz} below).

\begin{corollary}\label{cor:completeness}
	Any function $f\in\Wc^{k}$ can be expanded as
	\begin{align}\label{completeness_big_1}
		f(\n)&=\sum_{\la\vdash k}
		\oint_{\ga_k}\ldots\oint_{\ga_k}
		d\Plm^{(q)}_\la(\w)
		\prod_{j=1}^{\ell(\la)}\frac{1}{(w_j;q)_{\la_j}(\nu w_j;q)_{\la_j}}
		\Psil_{\w\circ\la}(\n)
		\llangle f,\Psir_{\w\circ\la}\rrangle_{\Wc^k}
		\\&=
		\label{completeness_big_2}
		\sum_{\la\vdash k}
		\oint_{\ga_k}\ldots\oint_{\ga_k}
		d\Plm^{(q)}_\la(\w)
		\prod_{j=1}^{\ell(\la)}\frac{1}{(w_j;q)_{\la_j}(\nu w_j;q)_{\la_j}}
		\Psir_{\w\circ\la}(\n)
		\llangle \Psil_{\w\circ\la},f\rrangle_{\Wc^k}.
	\end{align}
	We also have expansions
	\begin{align}
		\label{completeness_small_1}
		f(\n)&=
		\oint_{\ga}\ldots\oint_{\ga}
		d\Plm^{(q)}_{(1^{k})}(\z)
		\prod_{j=1}^{k}\frac{1}{(1-z_j)(1-\nu z_j)}
		\Psil_{\z}(\n)
		\llangle f,\Psir_{\z}\rrangle_{\Wc^k}
		\\&=
		\oint_{\ga}\ldots\oint_{\ga}
		d\Plm^{(q)}_{(1^{k})}(\z)
		\prod_{j=1}^{k}\frac{1}{(1-z_j)(1-\nu z_j)}
		\Psir_{\z}(\n)
		\llangle \Psil_{\z},f\rrangle_{\Wc^k}.
		\label{completeness_small_2}
	\end{align}
\end{corollary}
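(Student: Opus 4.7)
The plan is to derive all four expansions as straightforward consequences of the spatial Plancherel formula $\Pli\Pld=\Id$ on $\Wc^{k}$ (Theorem \ref{thm:spatial_Plancherel}), together with the two equivalent integral representations of $\Pli$ provided by Proposition \ref{prop:nesting_unnesting}: the nested-contour form \eqref{Pli_and_Psil_small} and the large-contour form \eqref{Pli_and_Psil_large}.

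First I would write $f=\Pli(\Pld f)$ and expand the outer $\Pli$ via \eqref{Pli_and_Psil_small}, applied to the symmetric function $G=\Pld f\in\Cc^{k}_{z}$. Since by the definition of the direct transform $(\Pld f)(\w\circ\la)=\llangle f,\Psir_{\w\circ\la}\rrangle_{\Wc^k}$, the substitution reproduces \eqref{completeness_big_1} verbatim. Expanding instead the same $\Pli(\Pld f)$ via the large-contour formula \eqref{Pli_and_Psil_large} (with $\la=(1^{k})$) and the same substitution yields \eqref{completeness_small_1} in an identical manner.

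To obtain the remaining two identities \eqref{completeness_big_2} and \eqref{completeness_small_2}, I would apply the expansions just established not to $f$ but to $\Psiswap f\in\Wc^{k}$, and then act on both sides by $\Psiswap^{-1}$ in the spatial variable $\n$. Two facts do the work. On the inner pairing, the symmetry of $\Psiswap$ with respect to $\llangle\cdot,\cdot\rrangle_{\Wc^k}$ together with \eqref{Psiswap_property} gives
\begin{align*}
\llangle \Psiswap f,\Psir_{\z}\rrangle_{\Wc^k}
=\llangle f,\Psiswap\Psir_{\z}\rrangle_{\Wc^k}
=\llangle f,\Psil_{\z}\rrangle_{\Wc^k}.
\end{align*}
On the eigenfunction side, $\Psiswap^{-1}$ commutes with the spectral integration (it acts only in $\n$), and by \eqref{Psiswap_property} it sends $\Psil_{\z}(\n)\mapsto\Psir_{\z}(\n)$ as a function of $\n$. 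After applying $\Psiswap^{-1}$ to both sides and using symmetry of the bilinear pairing to swap the slots, \eqref{completeness_big_1} turns into \eqref{completeness_big_2}, and \eqref{completeness_small_1} turns into \eqref{completeness_small_2}.

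There is no serious obstacle: the whole corollary is a mechanical reorganization of the spatial Plancherel formula. The only bookkeeping point requiring attention is the distinction between the action of $\Psiswap$ on the spatial argument versus operations on spectral variables, which ensures that $\Psiswap^{-1}$ can legitimately be pulled inside the $\w$- (or $\z$-) integral and there converts $\Psil$ into $\Psir$.
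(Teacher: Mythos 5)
Your proposal is correct and follows essentially the same route as the paper's proof: both derive \eqref{completeness_big_1} and \eqref{completeness_small_1} by applying $\Plspatial=\Pli\Pld=\Id$ (Theorem \ref{thm:spatial_Plancherel}) together with the representations \eqref{Pli_and_Psil_small} and \eqref{Pli_and_Psil_large}, and both obtain \eqref{completeness_big_2} and \eqref{completeness_small_2} by writing the identity for $\Psiswap f$, using the symmetry of $\Psiswap$ with respect to $\llangle\cdot,\cdot\rrangle_{\Wc^k}$ and \eqref{Psiswap_property}, and then applying $\Psiswap^{-1}$. No gaps.
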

\begin{proof}
	The expansion \eqref{completeness_big_1} follows by applying
	$\Plspatial$ to the function $f$ due to Theorem \ref{thm:spatial_Plancherel}
	and formula \eqref{Pli_and_Psil_small}.
	
	The swapping of the left and right eigenfunctions
	in \eqref{completeness_big_2} can be achieved
	by writing \eqref{completeness_big_1} for the function $\Psiswap f$
	(for the definition of $\Psiswap$, see \S \ref{sub:eigenfunctions}),
	which gives
	$\llangle \Psiswap f,\Psir_{\w\circ\la}\rrangle_{\Wc^k}=
	\llangle f,\Psiswap \Psir_{\w\circ\la}\rrangle_{\Wc^k}
	=\llangle f,\Psil_{\w\circ\la}\rrangle_{\Wc^k}$
	in the left-hand side. Then \eqref{completeness_big_2} follows
	by applying $\Psiswap^{-1}$
	to both sides of that identity for $\Psiswap f$.

	Expansions \eqref{completeness_small_1} and
	\eqref{completeness_small_2} follow from formula
	\eqref{Pli_and_Psil_large}.
\end{proof}

\begin{remark}
	The spatial Plancherel formula
	in the form \eqref{Plancherel_identity_representation}
	(or \eqref{completeness_small_1} with
	$f(\n)=\mathbf{1}_{\n=\x}$)
	is equivalent to the statement conjectured
	in \cite[\S 4.1]{Povolotsky2013}.
\end{remark}

\begin{remark}\label{rmk:name_for_Plancherel}
 	One can say that expansions
 	\eqref{completeness_big_1}--\eqref{completeness_big_2}
 	correspond to integrating against a (complex-valued)
 	measure which is supported on a disjoint sum of subspaces
 	(or contours and strings of specializations, cf. \eqref{w_circ_la}).
 	Since such an integration corresponds to the inverse transform $\Pli$
 	(see \S \ref{sub:various_forms_of_contour_integration}),
 	this measure may be called the \emph{Plancherel measure}.
	Note that this measure does not depend on $\nu$.
	This object has already appeared as a Plancherel measure in
	the treatment of the $\nu=0$ case in
	\cite{BorodinCorwinPetrovSasamoto2013}.

	This should be compared to other models with Hermitian Hamiltonians
	such as
	the XXZ spin chain \cite{BabbittThomas}, \cite{BabbittGutkin},
	\cite{Gutkin}
	(see also \S \ref{sec:application_to_six_vertex_model} below)
	and the continuous delta Bose gas \cite{Oxford1979},
	\cite{HeckmannOpdam1997},
	where the corresponding Plancherel measures are positive
	on suitably chosen contours.
\end{remark}

The second corollary
is the following \emph{spatial biorthogonality} of eigenfunctions
with respect to the bilinear pairing
$\llangle\cdot,\cdot\rrangle_{\Cc_{z}^k}$ (so that the spatial
variables play the role of labels of these functions).
\begin{corollary}\label{cor:C_biorthogonality}
	For any $\n,\vec m\in\Weyl{k}$, regarding $\Psil(\n)$ and
	$\Psir(\vec m)$ as elements of $\Cc^{k}_{z}$
	(i.e., $\Psil(\n)$ acts as $\z\mapsto\Psil_{\z}(\n)$, and \
	same for $\Psir(\vec m)$), we have
	\begin{align*}
		&\llangle \Psil(\n),\Psir(\vec m)\rrangle_{\Cc_{z}^k}
		\\&\hspace{20pt}=
		\sum_{\la\vdash k}\oint_{\ga_k}\ldots\oint_{\ga_k}
		d\Plm^{(q)}_\la(\w)
		\prod_{j=1}^{\ell(\la)}\frac{1}{(w_j;q)_{\la_j}(\nu w_j;q)_{\la_j}}
		\Psil_{\w\circ\la}(\n)
		\Psir_{\w\circ\la}(\vec m)
		\\&\hspace{20pt}=
		\oint_{\ga}\ldots\oint_{\ga}
		d\Plm^{(q)}_{(1^{k})}(\z)
		\prod_{j=1}^{k}\frac{1}{(1-z_j)(1-\nu z_j)}
		\Psil_{\z}(\n)
		\Psir_{\z}(\vec m)
		\\&\hspace{20pt}=\mathbf{1}_{\vec m=\n}.
	\end{align*}
\end{corollary}
\begin{proof}
	This immediately follows by taking $f(\x)=\mathbf{1}_{\x=\n}$ and
	$g(\x)=\mathbf{1}_{\x=\vec m}$ in \eqref{Plspatial_isomorphism}.
	Note that 
	the contour integral expressions
	for the pairing
	$\llangle \Psil(\n),\Psir(\vec m)\rrangle_{\Cc_{z}^k}$
	come directly from its definition, cf. \S \ref{sub:spectral_bilinear_pairing}. 
\end{proof}

\begin{remark}\label{rmk:biorthogonality_vs_orthogonality}
	Corollary \ref{cor:C_biorthogonality}
	implies that the left and right eigenfunctions
	are biorthogonal as elements
	of the space $\Cc^{k}_{z}$.
	On the other hand, each $\Psir(\vec m)$ is the
	image of $f(\n)=\mathbf{1}_{\n=\vec m}$ (viewed as an element
	of $\Wc^{k}$) under the direct transform $\Pld$
	\eqref{Pld_transform}. The indicator functions
	are orthogonal with respect to the bilinear pairing
	in $\Wc^{k}$. Thus, it is natural
	that to map
	the pairing $\llangle\cdot,\cdot\rrangle_{\Wc^k}$
	to the pairing $\llangle\cdot,\cdot\rrangle_{\Cc_z^k}$
	we need a twisting operator
	$\Psiswap$ as in
	\eqref{Plspatial_isomorphism} and
	\eqref{Plspectral_isomorphism}.
	(One could avoid using the operator $\Psiswap$
	in \eqref{Plspatial_isomorphism} and
	\eqref{Plspectral_isomorphism}
	by taking a twisted bilinear pairing
	$\llangle f,g\rrangle_{\Wc^k}^{\sim}
	=\llangle f,\Psiswap^{-1}g \rrangle_{\Wc^k}$
	in the space $\Wc^{k}$ instead.)
\end{remark}

% subsection completeness (end)

% section main_results (end)

\section{Spectral biorthogonality of eigenfunctions} % (fold)
\label{sec:spectral_biorthogonality_of_eigenfunctions}

In this section we establish a statement
about the
\emph{spectral biorthogonality}
of the left and right eigenfunctions,
when spectral variables are treated as labels of the
eigenfunctions
(Theorem~\ref{thm:spectral_biorthogonality}).
We prove the spectral biorthogonality independently, as
the spectral Plancherel formula
(Theorem \ref{thm:spectral_Plancherel})
was deduced from Theorem \ref{thm:spectral_biorthogonality}.

In this section we work
under an additional assumption
$0<q<1$ (the same assumption was made
in \S \ref{sub:spectral_plancherel_formula}).
We also fix
$0\le \nu<1$,
and
integer $k\ge1$.

\subsection{Contour $\ga'$} % (fold)
\label{sub:contour_}

\begin{definition}\label{def:contour_gap}
	Recall the contour $\ga$
	from Definition \ref{def:contours}.
	Let $\ga'$ be a positively oriented contour containing $\ga$,
	contained inside $q^{-1}\ga$,
	and not containing $\nu^{-1}$,
	such that for all
	$z\in\ga$ and all $w\in\ga'$ one has
	\begin{align}\label{contour_gap_condition}
		\left|\frac{1-z}{1-\nu z}\right|<
		\left|\frac{1-w}{1-\nu w}\right|.
	\end{align}
	See Fig.~\ref{fig:gamma_prime} for an example of such a contour.
\end{definition}
\begin{figure}[htbp]
	\begin{center}
	\begin{tabular}{cc}
	\begin{tikzpicture}
		[scale=2.4]
		\def\pt{0.02}
		\def\q{.75}
		\def\nuu{0.58}
		\def\eps{0.3}
		\def\r{.93}
		\draw[->, thick] (-.7,0) -- (2.06,0);
	  	\draw[->, thick] (0,-1.7) -- (0,1.7);
	  	\node at (-.3,1) {$z$};
	  	\draw[fill] (1,0) circle (\pt) node [below] {1};
	  	\draw[fill] (\q,0) circle (\pt) node [below,yshift=-1] {$q$};
	  	% \draw[fill] (\q*\q,0) circle (\pt) node [below,yshift=3,xshift=2] {$q^{2}$};
	  	\draw[fill] (0,0) circle (\pt) node [below right] {$0$};
	  	\draw[fill] (1/\nuu,0) circle (\pt) node [below, yshift=3,xshift=3] {$\nu^{-1}$};
	  	% \draw (1,0) circle (.2) node [below,xshift=9,yshift=16] {$\ga_3$};
	  	% \draw[dotted] (\q,0) circle (.2*\q);
	  	% \draw[dotted] (\q*\q,0) circle (.2*\q*\q);
	  	\draw (-\eps/2+\r/2+.5/\nuu-.5*\r/\nuu,0) circle (\eps/2+\r/2+.5/\nuu-.5*\r/\nuu) node [below,xshift=25,yshift=36] {$\ga$};
	  	\draw[dotted] (-\eps/2/\q+\r/2/\q+.5/\nuu/\q-.5*\r/\nuu/\q,0) circle (\eps/2/\q+\r/2/\q+.5/\nuu/\q-.5*\r/\nuu/\q);
	  	\draw (0.47,0) circle (.813) node [xshift=58,yshift=20] {$\ga'$};
	\end{tikzpicture}
	&
	\begin{tikzpicture}
		[scale=2.4]
		\def\pt{0.02}
		\def\q{.6}
		\def\nuu{0.58}
		\def\eps{0.12}
		\def\r{.6}
	  	\node at (1.1,1.3) {$\xi$};
		\draw[->, thick] (-2.2,0) -- (1.4,0);
	  	\draw[->, thick] (0,-1.7) -- (0,1.7);
	  	\draw[fill] (1,0) circle (\pt) node [below] {1};
	  	% \draw[fill] (\q*\q,0) circle (\pt) node [below,yshift=3,xshift=2] {$q^{2}$};
	  	\draw[fill] (0,0) circle (\pt) node [below right] {$0$};
	  	% \draw (1,0) circle (.2) node [below,xshift=9,yshift=16] {$\ga_3$};
	  	% \draw[dotted] (\q,0) circle (.2*\q);
	  	% \draw[dotted] (\q*\q,0) circle (.2*\q*\q);
	  	\draw (0.49,0) circle (0.615) node [below,xshift=15,yshift=35] {$\xinu(\ga)$};
	  	\draw[dotted] (-0.48,0) circle (1.62);
	  	\draw (0,0) circle (1.12) node [below,xshift=-45,yshift=87] {$\xinu(\ga')$};
	\end{tikzpicture}
	\end{tabular}
	\end{center}
  	\caption{A possible choice of the integration contours $\ga$
  	and $\ga'$ in the $z$ variables (left), and the corresponding contours
  	$\xinu(\ga)$ and $\xinu(\ga')$ in the $\xi$ variables (right).
  	Contours $q^{-1}\ga$ and $\xinu(q^{-1}\ga)$ are shown dotted.}
  	\label{fig:gamma_prime}
\end{figure}
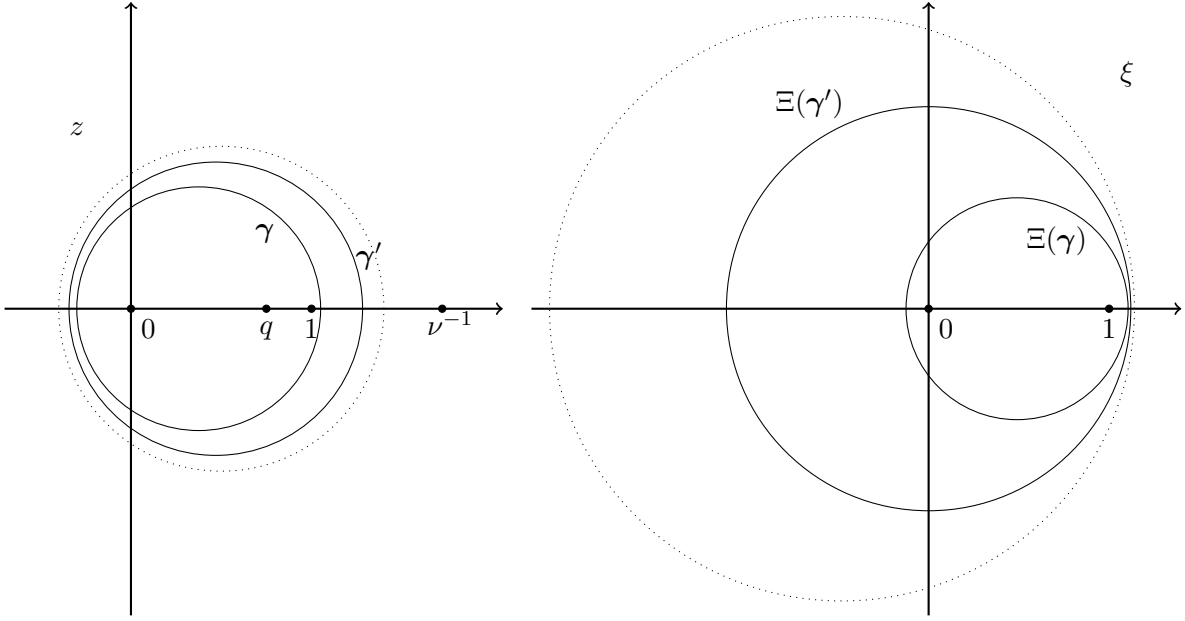

\begin{lemma}
	Contours $\ga$ and $\ga'$ satisfying Definition \ref{def:contour_gap}
	exist.
\end{lemma}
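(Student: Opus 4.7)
The plan is to take $\ga$ and $\ga'$ to be two level circles of the function $|\xinu(\cdot)|$. For $\rho>0$, set
\[
\ga_\rho := \bigl\{\,z\in\C : |\xinu(z)| = \rho\,\bigr\},
\]
which is a genuine circle in the $z$-plane since $\xinu$ is a M\"obius transformation. I would then take $\ga := \ga_\rho$ and $\ga' := \ga_{\rho'}$ for some $1 < \rho < \rho' < \nu^{-1}$, with $\rho'$ to be chosen sufficiently close to $\rho$ at the end. Condition \eqref{contour_gap_condition} is automatic by construction, since $|\xinu(z)|\equiv\rho$ on $\ga$ and $|\xinu(w)|\equiv\rho'$ on $\ga'$. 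The topological requirements from Definition \ref{def:contours} also follow at once from $\xinu(0)=1$, $\xinu(1)=0$, and $\xinu(\nu^{-1})=\infty$: both circles enclose $\{0,1\}$ and exclude $\nu^{-1}$, and as $\rho$ varies in $(0,\nu^{-1})$ the family $\ga_\rho$ is a nested family of Apollonius circles for the pair $\{1,\nu^{-1}\}$, so $\ga\subset\mathrm{int}(\ga')$.

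The two remaining conditions are (i) $q\ga\subset\mathrm{int}(\ga)$ and (ii) $\ga'\subset\mathrm{int}(q^{-1}\ga)$, the latter equivalent to $q\ga'\subset\mathrm{int}(\ga)$. To handle both, I would conjugate the dilation $z\mapsto qz$ by the involution $\xinu$, obtaining the M\"obius map of the $\xi$-plane
\[
B(\xi) \;:=\; \xinu\bigl(q\,\xinu(\xi)\bigr) \;=\; \frac{(1-q)+(q-\nu)\xi}{(1-\nu q)-\nu(1-q)\xi}.
\]
Condition (i) then reads $|B(\xi)|<\rho$ for all $|\xi|=\rho$, and condition (ii) reads $|B(\xi)|<\rho$ for all $|\xi|=\rho'$. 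Because $B$ has real coefficients, $B(\{|\xi|=r\})$ is a circle symmetric about the real axis, whose maximum modulus equals $\max(|B(r)|,|B(-r)|)$. For (i), a short algebraic manipulation reduces $|B(\rho)|<\rho$ to the factored inequality $(1-\rho)(1-\nu\rho)<0$, which holds throughout $\rho\in(1,\nu^{-1})$; the bound $|B(-\rho)|<\rho$ is verified by an analogous elementary computation. For (ii), the same estimate applied with $\rho'$ together with a continuity argument gives $|B(\pm\rho')|<\rho$ as soon as $\rho'$ is close enough to $\rho$, since $|B(\pm\rho')|\to|B(\pm\rho)|<\rho$ as $\rho'\to\rho$.

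The main obstacle is the global verification in (i): the naive quantity $|\xinu(qz)|/|\xinu(z)|$ is not radially monotone on $\ga_\rho$, so a direct estimate in $z$-coordinates is awkward. The key simplification is to pass to the $\xi$-coordinate and work with the conjugated M\"obius map $B$, which, by its real-axis symmetry, collapses the infinite family of inequalities parametrized by points of $\ga_\rho$ into the two explicit numerical inequalities $|B(\pm\rho)|<\rho$.
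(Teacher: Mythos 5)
Your argument is correct, but it follows a genuinely different route from the paper's. The paper constructs $\ga$ as an explicit round circle in the $z$-plane through $-\epsilon$ and $r+\nu^{-1}(1-r)$, passes to the $\xi$-variables, expands the real intersection points $c_1,c_2$ of $\xinu(\ga)$ and $d_1,d_2$ of $\xinu(q^{-1}\ga)$ to first order in $\epsilon$, and needs a case analysis ($q\le\nu$ versus $q>\nu$, with further subcases near $q=2\nu/(1+\nu)$) to get $|c_1|<|c_2|<|d_2|<|d_1|$, after which $\xinu(\ga')$ is taken to be the origin-centered circle of radius $c_2+\epsilon^{2}$ — so only $\ga'$ is a level curve of $|\xinu|$ there, and the construction is perturbative. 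You instead take both $\ga$ and $\ga'$ to be Apollonius level circles $\{|\xinu(z)|=\rho\}$ and $\{|\xinu(z)|=\rho'\}$ with $1<\rho<\rho'<\nu^{-1}$, which makes \eqref{contour_gap_condition} automatic, and you collapse the containment requirements into the scalar inequalities $|B(\pm\rho)|<\rho$ (and $|B(\pm\rho')|<\rho$ by continuity) for the conjugated map $B=\xinu\circ(q\,\cdot)\circ\xinu$; the reduction of $B(\rho)<\rho$ to $(1-\rho)(1-\nu\rho)<0$ is exactly right, so no case analysis in $(q,\nu)$ is needed and you in fact obtain a whole one-parameter family of admissible pairs, with $\nu=0$ handled uniformly rather than as a limit. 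Two details worth making explicit in a write-up: the pole of $B$ is at $\frac{1-\nu q}{\nu(1-q)}>\nu^{-1}$ (and $B$ is affine when $\nu=0$), so $B(\{|\xi|=r\})$ is a bounded circle for every $r<\nu^{-1}$ and the maximum-modulus reduction to the two real points is legitimate; and the denominator $(1-\nu q)-\nu(1-q)\rho$ of $B(\rho)$ is positive for $\rho<\nu^{-1}$ (with $B(\rho)>0$), so clearing it does not reverse the inequality, while the bound $|B(-\rho)|<\rho$ requires the short two-sign check you allude to. With those remarks included, your proof is complete and arguably cleaner and more uniform than the paper's.
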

\begin{proof}
	The case $\nu=0$ is simpler (and can be obtained
	in the limit $\nu\searrow 0$), so in the proof we
	assume that $0<\nu<1$.

	Take the contour $\ga$ to be
	the positively oriented closed
	circle which intersects the real line
	at points $-\epsilon$ and $r+\nu^{-1}(1-r)$,
	where $\epsilon>0$ and $0<r<1$ are
	parameters of the contour.
	The map $\xinu$ \eqref{xinu}
	turns $\ga$ into a positively oriented
	closed circle which intersects the
	real line at points
	\begin{align*}
		c_1=-\frac{1-r}{r\nu}
		\qquad
		\text{and}
		\qquad
		c_2=\frac{1+\epsilon}{1+\nu \epsilon}.
	\end{align*}
	Next, the intersection points of the contour
	$\xinu(q^{-1}\ga)$
	with the real axis are
	\begin{align*}
		d_1=\frac{1-q\nu-r(1-\nu)}{(1-q-r(1-\nu))\nu}
		\qquad
		\text{and}
		\qquad
		d_2=
		\frac{q+\epsilon}{q+\nu \epsilon}.
	\end{align*}

	First, assume that $q\le \nu$.
	Then we will construct the contours $\ga$, $\ga'$
	corresponding to $r=1- \epsilon$, and $\epsilon$
 	small enough. Then
 	\begin{align*}
 		&
 		c_1=-\nu^{-1} \epsilon+O(\epsilon^{2}),
 		&c_2=1+(1-\nu)\epsilon+O(\epsilon^{2}),
 		\\&d_1=\frac{1-q}{\nu-q}-q \frac{(1-\nu)^{2}}{(q-\nu)^{2}\nu}\epsilon
 		+O(\epsilon^{2}),
 		&d_2=1+(1-\nu)q^{-1}\epsilon+O(\epsilon^{2}).
 	\end{align*}
 	Thus, $|c_1|<|c_2|<|d_2|<|d_1|$.
 	(If $q=\nu$, then $d_1$ behaves as $const+\epsilon^{-1}$, and the inequalities
 	continue to hold.)
 	Take $\xinu(\ga')$ to be the
 	circle which intersects the real line at points
 	$\pm(c_2+\epsilon^{2})$.
 	This contour does not intersect $\xinu(q^{-1}\ga)$,
 	and $|\xi|<|\om|$ for all $\xi\in\xinu(\ga)$ and $\om\in\xinu(\ga')$.
 	Thus, the contours $\ga$ and $\ga'$ satisfy Definition~\ref{def:contour_gap} because under $\xinu$ condition $|\xi|<|\om|$
 	turns into~\eqref{contour_gap_condition}.

 	For $q>\nu$, we will construct the desired contours with
 	$r=1/(1+\nu)+\epsilon$
 	and $\epsilon>0$ small enough.
 	We have
 	\begin{align*}
 		&
 		c_1=
 		-1+\left(\nu +\nu^{-1}+2\right) \epsilon
 		+O(\epsilon^{2}),
 		&c_2=1+(1-\nu)\epsilon+O(\epsilon^{2}),
 		\\&d_1=\frac{q(1+\nu)-2}{q(1+\nu)-2\nu}+
 		\frac{\left(\nu ^2-1\right)^2 q }{\nu  (q(1+\nu)-2\nu)^2}\epsilon
 		+O(\epsilon^{2}),
 		&d_2=1+(1-\nu)q^{-1}\epsilon+O(\epsilon^{2}).
 	\end{align*}
 	If $q<2\nu/(1+\nu)$, one can check that $d_1>1$, and so
 	$|c_1|<|c_2|<|d_2|<|d_1|$.
 	If $q>2\nu/(1+\nu)$, we similarly have
 	$d_1<-1$, and so $|c_1|<|c_2|<|d_2|<|d_1|$.
 	If $q=2\nu/(1+\nu)$, then $d_1$ behaves as $const
 	-2(1+\nu)^{-1}\epsilon^{-1}$, and so the desired inequalities
 	hold. In all cases,
 	as $\xinu(\ga')$ we take the circle intersecting the
 	real line at points $\pm(c_2+\epsilon^{2})$,
 	and thus we obtain the desired contours.
\end{proof}

% subsection contour_ (end)

\subsection{Formulations of spectral biorthogonality} % (fold)
\label{sub:formulations}

We will prove spectral biorthogonality
for test functions from a wider class than
$\Cc_{z}^{k}$.
In particular, this is needed
in the proof of the spectral Plancherel
formula (Theorem \ref{thm:spectral_Plancherel}).

\begin{theorem}\label{thm:spectral_biorthogonality}
	Let the function $F(\z)$ be
	such that for $M$ large enough,
	\begin{align*}
		\Vand(\z)F(\z)\prod_{j=1}^{k}\left(\frac{1-z_j}{1-\nu z_j}\right)^{-M}
	\end{align*}
	is holomorphic in the closed exterior of the contour $\ga$
	(including $\infty$).
	Let $G(\w)$ be such that $\Vand(\w)G(\w)$
	is holomorphic in a neighborhood of the closed region between the contours
	$\ga$ and $\ga'$ (Definition \ref{def:contour_gap}).
	Then
	\begin{align}
		\nonumber
		&\sum_{\n\in\Weyl{k}}
		\left(
		\oint_{\ga}\ldots\oint_{\ga}
		\frac{d\z}{(2\pi\i)^{k}}
		\Psir_{\z}(\n)\Vand(\z)F(\z)\right)
		\left(
		\oint_{\ga}\ldots\oint_{\ga}
		\frac{d\w}{(2\pi\i)^{k}}
		\Psil_{\w}(\n)\Vand(\w)G(\w)
		\right)
		\\&\hspace{17pt}=
		\oint_{\ga}\ldots\oint_{\ga}
		\frac{d\z}{(2\pi\i)^{k}}
		(-1)^{\frac{k(k-1)}{2}}
		\prod_{j=1}^{k}(1-z_j)(1-\nu z_j)
		\prod_{A\ne B}(z_A-qz_B)\sum_{\sigma\in S(k)}
		\sgn(\sigma)
		F(\z)G(\sigma \z).
		\label{spectral_biorthogonality}
	\end{align}
	Here $\sigma\z=(z_{\sigma(1)},\ldots,z_{\sigma(k)})$.
	(See Proposition \ref{prop:convergence_of_biorthogonality}
	below about the convergence of the series in $\n$
	in the left-hand side.)
\end{theorem}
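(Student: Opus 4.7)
The plan is to prove the theorem by explicit evaluation of the inner sum $\sum_{\n\in\Weyl{k}}\Psir_{\z}(\n)\Psil_{\w}(\n)$, interpreted distributionally as a sum of $k!$ delta-type contributions concentrated on the diagonals $z_j=w_{\sigma(j)}$, $\sigma\in S(k)$. Conceptually this support is forced by $\Psir_\z$ and $\Psil_\w$ being joint right and left eigenfunctions of the commuting family $(\Hbwd)_{\mu\in[\nu,1)}$ from \S\ref{sub:main_results_for_the_qhahn}, with eigenvalues $\prod_j(1-\mu z_j)/(1-\nu z_j)$ and $\prod_j(1-\mu w_j)/(1-\nu w_j)$: the adjoint relation applied for each $\mu\in[\nu,1)$ forces the pairing to vanish unless these two rational functions in $\mu$ coincide, i.e.\ unless $\{z_j\}=\{w_{\sigma(j)}\}$ as multisets. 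The bulk of the work is to extract the explicit residue constants realizing this picture.

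First I would deform the $\w$-contour from $\ga$ to $\ga'$, which is justified by the assumed holomorphicity of $\Vand(\w)G(\w)$ in the intermediate region; this deformation produces the strict inequality $|\xinu(z_j)|<|\xinu(w_j)|$ of \eqref{contour_gap_condition} needed for convergence of geometric series. Expanding both eigenfunctions via the permutation sums in \eqref{Psir}, \eqref{Psil} isolates the $\n$-dependence, for each $(\sigma,\tau)\in S(k)^2$, into the product $\st(\n)\prod_{j=1}^{k}r_j^{n_j}$ with $r_j=\xinu(z_{\sigma(j)})/\xinu(w_{\tau(j)})$. I would then parametrize $\n\in\Weyl{k}$ by the gap coordinates $m_j=n_j-n_{j+1}\geq 0$ for $j<k$ and $m_k=n_k\in\Z$. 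On the open stratum (all $n_j$ distinct, $\st(\n)=((1-\nu)/(1-q))^k$) the resulting series factorizes into ordinary geometric sums $\sum_{m_j\geq 0}R_j^{m_j}=1/(1-R_j)$ in the partial products $R_j=r_1\cdots r_j$, $j<k$, times a formal bilateral series $\sum_{m_k\in\Z}R_k^{m_k}$. The lower-dimensional cluster strata, on which $\st(\n)$ takes different values, are incorporated by stratification combined with the Macdonald symmetrization \eqref{Mac_symm_identity} applied within each coincidence class.

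Interpreted distributionally, the bilateral series pins $R_k=1$; combined with the poles $1/(1-R_j)$, $j<k$, picked up by contracting the $\w$-contour back across these polar divisors, this forces $R_1=\cdots=R_k=1$, hence $r_j=1$ for every $j$, hence $z_{\sigma(j)}=w_{\tau(j)}$ by injectivity of $\xinu$ on the admissible region --- exactly the codimension-$k$ diagonal support required. The residue-free remainder after full contraction vanishes by the decay hypothesis on $F$ (the existence of $M$ with $\Vand(\z)F(\z)\prod_j\xinu(z_j)^{-M}$ holomorphic outside $\ga$ controls the tail on the other side). Summing the resulting residues over all $(\sigma,\tau)\in S(k)^2$ and tracking the Jacobian of $\xinu$ together with the permutation coefficients from \eqref{Psir}--\eqref{Psil}, the output collapses into the single $\z$-integral with the prefactor $(-1)^{k(k-1)/2}\prod_j(1-z_j)(1-\nu z_j)\prod_{A\ne B}(z_A-qz_B)$, while the relabeling $\w\mapsto\sigma\z$ produces the signed symmetrization $\sum_{\sigma}\sgn(\sigma)F(\z)G(\sigma\z)$ appearing on the right-hand side of \eqref{spectral_biorthogonality}.

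The main obstacle will be the combinatorial accounting of $\st(\n)$ across coincident strata combined with the interaction between the inner sum $\sum_\tau$ in $\Psil_\w$ and the residues at $R_j=1$: only after the cluster-wise symmetrizations of \eqref{Mac_symm_identity} and the sum over $\tau$ are fully combined do the Vandermonde-type prefactors $\prod_{A\ne B}(z_A-qz_B)$ reassemble correctly, and cross-permutation contributions $\tau\ne\sigma$ have to cancel against residues from lower strata. A secondary but nontrivial technical point is the rigorous justification of the bilateral resummation in $m_k$ together with the order of summation/integration interchanges, which I expect to follow from two-sided decay estimates for the Fourier-type coefficients of $\Vand(\z)F(\z)$ and $\Vand(\w)G(\w)$ provided by the stated holomorphicity hypotheses (and which, presumably, are what Proposition~\ref{prop:convergence_of_biorthogonality} is designed to supply).
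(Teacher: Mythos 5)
Your proposal is the direct-summation route: expand both eigenfunctions over pairs of permutations, sum the $\st(\n)$-weighted geometric series in gap coordinates over $\Weyl{k}$, and extract the diagonal $\det[\delta(z_i-w_j)]$ by residues. This is essentially the strategy used for $\nu=0$ in \cite{BorodinCorwinPetrovSasamoto2013} (cf.\ Remark \ref{rmk:difference_with_BCPS_spectral}), but for general $\nu$ it has a genuine gap exactly at the step you yourself flag as ``the main obstacle'': the evaluation of the $\st(\n)$-weighted sum across all cluster strata, together with the cancellation of the cross-permutation terms $\tau\ne\sigma$ against residues coming from lower-dimensional strata. For general $(q,\nu)$ the quantity $\st(\n)$ depends on the full cluster structure of $\n$, and already the simpler one-sided generating function $\sum_{n_1\ge\ldots\ge n_k\ge0}\st(\n)\,\xi_1^{n_1}\cdots\xi_k^{n_k}$ has no known closed form (see Remark \ref{rmk:rational_f_summation}); your plan of ``stratification combined with the Macdonald symmetrization \eqref{Mac_symm_identity} within each coincidence class'' is precisely the computation that does not close for $\nu\ne0,q^{-1}$, and the claimed collapse to the prefactor $(-1)^{k(k-1)/2}\prod_j(1-z_j)(1-\nu z_j)\prod_{A\ne B}(z_A-qz_B)$ is asserted rather than derived. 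The commuting-family observation appears in your write-up only as a heuristic for the support of the pairing, not as a working tool, so it does not rescue the missing combinatorics.

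The paper's proof turns that heuristic into the actual mechanism and thereby avoids ever needing $\st(\n)$ on degenerate strata: working in the $\vxi$ variables (Theorem \ref{thm:spectral_biorthogonality_xi}), the free parameter $\mu$ in the operator (absent from the eigenfunctions) yields
Lemma \ref{lemma:spectral_biorthogonality_symmetric_poly}, i.e.\ symmetric polynomials may be transported between the $\vxi$ and $\vom$ slots of the pairing $\LHS{\cdots}$; this gives $\LHS{\prod_{i}(\om_j-\xi_i)\,F\,G}=0$ (Lemma \ref{lemma:spectral_biorthogonality_zero}), hence the power-shifting Lemma \ref{lemma:power_shifting}. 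Combining with Runge approximation of $G$ by monomials and skew-symmetry, the identity is reduced to $G(\vom)=\om_1^{x_1}\cdots\om_k^{x_k}$ with gaps $x_i-x_{i+1}>2k$, where the sum over $\n$ is finite, only $\sigma=\mathrm{id}$ and distinct $n_j$ contribute, and $\st(\n)\big(\frac{1-q}{1-\nu}\big)^k=1$. So the explicit constant is only ever needed on the open stratum, which is the whole point (Remark \ref{rmk:difference_with_BCPS_spectral}). If you want to complete a proof along your lines, you would need either a closed-form evaluation of the cluster-weighted sums (not currently available) or to import the $\mu$-commutation argument as a genuine lemma rather than motivation; your secondary concerns about the bilateral resummation and interchange of sum and integral are indeed handled by the decay hypothesis on $F$ and the contour gap \eqref{contour_gap_condition}, as in Proposition \ref{prop:convergence_of_biorthogonality}, and are not the issue.
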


To make certain formulas and arguments below shorter,
let us rewrite Theorem \ref{thm:spectral_biorthogonality} using the
other spectral variables $\xi_j$
(see \S \ref{sub:spatial_and_spectral_variables}).
Changing the variables in $\Psil_{\z}$,
$\Psir_{\z}$, and multiplying them by Vandermonde
determinants,
denote
\begin{align}
	\Phil_{\vxi}(\n)&:=
	\sum_{\sigma\in S(k)}
	\sgn(\sigma)\prod_{1\le B<A\le k}\Sm(\xi_{\sigma(A)},\xi_{\sigma(B)})
	\prod_{j=1}^{k}\xi_{\sigma(j)}^{-n_j};
	\label{Phil_xi}
	\\
	\Phir_{\vxi}(\n)&:=
	\st(\n)\left(\frac{1-q}{1-\nu}\right)^{k}
	\sum_{\sigma\in S(k)}
	\sgn(\sigma)\prod_{1\le B<A\le k}\Sm(\xi_{\sigma(B)},\xi_{\sigma(A)})
	\prod_{j=1}^{k}\xi_{\sigma(j)}^{n_j},
	\label{Phir_xi}
\end{align}
where $\Sm$ is given in \eqref{quadratic_S_matrix}.

\begin{theorem}\label{thm:spectral_biorthogonality_xi}
	For any functions\footnote{Slightly abusing the notation,
	we are using the same letters for
	test functions
	in both \eqref{spectral_biorthogonality} and \eqref{spectral_biorthogonality_xi}.} $F(\vxi)$ and $G(\vom)$ such that
	for $M$ large enough,
	$\Vand(\vxi)F(\vxi)\prod_{j=1}^{k}\xi_j^{-M}$
	is holomorphic in the closed exterior of $\xinu(\ga)$,
	and such that $\Vand(\vom)G(\vom)$
	is holomorphic in a neighborhood of the closed region between
	$\xinu(\ga)$ and
	$\xinu(\ga')$, we have
	\begin{align}
		\begin{array}{ll}
			&\displaystyle\sum_{\n\in\Weyl{k}}
			\left(
			\oint_{\xinu(\ga)}\ldots\oint_{\xinu(\ga)}
			\frac{d\vxi}{(2\pi\i)^{k}}
			\Phir_{\vxi}(\n)F(\vxi)\right)
			\left(
			\oint_{\xinu(\ga)}\ldots\oint_{\xinu(\ga)}
			\frac{d\vom}{(2\pi\i)^{k}}
			\Phil_{\vom}(\n)G(\vom)
			\right)
			\\&\displaystyle\hspace{50pt}=
			\oint_{\xinu(\ga)}\ldots\oint_{\xinu(\ga)}
			\frac{d\vxi}{(2\pi\i)^{k}}
			\prod_{j=1}^{k}\xi_j
			\prod_{1\le A\ne B\le k}\Sm(\xi_A,\xi_B)\sum_{\sigma\in S(k)}
			\sgn(\sigma)
			F(\vxi)G(\sigma \vxi).
		\end{array}
		\label{spectral_biorthogonality_xi}
	\end{align}
	(See Proposition \ref{prop:convergence_of_biorthogonality}
	below about the convergence of the series in $\n$
	in the left-hand side.)

	Identity \eqref{spectral_biorthogonality_xi} is
	equivalent to \eqref{spectral_biorthogonality}.
\end{theorem}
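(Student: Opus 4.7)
The plan is to first dispatch the equivalence of \eqref{spectral_biorthogonality} and \eqref{spectral_biorthogonality_xi}, and then prove the $\xi$-version directly by a contour deformation and geometric-series argument.

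\smallskip

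\emph{Equivalence.} Apply the componentwise involution $z_j=\xinu(\xi_j)$ to \eqref{spectral_biorthogonality}. By formulas \eqref{Psil_xi}--\eqref{Psir_xi}, $\Vand(\xinu(\vxi))\,\Psil_{\xinu(\vxi)}(\n)$ and $\Vand(\xinu(\vxi))\,\Psir_{\xinu(\vxi)}(\n)$ coincide with $\Phil_{\vxi}(\n)$ and $\Phir_{\vxi}(\n)$ up to explicit constants involving $\bigl(\tfrac{1-q\nu}{1-\nu}\bigr)^{k(k-1)/2}$. The Jacobian identity
\begin{align*}
	\frac{dz_j}{(1-z_j)(1-\nu z_j)}=-\frac{d\xi_j}{(1-\nu)\,\xi_j},
\end{align*}
derived exactly as in the computation preceding \eqref{Cxi_pairing_large}, accounts for the measure factors $\prod_j(1-z_j)(1-\nu z_j)$ on the right-hand side of \eqref{spectral_biorthogonality} as well. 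Tracking the $\Vand$-factors and verifying that the overall constants match reduces one identity to the other; this is a routine bookkeeping step.

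\smallskip

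\emph{Main computation.} The first step is to deform the $\vom$-contour from $\xinu(\ga)$ outward to $\xinu(\ga')$. This is legitimate: $\Vand(\vom)G(\vom)$ is holomorphic in the closed region between the two contours by hypothesis, while $\Phil_{\vom}(\n)/\Vand(\vom)$ is a Laurent polynomial whose poles at $\om_j=0$ lie inside both contours. After the deformation, condition \eqref{contour_gap_condition} gives $|\xi_i|<|\om_j|$ uniformly on $\xinu(\ga)\times\xinu(\ga')$. Now expand $\Phir_{\vxi}(\n)$ and $\Phil_{\vom}(\n)$ into their defining sums over $\sigma,\tau\in S(k)$, interchange summation with integration, and parametrize $\n\in\Weyl{k}$ by the gap variables $m_j=n_j-n_{j+1}\ge 0$ for $j<k$ together with $m_k=n_k\in\Z$. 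The $\n$-dependence for fixed $\sigma,\tau$ sums as
\begin{align*}
	\sum_{\n\in\Weyl{k}}\prod_{j=1}^{k}\left(\frac{\xi_{\sigma(j)}}{\om_{\tau(j)}}\right)^{n_j}
	=\Bigl[\sum_{m\in\Z}\Bigl(\tfrac{\xi_1\cdots\xi_k}{\om_1\cdots\om_k}\Bigr)^{m}\Bigr]
	\prod_{j=1}^{k-1}\frac{1}{1-\prod_{i\le j}\xi_{\sigma(i)}/\om_{\tau(i)}}.
\end{align*}
The formal $\Z$-sum, interpreted distributionally on the $\vom$-contour, localizes $\om_{\tau(k)}$ so that $\prod_j\om_j=\prod_j\xi_j$. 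Each remaining factor $\bigl(1-\prod_{i\le j}\xi_{\sigma(i)}/\om_{\tau(i)}\bigr)^{-1}$ then contributes, in turn, a residue in $\om_{\tau(j)}$ that forces $\om_{\tau(j)}=\xi_{\rho(j)}$ for a permutation $\rho=\rho(\sigma,\tau)\in S(k)$ read off from the cascade. At the end of this cascade $\vom$ is localized to $\rho(\vxi)$; collecting signs, Jacobians, and the surviving products of $\Sm$-factors produces one summand of the antisymmetrized integrand $\sum_\rho\sgn(\rho)F(\vxi)G(\rho\vxi)$ on the right-hand side of \eqref{spectral_biorthogonality_xi}, weighted by $\prod_j\xi_j\prod_{A\ne B}\Sm(\xi_A,\xi_B)$.

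\smallskip

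The main obstacle is the combinatorial bookkeeping of the residue cascade: one must check that the products $\prod_{B<A}\Sm(\xi_{\sigma(B)},\xi_{\sigma(A)})$ from $\Phir$ and $\prod_{B<A}\Sm(\om_{\tau(A)},\om_{\tau(B)})$ from $\Phil$ combine, after $k-1$ successive localizations $\om_{\tau(j)}\mapsto\xi_{\rho(j)}$ and use of identities of the form $\Sm(\xi,\xi')\Sm(\xi',\xi)=(\text{symmetric factor})$, precisely into $\prod_j\xi_j\prod_{A\ne B}\Sm(\xi_A,\xi_B)$, independently of the double sum over $(\sigma,\tau)$ being absorbed into the single sum over $\rho$. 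A secondary technical point is the rigorous interpretation of $\sum_{m\in\Z}$: this is handled by an $e^{-\epsilon|m|}$ regularization, carrying out the contour deformations for $\epsilon>0$, and taking $\epsilon\to 0^+$ at the end, with the delta-function-in-$\prod\om/\prod\xi$ realized as the residue at the relevant pole.
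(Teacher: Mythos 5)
Your reduction of \eqref{spectral_biorthogonality} to \eqref{spectral_biorthogonality_xi} via the involution $\xinu$ is fine and is the same bookkeeping the paper does. The main computation, however, has a concrete gap: the sum over $\n$ that actually appears is $\sum_{\n}\Phir_{\vxi}(\n)\Phil_{\vom}(\n)$, and by the definition \eqref{Phir_xi} the function $\Phir_{\vxi}(\n)$ carries the weight $\st(\n)$ of \eqref{stationary_measure}, which depends on the cluster structure of $\n$ and is not multiplicative over the coordinates. Your displayed geometric-series evaluation of ``the $\n$-dependence for fixed $\sigma,\tau$'' silently drops this weight. With $\st(\n)$ included, the sum over $\n$ is a rational function whose numerator has no known closed form for general $(q,\nu)$ (this is exactly the point of Remark \ref{rmk:rational_f_summation}), so the subsequent residue cascade you describe does not get off the ground as stated. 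This is precisely the difficulty the paper's proof is engineered to avoid: it uses the free parameter $\mu$ (the commuting family $(\Hbwd)_\mu$) to prove Lemma \ref{lemma:spectral_biorthogonality_zero}, then Runge's theorem plus the power-shifting Lemma \ref{lemma:power_shifting} to reduce to monomial test functions $G(\vom)=\om_1^{x_1}\cdots\om_k^{x_k}$ with gaps $x_i-x_{i+1}>2k$; the degree bound on the $\Sm$-products then forces the $n_j$'s to be distinct, which is the one case where $\st(\n)\bigl(\tfrac{1-q}{1-\nu}\bigr)^k=1$ and a direct summation of the type you attempt is possible (see Remark \ref{rmk:difference_with_BCPS_spectral}). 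Repairing your argument would require evaluating the $\st$-weighted sum over all cluster structures, i.e.\ essentially redoing the $\nu=0$ computation of \cite{BorodinCorwinPetrovSasamoto2013} for general $\nu$, which is a genuinely open computation rather than bookkeeping.

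Two further points would also need attention even in the unweighted model. First, for fixed $\vxi\in\xinu(\ga)$, $\vom\in\xinu(\ga')$ with $|\xi_i|<|\om_j|$, the series $\sum_{n_k\in\Z}$ diverges in the direction $n_k\to-\infty$; in the paper the sum is tamed not by a regularization but by the hypothesis on $F$, which makes all terms with $n_k<-M$ vanish after the $\vxi$-integration (Proposition \ref{prop:convergence_of_biorthogonality}) — your proposal never uses the hypothesis on $F$, and the $e^{-\epsilon|m|}$ regularization with a distributional delta in $\prod_j\om_j/\prod_j\xi_j$ is left as a heuristic. Second, the combinatorial identity you flag as ``the main obstacle'' — that the cascade of localizations turns the two $\Sm$-products into $\prod_j\xi_j\prod_{A\ne B}\Sm(\xi_A,\xi_B)$ with the double sum over $(\sigma,\tau)$ collapsing to a single signed sum over $\rho$ — is exactly the content that must be proved, and it is not established in the proposal.
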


\par\noindent\textit{Proof of equivalence of
Theorems \ref{thm:spectral_biorthogonality} and
\ref{thm:spectral_biorthogonality_xi}.}
	From \eqref{Psil_xi}--\eqref{Psir_xi} we readily have
	\begin{align}\label{spectral_biorthogonality_xi_proof}
		\Vand(\xinu(\vxi))\Vand(\xinu(\vom))\Psir_{\xinu(\vxi)}(\n)
		\Psil_{\xinu(\vom)}(\n)=
		\frac{(-1)^{\frac{k(k+1)}{2}}(1-\nu)^{k}
		\left(1-q\nu\right)^{{k(k-1)}}}{\prod_{j=1}^{k}
		(1-\nu \xi_j)^{k-1}(1-\nu \om_j)^{k-1}}
		\Phir_{\vxi}(\n)
		\Phil_{\vom}(\n).
	\end{align}
	Here we have used
	\begin{align*}
		\frac{\Vand(\xinu(\vxi))}{\Vand(\vxi)}=\frac{(\nu-1)^{\frac{k(k-1)}{2}}}{\prod_{j=1}^{k}
		(1-\nu \xi_j)^{k-1}}.
	\end{align*}
	
	Next, the change of variables $\z=\xinu(\vxi)$,
	$\w=\xinu(\vom)$ in the left-hand side of \eqref{spectral_biorthogonality} yields
	\begin{align}\label{spectral_biorthogonality_xi_proof_1}
		d\z d\w=
		\frac{(\nu-1)^{2k}}{\prod_{j=1}^{k}(1-\nu \xi_j)^{2}(1-\nu\om_j)^{2}}d\vxi d\vom,
	\end{align}
	and in the right-hand side the change of variables gives
	\begin{align}\nonumber
		&(-1)^{\frac{k(k-1)}{2}}
		\prod_{j=1}^{k}(1-z_j)(1-\nu z_j)
		\prod_{A\ne B}(z_A-qz_B)
		d\z
		\\&\hspace{50pt}=
		(-1)^{\frac{k(k-1)}{2}}
		(\nu-1)^{3k}
		(1-q\nu)^{k(k-1)}
		\prod_{j=1}^{k}
		\frac{\xi_j}{(1-\nu \xi_j)^{2k+2}}
		\prod_{A\ne B}\Sm(\xi_A,\xi_B)
		d\vxi.
		\label{spectral_biorthogonality_xi_proof_2}
	\end{align}
	Clearly, this change of variables turns test functions
	in \eqref{spectral_biorthogonality} into test functions in
	\eqref{spectral_biorthogonality_xi}.
	Next, by multiplying $G$ by a suitable power of
	$\prod_{j=1}^{k}(1-\nu \om_j)$ (which preserves its
	necessary properties), one can readily match the
	left-hand side
	(product of
	\eqref{spectral_biorthogonality_xi_proof} and \eqref{spectral_biorthogonality_xi_proof_1})
	with the right-hand side \eqref{spectral_biorthogonality_xi_proof_2}.
\qed

\begin{proposition}\label{prop:convergence_of_biorthogonality}
	Both infinite series in $\n$
	in left-hand sides of \eqref{spectral_biorthogonality}
	and \eqref{spectral_biorthogonality_xi} converge
	for the corresponding choices of test functions $F$ and $G$.
\end{proposition}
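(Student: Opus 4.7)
The plan is to prove convergence of the series in the $\xi$-variables formulation \eqref{spectral_biorthogonality_xi}; the case \eqref{spectral_biorthogonality} then follows from the equivalence of the two statements already established in the excerpt. For each $\n\in\Weyl{k}$, abbreviate the two factors of the summand by
\begin{align*}
I_F(\n):=\oint_{\xinu(\ga)}\cdots\oint_{\xinu(\ga)}\frac{d\vxi}{(2\pi\i)^k}\,\Phir_{\vxi}(\n)F(\vxi),\qquad
I_G(\n):=\oint_{\xinu(\ga)}\cdots\oint_{\xinu(\ga)}\frac{d\vom}{(2\pi\i)^k}\,\Phil_{\vom}(\n)G(\vom).
\end{align*}
First I would deform the $\vom$ contour in $I_G(\n)$ from $\xinu(\ga)$ outwards to $\xinu(\ga')$. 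This is legal because $\Phil_{\vom}(\n)$ is antisymmetric in $\vom$ and hence divisible by $\Vand(\vom)$, while the hypothesis that $\Vand(\vom)G(\vom)$ is holomorphic in a neighborhood of the closed region between $\xinu(\ga)$ and $\xinu(\ga')$ provides the needed regularity.

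Next I would change the summation variable on $\Weyl{k}$ to the gap variables $\Delta_l:=n_l-n_{l+1}$ for $l=1,\ldots,k-1$ and $\Delta_k:=n_k$, so that the sum decouples as $\sum_{\Delta_1,\ldots,\Delta_{k-1}\ge 0}\sum_{\Delta_k\in\Z}$. Under this substitution, for any $\sigma\in S(k)$ the basic monomial factorizes as
\begin{align*}
\prod_{j=1}^{k}\xi_{\sigma(j)}^{n_j}=\prod_{l=1}^{k}\bigl(\xi_{\sigma(1)}\cdots\xi_{\sigma(l)}\bigr)^{\Delta_l},
\end{align*}
and similarly for $\vom$ with a permutation $\tau$. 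The factor $\st(\n)=\prod_{j}(\nu;q)_{c_j}/(q;q)_{c_j}$ in $\Phir$ is bounded by a constant independent of $\n$ since $0\le\nu,q<1$. Expanding $\Phir_{\vxi}$ and $\Phil_{\vom}$ into their defining finite $S(k)$-sums reduces the problem to estimating each resulting contribution to $I_F(\n)I_G(\n)$ separately.

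Set $\xi_{\max}:=\max_{\xi\in\xinu(\ga)}|\xi|$ and $\om_{\min}:=\min_{\om\in\xinu(\ga')}|\om|$; the gap condition \eqref{contour_gap_condition} translates to $\xi_{\max}<\om_{\min}$. Bounding $|\xi_{\sigma(1)}\cdots\xi_{\sigma(l)}|\le\xi_{\max}^l$ and $|\om_{\tau(1)}\cdots\om_{\tau(l)}|\ge\om_{\min}^l$, together with uniform bounds on the $\Sm$-factors and on $F,G$ over the compact contours, gives
\begin{align*}
|I_F(\n)\,I_G(\n)|\le\mathrm{const}\cdot\prod_{l=1}^{k}\left(\frac{\xi_{\max}}{\om_{\min}}\right)^{l\Delta_l}\qquad\text{when }\Delta_k\ge 0,
\end{align*}
which sums to a finite number over $\Delta_1,\ldots,\Delta_{k-1},\Delta_k\ge 0$. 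For the remaining range $\Delta_k<0$ this direct bound fails; however $\xi_1\cdots\xi_k$ is permutation-invariant, so I would invoke the hypothesis that $\Vand(\vxi)F(\vxi)\prod_j\xi_j^{-M}$ is holomorphic in the closed exterior of $\xinu(\ga)$ including $\infty$. This realizes the integrand of $I_F(\n)$ as $(\xi_1\cdots\xi_k)^{\Delta_k}$ times a function holomorphic outside $\xinu(\ga)$ whose per-variable degree at infinity is bounded independently of $\Delta_k$. Hence for all $\Delta_k$ below a threshold depending only on $M$ and $k$, the integrand decays fast enough that deforming the $\vxi$ contour to infinity forces $I_F(\n)=0$, leaving only finitely many nonzero terms.

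The main obstacle will be carefully tracking, when $\Delta_k<0$, the interplay between the polynomial growth exponent $M$ of $F$, the degrees of $\Vand(\vxi)$ and the $\Sm$-factors in each variable, and the damping factor $(\xi_1\cdots\xi_k)^{\Delta_k}$, so as to legitimately deform the $\vxi$ contour past infinity. Once this finite-$\Delta_k$ cutoff is in place, the remaining estimates reduce to a product of geometric series whose combined ratio is strictly less than one by \eqref{contour_gap_condition}, establishing absolute convergence of the series in the left-hand sides of \eqref{spectral_biorthogonality} and \eqref{spectral_biorthogonality_xi}.
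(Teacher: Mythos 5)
Your outline coincides with the paper's: pass to the $\vxi$ variables, push the $\vom$ contours out to $\xinu(\ga')$, use the gap condition \eqref{contour_gap_condition} (i.e.\ $|\xi|<|\om|$) to get geometric decay in the weakly ordered coordinates, and use the growth exponent $M$ of $F$ to discard the terms with $n_k$ very negative. Your estimate for $\Delta_k\ge 0$ is a term-by-term version of the paper's ``deform $\vom$, then interchange sum and integral by uniform convergence'' step, and is fine (note only that the cutoff leaves finitely many \emph{values} of $\Delta_k$, not finitely many terms; your closing paragraph shows you intend the right thing).

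The gap is in the vanishing step, and it is the crux. After you factor out $(\xi_1\cdots\xi_k)^{\Delta_k}$ from the integrand of $I_F(\n)$, the remaining factor coming from $\Phir_{\vxi}(\n)$ contains, for the summand indexed by $\sigma$, the monomial $\prod_j\xi_{\sigma(j)}^{\,n_j-n_k}$, whose degree in a single variable is as large as $n_1-n_k=\Delta_1+\cdots+\Delta_{k-1}$. So its per-variable degree at infinity is indeed independent of $\Delta_k$, but it is \emph{not} bounded independently of the other gaps. Since an iterated contour integral can only be deformed one variable at a time, the decaying factor $|\xi_j|^{\Delta_k+M}$ in that variable must beat a growth of order $\Delta_1+\cdots+\Delta_{k-1}+k$, so the threshold on $\Delta_k$ you obtain this way depends on all of $\n$, not only on $M$ and $k$. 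With an $\n$-dependent threshold the set of possibly surviving terms with $\Delta_k<0$ is not confined to finitely many values of $\Delta_k$ (for instance $n_k\to-\infty$ with $n_1\approx-n_k$ is not excluded), and these terms are not covered by your geometric bound either, since for $\Delta_k<0$ the inequality $\xi_{\max}<\om_{\min}$ works against you. The paper avoids this by \emph{not} extracting the symmetric product: it observes that each summand of $\Phir_{\vxi}(\n)\Vand(\vxi)^{-1}$ carries a factor $\xi_j^{-m}$ in one particular variable with $m$ of order $-n_k$, writes $\xi_j^{-m}\Vand(\vxi)F(\vxi)=\bigl(\xi_1\cdots\xi_{j-1}\xi_{j+1}\cdots\xi_k\bigr)^{m}(\xi_1\cdots\xi_k)^{-m}\Vand(\vxi)F(\vxi)$, and integrates in that single variable first; the huge positive powers then sit only in the other variables, which remain on the compact contour, so the cutoff $n_k\ge -M'$ depends only on $F$ (through $M$) and $k$. (Relatedly, you should keep $\Vand(\vxi)F(\vxi)$ together rather than treating $F$ itself as holomorphic outside $\xinu(\ga)$: only the product is assumed regular there.) Once a cutoff uniform in the remaining coordinates is in place, your gap-variable bookkeeping closes the argument exactly as in the paper.
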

\begin{proof}
	Let us illustrate why the statement holds in the $\vxi$ variables
	(the statement in the $\z$ variables is completely analogous).
	
	Assume that $n_k<-M$, where $M$ is a sufficiently large positive integer
	(which depends on the test function $F$).
	Then in the left-hand side of \eqref{spectral_biorthogonality_xi}
	the integral of
	$(\Phir_{\vxi}(\n)\Vand(\vxi)^{-1})\Vand(\vxi)F(\vxi)$
	vanishes. Indeed, this is because
	each summand in
	the Laurent
	polynomial
	$\Phir_{\vxi}(\n)\Vand(\vxi)^{-1}$
	contains a factor of the form $\xi_j^{-m}$ for some $1\le j\le k$ and
	$m$ sufficiently large, and thus the expression
	\begin{align*}
		\xi_j^{-m}\Vand(\vxi)F(\vxi)=
		(\xi_1 \ldots\xi_{j-1}\xi_{j+1}\ldots\xi_k)^m
		(\xi_1 \ldots\xi_k)^{-m}\Vand(\vxi)F(\vxi)
	\end{align*}
	is holomorphic in the variable $\xi_j$ in the exterior of the integration
	contour $\xinu(\ga)$. Performing the integration in $\xi_j$
	first, we see that
	the integral of
	this summand (coming from $\Phir_{\vxi}(\n)\Vand(\vxi)^{-1}$) vanishes.

	This implies that it is possible to perform the summation over $\n$
	in the left-hand side of \eqref{spectral_biorthogonality_xi}
	such that $n_k\ge -M$.
	Next, let us deform the $\vom$
	contours from $\xinu(\ga)$
	to $\xinu(\ga')$. Because of the properties
	of $G(\vom)$, this will not change the
	integral over $\vom$
	in the left-hand side of \eqref{spectral_biorthogonality_xi}.
	Now we can interchange the summation over $\n$
	with the integration over $\vxi$ and $\vom$
	because the sum over $\n$
	inside the integral will converge
	uniformly, thanks to \eqref{contour_gap_condition}. Therefore, the sum
	converged before the interchange,
	too.
	This implies the desired claim.
\end{proof}

\begin{remark}\label{rmk:spectral_biorthogonality_with_deltas}
	The identity \eqref{spectral_biorthogonality}
	can be formally rewritten as
	\begin{align}
		\llangle
		\Psir_{\z},\Psil_{\w}
		\rrangle_{\Wc^k}
		\Vand(\z)\Vand(\w)=
		(-1)^{\frac{k(k-1)}{2}}
		\prod_{j=1}^{k}(1-z_j)(1-\nu z_j)
		\prod_{A\ne B}(z_A-qz_B)
		\det[\delta({z_i-w_j})]_{i,j=1}^{k},
		\label{spectral_biorthogonality_deltas}
	\end{align}
	where $\delta(z)$ is the Dirac delta
	function supported at $z=0$.
	Similarly, \eqref{spectral_biorthogonality_xi}
	corresponds to the following formal identity:
	\begin{align}\label{spectral_biorthogonality_xi_deltas}
		\llangle
		\Phir_{\vxi},\Phil_{\vom}
		\rrangle_{\Wc^k}=
		\prod_{j=1}^{k}\xi_j
		\prod_{A\ne B}\Sm(\xi_A,\xi_B)
		\det[\delta({\xi_i-\om_j})]_{i,j=1}^{k}.
	\end{align}
\end{remark}

\begin{remark}\label{rmk:spectral_biorthogonality_over_1_nuin}
	If one restricts the class of test functions
	in
	\eqref{spectral_biorthogonality}
	to (not necessarily symmetric) Laurent polynomials in variables $\frac{1-z_j}{1-\nu z_j}$
	and $\frac{1-w_j}{1-\nu w_j}$, respectively,
	then the only singularities of the integrands in
	both sides would be $1$ and $\nu^{-1}$.
	Thus, the contour $\ga$ of integration
	in \eqref{spectral_biorthogonality}
	can be changed to either a positively
	oriented small circle around $1$
	(we denoted this contour by $\ga_k$, see Definition
	\ref{def:contours}),
	or to a negatively oriented small circle around~$\nu^{-1}$.

	Likewise, if one restricts \eqref{spectral_biorthogonality_xi}
	to (not necessarily symmetric)
	Laurent polynomials $\xi_1,\ldots,\xi_k$
	and $\om_1,\ldots,\om_k$,
	then the integration contours $\xinu(\ga)$
	can be replaced by small positively
	oriented circles around~$0$.
\end{remark}

We prove Theorem
\ref{thm:spectral_biorthogonality} (in the equivalent form of
Theorem \ref{thm:spectral_biorthogonality_xi}) in \S \ref{sub:proof_of_the_spectral_orthogonality} below.

% subsection formulations (end)

\subsection{Proof of the spectral biorthogonality} % (fold)
\label{sub:proof_of_the_spectral_orthogonality}

Denote the left-hand side of \eqref{spectral_biorthogonality_xi}
by $\LHS{F(\vxi)G(\vom)}$. It is helpful to regard
$\LHS{\cdots}$ as a linear functional acting on functions
in $\vxi$ and $\vom$ (in other words, a generalized function,
or a distribution),
cf. the notation in \eqref{spectral_biorthogonality_deltas}--\eqref{spectral_biorthogonality_xi_deltas}.

\begin{lemma}\label{lemma:spectral_biorthogonality_symmetric_poly}
	For any symmetric polynomial $p(\xi_1,\ldots,\xi_k)$,
	\begin{align*}
		\LHS{p(\vxi)F(\vxi)G(\vom)}=
		\LHS{F(\vxi)p(\vom)G(\vom)}.
	\end{align*}
\end{lemma}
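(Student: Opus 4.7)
The plan is to realize the action of $p(\vxi)$ on $\Phir_{\vxi}$ and $p(\vom)$ on $\Phil_{\vom}$ as eigenvalues of a pair of transpose operators on $\Wc^{k}$, and then transfer $p$ from one side of $\LHS{\cdot}$ to the other via the adjoint identity. Concretely, suppose $A\colon\Wc^{k}\to\Wc^{k}$ is a finite-range operator such that $A\Phir_{\vxi}(\n) = p(\vxi)\Phir_{\vxi}(\n)$ and $A^{T}\Phil_{\vom}(\n) = p(\vom)\Phil_{\vom}(\n)$. Setting $\tilde f(\n) := \oint F(\vxi)\Phir_{\vxi}(\n)\,d\vxi$ and $\tilde g(\n) := \oint G(\vom)\Phil_{\vom}(\n)\,d\vom$ and pulling $A$ inside the (compact) contour integrals, one gets
\begin{align*}
\LHS{p(\vxi) F G} \;=\; \llangle A\tilde f,\tilde g\rrangle_{\Wc^{k}} \;=\; \llangle\tilde f,A^{T}\tilde g\rrangle_{\Wc^{k}} \;=\; \LHS{F\,p(\vom)\,G}.
\end{align*}
So the lemma reduces to producing enough operators $A$ to realize every symmetric polynomial $p$ as a joint eigenvalue.

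The natural source of such operators is the commuting family $(\Hbwd)_{\mu\in[\nu,1)}$: its matrix elements are polynomial in $\mu$, so it extends by polynomial continuation to a family of finite-range operators indexed by all $\mu\in\C$. By Propositions \ref{prop:Psibwd_eigenvalue}--\ref{prop:Psifwd_eigenvalue} and the relation between $\Psi$ and $\Phi$ eigenfunctions recorded in \eqref{Psil_xi}--\eqref{Phir_xi}, one has $\Hbwd \Phir_{\vxi} = e_{\mu}(\vxi)\Phir_{\vxi}$ and $(\Hbwd)^{T}\Phil_{\vom} = e_{\mu}(\vom)\Phil_{\vom}$, where
\begin{align*}
e_{\mu}(\vxi) \;=\; \ev\bigl(\xinu(\vxi)\bigr) \;=\; \frac{1}{(1-\nu)^{k}}\prod_{j=1}^{k}\bigl[(1-\nu\xi_{j})-\mu(1-\xi_{j})\bigr].
\end{align*}
Thus every $e_{\mu}$ belongs to the set $\mathcal{P}$ of symmetric polynomials for which the identity of the lemma holds.

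Next I would verify that $\mathcal{P}$ is a subalgebra of symmetric polynomials in $\vxi$. Linearity is immediate. For closure under products: if $A_{1},A_{2}$ realize eigenvalues $p_{1}(\vxi),p_{2}(\vxi)$ on $\Phir_{\vxi}$, then $A_{1}A_{2}\Phir_{\vxi} = p_{1}(\vxi)p_{2}(\vxi)\Phir_{\vxi}$ because the scalar factor $p_{2}(\vxi)$ commutes through $A_{1}$, and $(A_{1}A_{2})^{T} = A_{2}^{T}A_{1}^{T}$ likewise realizes $p_{1}(\vom)p_{2}(\vom)$ on $\Phil_{\vom}$. Expanding $e_{\mu}(\vxi)$ in powers of $\mu$ yields $k+1$ symmetric polynomials $\sigma_{i}(\vxi) = \sum_{|S|=i}\prod_{j\in S}(1-\xi_{j})\prod_{j\notin S}(1-\nu\xi_{j})$ which all lie in $\mathcal{P}$ by Vandermonde interpolation over $k+1$ distinct values of $\mu$; a direct determinant check shows $\mathrm{span}\{\sigma_{0},\ldots,\sigma_{k}\} = \mathrm{span}\{1,e_{1}(\vxi),\ldots,e_{k}(\vxi)\}$ for $\nu\in[0,1)$. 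Since the elementary symmetric polynomials generate the full algebra, $\mathcal{P}$ is everything.

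The main technical obstacle is justifying the adjoint step $\llangle A\tilde f,\tilde g\rrangle_{\Wc^{k}} = \llangle\tilde f,A^{T}\tilde g\rrangle_{\Wc^{k}}$: the functions $\tilde f$ and $\tilde g$ are not compactly supported but only one-sidedly truncated (as in the proof of Proposition \ref{prop:convergence_of_biorthogonality}), so the swap requires absolute convergence of the double sum $\sum_{\n,\vec m}A(\n,\vec m)\tilde f(\vec m)\tilde g(\n)$. The finite-range nature of $A$ plus the complementary decay imposed by the hypotheses on $F$ and $G$ in Theorem \ref{thm:spectral_biorthogonality} deliver exactly this convergence, validating the Fubini interchange and completing the argument.
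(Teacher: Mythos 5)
Your argument is correct and is essentially the paper's own proof: both exploit the free parameter $\mu$ of the commuting family $(\Hbwd)_{\mu}$, move the operator across the pairing by transposition (with the interchange of sum and integrals justified exactly as in Proposition \ref{prop:convergence_of_biorthogonality}), recover the span of the elementary symmetric polynomials from the $\mu$-dependence of the eigenvalue by Vandermonde-type interpolation, and then pass to products and linear combinations to reach all symmetric polynomials. The only (inessential) slip is the attribution of eigenfunctions: in the paper's conventions $\Phil_{\vom}$ is an eigenfunction of $\Hbwd$ and $\Phir_{\vxi}$ of $\Hfwd=(\Hbwd)^{\text{transpose}}$, so your operator $A$ should be taken to be $\Hfwd$ rather than $\Hbwd$; the rest of the argument is unchanged.
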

\begin{proof}
	We will use the fact that the functions $\Phir_{\vxi}(\n)$
	and $\Phil_{\vom}(\n)$ are eigenfunctions
	of the $q$-Hahn operator
	$\Hbwd$ with an \emph{additional free parameter} $\mu$.
	Their eigenvalues are (see \S \ref{sec:the_q_mu_nu_boson_process_and_coordinate_bethe_ansatz} below; note that here
	we use the spectral variables $\vxi$)
	\begin{align*}
		\ev(\xinu(\vxi))=\prod_{j=1}^{k}
		\frac{1-\mu-\xi_j(\nu-\mu)}{1-\nu},
	\end{align*}
	because  $\Phir_{\vxi}(\n)$
	and $\Phil_{\vom}(\n)$ differ from
	$\Psir_{\z}(\n),\Psil_{\w}(\n)$ by a change of variables and
	by multiplicative factors not depending on the spatial
	variables $\n$.

	Using the notation
	with Dirac delta functions
	\eqref{spectral_biorthogonality_xi_deltas}, we have
	\begin{align}
		\ev(\xinu(\vom))\llangle
		\Phir_{\vxi},\Phil_{\vom}
		\rrangle_{\Wc^k}
		=
		\llangle
		\Phir_{\vxi},\Hbwd\Phil_{\vom}
		\rrangle_{\Wc^k}
		=\llangle
		(\Hbwd)^{\text{transpose}}\Phir_{\vxi},\Phil_{\vom}
		\rrangle_{\Wc^k}
		=
		\ev(\xinu(\vxi))\llangle
		\Phir_{\vxi},\Phil_{\vom}
		\rrangle_{\Wc^k}.
		\label{spectral_biorthogonality_symmetric_poly_proof}
	\end{align}
	This is readily extended to a rigorous formula
	like in Theorem \ref{thm:spectral_biorthogonality_xi}
	because
	with our test functions
	$F$ and $G$ the corresponding sum over $\n$ in
	\eqref{spectral_biorthogonality_xi} converges
	(see Proposition \ref{prop:convergence_of_biorthogonality}).
	
	The eigenvalues depend on an additional free parameter $\mu$.
	Extracting coefficients
	of various powers of $\mu$
	in
	\eqref{spectral_biorthogonality_symmetric_poly_proof}, we see that the statement of the lemma holds when $p$
	is an elementary symmetric polynomial
	\begin{align*}
		p(\xi_1,\ldots,\xi_k)=e_m(\xi_1,\ldots,\xi_k)
		=\sum_{i_1<\ldots<i_m}\xi_{i_1}\ldots\xi_{i_m},\qquad
		m=1,\ldots,k
		.
	\end{align*}
	Writing \eqref{spectral_biorthogonality_symmetric_poly_proof}
	for products of eigenvalues with different free parameters
	$\mu_i$,
	we get the same statement for any products of elementary symmetric polynomials.
	Finally, an arbitrary symmetric polynomial is a linear combination
	of such products, cf. \cite[I.2]{Macdonald1995}.
\end{proof}

\begin{lemma}\label{lemma:spectral_biorthogonality_zero}
	For any $j=1,\ldots,k$,
	\begin{align*}
		\LHS{(\om_j-\xi_1)\ldots(\om_j-\xi_k)F(\vxi)G(\vom)}=0,
		\qquad
		\LHS{F(\vxi)(\xi_j-\om_1)\ldots(\xi_j-\om_k)G(\vom)}=0.
	\end{align*}
\end{lemma}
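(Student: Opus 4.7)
The plan is to apply Lemma \ref{lemma:spectral_biorthogonality_symmetric_poly} after expanding via Vieta's formula. Explicitly, I first write
\begin{align*}
\prod_{i=1}^k(\om_j-\xi_i)=\sum_{m=0}^k(-1)^m\,e_m(\xi_1,\ldots,\xi_k)\,\om_j^{k-m},
\end{align*}
where $e_m$ denotes the $m$-th elementary symmetric polynomial in $k$ variables. The factor $\om_j^{k-m}$ depends only on the $\vom$ variables (indeed only on $\om_j$), while $e_m(\vxi)$ is a symmetric polynomial in $\vxi$ and is therefore of the form that Lemma \ref{lemma:spectral_biorthogonality_symmetric_poly} allows us to transfer across the pairing.

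For each $m\in\{0,1,\ldots,k\}$ I absorb the factor $(-1)^m\om_j^{k-m}$ into $G$, defining $\widetilde G_m(\vom):=(-1)^m\om_j^{k-m}G(\vom)$, and apply Lemma \ref{lemma:spectral_biorthogonality_symmetric_poly} to the pair $F,\widetilde G_m$ with $p=e_m$. Summing the resulting $(k+1)$ identities and pulling the $\LHS{\,\cdot\,}$ outside the sum by linearity gives
\begin{align*}
\LHS{\textstyle\prod_{i=1}^k(\om_j-\xi_i)\,F(\vxi)G(\vom)}=\LHS{F(\vxi)\,\textstyle\prod_{i=1}^k(\om_j-\om_i)\,G(\vom)}=0,
\end{align*}
the last equality because the polynomial $\prod_{i=1}^k(\om_j-\om_i)$ in the variables $\vom$ is identically zero, as it contains the factor $(\om_j-\om_j)$. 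The second identity in the lemma follows by the symmetric argument: expand $\prod_{i=1}^k(\xi_j-\om_i)=\sum_m(-1)^m\xi_j^{k-m}e_m(\vom)$, transfer each $e_m(\vom)$ to $e_m(\vxi)$ via Lemma \ref{lemma:spectral_biorthogonality_symmetric_poly} (with the absorbed factor now residing in $F$), and conclude from $\prod_{i=1}^k(\xi_j-\xi_i)=0$.

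The only (very minor) verification is that the modified test functions still satisfy the hypotheses of Theorem \ref{thm:spectral_biorthogonality_xi}. For $\widetilde G_m$ this is immediate: multiplication by the monomial $\om_j^{k-m}$ preserves holomorphy of $\Vand(\vom)G(\vom)$ on the closed annular region between $\xinu(\ga)$ and $\xinu(\ga')$. For $F(\vxi)$ replaced by $e_m(\vxi)F(\vxi)$, the decay hypothesis at infinity is preserved upon enlarging the integer $M$ by $k$. So no genuine obstacle arises; the argument is essentially a direct application of the transfer lemma together with the vanishing of the polynomial $\prod_{i=1}^k(t-\om_i)$ at $t=\om_j$.
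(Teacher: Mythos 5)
Your proof is correct and follows essentially the same route as the paper: apply Lemma \ref{lemma:spectral_biorthogonality_symmetric_poly} to the symmetric polynomial $\prod_{i}(\om_j-\xi_i)$ and conclude from the identical vanishing of $\prod_i(\om_j-\om_i)$. Your Vieta expansion into elementary symmetric polynomials, with the $\om_j$-powers absorbed into the test function, is just a more explicit implementation of the paper's shortcut of ``viewing $\om_j$ as a parameter,'' and your check of the test-function hypotheses is a harmless addition.
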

\begin{proof}
	Apply Lemma \ref{lemma:spectral_biorthogonality_symmetric_poly}
	with the symmetric polynomial
	$(\om_j-\xi_1)\ldots(\om_j-\xi_k)$
	in $\vxi$ (here $\om_j$ is viewed as a parameter
	of the polynomial). Under $\LHS{\cdots}$,
	this polynomial in $\vxi$ can be replaced by
	the polynomial
	$(\om_j-\om_1)\ldots(\om_j-\om_k)\equiv0$
	in $\vom$,
	which proves the first claim.
	The second claim is analogous.
\end{proof}

\begin{lemma}\label{lemma:power_shifting}
	For any test functions $F(\vxi)$, $G(\vom)$
	as in Theorem \ref{thm:spectral_biorthogonality_xi},
	integers $1\le j\le k$ and $N\ge1$,
	there exists $N_1\ge N$ such that
	\begin{align*}
		\LHS{F(\vxi)G(\vom)}
		=
		\sum_{m= N}^{N_1}\LHS{F^{(m)}(\vxi)G(\vom)\om_j^{m}},
	\end{align*}
	where $F^{(m)}(\vxi)$ are finite linear combinations
	of functions of the form $F(\vxi)\xi_1^{-i_1}\ldots\xi_k^{-i_k}$,
	where $i_1,\ldots,i_k\ge0$.
\end{lemma}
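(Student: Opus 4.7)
The plan is to prove the lemma by induction on $N$, with the base case coming from Lemma~\ref{lemma:spectral_biorthogonality_zero} and the inductive step obtained by iterating a single shifting identity.

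For the base case $N=1$, I would start from the identity
\begin{equation*}
\LHS{F(\vxi)(\om_j-\xi_1)\cdots(\om_j-\xi_k)G(\vom)}=0
\end{equation*}
of Lemma~\ref{lemma:spectral_biorthogonality_zero} and expand
$(\om_j-\xi_1)\cdots(\om_j-\xi_k)=\sum_{l=0}^{k}(-1)^{k-l}e_{k-l}(\vxi)\om_j^l$, where $e_l$ denotes the $l$-th elementary symmetric polynomial. Isolating the $l=0$ term, which contributes $(-1)^k e_k(\vxi)=(-1)^k\xi_1\cdots\xi_k$, and then replacing $F(\vxi)$ by $F(\vxi)/(\xi_1\cdots\xi_k)$ yields
\begin{equation}
\LHS{F(\vxi)G(\vom)}=\sum_{m=1}^{k}(-1)^{m-1}\LHS{F(\vxi)\,\frac{e_{k-m}(\vxi)}{\xi_1\cdots\xi_k}\,\om_j^m\,G(\vom)}.
\label{eq:shift_base_power_shifting}
\end{equation}
Since $e_{k-m}(\vxi)/(\xi_1\cdots\xi_k)$ is a sum of monomials of the form $\xi_{i_1}^{-1}\cdots\xi_{i_m}^{-1}$, each summand on the right has the required shape $F(\vxi)\xi_1^{-i_1}\cdots\xi_k^{-i_k}$. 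The replacement of $F$ by $F/(\xi_1\cdots\xi_k)$ is legitimate because the poles so introduced at $\xi_i=0$ lie \emph{inside} $\xinu(\ga)$, not in its closed exterior, so the holomorphy hypothesis of Theorem~\ref{thm:spectral_biorthogonality_xi} continues to hold (with the constant $M$ increased by $1$). This proves the lemma for $N=1$ with $N_1=k$.

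For the inductive step, assume an expansion $\LHS{F(\vxi)G(\vom)}=\sum_{m=N}^{N_1}\LHS{F^{(m)}(\vxi)\om_j^m G(\vom)}$ of the required form. I would apply \eqref{eq:shift_base_power_shifting} with $F$ replaced by $F^{(N)}$ and $G(\vom)$ replaced by the still-admissible test function $\om_j^N G(\vom)$ (multiplication by the entire function $\om_j^N$ preserves the holomorphy hypothesis on $G$). This rewrites $\LHS{F^{(N)}(\vxi)\om_j^N G(\vom)}$ as a finite linear combination of terms with $\om_j$-powers $N+1,\ldots,N+k$, each coefficient being $F^{(N)}(\vxi)$ times a polynomial in $\xi_i^{-1}$. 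Because $F^{(N)}$ is itself $F$ times a polynomial in the reciprocals $\xi_i^{-1}$, and the product of two such polynomials is again of the same form, the new coefficients are still $F$ times polynomials in $\xi_i^{-1}$. Substituting back and collecting terms by $\om_j$-power produces an expansion of $\LHS{FG}$ with minimum power $N+1$ and maximum power $\max(N_1,N+k)$.

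Iterating this shifting step enough times starting from the base case produces the desired expansion for any prescribed $N$, with some finite $N_1\le N+k-1$. The only substantive input is Lemma~\ref{lemma:spectral_biorthogonality_zero}; the remaining verification that the admissibility conditions of Theorem~\ref{thm:spectral_biorthogonality_xi} are preserved under multiplication by $\om_j^N$ and under division by $\xi_1\cdots\xi_k$ is routine, so I do not anticipate any significant obstacle beyond bookkeeping.
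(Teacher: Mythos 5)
Your proof is correct and follows essentially the same route as the paper: the paper rewrites Lemma \ref{lemma:spectral_biorthogonality_zero} as $\LHS{(1-\om_j/\xi_1)\cdots(1-\om_j/\xi_k)F(\vxi)G(\vom)}=0$, which is exactly your division by $\xi_1\cdots\xi_k$ after expanding in elementary symmetric polynomials, and then iterates the resulting shift. Your added bookkeeping (the explicit base identity, the check that dividing $F$ by $\xi_1\cdots\xi_k$ and multiplying $G$ by $\om_j^{N}$ preserve the admissibility hypotheses) only fleshes out what the paper leaves implicit.
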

\begin{proof}
	Lemma \ref{lemma:spectral_biorthogonality_zero}
	written as
	\begin{align*}
		\LHS{
		(1-\om_j/\xi_1)\ldots(1-\om_j/\xi_k)F(\vxi)G(\vom)}=0
	\end{align*}
	allows to express
	$\LHS{F(\vxi)G(\vom)}$
	as a linear combination of
	quantities of the form
	\begin{align*}
		\LHS{F(\vxi)\xi_1^{-i_1}\ldots\xi_k^{-i_k}G(\vom)\om_j^{m}},
	\end{align*}
	with $m\ge 1$ and $i_1,\ldots,i_k\ge0$.
	A repeated application of this procedure allows to
	shift the power of $\om_j$
	past any $N\ge1$.
\end{proof}

By Runge's theorem, it is possible to
approximate
any function $G(\vom)$
satisfying the condition of
Theorem \ref{thm:spectral_biorthogonality_xi}
by
finite linear combinations of powers
$\om_1^{x_1}\ldots \om_k^{x_k}$
(uniformly in $\om_1,\ldots,\om_k$ belonging
to the closed region between
the contours $\xinu(\ga)$ and $\xinu(\ga')$).
Thus,
we may
prove the spectral biorthogonality by computing
$\LHS{F(\vxi)\om_1^{x_1}\ldots \om_k^{x_k}}$.
Moreover, since $\Phil_{\vom}$ involved in
\eqref{spectral_biorthogonality_xi} is
skew-symmetric in $\om_1,\ldots,\om_k$, we may assume that $x_1\ge \ldots\ge x_k$.
Finally, by using Lemma \ref{lemma:power_shifting}
we may shift powers of $\om_k,\om_{k-1},\ldots,\om_1$
(in this order) so that
all gaps between $x_i$ and $x_{i+1}$ ($i=1,\ldots,k-1$)
are arbitrarily large.
Thus, it remains to compute the left-hand side of \eqref{spectral_biorthogonality_xi}
in the following case:
\begin{lemma}
	Let $x_1>x_2>\ldots>x_k$, and $x_i-x_{i+1}>2k$ for all $i=1,\ldots,k-1$.
	Then the identity \eqref{spectral_biorthogonality_xi} holds for
	arbitrary $F(\vxi)$
	(satisfying the condition of Theorem \ref{thm:spectral_biorthogonality_xi}) and for $G(\vom)=\om_1^{x_1}\ldots \om_k^{x_k}$.
\end{lemma}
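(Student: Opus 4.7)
The plan is to reduce $\Phil_\vom(\n)$ to its identity-permutation term using the gap condition on the $x_i$'s, then resum the $\n$-series geometrically and evaluate the remaining integrals by residues to match the right-hand side of \eqref{spectral_biorthogonality_xi}. I expect the main analytic hurdle to be the convergence of the geometric series, which is made possible only by the enlarged contour $\ga'$ of Definition \ref{def:contour_gap}; the combinatorics is clean but requires careful bookkeeping of the $\Sm$-factors.

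For the reduction, I would expand $\Phil_\vom(\n)=\sum_{\sigma\in S(k)}\sgn(\sigma)\prod_{B<A}\Sm(\om_{\sigma(A)},\om_{\sigma(B)})\prod_j\om_{\sigma(j)}^{-n_j}$ and inside each summand perform the change of variables relabeling $\om_{\sigma(j)}$ as $\om_j$ (legitimate since all $k$ contours coincide). The inner integral becomes $\sum_\sigma\sgn(\sigma)\oint_{\xinu(\ga)^k}\prod_{B<A}\Sm(\om_A,\om_B)\prod_j\om_j^{x_{\sigma(j)}-n_j}\,d\vom/(2\pi\i)^k$. Since $\prod_{B<A}\Sm(\om_A,\om_B)$ is a polynomial of degree at most $2(k-1)$ in each $\om_j$ (each variable occurs in $k-1$ quadratic $\Sm$-factors), the $\om_j$-integral around $0$ vanishes unless $n_j-x_{\sigma(j)}\in\{1,\dots,2k-1\}$. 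If $\sigma\neq\mathrm{id}$, pick $j$ with $\sigma(j)>\sigma(j+1)$: the gap assumption $x_i-x_{i+1}>2k$ gives $n_j\le x_{\sigma(j)}+2k-1<x_{\sigma(j+1)}+1\le n_{j+1}$, contradicting $\n\in\Weyl{k}$. Hence only $\sigma=\mathrm{id}$ survives, and the same estimate forces the admissible $\n$'s to be strictly decreasing, so $\st(\n)(1-q)^k/(1-\nu)^k=1$ in $\Phir_\vxi(\n)$.

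With the surviving $\n$ forming a finite set, I would enlarge the $\vom$-contour to $\xinu(\ga')$ (allowed because $\prod_{B<A}\Sm$ is polynomial, so $\om_j=0$ is the only relevant pole) and interchange the $\n$-sum with the integrals. Expanding $\Phir_\vxi(\n)$ as a sum over $\tau\in S(k)$, each $\tau$-summand of the $\n$-series splits into a product of $k$ geometric series in $\xi_{\tau(j)}/\om_j$, closing to $\prod_j\xi_{\tau(j)}^{x_j+1}/(\om_j-\xi_{\tau(j)})$; this converges uniformly on the contours precisely because of the defining condition \eqref{contour_gap_condition} of $\ga'$. Iterated $\vom$-residues at $\om_j=\xi_{\tau(j)}$ (the only poles inside $\xinu(\ga')$) then produce $\prod_{B<A}\Sm(\xi_{\tau(A)},\xi_{\tau(B)})$, which combines with the $\Sm$-product $\prod_{B<A}\Sm(\xi_{\tau(B)},\xi_{\tau(A)})$ from $\Phir_\vxi$ into the $\tau$-independent factor $\prod_{A\neq B}\Sm(\xi_A,\xi_B)$; this can be pulled out of the $\tau$-sum. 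Finally, splitting $\xi_{\tau(j)}^{x_j+1}=\xi_{\tau(j)}\cdot\xi_{\tau(j)}^{x_j}$ and using $\prod_j\xi_{\tau(j)}=\prod_j\xi_j$ produces the prefactor $\prod_j\xi_j$ and the $\tau$-sum $\sum_\tau\sgn(\tau)\,\xi_{\tau(1)}^{x_1}\cdots\xi_{\tau(k)}^{x_k}=\sum_\tau\sgn(\tau)\,G(\tau\vxi)$ appearing on the right-hand side of \eqref{spectral_biorthogonality_xi}, completing the proof.
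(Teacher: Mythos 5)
Your proposal is correct and follows essentially the same route as the paper's proof: degree counting in the $\om_j$'s together with the gap condition $x_i-x_{i+1}>2k$ eliminates every $\sigma\neq\mathrm{id}$ and forces the surviving $\n$ to be strictly ordered (so $\st(\n)\left(\frac{1-q}{1-\nu}\right)^{k}=1$), after which the $\n$-sum amounts to substituting $\om_j\to\xi_{\tau(j)}$, and the two $\Sm$-products recombine into the $\tau$-independent factor $\prod_{A\ne B}\Sm(\xi_A,\xi_B)$, yielding the right-hand side of \eqref{spectral_biorthogonality_xi}. The only (harmless) deviations are that the paper carries out the resummation formally — extending the sum to all $\n\in\Z^{k}$ and reading off coefficients of the Laurent polynomial $\om_1^{x_1+1}\cdots\om_k^{x_k+1}\prod_{B<A}\Sm(\om_A,\om_B)$, which requires no contour enlargement, geometric series, or convergence discussion — and that each $\om_j$ actually enters $\prod_{B<A}\Sm(\om_A,\om_B)$ with degree $k-1$ (each $\Sm$-factor is only linear in each of its arguments), not $2(k-1)$, though your weaker bound still suffices given the gap $>2k$.
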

\begin{proof}
	First, note that
	for this choice of $G$,
	the sum over $\n$ in the left-hand of \eqref{spectral_biorthogonality_xi}
	is finite.
	We have
	\begin{align*}
		&\sum_{\n\in \Weyl{k}}
		\oint_{\xinu(\ga)}\ldots\oint_{\xinu(\ga)}
		\frac{d\vom}{(2\pi\i)^{k}}
		\om_1^{x_1}\ldots \om_k^{x_k}
		\Phir_{\vxi}(\n)\Phil_{\vom}(\n)
		\\&\hspace{50pt}=
		\sum_{\sigma\in S(k)}\sgn(\sigma)\sum_{\omega\in S(k)}
		\sgn(\omega)
		\sum_{\n\in \Weyl{k}}
		\st(\n)\left(\frac{1-q}{1-\nu}\right)^{k}
		\oint_{\xinu(\ga)}\ldots\oint_{\xinu(\ga)}
		\frac{d\vom}{(2\pi\i)^{k}}
		\\
		&\hspace{190pt}
		\times
		\prod_{B<A}\Sm(\om_{\sigma(A)},\om_{\sigma(B)})
		\Sm(\xi_{\omega(B)},\xi_{\omega(A)})
		\prod_{j=1}^{k}\om_{j}^{-n_{\sigma^{-1}(j)}+x_j}
		\xi_{\omega(j)}^{n_j}.
	\end{align*}
	The product $\prod_{B<A}\Sm(\om_{\sigma(A)},\om_{\sigma(B)})$
	is a polynomial in each of the variables $\om_j$
	of degree $k-1$. In order for the above integral in $\om_j$'s
	not to vanish, all powers $x_j-n_{\sigma^{-1}(j)}$, $j=1,\ldots,k$,
	must range from $-k$ to $-1$. Because of the large gaps between the $x_j$'s
	and the inequalities $n_1\ge \ldots\ge n_k$,
	this implies that $\sigma$ must be the identity permutation
	(i.e., the contribution of all other permutations is zero).
	This also implies that the $n_j$'s must be distinct.
	For distinct $n_j$'s,
	$\st(\n)\left(\frac{1-q}{1-\nu}\right)^{k}=1$
	(see \eqref{stationary_measure}).

	After restricting to $\sigma=id$,
	the conditions $n_1\ge n_2\ge \ldots\ge n_k$
	will be automatic (otherwise the integral vanishes),
	so we will be able to drop the assumption
	$\n\in\Weyl{k}$ and sum over all $\n\in\Z^{k}$.
	Therefore, continuing the above formula, we have
	\begin{align*}
		&
		=\sum_{\omega\in S(k)}
		\sgn(\omega)
		\prod_{B<A}\Sm(\xi_{\omega(B)},\xi_{\omega(A)})
		\\&
		\hspace{120pt}\times
		\sum_{\n\in \Z^{k}}
		\prod_{j=1}^{k}\xi_{\omega(j)}^{n_j}
		\oint_{\xinu(\ga)}\ldots\oint_{\xinu(\ga)}
		\frac{d\vom}{(2\pi\i)^{k}}
		\prod_{B<A}\Sm(\om_{A},\om_{B})
		\prod_{j=1}^{k}\om_{j}^{-n_{j}+x_j}
		.
	\end{align*}
	The integration over $\vom$
	may be interpreted as taking the coefficient
	of
	$\om_1^{n_1}\ldots\om_k^{n_k}$
	in the product
	\begin{align}\label{spectral_biorthogonality_proof_good}
		\om_1^{x_1+1}\ldots \om_k^{x_k+1}\prod_{B<A}\Sm(\om_{A},\om_{B}).
	\end{align}
	Then each such coefficient
	of $\om_1^{n_1}\ldots\om_k^{n_k}$
	is
	multiplied by
	$\xi_{\omega(1)}^{n_1}\ldots \xi_{\omega(k)}^{n_k}$,
	and the summation over all possible powers $\n\in\Z^{k}$
	is performed. Therefore, the result of this summation is a substitution
	$\om_j\to\xi_{\omega(j)}$ in \eqref{spectral_biorthogonality_proof_good}.
	Therefore, we have
	\begin{align*}
		&\sum_{\n\in \Weyl{k}}
		\oint_{\xinu(\ga)}\ldots\oint_{\xinu(\ga)}
		\frac{d\vom}{(2\pi\i)^{k}}
		\om_1^{x_1}\ldots \om_k^{x_k}
		\Phir_{\vxi}(\n)\Phil_{\vom}(\n)
		\\&
		\hspace{80pt}
		=
		\sum_{\omega\in S(k)}
		\sgn(\omega)
		\prod_{B<A}\Sm(\xi_{\omega(B)},\xi_{\omega(A)})
		\prod_{B<A}\Sm(\xi_{\omega(A)},\xi_{\omega(B)})
		\,\xi_{\omega(1)}^{x_1+1}\ldots
		\xi_{\omega(k)}^{x_k+1}.
	\end{align*}
	After multiplying both sides
	of the above identity
	by
	$F(\vxi)$ and integrating over the variables $\xi_j$,
	we readily get \eqref{spectral_biorthogonality_xi}.
\end{proof}
This completes the proof of the spectral biorthogonality (Theorems
\ref{thm:spectral_biorthogonality} and \ref{thm:spectral_biorthogonality_xi}).

\begin{remark}\label{rmk:difference_with_BCPS_spectral}
	The sum over $\n$ in the spectral biorthogonality
	statement \eqref{spectral_biorthogonality_xi_deltas}
	can be written as
	\begin{align}\label{spectral_biorthogonality_xi_remark}
		\sum_{\n\in\Weyl{k}}
		C(\n)\Phil_{\vom}(\n)(\Refl\Phil_{\vxi})(\n),
	\end{align}
	where $\Refl$ is the space reflection operator
	\eqref{reflection_operator}, and
	$C(\n)$ is a certain constant not depending on $\n$
	times $\st(\n)$ \eqref{stationary_measure}.

	It may seem strange that
	our proof of the spectral biorthogonality
	(in other words, the computation of the sum
	\eqref{spectral_biorthogonality_xi_remark})
	does not involve the explicit value of the constant
	$C(\n)$ (except in the case when all $n_j$'s are distinct).
	On the other hand,
	the proof heavily relies on
	Lemma \ref{lemma:spectral_biorthogonality_zero},
	a statement made possible by the presence of a free parameter
	$\mu$ in the operator and not in the eigenfunctions.
	
	It seems plausible (and was checked for $k=2$)
	that
	Lemma \ref{lemma:spectral_biorthogonality_zero} \emph{alone}
	determines the value of the constant $C(\n)$
	for all $\n$'s (up to an overall factor not depending on $\n$).

	Note also that the proof
	of the spectral biorthogonality for $\nu=0$ given in
	\cite[\S 6]{BorodinCorwinPetrovSasamoto2013}
	is very different, and it
	employs the explicit value of the corresponding constant $C(\n)$.
\end{remark}

% subsection proof_of_the_spectral_orthogonality (end)

\subsection{Degenerate spectral biorthogonality} % (fold)
\label{sub:more_degenerate_spectral_biorthogonality}

A spectral biorthogonality statement
similar to Theorems \ref{thm:spectral_biorthogonality}
and \ref{thm:spectral_biorthogonality_xi}
should hold for
all eigenfunctions which
appear in completeness results \eqref{completeness_big_1}--\eqref{completeness_big_2}
of Corollary \ref{cor:completeness}.
Namely, these eigenfunctions
are indexed by spectral variables which form
strings
$\z=\z'\circ\la$, $\w=\w'\circ\varkappa$,
where $\la$ and $\varkappa$ are two partitions of $k$
(recall the definition of strings
\eqref{w_circ_la}).\footnote{Note that if $\la\ne(1^{k})$, then $\z=\z'\circ\la$
depends on (independent)
spectral variables $z'_1,\ldots,z'_{\ell(\la)}$
whose number is strictly less than $k$.}
We conjecture that for such $\z$ and $\w$,
Theorem \ref{thm:spectral_biorthogonality}
should reduce to
\begin{align}
	\label{degenerate_biorthogonality_deltas}
	\llangle
	\Psir_{\z},\Psil_{\w}
	\rrangle_{\Wc^k}
	\Vand(\z)\Vand(\w)=
	\mathbf{1}_{\la=\varkappa}\cdot
	(-1)^{\frac{k(k-1)}{2}}
	\prod_{j=1}^{k}(1-z_j)(1-\nu z_j)
	\prod_{A\ne B}^{\sim}(z_A-qz_B)
	\det[\delta({z_i-w_j})]_{i,j=1}^{k},
\end{align}
where $\prod\limits_{A\ne B}^{\sim}(z_A-qz_B)$
means that we omit factors which are identically
zero
by the very definition of $\z$.
Formal identity \eqref{degenerate_biorthogonality_deltas}
should be understood in an integral sense similar to Theorem
\ref{thm:spectral_biorthogonality}. (However, it is not clear which
test functions or integration contours should be used.)
The change of variables $\xinu$ (see \S \ref{sub:spatial_and_spectral_variables})
would turn \eqref{degenerate_biorthogonality_deltas}
into a conjectural degenerate version of Theorem \ref{thm:spectral_biorthogonality_xi}
(i.e., into a statement in the other spectral variables $\vxi$ and $\vom$).

Using ideas similar to the proof of Theorem \ref{thm:spectral_biorthogonality_xi} in
\S \ref{sub:proof_of_the_spectral_orthogonality}, one could easily show that if $\la\ne \varkappa$,
the left-hand side of \eqref{degenerate_biorthogonality_deltas} vanishes.
We have checked \eqref{degenerate_biorthogonality_deltas}
for $k=2$ and $\la=\varkappa=(2)$, which yields
\begin{align*}
	&\sum_{n_1\ge n_2}
	\Psir_{z', qz'}(n_1,n_2)
	\Psil_{w', qw'}(n_1,n_2)
	(z'-qz')(w'-qw')
	\\&\hspace{150pt}=
	-
	(1-z')(1-qz')
	(1-\nu z')(1-\nu q z')
	(z'-q^{2}z')\delta({z'-w'}).
\end{align*}
Note that in this case we have simplified
the determinant in the right-hand side of \eqref{degenerate_biorthogonality_deltas}
as
$$\det[\delta({z_i-w_j})]_{i,j=1}^{2}
=\delta({z'-w'})\delta({qz'- qw'})-\delta({z'- qw'})\delta({qz'-w'})
=\delta({z'-w'}),$$ because the second summand always vanishes.
This formal identity for $k=2$ was checked on
test functions which are
Laurent polynomials in $\xinu(z')$ and $\xinu(w')$,
and the integration contours in $z'$ and $w'$
were small positively oriented circles around $1$
(cf. Remark \ref{rmk:spectral_biorthogonality_over_1_nuin}).

\smallskip

We note that in the physics literature such degenerate
orthogonality has appeared in the context of
continuous delta Bose gas
\cite{CalabreseCaux2007},
\cite{DotsenkoTW_onedim}.
See also
appendices A and B and references in section II
of
\cite{Dotsenko2010_universal}. In the case of attractive delta Bose gas,
one needs to use all eigenfunctions corresponding to all various
strings (indexed by partitions $\la$) for the completeness of the
Bethe ansatz. In our situation
due to the other completeness results \eqref{completeness_small_1}--\eqref{completeness_small_2},
it is possible to work only with eigenfunctions
corresponding to $\la=(1^{k})$, whose spectral orthogonality
is Theorems \ref{thm:spectral_biorthogonality} and \ref{thm:spectral_biorthogonality_xi}.

% Note that the Plancherel measure corresponding to a partition $\la$
% (Definition \ref{def:Plancherel_measure})
% can be rewritten as
% \begin{align*}
% 	d\Plm^{(q)}_\lambda(\vec{z}')=(-1)^{\frac{k(k-1)}2}\frac{1}{m_1!m_2!\ldots}\, \frac{\V(\vec{z}'\circ\lambda)^{2}}{\prod\limits_{A\ne B}^{\sim}\big((\vec{z}'\circ\lambda)_{A}-q(\vec{z}'\circ\lambda)_{B}\big)}\prod_{j=1}^{\ell(\lambda)}\frac{dz'_j}{2\pi \i},
% \end{align*}

% subsection more_degenerate_spectral_biorthogonality (end)

% section spectral_biorthogonality_of_eigenfunctions (end)

\section{The $q$-Hahn system and coordinate Bethe ansatz} % (fold)
\label{sec:the_q_mu_nu_boson_process_and_coordinate_bethe_ansatz}

In this section we discuss connections of
the eigenfunctions to the $q$-Hahn stochastic particle
system
(about the name see \S \ref{sub:hopping_distribution_and_q_hahn_orthogonality_weights} below)
introduced by
Povolotsky
\cite{Povolotsky2013}.
Namely,
we briefly recall the coordinate Bethe ansatz computations
leading to the eigenfunctions described in \S \ref{sub:eigenfunctions}.
Our Plancherel formulas in \S \ref{sub:spatial_plancherel_formula}
and \S \ref{sub:spectral_plancherel_formula}
give rise to moment formulas for the $q$-Hahn TASEP
(also introduced in \cite{Povolotsky2013})
with an arbitrary initial condition (\S \ref{sub:solving_backward_and_forward_evolution_equations}),
and also to
certain new symmetrization identities
(\S \ref{sub:symmetrization_formula}).
In this section we will sometimes call the $q$-Hahn particle system
the \emph{$q$-Hahn zero-range process}
to distinguish it from the $q$-Hahn TASEP.

Definitions, constructions and results in this
section depend on
parameters $0<q<1$ and
$0\le \nu\le \mu<1$.

\subsection{The $q$-Hahn system} % (fold)
\label{sub:_q_hahn_boson_process}

The ($k$-particle) $q$-Hahn system's state space is the set
$\Weyl{k}$ \eqref{Weyl_k} of spatial variables $\n=(n_1\ge \ldots\ge n_k)$.
Here each $n_j$ is the location of the $j$-th rightmost particle on $\Z$.
We will need another encoding of such $k$-particle configurations.
Namely, let $y_i\ge0$ denote the number of particles at the site $i\in\Z$.
If $\n\in\Weyl{k}$, then for $\y=\y(\n)=(y_i)_{i\in\Z}$
we have $\sum_{i\in\Z}y_i=k$, and only finitely many of the
coordinates of $\y$ are nonzero. Denote by $\Yb{k}$ the space of all such vectors $\y$.
We thus have a bijection between $\Yb{k}$ and $\Weyl{k}$.

For $s_i\in\{0,1,\ldots,y_i\}$, denote
\begin{align*}
	&
	\y_{i,i-1}^{s_i}:=(y_0,y_1,\ldots,y_{i-1}+s_i,y_{i}-s_i,y_{i+1},\ldots,y_N).
	\\&
	\y_{i,i+1}^{s_i}:=(y_0,y_1,\ldots,y_{i-1},y_{i}-s_i,y_{i+1}+s_i,\ldots,y_N).
\end{align*}
That is, the configuration
$\y_{i,i-1}^{s_i}$ is obtained by taking
$s_i$ particles from the site $i$ and moving them to
the site $i-1$ (in $\y_{i,i+1}^{s_i}$, $s_i$
particles are moved to the site $i+1$).

We will also use the following notation for $\n\in\Weyl{k}$:
For $I\subseteq \{1,\ldots, k\}$, let $\n^{\pm}_{I}$ denote the the vector $\n$ with $n_i$ replaced by $n_i\pm1$ for all $i\in I$. The modified vectors $\n^{\pm}_{I}$
do not necessarily belong to $\Weyl{k}$.

\begin{definition}[A deformed Binomial distribution]
\label{def:binomial}
	For $|q|<1$, $0\le \nu\le \mu<1$
	and integers $0\le j\le m$, define
	\begin{align*}
		\phi_{q,\mu,\nu}(j\mid m):=\mu^{j}\frac{(\nu/\mu;q)_{j}(\mu;q)_{m-j}}{(\nu;q)_{m}}\frac{(q;q)_{m}}{(q;q)_{j}(q;q)_{m-j}}.
	\end{align*}
	This definition can be also extended to include the case $m=+\infty$:
	\begin{align*}
		\phi_{q,\mu,\nu}(j\mid +\infty):=
		\mu^{j}\frac{(\nu/\mu;q)_{j}}{(q;q)_j}
		\frac{(\mu;q)_{\infty}}{(\nu;q)_{\infty}}.
	\end{align*}
\end{definition}

It was shown in \cite{Povolotsky2013} (see also \cite{Corwin2014qmunu},
\cite{Barraquand_qhahn_2014} and \cite[(3.6.2)]{Koekoek1996},
cf. \S \ref{sub:hopping_distribution_and_q_hahn_orthogonality_weights} below)
that
for $m=\{0,1,\ldots\}\cup\{+\infty\}$,
\begin{align*}
	\sum_{j=0}^{m}\phi_{q,\mu,\nu}(j\mid m)=1.
\end{align*}
Clearly, for the parameters $(q,\mu,\nu)$ satisfying
conditions stated at the beginning of this section, we have
$\phi_{q,\mu,\nu}(j\mid m)\ge0$ for all meaningful $j$ and $m$.

\begin{definition}[The $q$-Hahn system \cite{Povolotsky2013}]\label{def:q_hahn_process}
	The $k$-particle $q$-Hahn stochastic particle system
	on the lattice $\Z$
	is a discrete-time Markov chain on
	$\Yb{k}$ (equivalently, on $\Weyl{k}$) defined as follows.
	At each time step $t\to t+1$
	and independently at each site $i\in \Z$,
	select $s_i\in\{0,1,\ldots,y_i\}$
	particles at random with probability $\phi_{q,\mu,\nu}(s_i\mid y_i)$.
	After this selection is made at every $i\in\Z$
	(note that $s_i=y_i=0$ for
	all except a finite number of $i$),
	transfer $s_i$ particles from each side $i$ to
	its left neighbor $i-1$. See Fig.~\ref{fig:qBoson}, left.
\end{definition}

The (backward) Markov transition operator of the $q$-Hahn system can be written as follows.
Define for each $i\in\Z$ the one-site transition operator as
\begin{align*}
	\left(\big[\Abwd\big]_if\right)(\y):=
	\sum_{s_i=0}^{y_i}\phi_{q,\mu,\nu}(s_i\mid y_i)f(\y^{s_i}_{i,i-1}).
\end{align*}
Then the global transition operator
is
\begin{align*}
	\left(\Hbwd f\right)(\y):=\ldots\big[\Abwd\big]_{-1}\big[\Abwd\big]_{0}
	\big[\Abwd\big]_{1}\big[\Abwd\big]_{2}\ldots f(\y).
\end{align*}
The operators $\big[\Abwd\big]_i$ and $\Hbwd$
act on functions on
$\Yb{k}$
which correspond to compactly supported functions
on $\Weyl{k}$ (via the bijection between $\Yb{k}$ and $\Weyl{k}$).
Recall that the latter functions constitute the space $\Wc^{k}$,
cf. \S \ref{sub:spatial_and_spectral_variables}.
For such functions $f$, the action of
the operators $\big[\Abwd\big]_i$ is trivial for all but a finite number of $i\in\Z$.

\begin{remark}
	The order of operators $\big[\Abwd\big]_i$
	in the definition of $\Hbwd$ above
	should be understood according to
	\begin{align*}
		\big[\Abwd\big]_{i}\big[\Abwd\big]_{i+1} f(\y)
		=
		\Big(\big[\Abwd\big]_{i}\big(\big[\Abwd\big]_{i+1} f\big)\Big)(\y).
	\end{align*}
	Note that
	the order of the
	$\big[\Abwd\big]_i$'s
	matters because the one-site operators do not act independently.
\end{remark}

% subsection _q_hahn_boson_process (end)

\subsection{Particle hopping distribution and $q$-Hahn orthogonality weights} % (fold)
\label{sub:hopping_distribution_and_q_hahn_orthogonality_weights}

The particle hopping distribution
$\phi_{q,\mu,\nu}$
(\S \ref{sub:_q_hahn_boson_process})
is related to the orthogonality weight for the
$q$-Hahn orthogonal polynomials
(about the latter objects, e.g., see \cite[\S3.6]{Koekoek1996} and references therein).
The orthogonality weight for the $q$-Hahn polynomials is defined
as
\begin{align*}
	\qhwt_{q,\al,\be}(x\mid N)=(\al\be q)^{-x}\frac{(\al q;q)_{x}(q^{-N};q)_{x}}
	{(q;q)_x(\be^{-1}q^{-N};q)_{x}},
	\qquad
	x=0,1,\ldots,N.
\end{align*}
% It depends on $0<q<1$ and also on two parameters
% $\al,\be\in\R$ which must both belong to $(0, q^{-1})$
% or to $(q^{-N},\infty)$ in order for
% $\qhwt_{q,\al,\be}$ to be a positive weight function
% on $\{0,1,\ldots,N\}$.
% In fact, setting $\al=q^{-1}$
% also leads to a meaningful (but degenerate) nonnegative weight function
% $\qhwt_{q, q^{-1},\be}(x\mid N)=\mathbf{1}_{x=0}$.
% We will assume that $0<\al\le q^{-1}$
% and $0<\be<q^{-1}$.
These $q$-Hahn weights
sum to
\begin{align*}
	\sum_{x=0}^{N}\qhwt_{q,\al,\be}(x\mid N)=(\al q)^{-N}\frac{(q^{2}\al\be;q)_{N}}{(q\be;q)_{N}}.
\end{align*}
Indeed, this follows from the orthogonality relation
for the $q$-Hahn polynomials \cite[(3.6.2)]{Koekoek1996} for $m=n=0$.

One can readily write down the following
relation between $\phi_{q,\mu,\nu}$ (Definition \ref{def:binomial})
and the $q$-Hahn weight function:
\begin{align}\label{phi_qhahn}
	\phi_{q,\mu,\nu}(j\mid m)=
	\frac{\mu^{m}}{\nu^{m-j}}
	\frac
	{(\nu/\mu;q)_{m}}{(\nu;q)_{m}}
	\qhwt_{q^{-1}, q\mu^{-1}, q\mu\nu^{-1}}(m-j\mid m),\qquad
	j=0,1,\ldots,m.
\end{align}
% \begin{remark}
% 	Note that the $q$-Hahn weights and orthogonal polynomials
% 	corresponding to our $\phi_{q,\mu,\nu}$ will depend on $q^{-1}$.
% 	Thus, the conditions on $\al=q\mu^{-1}$ and $\be=q\mu\nu^{-1}$
% 	given in \cite[\S3.6]{Koekoek1996}
% 	are not relevant to \eqref{phi_qhahn}.
% \end{remark}

The above understanding
that the
two-parameter
family of particle
hopping distributions $\phi_{q,\mu,\nu}$
(depending on $\mu,\nu$)
is essentially equivalent to the two-parameter
family of orthogonality
weights
$\qhwt_{q,\al,\be}$
(depending on $\al,\be$)
for the $q$-Hahn orthogonal
polynomials
is the reason for the name ``$q$-Hahn''
for the stochastic particle system
of Definition \ref{def:q_hahn_process}.

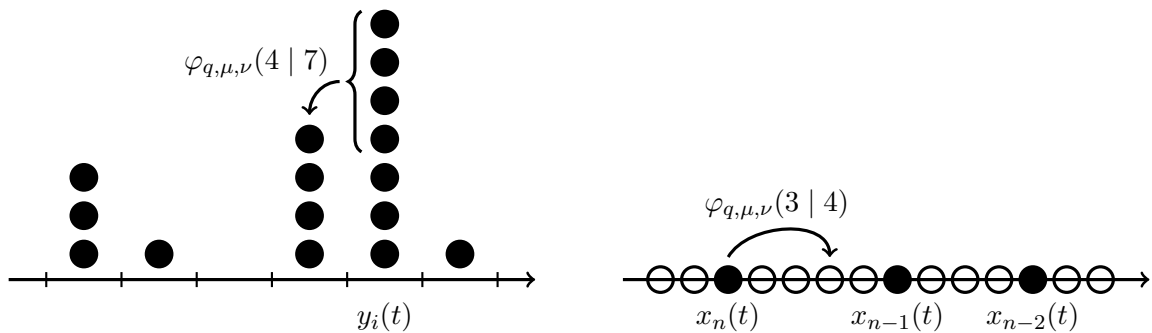
\begin{figure}[htbp]
	\begin{center}
	\begin{tabular}{cc}
		\begin{tikzpicture}
			[scale=1,very thick]
			\def\pt{.17}
			\def\ee{.1}
			\def\h{.34}
			\draw[->] (-.5,0) -- (6.5,0);
			\foreach \ii in {0,1,2,3,4,5,6}
			{
				\draw[thick] (\ii,-\ee) -- (\ii,\ee);
			}
			\foreach \ii in {(.5,\h),(.5,2.5*\h), (.5,4*\h),(1.5,\h),(3.5,\h),(3.5,2.5*\h),(3.5,4*\h), (3.5,5.5*\h), (4.5,\h), (4.5,2.5*\h),(4.5,4*\h), (4.5,5.5*\h), (4.5,7*\h),(4.5,8.5*\h),(4.5,10*\h), (5.5,\h)}
			{
				\draw[fill] \ii circle(\pt);
			}
			\node at (4.5,-5*\ee) {$y_i(t)$};
			\draw [decorate,decoration={brace,amplitude=5pt,raise=4pt},xshift=-19pt] (5,+5*\h) -- (5,10*\h+\pt);
		    \draw[->, very thick] (3.9,7.5*\h+\pt/2) to [in=90, out=-180] (3.5,2.19) node [xshift=-20,yshift=20] {$\phi_{q,\mu,\nu}(4\mid 7)$};
		\end{tikzpicture}
		&\qquad
		\begin{tikzpicture}
			[scale=1,very thick]
			\def\pt{.17}
			\def\ee{.1}
			\def\h{.45}
			\draw[->] (-.5,0) -- (6.5,0);
			\foreach \ii in {(0,0),(\h,0),(3*\h,0),(4*\h,0),(5*\h,0),(6*\h,0) ,(8*\h,0),(10*\h,0),(9*\h,0),(12*\h,0),(13*\h,0)}
			{
				\draw \ii circle(\pt);
			}
			\foreach \ii in {(2*\h,0),(7*\h,0),(11*\h,0)}
			{
				\draw[fill] \ii circle(\pt);
			}
			\node at (2*\h,-5*\ee) {$x_n(t)$};
			\node at (7*\h,-5*\ee) {$x_{n-1}(t)$};
			\node at (11*\h,-5*\ee) {$x_{n-2}(t)$};
		    \draw[->, very thick] (2*\h,.3) to [in=90, out=60] (5*\h,.3)
		    node [xshift=-20,yshift=20] {$\phi_{q,\mu,\nu}(3\mid 4)$};
		\end{tikzpicture}
	\end{tabular}
	\end{center}
  	\caption{A possible transition
  	at one site of the $q$-Hahn zero-range process (left) and the $q$-Hahn TASEP (right)
  	processes.}
  	\label{fig:qBoson}
\end{figure}

% subsection hopping_distribution_and_q_hahn_orthogonality_weights (end)

\subsection{The $q$-Hahn TASEP and Markov duality} % (fold)
\label{sub:_q_hahn_tasep_and_markov_duality}

Here we recall the definition of the $q$-Hahn TASEP
process (given in \cite{Povolotsky2013})
and its Markov duality relation to the
$q$-Hahn zero-range process (established in \cite{Corwin2014qmunu}, see also \cite{Barraquand_qhahn_2014}).

\begin{definition}[The $q$-Hahn TASEP]
	Fix an integer $N\ge1$.
	The $q$-Hahn TASEP is a discrete-time Markov process system
	with state space
	\begin{align*}
		\Xbn{N}:=\{\x=(x_0,x_1,\ldots,x_N)\colon +\infty=x_0>x_1>x_2>\ldots>x_N,\; x_j\in\Z\}.
	\end{align*}
	By agreement, $x_0=+\infty$ is a virtual particle, introduced to
	simplify the notation.
	
	At each time step $t\to t+1$,
	each of the particles $x_n(t)$, $1\le n\le N$,
	independently jumps to the right
	by $j_n$, where $j_n\in \{0,1,\ldots,x_{n-1}(t)-x_n(t)-1\}$,
	with probability
	$\phi_{q,\mu,\nu}(j_n\mid x_{n-1}(t)-x_n(t)-1)$.
	See Fig.~\ref{fig:qBoson}, right.
\end{definition}

\begin{remark}
	We consider the $q$-Hahn TASEP with
	a finite number of particles.
	However, since the dynamics of
	each particle $x_n$
	depends only on the dynamics of
	$x_{n-1},\ldots,x_{1}$,
	our considerations can be extended to
	dynamics on semi-infinite configurations
	(having a rightmost particle).
	In doing so, one should restrict
	the class of distributions
	allowed as initial conditions for the $q$-Hahn TASEP.
	We do not investigate such an extension here.
\end{remark}

Let $\Ybn{N}$ denote the subset of $\bigsqcup_{k\ge1}\Yb{k}$
consisting of configurations such that $y_i=0$ unless
$1\le i\le N$.
Define the function
$H\colon \Xbn{N}\times \bigsqcup_{k\ge1}\Yb{k}\to\R$ by
\begin{align*}
	H(\x,\y):=\begin{cases}
		\prod_{i=1}^{N}q^{y_i(x_i+i)},&\y\in\Ybn{N};\\
		0,&\text{otherwise}.
	\end{cases}
\end{align*}

\begin{theorem}[\cite{Corwin2014qmunu}]\label{thm:qHahn_duality}
	The function $H(\x,\y)$ serves as a Markov duality functional
	between the $q$-Hahn TASEP and the $q$-Hahn zero-range process
	in the following sense:
	\begin{align}\label{qHahn_duality}
		\mathbb{E}^{\text{\rm{}TASEP}}_{\x_0}
		H(\x(t),\y_0)=
		\mathbb{E}^{{\text{\rm{}ZRP}}}_{\y_0}
		H(\x_0,\y(t)),\qquad
		\x_0\in\Xbn{N},\quad
		\y_0\in\Ybn{N},
		\quad
		 t=0,1,2,\ldots.
	\end{align}
	Here in the left-hand side the
	expectation is taken over
	the $N$-particle $q$-Hahn TASEP started from the
	initial configuration $\x_0$, and
	in the right-hand side the expectation
	is taken over the $q$-Hahn zero-range process
	started from the initial configuration $\y_0$.
\end{theorem}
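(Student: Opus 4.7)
The plan is to establish the $t=1$ special case of \eqref{qHahn_duality} for arbitrary fixed $\x \in \Xbn{N}$ and $\y \in \Ybn{N}$, namely
$$\sum_{\x'} p^{\mathrm{TASEP}}(\x, \x')\, H(\x', \y) = \sum_{\y'} p^{\mathrm{ZRP}}(\y, \y')\, H(\x, \y'),$$
where $p^{\mathrm{TASEP}}$ and $p^{\mathrm{ZRP}}$ are the one-step transition kernels. Once this is in hand, the general $t$ version follows by induction on $t$ (apply the one-step identity inside the expectation on the left, rewrite as the right-hand side, and iterate $t-1$ more times); all sums are finite because $\y \in \Ybn{N}$ forces $H(\x,\cdot)$ to be supported on the finite set $\Ybn{N}$.

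To prove the one-step identity, I would expand both kernels explicitly, adopting the convention $x_0 = +\infty$. On the TASEP side the jumps $j_1,\ldots,j_N$ are independent and $H(\x',\y) = \prod_n q^{y_n(x'_n+n)}$ factors over $n$, giving
$$\text{LHS} = \prod_{n=1}^N q^{y_n(x_n+n)} \cdot \prod_{n=1}^N \Bigl( \sum_{j_n=0}^{x_{n-1}-x_n-1} \phi_{q,\mu,\nu}(j_n \mid x_{n-1}-x_n-1)\, q^{y_n j_n} \Bigr).$$
On the ZRP side, setting $y'_i = y_i - s_i + s_{i+1}$, the quantity $H(\x, \y')$ vanishes unless $\y' \in \Ybn{N}$, and since $y'_0 = s_1$ this forces $s_1=0$. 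A short telescoping computation (a discrete summation by parts) shows that under $s_1 = 0$,
$$\sum_{i=1}^N y'_i(x_i+i) = \sum_{i=1}^N y_i(x_i+i) + \sum_{i=2}^N s_i\,(x_{i-1}-x_i-1),$$
so the sums over $s_2,\ldots,s_N$ decouple and
$$\text{RHS} = \prod_{i=1}^N q^{y_i(x_i+i)} \cdot \phi_{q,\mu,\nu}(0 \mid y_1) \cdot \prod_{i=2}^N \Bigl( \sum_{s_i=0}^{y_i} \phi_{q,\mu,\nu}(s_i \mid y_i)\, q^{s_i(x_{i-1}-x_i-1)} \Bigr).$$

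After cancelling the common prefactor $\prod_i q^{y_i(x_i+i)}$, the identity reduces to two scalar claims about $\phi_{q,\mu,\nu}$: the \emph{exchange identity}
$$\sum_{j=0}^m \phi_{q,\mu,\nu}(j \mid m)\, q^{yj} = \sum_{s=0}^y \phi_{q,\mu,\nu}(s \mid y)\, q^{sm}, \qquad m,y \in \Z_{\ge 0},$$
needed for the interior indices $n = 2,\ldots,N$, and the boundary identity
$$\sum_{j=0}^\infty \phi_{q,\mu,\nu}(j \mid \infty)\, q^{yj} = \phi_{q,\mu,\nu}(0 \mid y) = \frac{(\mu;q)_y}{(\nu;q)_y}$$
for the leading particle $n=1$. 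The boundary identity follows at once from the $q$-binomial theorem $\sum_j \frac{(a;q)_j}{(q;q)_j} z^j = \frac{(az;q)_\infty}{(z;q)_\infty}$ applied with $a = \nu/\mu$ and $z = \mu q^y$, combined with the factorization $(\mu;q)_\infty = (\mu;q)_y\,(\mu q^y;q)_\infty$.

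The main obstacle is the finite exchange identity: it is an algebraic symmetry in $(m,y)$ of a terminating basic hypergeometric sum, and in fact it is precisely the feature that singles out $\phi_{q,\mu,\nu}$ among three-parameter zero-range hopping distributions -- this type of exchange relation is how Povolotsky \cite{Povolotsky2013} arrived at the $q$-Hahn family in the first place. I would prove it by recognizing both sides as specializations of the same terminating ${}_3\phi_2$ and invoking a Heine or Sears transformation; alternatively a direct double induction on $(m,y)$ works, based on two-term recursions for $\phi_{q,\mu,\nu}(\cdot \mid m)$ and $\phi_{q,\mu,\nu}(j \mid \cdot)$ that are immediate from the explicit product formula. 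Granted this scalar identity, everything else is routine bookkeeping.
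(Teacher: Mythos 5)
Your argument is correct and follows essentially the route of the cited source: the paper does not prove Theorem \ref{thm:qHahn_duality} itself but quotes it from \cite{Corwin2014qmunu}, where the duality is likewise reduced to a one-step relation whose crux is exactly your exchange identity $\sum_{j=0}^{m}\phi_{q,\mu,\nu}(j\mid m)\,q^{yj}=\sum_{s=0}^{y}\phi_{q,\mu,\nu}(s\mid y)\,q^{sm}$, i.e.\ the symmetry of the $(q,\mu,\nu)$-deformed binomial distribution established in \cite{Corwin2014qmunu} and given a short proof in \cite{Barraquand2014qmunu} (both cited in the paper). Your bookkeeping is sound — the factorization over sites, the forced $s_1=0$ from $y'_0=s_1$, the telescoping $\sum_i y'_i(x_i+i)=\sum_i y_i(x_i+i)+\sum_{i\ge2}s_i(x_{i-1}-x_i-1)$, the $q$-binomial computation for the leading particle with infinite gap, and the semigroup induction (where one only needs the easy remark that once mass reaches sites $\le 0$ the functional $H$ vanishes forever, and that the TASEP-side sums converge since $x_i(t)\ge x_i(0)$ bounds $H$) — so nothing essential is missing beyond the deferred, and true, scalar identity.
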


This result generalizes known dualities for
the continuous and discrete time $q$-TASEPs
\cite{BorodinCorwinSasamoto2012}, \cite{BorodinCorwin2013discrete}.
In \S \ref{sub:solving_backward_and_forward_evolution_equations}
below we rewrite Theorem \ref{thm:qHahn_duality}
in a form of $q$-Hahn evolution equations, and
use our spatial Plancherel formula (Theorem \ref{thm:spatial_Plancherel}) to solve these equations.

% subsection _q_hahn_tasep_and_markov_duality (end)

\subsection{Coordinate Bethe ansatz integrability and the left (=~backward) eigenfunctions} % (fold)
\label{sub:coordinate_bethe_ansatz_integrability_and_the_left_eigenfunctions}

It is possible to write the action of the
$q$-Hahn transition operator
in terms of a free operator subject to certain two-body boundary conditions.
This was done in \cite{Povolotsky2013}, see also \cite{Corwin2014qmunu}.
This and the next subsection are essentially citations from these two papers.

Denote by $\nbwd$ the operator acting on functions in one variable as
\begin{align*}
	\nbwd f(n):=\frac{\mu-\nu}{1-\nu}f(n-1)+\frac{1-\mu}{1-\nu}f(n).
\end{align*}
Also, let the \emph{free backward operator}
be defined by
\begin{align*}
	\left(\Lbwd u\right)(\n):=\prod_{i=1}^{k}\big[\nbwd\big]_{i}u(\n).
\end{align*}
Here $\big[\nbwd\big]_{i}$ denotes the operator $\nbwd$ acting on the variable $n_i$.
This operator acts on compactly supported functions on the whole
$\Z^{k}$ (so we drop the assumption
that the coordinates $n_j$ are ordered).
Note that the operators $\big[\nbwd\big]_{i}$ for different $i$ act independently, so the
order of their product in the definition of $\Lbwd$ does not matter.

\begin{definition}\label{def:backward_two_body}
	We say that a function $u$ on $\Z^{k}$ satisfies
	the $(k-1)$ \emph{backward two-body boundary conditions} if
	\begin{align}\label{backward_two_body}
		\left.\Big(\nu(1-q) u(\n^{-}_{i,i+1})+(q-\nu) u(\n^{-}_{i+1})+(1-q) u(\n)-(1-q\nu)u(\n^{-}_{i})
		\Big)\right\vert_{\n\in\Z^{k}\colon n_i=n_{i+1}}=0
	\end{align}
	for all $1\le i\le k-1$.
	Denote by $\Bbwd$ the operator in the
	left-hand side of \eqref{backward_two_body}.
\end{definition}

The notion of integrability of the $q$-Hahn particle
system listed in the next proposition dates back
to Bethe \cite{Bethe1931}.

\begin{proposition}[Coordinate Bethe ansatz integrability]\label{prop:CBA_qHahn}
	If $u\colon\Z^{k}\to\C$ satisfies the $(k-1)$ backward boundary conditions
	\eqref{backward_two_body},
	then for all $\n\in\Weyl{k}$,
	\begin{align*}
		(\Lbwd u)(\n)=(\Hbwd u)(\n).
	\end{align*}
\end{proposition}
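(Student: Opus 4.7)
The plan is to reduce to the case of a single cluster and then verify the identity there by inducting on the cluster size, using the two-body boundary condition \eqref{backward_two_body} to replace every out-of-Weyl evaluation of $u$ by a combination of in-chamber ones.

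First I would observe that both operators respect the cluster decomposition of $\n\in\Weyl{k}$. Decomposing the indices $\{1,\dots,k\}$ into clusters $C_1,\dots,C_{M(\n)}$ as in \eqref{n_clusters}, the Markov operator $\Hbwd$ samples $\phi_{q,\mu,\nu}(s_j\mid|C_j|)$ independently at each occupied site, so its action factors over clusters; the free operator $\Lbwd=\prod_i[\nbwd]_i$ likewise factors, since the $[\nbwd]_i$ commute and touch disjoint variables. Since all shifted configurations that arise differ from $\n$ only within clusters, and \eqref{backward_two_body} couples values of $u$ only within a single cluster, it suffices to prove the identity when $\n$ has exactly one cluster of size $y$ at site $m$. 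In this reduced setting the target identity reads
\[
\sum_{I\subseteq\{1,\dots,y\}}p^{|I|}(1-p)^{y-|I|}\,u(\n^{-}_{I})=\sum_{s=0}^{y}\phi_{q,\mu,\nu}(s\mid y)\,u(\n^{-}_{\{y-s+1,\dots,y\}}),\qquad p=\tfrac{\mu-\nu}{1-\nu},
\]
where on the right only the ``tail'' subsets leave $\n^{-}_{\cdot}$ inside $\Weyl{k}$, whereas most subsets on the left produce configurations violating $n_i\ge n_{i+1}$.

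Second, I would induct on $y$. The case $y=1$ is tautological since $\phi_{q,\mu,\nu}(\cdot\mid 1)=(1-p,\,p)$ and then $\nbwd$ \emph{is} the one-particle transition. For $y=2$ the only out-of-chamber term on the left is $u(\n^{-}_{1})$; one application of \eqref{backward_two_body} at the collision $(1,2)$ rewrites it as an explicit combination of $u(\n),\,u(\n^{-}_{2}),\,u(\n^{-}_{\{1,2\}})$, and a direct coefficient collection reproduces $\phi_{q,\mu,\nu}(0\mid 2),\phi_{q,\mu,\nu}(1\mid 2),\phi_{q,\mu,\nu}(2\mid 2)$. For $y\geq 3$, I would apply \eqref{backward_two_body} at the topmost collision $(1,2)$, with every choice of $J\subseteq\{3,\dots,y\}$ playing the role of a ``bystander'' set of already-shifted indices, to rewrite each out-of-chamber term $u(\n^{-}_{\{1\}\cup J})$ in terms of $u(\n^{-}_{J}),\,u(\n^{-}_{\{2\}\cup J}),\,u(\n^{-}_{\{1,2\}\cup J})$. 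After this substitution every remaining term has $I\cap\{1,2\}$ equal to either $\varnothing$, $\{2\}$, or $\{1,2\}$, and the whole sum reorganizes as a linear combination of two $(y-1)$-cluster sums of the same form over $\{2,\dots,y\}$, to which the inductive hypothesis applies and yields a $\Hbwd$-type expression on a cluster of size $y-1$.

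The chief technical obstacle is the coefficient-matching step in the inductive round: the four constants $\nu(1-q),\,q-\nu,\,1-q,\,1-q\nu$ from \eqref{backward_two_body} must combine with the two $(y-1)$-cluster $q$-Hahn outputs to reproduce $\phi_{q,\mu,\nu}(\cdot\mid y)$ exactly, rather than some spurious $q$-deformation. Concretely, this amounts to a contiguous-weight recursion of the schematic form $\phi_{q,\mu,\nu}(s\mid y)=a(s,y)\,\phi_{q,\mu,\nu}(s\mid y-1)+b(s,y)\,\phi_{q,\mu,\nu}(s-1\mid y-1)$, whose coefficients $a(s,y),b(s,y)$ are forced by the four constants above and which is a standard contiguous relation for $q$-Hahn weights (readily verifiable from the closed form in Definition~\ref{def:binomial}). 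Once this recursion is in hand the induction closes and Proposition~\ref{prop:CBA_qHahn} follows.
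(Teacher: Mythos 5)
The paper offers no proof of this proposition to compare against: the surrounding text says explicitly that \S\ref{sub:coordinate_bethe_ansatz_integrability_and_the_left_eigenfunctions} is ``essentially citations'' from \cite{Povolotsky2013} and \cite{Corwin2014qmunu}, so your argument must stand on its own. Its skeleton is reasonable and in the spirit of the cited literature: the reduction to a single cluster is sound (the boundary conditions \eqref{backward_two_body} hold for arbitrary frozen values of the spectator coordinates, so both $\Lbwd$ and $\Hbwd$ factor over clusters in the required iterated-sum sense), and your $y=1$ and $y=2$ verifications are correct. The genuine gap is in the inductive step for $y\ge 3$.

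After you use \eqref{backward_two_body} at the pair $(1,2)$ to replace every $u(\n^{-}_{\{1\}\cup J})$, $J\subseteq\{3,\dots,y\}$, by $\alpha\,u(\n^{-}_{\{1,2\}\cup J})+\beta\,u(\n^{-}_{\{2\}\cup J})+\gamma\,u(\n^{-}_{J})$ with $\alpha=\frac{\nu(1-q)}{1-q\nu}$, $\beta=\frac{q-\nu}{1-q\nu}$, $\gamma=\frac{1-q}{1-q\nu}$, the sum does \emph{not} reorganize into two free $(y-1)$-cluster sums over $\{2,\dots,y\}$. Collecting the terms with $1\notin I$: for $I'\subseteq\{2,\dots,y\}$ the coefficient of $u(\n^{-}_{I'})$ is $\bigl((1-p)+\gamma p\bigr)p^{|I'|}(1-p)^{(y-1)-|I'|}$ when $2\notin I'$, but $\bigl((1+\beta)(1-p)\bigr)p^{|I'|}(1-p)^{(y-1)-|I'|}$ when $2\in I'$, and $(1-p)+\gamma p=(1+\beta)(1-p)$ holds only when $\mu=q$. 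Since no linear combination of free binomial sums can have coefficients depending on whether $2\in I'$, this group is not of ``the same form''; and the group with $1\in I$ is supported only on subsets containing $2$, so it is not a free sum either. Hence the inductive hypothesis cannot be invoked as you propose. There is a second omission of the same nature: even where one may legitimately apply the hypothesis with $n_1$ frozen at $m-1$, its output consists of values $u(m-1,m,\dots,m,m-1,\dots,m-1)$, which still lie outside $\Weyl{k}$; converting these into the target values (where only the last $s$ coordinates are decremented) requires commuting the misplaced low front coordinate past the block of high ones by iterating \eqref{backward_two_body} at $(1,2),(2,3),\dots$, and each step branches (the $\alpha$-term creates a second defect). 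The coefficient matching this forces is not a two-term contiguous relation $\phi_{q,\mu,\nu}(s\mid y)=a\,\phi_{q,\mu,\nu}(s\mid y-1)+b\,\phi_{q,\mu,\nu}(s-1\mid y-1)$ but a multi-term $q$-summation identity --- essentially the functional equation that Povolotsky solved to \emph{find} the weights $\phi_{q,\mu,\nu}$. That identity is the combinatorial heart of the proposition, and your outline leaves it unaddressed, so the induction does not close as written.
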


This proposition allows to construct eigenfunctions
for the backward operator $\Hbwd$ by employing the coordinate
Bethe ansatz. See, e.g., \cite[\S2.3]{BorodinCorwinPetrovSasamoto2013}
for a detailed description of this procedure.

As a first step for the Bethe ansatz, we find eigenfunctions
for the one-particle free operators $\big[\nbwd\big]_{i}$. These clearly are
given by power functions, which we will write in the following
form ($n\in\Z$, $z\in\overline\C\setminus\{1,\nu^{-1}\}$):
\begin{align*}
	\fbwd_{z}(n):=\left(\frac{1-z}{1-\nu z}\right)^{-n},
	\qquad
	\nbwd \fbwd_z=\frac{1-\mu z}{1-\nu z}\fbwd_z.
\end{align*}
Any linear combination of these functions of the form
\begin{align*}
	\Psi^{\textrm{bwd}}_{\z}(\n)=\sum_{\sigma\in S(k)}L_\sigma(\z)
	\prod_{j=1}^{k}\fbwd_{z_{\sigma(j)}}(n_j)
\end{align*}
is an eigenfunction of the free operator
$\Lbwd$ with the eigenvalue
\begin{align}\label{ev_S5}
	\ev(\z):=\prod_{j=1}^{k}\frac{1-\mu z_j}{1-\nu z_j}.
\end{align}

The next step is to select
(among all possible
linear combinations $\Psi^{\textrm{bwd}}_{\z}(\n)$)
the
eigenfunctions which satisfy
the boundary conditions of Definition \ref{def:backward_two_body}.
This can be done by taking (cf. \cite[Lemma 2.8]{BorodinCorwinPetrovSasamoto2013})
\begin{align}
	L_\sigma(\z)=\sgn(\sigma)\prod_{1\le B<A\le k}
	S(z_{\sigma(A)},z_{\sigma(B)}),
	\qquad
	S(z_1,z_2):=
	\frac{\big(\Bbwd (\fbwd_{z_1}\otimes \fbwd_{z_2})\big)(n,n)}
	{\big(\fbwd_{z_1}\otimes \fbwd_{z_2}\big)(n,n)}.
	\label{CBA}
\end{align}
A direct computation yields
\begin{align*}
	\big(\Bbwd (\fbwd_{z_1}\otimes \fbwd_{z_2})\big)(n_1,n_2)=
	\frac{(1-\nu)^2 }{(1-\nu z_1)(1-\nu z_2)}
	(z_1-qz_2)\fbwd_{z_1}(n_1)\fbwd_{z_2}(n_2),\qquad
	n_1,n_2\in\Z.
\end{align*}
Note that if $n_1=n_2$, then $z_1-qz_2$ is the only part of the above expression
which is not symmetric in $z_1\leftrightarrow z_2$.

Therefore, we can take
\begin{align*}
	L_\sigma(\z)=
	\prod_{1\le B<A\le k}\frac{z_{\sigma(A)}-qz_{\sigma(B)}}{z_{\sigma(A)}-z_{\sigma(B)}}.
\end{align*}
Note that this expression differs from the one in
\eqref{CBA} by a factor depending
only on $\z$ and not on $\n$, and the sign
$\sgn(\sigma)$ in \eqref{CBA} is absorbed into the
Vandermonde in the denominator.
With the above choice of $L_\sigma$,
we arrive at the following
\emph{backward eigenfunctions}
($\n\in\Z^{k}$, $z_1,\ldots,z_k\in\overline\C\setminus\{1,\nu^{-1}\}$):
\begin{align*}
	\Psi^{\textrm{bwd}}_{\z}(\n):=
	\sum_{\sigma\in S(k)}\prod_{1\le B<A\le k}
	\frac{z_{\sigma(A)}-qz_{\sigma(B)}}
	{z_{\sigma(A)}-z_{\sigma(B)}}\prod_{j=1}^{k}
	\left(\frac{1-z_{\sigma(j)}}{1-\nu z_{\sigma(j)}}\right)^{-n_j}.
\end{align*}
\begin{proposition}[{\cite[\S4]{Povolotsky2013}}]\label{prop:Psibwd_eigenvalue}
	The function $\Psibwd_{\z}$ is an eigenfunction of the backward free operator $\Lbwd$
	with the eigenvalue
	$\ev(\z)$ \eqref{ev_S5}.
	Moreover, it satisfies the $(k-1)$ two-body backward boundary conditions \eqref{backward_two_body}.
	Consequently,
	$\Psibwd_{\z}(\n)$ restricted to $\n\in\Weyl{k}$
	is an eigenfunction of the backward true $q$-Hahn operator $\Hbwd$, with the eigenvalue
	$\ev(\z)$.
\end{proposition}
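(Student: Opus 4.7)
The plan is to verify the three claims in order; part (3) will follow immediately from parts (1)--(2) together with Proposition \ref{prop:CBA_qHahn}.

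For part (1), I observe that each summand $\prod_{j=1}^{k}\fbwd_{z_{\sigma(j)}}(n_j)$ in the defining formula for $\Psibwd_{\z}$ is a tensor product of one-variable eigenfunctions of $\nbwd$. Since $\Lbwd=\prod_{i=1}^{k}[\nbwd]_i$ is a product of commuting operators each acting in its own variable, this tensor product is itself an eigenfunction of $\Lbwd$ with eigenvalue
\[
\prod_{j=1}^{k}\frac{1-\mu z_{\sigma(j)}}{1-\nu z_{\sigma(j)}}=\prod_{j=1}^{k}\frac{1-\mu z_j}{1-\nu z_j}=\ev(\z),
\]
which is independent of $\sigma$. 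Summing over $\sigma\in S(k)$ gives $\Lbwd\Psibwd_{\z}=\ev(\z)\Psibwd_{\z}$ on all of $\Z^{k}$.

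The main content is part (2): verifying the $k-1$ two-body boundary conditions \eqref{backward_two_body}. I handle it by the standard pairwise cancellation. Fix $i\in\{1,\ldots,k-1\}$ and restrict to $\n$ with $n_i=n_{i+1}=n$. Group the $k!$ summands of $\Psibwd_{\z}(\n)$ into $k!/2$ pairs $\{\sigma,\sigma'\}$ where $\sigma':=\sigma\circ s_i$ and $s_i$ transposes $i\leftrightarrow i+1$. Because $n_i=n_{i+1}$, the tensor product $\prod_{j=1}^{k}\fbwd_{z_{\sigma(j)}}(n_j)$ coincides with its $\sigma'$-analogue in every factor. The operator $\Bbwd$ acts only on the $i$-th and $(i+1)$-st coordinates, and using the formula
\[
\bigl(\Bbwd(\fbwd_{z_1}\otimes\fbwd_{z_2})\bigr)(n,n)=\frac{(1-\nu)^{2}}{(1-\nu z_1)(1-\nu z_2)}(z_1-qz_2)\,\fbwd_{z_1}(n)\fbwd_{z_2}(n)
\]
recorded in the excerpt, the contribution of a pair $\{\sigma,\sigma'\}$ to $(\Bbwd\Psibwd_{\z})(\n)|_{n_i=n_{i+1}}$ reduces to a factor symmetric in $z_{\sigma(i)}\leftrightarrow z_{\sigma(i+1)}$ multiplied by
\[
L_\sigma(\z)\,(z_{\sigma(i)}-qz_{\sigma(i+1)})+L_{\sigma'}(\z)\,(z_{\sigma(i+1)}-qz_{\sigma(i)}),
\]
where $L_\sigma(\z):=\prod_{B<A}(z_{\sigma(A)}-qz_{\sigma(B)})/(z_{\sigma(A)}-z_{\sigma(B)})$. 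A direct inspection of this defining product shows that under $\sigma\mapsto\sigma'$ the factors indexed by $(B,A)$ with $\{B,A\}\cap\{i,i+1\}=\emptyset$ are unchanged, while the factor pairs $\{(i,A),(i+1,A)\}$ for $A>i+1$ and $\{(B,i),(B,i+1)\}$ for $B<i$ each swap among themselves and so contribute invariantly; only the factor $(B,A)=(i,i+1)$ contributes non-trivially to the ratio, yielding
\[
\frac{L_{\sigma'}(\z)}{L_\sigma(\z)}=-\frac{z_{\sigma(i)}-qz_{\sigma(i+1)}}{z_{\sigma(i+1)}-qz_{\sigma(i)}}.
\]
Substituting this ratio into the two-term expression above shows it vanishes; summing over pairs completes the proof of the boundary conditions.

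Part (3) is then immediate: by Proposition \ref{prop:CBA_qHahn}, since $\Psibwd_{\z}$ satisfies the $(k-1)$ backward two-body boundary conditions, $(\Hbwd\Psibwd_{\z})(\n)=(\Lbwd\Psibwd_{\z})(\n)=\ev(\z)\Psibwd_{\z}(\n)$ for all $\n\in\Weyl{k}$. I do not anticipate a real obstacle; the only non-routine step is the bookkeeping of which factors of $L_\sigma$ change under $\sigma\mapsto\sigma\circ s_i$, but this is purely a mechanical symmetry check.
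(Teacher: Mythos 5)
Your proposal is correct and follows essentially the same route as the paper: the one-particle eigenfunction computation, the symmetrized ansatz with coefficients $L_\sigma$, and the verification of the two-body conditions via the two-variable identity for $\Bbwd(\fbwd_{z_1}\otimes\fbwd_{z_2})$, which is exactly the coordinate Bethe ansatz scheme outlined in \S\ref{sub:coordinate_bethe_ansatz_integrability_and_the_left_eigenfunctions}. The only difference is presentational: where the paper delegates the boundary-condition check to the citation of \cite[Lemma 2.8]{BorodinCorwinPetrovSasamoto2013} (and \cite{Povolotsky2013}), you carry out the pairing of $\sigma$ with $\sigma\circ s_i$ and the resulting cancellation explicitly, and your bookkeeping of the ratio $L_{\sigma\circ s_i}/L_\sigma$ is accurate.
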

The eigenfunctions $\Psi^{\text{\rm{}bwd}}_{\z}$
constructed above
coincide with the left eigenfunctions
$\Psil_{\z}$ \eqref{Psil}.

It seems remarkable that the
eigenfunctions depend only on two of the parameters
$q$, $\mu$, and $\nu$ entering the operator $\Hbwd$.\footnote{Recall
that this property greatly helped us to derive the
spectral biorthogonality of the eigenfunctions
(Theorem \ref{thm:spectral_biorthogonality}).}
In other words,
all operators $\Hbwd$ with the same parameters $(q,\nu)$
have the same eigenfunctions.
This observation points to
a similarity of the
$q$-Hahn stochastic particle system with other
integrable models, cf.
the presence of
commuting transfer matrices
in various solvable lattice models~\cite{baxter2007exactly}.

% subsection coordinate_bethe_ansatz_integrability_and_the_left_eigenfunctions (end)

\subsection{PT-symmetry and the right (=~forward) eigenfunctions} % (fold)
\label{sub:formal_stationary_vector_pt_symmetry_and_the_right_eigenfunctions}

The $q$-Hahn transition operator $\Hbwd$
is not Hermitian symmetric. To obtain
eigenfunctions of its transpose,
another property called PT-symmetry should be employed.
A manifestation of PT-symmetry
is the relation between the left and right eigenfunctions
listed in the end of \S \ref{sub:eigenfunctions}.

Let
\begin{align*}
	\Hfwd:=(\Hbwd)^{\text{transpose}},
	\qquad
	\Hcfwd:=\st^{-1}\Hfwd \st,
\end{align*}
where
$\st$ is the diagonal operator
of multiplication by $\st(\n)$ given by \eqref{stationary_measure}.
Recall also the space reflection operator $\Refl$
\eqref{reflection_operator}.

\begin{proposition}[PT-symmetry]\label{prop:PT}
	We have
	\begin{align*}
		\Hcfwd=\Refl \Hbwd \Refl^{-1}\qquad
		\text{and}\qquad
		\Hfwd=(\st \Refl)\Hbwd (\st\Refl)^{-1}.
	\end{align*}
\end{proposition}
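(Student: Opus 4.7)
The plan is to establish the first identity $\Hcfwd=\Refl\Hbwd\Refl^{-1}$ by a direct matrix-element calculation; the second identity will then follow automatically. Indeed, $\st(\n)$ depends only on the multiset of cluster sizes $\c(\n)$, which is preserved by the order-reversing reflection $\Refl$, so the diagonal multiplication operator $\st$ commutes with $\Refl$, and hence
\[
(\st\Refl)\,\Hbwd\,(\st\Refl)^{-1} \;=\; \st\,\Refl\,\Hbwd\,\Refl^{-1}\,\st^{-1} \;=\; \st\,\Hcfwd\,\st^{-1} \;=\; \Hfwd.
\]

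First I would unfold $\Hcfwd=\Refl\Hbwd\Refl^{-1}$ into the scalar identity
\[
\st(\y)\,\Hbwd(\bar\y,\bar\y') \;=\; \st(\y')\,\Hbwd(\y',\y),
\]
where $\bar\y_i := y_{-i}$ is the spatial reflection of an occupation vector $\y\in\Yb{k}$ and $\Hbwd(\cdot,\cdot)$ denotes the one-step transition probability. This is a detailed-balance-type statement: in the original coordinates, $\Hbwd(\bar\y,\bar\y')$ is the probability of a right-moving transition $\y\to\y'$ (a left-move in the reflected picture becomes a right-move in the original coordinates, since reflection sends site $-i$ to site $i$ and preserves the number of particles there), while $\Hbwd(\y',\y)$ is the probability of the left-moving transition $\y'\to\y$; the identity says the two are equal after weighting by $\st$ of the respective starting configurations.

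The key combinatorial input is a bijection between the two jump schemes. A left-jump scheme $\y'\to\y$ is parametrized by nonnegative integers $(t_i)_{i\in\Z}$ with $t_i\le y'_i$ and $y_i = y'_i - t_i + t_{i+1}$, while the right-jump scheme $\y\to\y'$ is parametrized by $(r_i)$ with $r_i\le y_i$ and $y'_i = y_i - r_i + r_{i-1}$. Mass conservation and compact support pin down $t_i = r_{i-1}$ (each particle crossing the edge $(i{-}1,i)$ contributes to both schemes, and both sides of the identity vanish simultaneously if these constraints cannot be satisfied by nonnegative integers). With this identification, $\Hbwd(\y',\y) = \prod_i \phi_{q,\mu,\nu}(t_i\mid y'_i)$ and $\Hbwd(\bar\y,\bar\y') = \prod_i \phi_{q,\mu,\nu}(r_i\mid y_i)$.

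Finally, plugging in the explicit form of $\phi_{q,\mu,\nu}$ from Definition~\ref{def:binomial} and of $\st$ from \eqref{stationary_measure}, and reindexing using $t_i=r_{i-1}$, the factors $\mu^{r_i}$, $(\nu/\mu;q)_{r_i}/(q;q)_{r_i}$, and the $(\nu;q)/(q;q)$ contributions from $\st$ (which precisely cancel the $(\nu;q)_{y_i}$ denominator and $(q;q)_{y_i}$ numerator inside $\phi$) all match on both sides. The claim then reduces to the term-by-term identity $y_i - r_i = y'_i - r_{i-1}$, which is manifestly true because both sides count the number of particles that remain at site $i$ during the transition. The main (and really only) obstacle will be bookkeeping: tracking which $q$-Pochhammer factors come from $\st$ versus from $\phi_{q,\mu,\nu}$ and verifying that the $\st$-twist exactly absorbs the asymmetry between the $y_i$ arguments appearing on one side and the $y'_i$ arguments on the other.
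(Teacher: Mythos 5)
Your proposal is correct, and it is worth noting that the paper does not actually prove Proposition \ref{prop:PT}: its ``proof'' is a citation to \cite[\S3]{Povolotsky2013} and \cite{PovolotskyPriezzhev2006}. So what you have written is a self-contained verification of what those references establish, namely the detailed-balance-up-to-reflection property of a chipping/zero-range model with factorized stationary weights. All the key steps check out. The reduction of the second identity to the first is legitimate because $\st(\n)$ is invariant under $\Refl$ (the paper records this), so $\st$ and $\Refl$ commute and $(\st\Refl)\Hbwd(\st\Refl)^{-1}=\st\,\Hcfwd\,\st^{-1}=\Hfwd$. The matrix-element unfolding is also right: with $\Hbwd(\y,\y')$ the one-step transition probability of the parallel update of Definition \ref{def:q_hahn_process} (which is exactly what the ordered product of the one-site operators $\big[\Abwd\big]_i$ encodes), one has $\Hcfwd(\y,\y')=\st(\y)^{-1}\Hbwd(\y',\y)\st(\y')$ and $(\Refl\Hbwd\Refl^{-1})(\y,\y')=\Hbwd(\bar\y,\bar{\y'})$, giving your scalar identity. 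The combinatorial core is as you say: compact support and mass balance force $t_j=r_{j-1}=\sum_{i<j}(y_i-y'_i)$, and the two admissibility conditions are equivalent (if $0\le t_i\le y'_i$ for all $i$ then $r_i=t_{i+1}\ge0$ and $y_i=y'_i-t_i+t_{i+1}\ge r_i$, and conversely), so both sides indeed vanish simultaneously --- this equivalence is the one small point I would spell out rather than assert. Multiplying $\prod_i\phi_{q,\mu,\nu}(t_i\mid y'_i)$ by $\st(\y')=\prod_i\frac{(\nu;q)_{y'_i}}{(q;q)_{y'_i}}$ cancels all $y'_i$-dependent Pochhammers and leaves $\prod_i\mu^{t_i}\frac{(\nu/\mu;q)_{t_i}}{(q;q)_{t_i}}\frac{(\mu;q)_{y'_i-t_i}}{(q;q)_{y'_i-t_i}}$; the analogous reduction on the other side matches factor by factor since $\sum_i t_i=\sum_i r_i$ and $y'_i-t_i=y_i-r_i$. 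In short, the only difference from the paper is that you supply the computation where the paper supplies a reference; your route buys an explicit, elementary check that also makes transparent why the twist by $\st$ (the stationary product weight restricted to the $k$-particle sector) is exactly what is needed.
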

\begin{proof}
	See \cite[\S3]{Povolotsky2013}, and also \cite{PovolotskyPriezzhev2006}.
\end{proof}

This means that the action of the conjugated forward operator
$\Hcfwd$
is the same as that of
$\Hbwd$, but in the opposite space (i.e., lattice) direction.
It thus suffices to construct the Bethe ansatz eigenfunctions for the conjugated
forward operator $\Hcfwd$. This is done very similarly to the
case of the operator $\Hbwd$, see \S \ref{sub:coordinate_bethe_ansatz_integrability_and_the_left_eigenfunctions} above. Let us record the necessary modifications. The
one-particle operator and the free operator (corresponding to $\Hcfwd$) are
\begin{align*}
	\nfwd f(n):=\frac{\mu-\nu}{1-\nu}f(n+1)+\frac{1-\mu}{1-\nu}f(n),
	\qquad
	\left(\Lcfwd u\right)(\n):=\prod_{i=1}^{k}\big[\nfwd\big]_{i}u(\n).
\end{align*}

\begin{definition}\label{def:forward_two_body}
	We say that a function $u$ on $\Z^{k}$ satisfies
	the $(k-1)$ \emph{forward two-body boundary conditions} if
	\begin{align}\label{forward_two_body}
		\left.\Big(\nu(1-q) u(\n^{+}_{i,i+1})+(q-\nu) u(\n^{+}_{i})+(1-q) u(\n)-(1-q\nu)u(\n^{+}_{i+1})
		\Big)\right\vert_{\n\in\Z^{k}\colon n_i=n_{i+1}}=0
	\end{align}
	for all $1\le i\le k-1$.
\end{definition}

\begin{proposition}[Coordinate Bethe ansatz integrability]
	If $u\colon\Z^{k}\to\C$ satisfies the $(k-1)$ forward boundary conditions
	\eqref{forward_two_body},
	then for all $\n\in\Weyl{k}$,
	\begin{align*}
		(\Lcfwd u)(\n)=(\Hcfwd u)(\n).
	\end{align*}
\end{proposition}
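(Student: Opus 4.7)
My plan is to reduce this statement to the already-established backward version (Proposition \ref{prop:CBA_qHahn}) via the PT-symmetry of Proposition \ref{prop:PT}. The key insight is that the reflection $\Refl$ interchanges forward and backward structures at every relevant level (free operator, true operator, and boundary conditions), so that a forward-compatible $u$ transports to a backward-compatible $\Refl u$ on which the backward theorem applies verbatim.

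First I would verify the operator identity $\Lcfwd = \Refl \Lbwd \Refl^{-1}$, the free-operator analogue of the statement $\Hcfwd = \Refl \Hbwd \Refl^{-1}$ supplied by Proposition \ref{prop:PT}. This is a direct check on a single factor: if $g(n) = f(-n)$, then $(\nbwd g)(n) = (\nfwd f)(-n)$, so $\nbwd$ and $\nfwd$ are conjugate under the one-variable sign-flip. Since the $k$ factors $[\nbwd]_i$ commute among themselves, the combined reflection $\Refl\colon (n_1,\ldots,n_k)\mapsto(-n_k,\ldots,-n_1)$, which composes the coordinate-wise sign flip with the reversal of variable labels, intertwines $\Lbwd$ with $\Lcfwd$.

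The next step, which is the main bookkeeping obstacle, is to show that if $u$ satisfies the $(k-1)$ forward two-body conditions \eqref{forward_two_body}, then $v:=\Refl u$ satisfies the backward conditions \eqref{backward_two_body}. Substituting $\m = \Refl \n$ into \eqref{backward_two_body} at position $i$ and using $v(\n^-_j) = u(\m^+_{k+1-j})$, the constraint $n_i = n_{i+1}$ translates to $m_j = m_{j+1}$ with $j = k-i$, and the four terms rearrange into precisely \eqref{forward_two_body} at position $j$. Thus the backward two-body condition on $v$ at index $i$ is equivalent to the forward two-body condition on $u$ at index $k-i$, and the full system of $k-1$ conditions is preserved (up to relabeling).

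Finally I would assemble the three ingredients. Fix $\n\in\Weyl{k}$; since reflection preserves the weak ordering, $\Refl\n\in\Weyl{k}$ as well. Using the intertwining of $\Lcfwd$ with $\Lbwd$, then Proposition \ref{prop:CBA_qHahn} applied to $v=\Refl u$ at the point $\Refl\n$ (permissible because $v$ satisfies the backward conditions and $\Refl\n\in\Weyl{k}$), and finally the intertwining of $\Hcfwd$ with $\Hbwd$, I would obtain the chain
\begin{equation*}
(\Lcfwd u)(\n) = (\Lbwd v)(\Refl\n) = (\Hbwd v)(\Refl\n) = (\Hcfwd u)(\n),
\end{equation*}
which is the desired identity. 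The only non-mechanical step is the boundary-condition translation above, and even that reduces to matching coefficients term by term once one tracks which entry of $\m$ corresponds to which entry of $\n$ under the reflection.
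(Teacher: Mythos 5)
Your reduction via the reflection operator $\Refl$ is correct and is precisely the route the paper intends: \S\ref{sub:formal_stationary_vector_pt_symmetry_and_the_right_eigenfunctions} invokes the PT-symmetry $\Hcfwd=\Refl\Hbwd\Refl^{-1}$ of Proposition \ref{prop:PT} to say that $\Hcfwd$ acts like $\Hbwd$ in the opposite lattice direction and then states the forward proposition as the mirror image of Proposition \ref{prop:CBA_qHahn}, leaving the details implicit. Your explicit verification of the intertwining $\Lcfwd=\Refl\Lbwd\Refl^{-1}$ and of the index correspondence $i\leftrightarrow k-i$ between the boundary conditions \eqref{backward_two_body} and \eqref{forward_two_body} simply fills in that bookkeeping, and the coefficient matching works out exactly as you claim.
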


Continuing to apply the coordinate Bethe ansatz similarly to \S \ref{sub:coordinate_bethe_ansatz_integrability_and_the_left_eigenfunctions}, we arrive at the eigenfunctions
\begin{align*}
	\Psicfwd_{\z}(\n)&:=
	\sum_{\sigma\in S(k)}\prod_{1\le B<A\le k}
	\frac{z_{\sigma(A)}-q^{-1}z_{\sigma(B)}}
	{z_{\sigma(A)}-z_{\sigma(B)}}\prod_{j=1}^{k}
	\left(\frac{1-\nu z_{\sigma(j)}}{1-z_{\sigma(j)}}\right)^{-n_j}
\end{align*}
which satisfy:
\begin{proposition}\label{prop:Psifwd_eigenvalue}
	The function $\Psicfwd_{\z}$ is an eigenfunction
	of the free operator $\Lcfwd$
	with the eigenvalue
	$\ev(\z)$ \eqref{ev_S5}.
	Moreover, it satisfies the $(k-1)$ two-body forward boundary conditions \eqref{forward_two_body}.
	Consequently,
	$\Psicfwd_{\z}(\n)$ restricted to $\n\in\Weyl{k}$
	is an eigenfunction of $\Hcfwd$, with the eigenvalue
	$\ev(\z)$.
\end{proposition}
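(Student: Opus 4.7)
The proof adapts the coordinate Bethe ansatz computation of \S\ref{sub:coordinate_bethe_ansatz_integrability_and_the_left_eigenfunctions} to the forward setting. For the free-operator statement, I first verify the one-particle eigenrelation $\nfwd \ffwd_z = \frac{1-\mu z}{1-\nu z}\ffwd_z$ for $\ffwd_z(n) := \bigl(\frac{1-\nu z}{1-z}\bigr)^{-n}$, by the direct algebraic identity $(\mu-\nu)(1-z)+(1-\mu)(1-\nu z)=(1-\nu)(1-\mu z)$. Since $\Lcfwd=\prod_i[\nfwd]_i$ acts independently on each coordinate, any symmetrization of the form $\sum_{\sigma\in S(k)} L_\sigma(\z)\prod_{j=1}^k\ffwd_{z_{\sigma(j)}}(n_j)$ is automatically an eigenfunction of $\Lcfwd$ with eigenvalue $\ev(\z)$. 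In particular, $\Psicfwd_\z$ qualifies.

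For the boundary conditions \eqref{forward_two_body}, I would compute the forward $S$-matrix, i.e., evaluate the two-body operator $\Bfwd$ (the LHS of \eqref{forward_two_body}) on $\ffwd_{z_1}\otimes\ffwd_{z_2}$ at $n_1=n_2=n$. Using $\ffwd_z(n+1)/\ffwd_z(n)=(1-z)/(1-\nu z)$ and expanding, one obtains
\begin{align*}
\bigl(\Bfwd(\ffwd_{z_1}\otimes\ffwd_{z_2})\bigr)(n,n) = \frac{-q(1-\nu)^2}{(1-\nu z_1)(1-\nu z_2)}\,(z_1-q^{-1}z_2)\,\ffwd_{z_1}(n)\ffwd_{z_2}(n),
\end{align*}
so that the only part asymmetric in $z_1\leftrightarrow z_2$ is $z_1-q^{-1}z_2$. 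This is precisely why the definition of $\Psicfwd_\z$ features $L_\sigma(\z)=\prod_{B<A}\frac{z_{\sigma(A)}-q^{-1}z_{\sigma(B)}}{z_{\sigma(A)}-z_{\sigma(B)}}$: grouping each $\sigma\in S(k)$ with $\sigma\circ s_i$ (where $s_i$ is the simple transposition swapping $i$ and $i+1$) yields pairwise cancellation of the two-body term at $n_i=n_{i+1}$, exactly as in the backward computation. Combined with the coordinate Bethe ansatz integrability proposition stated immediately above Proposition \ref{prop:Psifwd_eigenvalue}, this delivers $\Hcfwd\Psicfwd_\z=\ev(\z)\Psicfwd_\z$ on $\Weyl{k}$.

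An alternative, structurally cleaner route uses the PT-symmetry of Proposition \ref{prop:PT}: applying $\Refl$ to the identity $\Hbwd\Psil_\z=\ev(\z)\Psil_\z$ from Proposition \ref{prop:Psibwd_eigenvalue} gives $\Hcfwd(\Refl\Psil_\z)=\ev(\z)(\Refl\Psil_\z)$, and a direct relabelling of the symmetrization index (together with \eqref{Psi_PT_symmetry} and the observation that $\Psir_\z=(-1)^k(1-q)^k q^{k(k-1)/2}\st\cdot\Psicfwd_\z$, evident from the definitions) identifies $(\Refl\Psil_\z)(\n)$ with $q^{k(k-1)/2}\Psicfwd_\z(\n)$. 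The main---in fact only nontrivial---obstacle in either route is the bookkeeping of the $q\leftrightarrow q^{-1}$ swap that distinguishes forward from backward $S$-matrices; everything else is a mechanical transcription of the backward case.
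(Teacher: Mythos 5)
Your proposal is correct and takes essentially the paper's own (only sketched) route: the paper proves this proposition by repeating the backward Bethe ansatz computation of \S\ref{sub:coordinate_bethe_ansatz_integrability_and_the_left_eigenfunctions} for the forward operator, which is exactly what you do, and your forward two-body evaluation $\bigl(\Bfwd(\ffwd_{z_1}\otimes\ffwd_{z_2})\bigr)(n,n)=\frac{-q(1-\nu)^2}{(1-\nu z_1)(1-\nu z_2)}\,(z_1-q^{-1}z_2)\,\ffwd_{z_1}(n)\ffwd_{z_2}(n)$ is correct and produces precisely the cross-term $z_{\sigma(A)}-q^{-1}z_{\sigma(B)}$ appearing in $\Psicfwd_{\z}$, with the pairwise cancellation over $\sigma\leftrightarrow\sigma s_i$ handling the boundary conditions as in the backward case. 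Your alternative PT-symmetry route (conjugating $\Hbwd\Psil_{\z}=\ev(\z)\Psil_{\z}$ by $\Refl$ and using \eqref{Psi_PT_symmetry}) is also sound, but it is essentially the same use of Proposition \ref{prop:PT} that the paper makes to reduce the forward problem to the backward one, so it is not a genuinely different argument.
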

Returning to the forward operator $\Hfwd$, we get its eigenfunctions
\begin{align*}
	\Psifwd_{\z}(\n)&:=(-1)^k(1-q)^{k}q^{\frac{k(k-1)}{2}}\st(\n)\Psicfwd_{\z}(\n).
\end{align*}
It then follows from Proposition \ref{prop:PT} that
$\Hfwd\Psifwd_\z=\ev(\z)\Psifwd_\z$.
The eigenfunctions $\Psifwd_\z$ coincide with the right
eigenfunctions $\Psir_\z$ \eqref{Psir}
(the constant not depending on $\n$ in front of
$\Psifwd_\z$
is chosen to make certain formulas simpler).

% subsection formal_stationary_vector_pt_symmetry_and_the_right_eigenfunctions (end)

\subsection{Solving backward and forward evolution equations for the $q$-Hahn system} % (fold)
\label{sub:solving_backward_and_forward_evolution_equations}

It is possible to solve the backward and forward Kolmogorov equations
(with compactly supported initial conditions)
associated to the $q$-Hahn system $\n(t)$.
The backward equation governs the evolution of expectations of
deterministic
functions in the spatial variables $\n(t)$, and
the forward equation corresponds to the evolution of measures
on the state space $\Weyl{k}$.

\begin{proposition}[Backward evolution equation]\label{prop:bwd_eqn}
	For any $f_0\in\Wc^k$, the backward equation with initial condition
	\begin{align}
		\left\{\begin{array}{rclc}
			f(t+1;\n)&=&(\Hbwd f)(t;\n),
			&\qquad t\in\Z_{\ge0};
			\\\rule{0pt}{14pt}
			f(0,\n)&=&f_0(\n),
		\end{array}\right.
		\label{bwd_eqn_problem}
	\end{align}
	has a unique solution given by
	\begin{align}
		\nonumber
		f(t;\n)&=\big((\Hbwd)^t f_0\big)(\n)
		\\&=
		\sum_{\la\vdash k}
		\oint_{\ga_k}\ldots\oint_{\ga_k}
		d\Plm^{(q)}_\la(\w)
		\prod_{j=1}^{\ell(\la)}\frac{1}{(w_j;q)_{\la_j}(\nu w_j;q)_{\la_j}}
		\big(\ev(\w\circ\la)\big)^{t}\Psil_{\w\circ\la}(\n)
		\llangle f_0,\Psir_{\w\circ\la}\rrangle_{\Wc^k}
		\nonumber\\
		\label{bwd_solution}
		&=
		\oint_{\ga_1}\frac{dz_1}{2\pi\i}
		\ldots
		\oint_{\ga_k}\frac{dz_k}{2\pi\i}
		\prod_{1\le A<B\le k}\frac{z_A-z_B}{z_A-qz_B}
		\prod_{j=1}^{k}
		\left(\frac{1-\mu z_j}{1-\nu z_j}\right)^{t}
		\\&\hspace{160pt}\times\prod_{j=1}^{k}
		\frac{1}{(1-z_j)(1-\nu z_j)}
		\left(\frac{1-z_j}{1-\nu z_j}\right)^{-n_j}
		\llangle f_0,\Psir_{\z}\rrangle_{\Wc^k}.
		\nonumber
	\end{align}

	Consequently, \eqref{bwd_solution} provides an expression for the expectations
	\begin{align}
		\label{f_as_expectation}
		f(t;\n)=\mathbb{E}\left[f_0\big(\n(t)\big)\mid \n(0)=\n\right]
	\end{align}
	for any $f_0\in\Wc^{k}$.
\end{proposition}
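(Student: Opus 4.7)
The proof strategy breaks into four steps: (i) verify that direct iteration $f(t;\n) = ((\Hbwd)^t f_0)(\n)$ gives the unique $\Wc^k$-solution to \eqref{bwd_eqn_problem}; (ii) expand $f_0$ in the left eigenfunctions via completeness; (iii) apply $(\Hbwd)^t$ under the integral using the eigenfunction property; (iv) identify the result with the expectation \eqref{f_as_expectation}.

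For step (i), the $q$-Hahn dynamics moves each selected cluster of particles exactly one lattice site to the left per time step, so the rightmost particle cannot move right and after $t$ steps no particle can lie more than $t$ to the left of its initial position. Dually, for compactly supported $f_0$, $(\Hbwd f_0)(\n)$ vanishes whenever $n_1$ exceeds the maximum rightmost coordinate in $\mathrm{supp}(f_0)$ (moving particles leftward keeps them far to the right) and whenever $n_k$ is below the minimum leftmost coordinate minus $1$. Iterating, $(\Hbwd)^t f_0$ remains in $\Wc^{k}$ for all $t \ge 0$. Uniqueness in $\Wc^k$ is immediate from the recursion $f(t+1;\cdot) = \Hbwd f(t;\cdot)$.

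For steps (ii)--(iii), I apply Corollary \ref{cor:completeness}, formula \eqref{completeness_small_1}, to $f_0$ and observe that for each fixed $\n \in \Weyl{k}$ the value $(\Hbwd g)(\n) = \sum_{\{s_i\}} \prod_i \phi_{q,\mu,\nu}(s_i \mid y_i) g(\n')$ is a finite $\C$-linear combination of values of $g$ (since only finitely many sites $i$ have $y_i(\n) > 0$ and $s_i \le y_i \le k$). Hence $\Hbwd$ commutes with the contour integration in $\z$, and Proposition \ref{prop:Psibwd_eigenvalue} gives $(\Hbwd \Psil_\z)(\n) = \ev(\z) \Psil_\z(\n)$ pointwise on $\Weyl{k}$. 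Iterating $t$ times yields the factor $\ev(\z)^t = \prod_j \bigl((1-\mu z_j)/(1-\nu z_j)\bigr)^t$ under the integral, which is precisely the middle line of \eqref{bwd_solution} after recalling the definition of $d\Plm^{(q)}_{(1^k)}(\z)$. Passage to the nested-contour form (the third line of \eqref{bwd_solution}) and to the expansion indexed by $\la \vdash k$ is handled by Proposition \ref{prop:nesting_unnesting}; the integrand meets its hypotheses since the extra factor $\ev(\z)^t$ is symmetric and holomorphic on the relevant deformation region (it introduces no new poles inside the contours because $\mu < 1$ keeps $\mu^{-1}$ outside $\ga$).

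Step (iv) is short: by the Markov property, $u(t;\n) := \E[f_0(\n(t)) \mid \n(0) = \n]$ satisfies $u(0;\n) = f_0(\n)$ and $u(t+1;\n) = (\Hbwd u(t;\cdot))(\n)$, and $u(t;\cdot)$ lies in $\Wc^k$ for each fixed $t$ by the same particle-conservation argument as in step (i). Uniqueness then forces $u = f$. The main obstacle is the commutation of $\Hbwd$ with contour integration in step (iii): since $\Psil_\z$ itself does not belong to $\Wc^k$ this is not a formal triviality, but it is forced by the finite-range structure of $\Hbwd$ at each fixed $\n$, which reduces the action to a finite sum of evaluations of the integrand.
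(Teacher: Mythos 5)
Your proof is correct and takes essentially the same route as the paper: apply $(\Hbwd)^t$ to the completeness expansion of Corollary \ref{cor:completeness}, use the eigenrelation of Proposition \ref{prop:Psibwd_eigenvalue}, pass between the large-contour, nested, and string forms via Proposition \ref{prop:nesting_unnesting}, and get uniqueness from the recursion (the paper's triangularity remark); your added care about interchanging $\Hbwd$ with the contour integral and the Markov-property identification of \eqref{f_as_expectation} merely fills in details the paper leaves implicit. One cosmetic correction: the holomorphy of the factor $\ev(\z)^{t}$ holds because its only poles are at $z_j=\nu^{-1}$, which is excluded from all contours by construction --- the point $\mu^{-1}$ is a zero of the numerator and need not lie outside $\ga$, so your parenthetical justification is misstated even though the conclusion is unaffected.
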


\begin{proposition}[Forward evolution equation]\label{prop:fwd_eqn}
	For any $f_0\in\Wc^k$, the forward equation with initial condition
	\begin{align}\label{fwd_eqn_problem}
		\left\{\begin{array}{rclc}
			f(t+1;\n)&=&(\Hfwd f)(t;\n),
			&\qquad t\in\Z_{\ge0};
			\\\rule{0pt}{14pt}
			f(0,\n)&=&f_0(\n),
		\end{array}\right.
	\end{align}
	has a unique solution given by
	\begin{align}
		\nonumber
		f(t;\n)&=\big((\Hfwd)^t f_0\big)(\n)
		\\&=
		\sum_{\la\vdash k}
		\oint_{\ga_k}\ldots\oint_{\ga_k}
		d\Plm^{(q)}_\la(\w)
		\prod_{j=1}^{\ell(\la)}\frac{1}{(w_j;q)_{\la_j}(\nu w_j;q)_{\la_j}}
		\big(\ev(\w\circ\la)\big)^{t}\Psir_{\w\circ\la}(\n)
		\llangle \Psil_{\w\circ\la},f_0\rrangle_{\Wc^k}
		\nonumber\\
		\label{fwd_solution}&=
		(-1)^{k}(1-q)^{k}\st(\n)
		\oint_{\ga_1}\frac{dz_1}{2\pi\i}
		\ldots
		\oint_{\ga_k}\frac{dz_k}{2\pi\i}
		\prod_{1\le A<B\le k}\frac{z_A-z_B}{z_A-qz_B}
		\prod_{j=1}^{k}
		\left(\frac{1-\mu z_j}{1-\nu z_j}\right)^{t}
		\\&\hspace{160pt}\times\prod_{j=1}^{k}
		\frac{1}{(1-z_j)(1-\nu z_j)}
		\left(\frac{1-z_j}{1-\nu z_j}\right)^{n_{k+1-j}}
		\llangle \Psil_{\z},f_0\rrangle_{\Wc^k}.
		\nonumber
	\end{align}

	Consequently, the transition probabilities
	of the $q$-Hahn particle system
	\begin{align*}
		P_\y(t;\x):=\mathrm{Prob}\left\{
		\n(t)=\x\mid \n(0)=\y
		\right\},\qquad t=0,1,\ldots,
	\end{align*}
	are given by
	any of the two formulas in the right-hand side of \eqref{fwd_solution}, with
	$\n=\x$ and
	$f_0(\n)=\mathbf{1}_{\n=\y}$.
\end{proposition}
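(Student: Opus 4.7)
The plan is to mirror the proof of the backward equation (Proposition \ref{prop:bwd_eqn}), swapping the roles of $\Psil$ and $\Psir$, and then to derive the nested contour form by invoking the PT-symmetry \eqref{Psi_PT_symmetry} together with Proposition \ref{prop:nesting_unnesting}. Uniqueness is immediate from the recursion $f(t+1)=\Hfwd f(t)$ with $f(0)=f_0$, which forces $f(t)=(\Hfwd)^t f_0$. A useful preliminary observation is that although the right eigenfunctions $\Psir_{\z}$ do not lie in $\Wc^{k}$, the operator $\Hfwd$ is still well defined pointwise on \emph{all} functions $\Weyl{k}\to\C$: since each particle of the zero-range dynamics hops by at most one lattice site per step, only finitely many source configurations $\n'$ can reach a given target $\n$, so $(\Hfwd g)(\n)$ is a finite sum for any $g$. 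This will let me move $\Hfwd$ termwise inside contour integrals over spectral variables.

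I would first expand $f_0$ via the completeness formula \eqref{completeness_big_2} (which expands in right eigenfunctions) and then apply $(\Hfwd)^t$ pointwise in $\n$. Swapping $\Hfwd$ with the $\w$-contour integration is justified by the preceding observation, and Proposition~\ref{prop:Psifwd_eigenvalue} gives $\Hfwd\Psir_{\w\circ\la}=\ev(\w\circ\la)\Psir_{\w\circ\la}$; iterating $t$ times produces the factor $(\ev(\w\circ\la))^{t}$ and yields the first expression in \eqref{fwd_solution}.

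To obtain the nested contour form I would use the second identity in \eqref{Psi_PT_symmetry}, rewritten as $\Psir_{\z}(\n)=(-1)^k(1-q)^k\st(\n)\,\Psil_{\z}(\Refl\n)$, to pull $(-1)^k(1-q)^k\st(\n)$ outside the sum and integral, leaving
\[
	f(t;\n)=(-1)^k(1-q)^k\st(\n)\,(\Pli G)(\Refl\n),\qquad
	G(\z):=(\ev(\z))^{t}\,\llangle\Psil_{\z},f_0\rrangle_{\Wc^k},
\]
by virtue of formula \eqref{Pli_and_Psil_small} for $\Pli$. To apply Proposition \ref{prop:nesting_unnesting} I must verify $G\in\Cc^{k}_{z}$: each factor $(1-\mu z_j)/(1-\nu z_j)$ of $\ev(\z)$ is affine in $(1-z_j)/(1-\nu z_j)$ by a direct calculation, and $\llangle\Psil_{\z},f_0\rrangle_{\Wc^k}$ is a \emph{finite} linear combination of elements of $\Cc^{k}_{z}$ since $f_0$ is compactly supported. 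Proposition \ref{prop:nesting_unnesting} then expresses $(\Pli G)(\Refl\n)$ as the nested-contour integral \eqref{Pli_transform} with $n_j$ replaced by $-n_{k+1-j}$, which upon substitution gives exactly the second expression in \eqref{fwd_solution}. The transition-probability statement follows at once by taking $f_0(\n)=\mathbf{1}_{\n=\y}$ and using that $\Hfwd$ is the transpose of $\Hbwd$, so that $((\Hfwd)^t f_0)(\n)=P_\y(t;\n)$.

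The main obstacle is bookkeeping rather than analysis: correctly tracking the reversal $\n\mapsto\Refl\n$ through the inverse transform, handling the resulting index shift $j\mapsto k+1-j$, and matching the prefactor $\st(\n)$ that surfaces from PT-symmetry with the $\st(\n)$ appearing in the claimed formula \eqref{fwd_solution}. No new analytic difficulty arises beyond what is already present in the proof of the backward analogue (Proposition \ref{prop:bwd_eqn}) and the spatial Plancherel formula (Theorem \ref{thm:spatial_Plancherel}).
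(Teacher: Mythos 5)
Your proposal is correct and follows essentially the same route as the paper's (very terse) proof: expand $f_0$ via the completeness identity \eqref{completeness_big_2}, apply $(\Hfwd)^t$ termwise using the eigenrelation of Proposition \ref{prop:Psifwd_eigenvalue}, and pass to nested contours via Proposition \ref{prop:nesting_unnesting}, with your use of the PT-symmetry \eqref{Psi_PT_symmetry} being exactly the paper's invocation of the swapping operator $\Psiswap$. Your extra remarks (pointwise well-definedness of $\Hfwd$ via finiteness of one-step predecessors, and the check that $\ev(\z)^t\llangle\Psil_{\z},f_0\rrangle_{\Wc^k}$ lies in $\Cc^{k}_{z}$) are accurate and simply make explicit what the paper leaves implicit.
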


\par\noindent\textit{Proof of Propositions \ref{prop:bwd_eqn} and \ref{prop:fwd_eqn}.}
	The uniqueness in both cases it evident
	because both operators $\Hbwd$ and $\Hfwd$
	are triangular. The rest follows by applying
	$\Hbwd$ or $\Hfwd$, respectively,
	to appropriate identities
	of Corollary \ref{cor:completeness}.
	To rewrite integral expressions
	with the large contours $\ga$
	in nested form, we have used Proposition
	\ref{prop:nesting_unnesting}.
	To get the second formula in Proposition
	\ref{prop:fwd_eqn}, we also employed the swapping operator
	$\Psiswap$ discussed in \S \ref{sub:eigenfunctions}.
\qed

\medskip

For the formulas \eqref{bwd_solution}--\eqref{fwd_solution} to
be more explicit, one would need to express the initial
conditions $f_0(\n)$
through the eigenfunctions, which will allow to
effectively compute the bilinear pairings
$\llangle\cdot,\cdot\rrangle_{\Wc^k}$
in \eqref{bwd_solution}--\eqref{fwd_solution}.
We do not expect that this can be done explicitly
for generic $f_0(\n)$. We will discuss
this issue in more detail for particular choices of
$f_0(\n)$ in \S \ref{sub:nested_contour_integral_formulas_for_the_q_hahn_tasep}
below.

% subsection solving_backward_and_forward_evolution_equations (end)

\subsection{Nested contour integral formulas for the $q$-Hahn TASEP} % (fold)
\label{sub:nested_contour_integral_formulas_for_the_q_hahn_tasep}

Recall the $N$-particle $q$-Hahn TASEP discussed in \S \ref{sub:_q_hahn_tasep_and_markov_duality},
$N\ge1$.
Utilizing its duality with the
$q$-Hahn stochastic particle system (Theorem \ref{thm:qHahn_duality}),
it is possible to provide moment formulas for the $q$-Hahn TASEP
started from an arbitrary initial configuration.
The argument here essentially repeats the one in
\cite[\S4.3]{BorodinCorwinPetrovSasamoto2013}.

Let us fix $k\ge1$.
Define $\Weyl{k}_N$ to be the
(finite) subset of $\Weyl{k}$
determined by the inequalities
$N\ge n_1\ge \ldots\ge n_k\ge1$.\footnote{Under the bijection
of $\Weyl{k}$ with $\Yb{k}$ (\S \ref{sub:_q_hahn_boson_process}),
$\Weyl{k}_N$ corresponds to $\Ybn{N}\cap\Yb{k}$ (the set $\Ybn{N}$ is defined
in \S \ref{sub:_q_hahn_tasep_and_markov_duality}).}
Let us
define for (possibly random) initial data
$\x(0)$
for the
$q$-Hahn TASEP,
\begin{align}\label{q_moments_h_t_n_initial_data}
	h_0(\n):=\mathbb{E}\left[\prod_{i=1}^{k}q^{x_{n_i}(0)+n_i}\right]
\end{align}
if $\n\in\Weyl{k}_N$, and $h_0(\n)=0$ otherwise.
Above the expectation is taken with respect to possibly random $\x(0)$,
and we assume that it exists.
Then it follows from Theorem \ref{thm:qHahn_duality}
that the function
\begin{align}\label{q_moments_h_t_n}
	h(t;\n):=\mathbb{E}\left[\prod_{i=1}^{k}q^{x_{n_i}(t)+n_i}\right]
\end{align}
(with the expectation taken with respect to the $q$-Hahn TASEP,
as well as with respect to the possibly random $\x(0)$),
solves the backward evolution equation \eqref{bwd_eqn_problem}
with initial condition $h(0;\n)=h_0(\n)$.
Proposition \ref{prop:bwd_eqn} then immediately implies the following:

\begin{corollary}\label{cor:moment_formulas}
	For the $q$-Hahn TASEP with initial data $\x(0)$,
	the $q$-moments \eqref{q_moments_h_t_n} at any time $t=0,1,2,\ldots$
	are given by \eqref{bwd_solution} with $f_0(\n)=h_0(\n)$
	defined in \eqref{q_moments_h_t_n_initial_data}.
\end{corollary}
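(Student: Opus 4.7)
\smallskip

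\noindent\textbf{Proof plan.} The argument is essentially a direct composition of the duality of Theorem~\ref{thm:qHahn_duality} with the solution of the backward Kolmogorov equation given in Proposition~\ref{prop:bwd_eqn}, so no new analysis is required. The plan is to verify that the function $h(t;\n)$ (extended by zero outside $\Weyl{k}_N$) coincides with $\big((\Hbwd)^t h_0\big)(\n)$, and then to invoke \eqref{bwd_solution} directly.

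\smallskip

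\noindent First I would unpack the duality functional $H(\x,\y)$ under the bijection $\Ybn{N}\cap\Yb{k}\leftrightarrow \Weyl{k}_N$: if $\y_0\leftrightarrow \n \in \Weyl{k}_N$, then by definition
\[
H(\x,\y_0)=\prod_{i=1}^{N}q^{y_{0,i}(x_i+i)}=\prod_{j=1}^{k}q^{x_{n_j}+n_j},
\]
while for $\n\notin \Weyl{k}_N$ the functional $H$ vanishes. In particular, $h_0(\n)=\E_{\x(0)}[H(\x(0),\y_0)]$ with the convention $h_0(\n)=0$ outside $\Weyl{k}_N$, so $h_0\in\Wc^k$ automatically (its support is finite).

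\smallskip

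\noindent Next, I would apply the duality identity \eqref{qHahn_duality} pointwise in the initial data $\x(0)$ and then take the expectation over $\x(0)$, using Fubini (justified by $|H|\le 1$ and the finite support of $\y_0$):
\[
h(t;\n)=\E_{\x(0)}\E^{\mathrm{TASEP}}_{\x(0)}\!\big[H(\x(t),\y_0)\big]
=\E^{\mathrm{ZRP}}_{\y_0}\E_{\x(0)}\!\big[H(\x(0),\y(t))\big]
=\E^{\mathrm{ZRP}}_{\y_0}\!\big[h_0(\y(t))\big].
\]
The rightmost expectation is precisely the quantity appearing in \eqref{f_as_expectation}, hence equals $\big((\Hbwd)^t h_0\big)(\n)$. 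Here I would also note that although the $q$-Hahn zero-range dynamics can push $\y(t)$ outside $\Ybn{N}$ (particles drift leftward), this causes no issue: the function $h_0$ is simply zero on such configurations, which is consistent with the extension $h(t;\n)=0$ for $\n\notin \Weyl{k}_N$.

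\smallskip

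\noindent Finally I would invoke Proposition~\ref{prop:bwd_eqn} with $f_0=h_0$. Since $h_0\in\Wc^k$, this proposition gives the nested contour integral representation \eqref{bwd_solution} for $\big((\Hbwd)^t h_0\big)(\n)$, which by the previous step equals $h(t;\n)$. This is exactly the asserted moment formula. No real obstacle arises; the only point requiring mild care is the bookkeeping at the boundary of $\Weyl{k}_N$, which is handled automatically once $h_0$ is extended by zero.
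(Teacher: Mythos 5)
Your proposal is correct and takes essentially the same route as the paper: Markov duality (Theorem~\ref{thm:qHahn_duality}) shows that $h(t;\n)$, extended by zero outside $\Weyl{k}_N$, solves the backward evolution equation \eqref{bwd_eqn_problem} with initial data $h_0$, and Proposition~\ref{prop:bwd_eqn} then gives \eqref{bwd_solution}. One small remark: your Fubini justification via $|H|\le 1$ is not literally valid (the exponents $x_i+i$ may be negative, so $H$ can exceed $1$), but since the integrand $\prod_i q^{x_{n_i}(t)+n_i}$ is nonnegative, Tonelli together with the standing assumption that the initial $q$-moments \eqref{q_moments_h_t_n_initial_data} exist (and the fact that $q$-Hahn TASEP particles only jump right, so $q^{x_{n_i}(t)+n_i}\le q^{x_{n_i}(0)+n_i}$) handles the interchange.
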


The nested contour integral formulas \eqref{bwd_solution}
involve the pairing
$\llangle h_0,\Psir_{\z}\rrangle_{\Wc^{k}}$
of the initial condition
with the right eigenfunction.
An immediate application of the spectral
Plancherel formula (Theorem \ref{thm:spectral_Plancherel})
shows that
when $h_0(\n)$
itself can be written as $(\Pli G)(\n)$ for some $G\in\Cc^{k}_{z}$,
then this pairing has the form
$\llangle h_0,\Psir_{\z}\rrangle_{\Wc^{k}}=G(\z)$.
So, for such special initial conditions
one would get simpler formulas for the $q$-moments of the $q$-Hahn TASEP
\eqref{bwd_solution}, which could be useful for further asymptotic analysis.

Let us apply this idea to the half-stationary (and, in particular, to the
step) initial condition in the $q$-Hahn TASEP.
Let us fix $\rho\in[0,1)$, and consider independent
random variables $X_i$
with the common distribution
\begin{align}\label{full_q_geometric_distribution}
	\mathrm{Prob}(X_i=n)=\rho^{n}
	\frac{(\nu;q)_{n}}{(q;q)_{n}}\frac{(\rho;q)_{\infty}}
	{(\rho\nu;q)_{\infty}},\qquad n=0,1,2,\ldots.
\end{align}
The fact that these probabilities sum to one follows from
the $q$-Binomial theorem.
This distribution corresponds to the product stationary
measure of the $q$-Hahn particle system, cf.
\cite{Povolotsky2013},
\cite{Corwin2014qmunu}.

The half-stationary initial condition is, by definition,
\begin{align*}
	\text{$x_1(0)=-1-X_1$,\quad{}and\quad{}
	$x_i(0)=x_{i-1}(0)-1-X_i$\quad{}for\quad{}$i>1$.}
\end{align*}
When $\rho=0$, the half stationary initial condition
reduces to the step initial condition $x_i(0)=-i$, $i=1,2,\ldots$.
\begin{proposition}\label{prop:IC_computation}
	Let $0\le \rho<q^{k}$.
	Then for any $\n\in\Weyl{k}$,
	we have
	for the half-stationary initial data:
	\begin{align*}
		h_0^{\mathrm{half}}(\n)=\prod_{j=1}^{k}\mathbf{1}_{0<n_j\le N}
		\left(\frac{1-\rho\nu q^{-j}}{1-\rho q^{-j}}\right)^{n_j}.
	\end{align*}
	In particular, the step initial data corresponds to
	$h_0^{\mathrm{step}}(\n)=\prod_{j=1}^{k}\mathbf{1}_{0<n_j\le N}$.
\end{proposition}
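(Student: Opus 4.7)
The plan is to evaluate the expectation defining $h_0^{\mathrm{half}}(\n)$ directly; no spectral machinery is needed, only the $q$-Binomial theorem and a reindexing of the product.

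First, I would write out the initial positions explicitly: under the half-stationary initial data, $x_m(0) + m = -\sum_{j=1}^{m} X_j$ for each $m\ge 1$, where the $X_j$ are i.i.d.\ with distribution \eqref{full_q_geometric_distribution}. Substituting into the definition \eqref{q_moments_h_t_n_initial_data} gives, for $\n\in\Weyl{k}_N$,
\begin{align*}
h_0^{\mathrm{half}}(\n)=\E\Bigl[\prod_{i=1}^{k}q^{-\sum_{j=1}^{n_i}X_j}\Bigr]
=\E\Bigl[\prod_{j\ge 1}q^{-c_j(\n)X_j}\Bigr],
\end{align*}
where I have reindexed the product by swapping the order of multiplication, with $c_j(\n):=\#\{i\colon n_i\ge j\}$ being the number of times $X_j$ appears across the exponents.

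Next, by independence of the $X_j$'s this factors into a product of one-variable expectations, each of which is a $q$-Binomial-type sum:
\begin{align*}
\E[q^{-c X_1}]=\frac{(\rho;q)_\infty}{(\rho\nu;q)_\infty}\sum_{n\ge0}\frac{(\nu;q)_n}{(q;q)_n}(\rho q^{-c})^n=\frac{(\nu\rho q^{-c};q)_\infty}{(\rho q^{-c};q)_\infty}\cdot\frac{(\rho;q)_\infty}{(\rho\nu;q)_\infty},
\end{align*}
by the $q$-Binomial theorem $\sum_{n\ge 0}\frac{(a;q)_n}{(q;q)_n}z^n=\frac{(az;q)_\infty}{(z;q)_\infty}$ (which converges because the hypothesis $\rho<q^k$ guarantees $|\rho q^{-c}|<1$ for $c\le k$). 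Simplifying each ratio of infinite $q$-Pochhammers via $(a q^{-c};q)_\infty=(a;q)_\infty\prod_{l=1}^{c}(1-aq^{-l})$ yields
\begin{align*}
\E[q^{-cX_1}]=\prod_{l=1}^{c}\frac{1-\nu\rho q^{-l}}{1-\rho q^{-l}}.
\end{align*}

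Finally, I would collect the product over $j$: the factor corresponding to a fixed $l$ appears in $\prod_{j}\prod_{l=1}^{c_j(\n)}$ exactly once for every $j$ with $c_j(\n)\ge l$, i.e., for every $j$ with $n_l\ge j$, contributing $n_l$ copies in total (here we use that $n_l$ counts the number of such $j$). Regrouping therefore gives
\begin{align*}
h_0^{\mathrm{half}}(\n)=\prod_{l=1}^{k}\Bigl(\frac{1-\rho\nu q^{-l}}{1-\rho q^{-l}}\Bigr)^{n_l},
\end{align*}
which is the claimed formula (the indicator $\mathbf{1}_{0<n_j\le N}$ is built into the definition of $h_0$ on $\Weyl{k}_N$). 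The step case $\rho=0$ collapses every factor to $1$, recovering $h_0^{\mathrm{step}}(\n)=\prod_j\mathbf{1}_{0<n_j\le N}$. There is no serious obstacle here; the only subtlety to be careful about is the convergence condition $\rho<q^k$, which is exactly what is required for the $q$-Binomial sum to converge at every $c\in\{1,\dots,k\}$ that arises.
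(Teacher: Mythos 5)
Your proof is correct and is essentially the paper's own argument: both compute $\mathbb{E}[q^{-cX}]=\prod_{l=1}^{c}\frac{1-\rho\nu q^{-l}}{1-\rho q^{-l}}$ via the $q$-Binomial theorem (with $\rho<q^{k}$ ensuring convergence), use independence of the $X_j$'s, and regroup the product over which $X_j$ appears with which exponent. The only difference is bookkeeping — you index by the multiplicity $c_j(\n)$ of each $X_j$, while the paper groups $m\in(n_{i+1},n_i]$ where $X_m$ carries exponent $-i$ — which is the same decomposition written in transposed form.
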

\begin{proof}
	First, note that if $X$ has distribution \eqref{full_q_geometric_distribution},
	then by the $q$-Binomial theorem
	\begin{align*}
		\mathbb{E}\left[q^{-m X}\right]=
		\sum_{n=0}^{\infty}
		(\rho/q^{m})^{n}
		\frac{(\nu;q)_{n}}{(q;q)_{n}}
		\frac{(\rho;q)_{\infty}}{(\rho\nu;q)_{\infty}}
		=
		\frac{(\rho;q)_{\infty}}{(\rho\nu;q)_{\infty}}
		\frac{(\rho\nu q^{-m};q)_{\infty}}{(\rho q^{-m};q)_{\infty}}
		=
		\prod_{j=1}^{m}\frac{1-\rho\nu q^{-j}}{1-\rho q^{-j}}
	\end{align*}
	for all $m=0,1,2,\ldots$ such that $\rho<q^{m}$
	(otherwise the above series obviously diverges). Next, observe that
	for the half-stationary initial data, we have
	(with the understanding that $n_{k+1}=0$)
	\begin{align*}
		\prod_{i=1}^{k}q^{x_{n_i}(0)+n_i}=
		\prod_{i=1}^{k}\prod_{m=n_{i+1}+1}^{n_i}q^{-i X_m}.
	\end{align*}
	The random variables $X_m$ in the above product are independent, so
	taking the expectation as above yields the desired result.
\end{proof}

In fact, we will extend the functions $h_0^{\mathrm{half}}(\n)$
and $h_0^{\mathrm{step}}(\n)$
obtained in the previous proposition
to
(non-compactly supported)
$h_0^{\mathrm{step}}(\n):=\prod_{j=1}^{k}\mathbf{1}_{n_j>0}$,
and similarly for
$h_0^{\mathrm{half}}(\n)$.
By the triangular nature of the
$q$-Hahn operator $\Hbwd$,
this will not affect the values of the $q$-moments
$h^{\mathrm{step}}(t;\n)$
and $h^{\mathrm{half}}(t;\n)$ if $\n\in\Weyl{k}_N$.

Our next goal is to employ
the idea discussed after Corollary \ref{cor:moment_formulas},
and write down the moment formulas for the $q$-Hahn
TASEP with half-stationary initial condition.
Note that formulas corresponding to the step initial
data (see Corollary \ref{cor:TASEP_moments} below)
were obtained in \cite{Corwin2014qmunu}.

\begin{proposition}\label{prop:analytic_argument}
	Consider any $0\le \rho<q^{k}$.
	There exist positively oriented closed contours
	$\ga_1,\ldots,\ga_k$ and $\tilde\ga_1,\ldots,\tilde\ga_k$
	which both satisfy the conditions of Definition \ref{def:contours},
	do not include $\rho/q$, and, moreover,
	for all $z\in\ga_i$ and $w\in\tilde\ga_i$ one has
	$|\xinu(z)|<|\xinu(w)|$ (recall the map $\xinu$ \eqref{xinu}).
	
	Moreover, for any $\z\in\C^{k}$ such that $z_i\in\ga_i$ ($1\le i\le k$),
	\begin{align}
		(\Pld h_0^{\mathrm{half}})(\z)=
		(-1)^{k}q^{\frac{k(k-1)}2}
		(\rho\nu q^{-1};q^{-1})_{k}\prod_{j=1}^{k}
		\frac{1-z_j}{z_j-\rho/q}.
		\label{pairing_h0_right}
	\end{align}
\end{proposition}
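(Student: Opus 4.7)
The plan splits into two parts: the geometric construction of the contours, and the evaluation of $(\Pld h_0^{\mathrm{half}})(\z)$.

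For the contour construction, the bound $\rho<q^k\le q$ yields $\rho/q\in(0,1)$, distinct from both $1$ and $\nu^{-1}$. One chooses $\ga_1,\ldots,\ga_k$ as sufficiently small nested circles around $1$ satisfying Definition~\ref{def:contours} and avoiding both $\rho/q$ and $\nu^{-1}$. Slight radial enlargements $\tilde\ga_1,\ldots,\tilde\ga_k$ preserving the nesting and avoidance give the strict comparison $|\xinu(z)|<|\xinu(w)|$ for $z\in\ga_i$ and $w\in\tilde\ga_i$, since $|\xinu|$ vanishes at $z=1$ and grows monotonically on nearby concentric circles. The argument mirrors the existence proof for $\ga,\ga'$ given in \S\ref{sub:contour_}.

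For the main identity, the strategy is to first verify the dual pointwise equality $(\Pli G)(\n)=h_0^{\mathrm{half}}(\n)$ for all $\n\in\Weyl{k}$, and then pass to the direct formula. When the explicit $G$ is substituted, the factor $\prod_j(1-z_j)$ in $G$ cancels the $\prod_j(1-z_j)^{-1}$ in the integrand of $\Pli$, leaving a nested contour integral with integrand proportional to $\prod_j\xinu(z_j)^{-n_j}[(1-\nu z_j)(z_j-\rho/q)]^{-1}$. This can be brought into the form of Proposition~\ref{prop:TW_qnu} by shifting each $n_j$ down by $1$ (which converts $\xinu(z_j)^{-n_j}(1-z_j)^{-1}$ into $\xinu(z_j)^{-(n_j-1)}(1-\nu z_j)^{-1}$) and choosing parameter $c=q/(\rho\nu)$ so that $c^{-1}\nu^{-1}=\rho/q$. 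The indicator $\mathbf{1}_{n_k\ge0}$ of Proposition~\ref{prop:TW_qnu} becomes $\mathbf{1}_{n_k\ge1}$ after the shift, and elementary algebra using $(q/\rho;q)_k=(-1)^k\rho^{-k}q^{k(k+1)/2}(\rho q^{-1};q^{-1})_k$ together with the telescoping identity $\prod_{j=1}^{k}[(1-\rho\nu q^{-j})/(1-\rho q^{-j})]=(\rho\nu q^{-1};q^{-1})_k/(\rho q^{-1};q^{-1})_k$ collapses the result to $h_0^{\mathrm{half}}(\n)$ as written in Proposition~\ref{prop:IC_computation}.

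To deduce $(\Pld h_0^{\mathrm{half}})(\z)=G(\z)$ from $\Pli G=h_0^{\mathrm{half}}$, I argue as follows. On the small contours $\ga_i$ the series $\sum_{\n}h_0^{\mathrm{half}}(\n)\Psir_\z(\n)$ converges absolutely: the factors $(1-\rho\nu q^{-j})/(1-\rho q^{-j})$ are bounded, and $|\xinu(z_j)|$ can be made as small as desired by contracting $\ga_j$ toward $1$. Substituting $h_0^{\mathrm{half}}=\Pli G$, interchanging the $\n$-summation with the $\w$-integral, and applying the spectral biorthogonality (Theorem~\ref{thm:spectral_biorthogonality}) reduces the inner sum $\sum_{\n}\Psil_\w(\n)\Psir_\z(\n)$ to a distribution whose pairing against $G(\w)$ recovers $G(\z)$. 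The analyticity hypotheses of Theorem~\ref{thm:spectral_biorthogonality} are met because the only singularities of $G$ are at $w_j=\rho/q$, strictly outside both $\ga$ and $\tilde\ga$ by construction. The main obstacle lies in justifying this final interchange and invoking spectral biorthogonality for a function with poles; an alternative verification, avoiding these subtleties, is to pair against a dense family of compactly supported $\phi\in\Wc^k$ and use the spatial Plancherel formula (Theorem~\ref{thm:spatial_Plancherel}) together with the bilinear-pairing representation of $\Pli$ from \S\ref{sub:various_forms_of_contour_integration}.
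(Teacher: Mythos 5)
Your first two steps coincide with the paper's: the contour construction follows \S\ref{sub:contour_}, and the identity $(\Pli G)(\n)=h_0^{\mathrm{half}}(\n)$ is obtained exactly as in the paper, from Proposition \ref{prop:TW_qnu} with $c=q/(\rho\nu)$ after shifting the $n_j$ by one, on contours excluding $\rho/q$. The gap is in the remaining step, passing from $\Pli G=h_0^{\mathrm{half}}$ to $\Pld h_0^{\mathrm{half}}=G$. Theorem \ref{thm:spectral_biorthogonality} cannot be invoked for this $G$: there both families of spectral variables are integrated over the single large contour $\ga$ of Definition \ref{def:contours} (and its enlargement $\ga'$), and $\ga$ contains $\rho/q$ in its interior, whereas the identity $\Pli G=h_0^{\mathrm{half}}$ holds only when the $w$-contours avoid $\rho/q$; deforming them to $\ga$ crosses the poles of $G$ and destroys that identity. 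Moreover the biorthogonality statement is integrated in $\z$ as well, so even in its range of applicability it does not directly give the pointwise evaluation at fixed $z_i\in\ga_i$ that the proposition asserts. Your fallback does not repair this: $h_0^{\mathrm{half}}$ is not compactly supported and $G\notin\Cc^{k}_{z}$, so neither Theorem \ref{thm:spatial_Plancherel} nor Theorem \ref{thm:spectral_Plancherel} applies to them as stated, and pairing against compactly supported $\phi$ yields no pointwise identity without a further injectivity or approximation argument that you do not supply. In addition, your convergence justification ("contract $\ga_j$ toward $1$") is unavailable for $j<k$: Definition \ref{def:contours} forces $\ga_j$ to contain $q\ga_{j+1},\ldots,q\ga_k$, so on the outer contours $|\xinu(z)|$ stays of order $\tfrac{1-q}{1-\nu q}$, while the factors $\tfrac{1-\rho\nu q^{-j}}{1-\rho q^{-j}}$ blow up as $\rho\uparrow q^{k}$; the naive term-by-term bound therefore does not give absolute convergence in general.

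The missing idea is the paper's approximation argument, which is precisely what the second contour family is for. One approximates $G$ by truncations $G_m\in\Cc^{k}_{z}$ of its Laurent expansion in the variables $\xinu(z_j)$, applies the spectral Plancherel theorem (legitimate on $\Cc^{k}_{z}$) to get $G_m(\z)=\sum_{\n}\Psir_{\z}(\n)(\Pli G_m)(\n)$ for $z_i\in\ga_i$, and then lets $m\to\infty$: writing $(\Pli G_m)(\n)$ as an integral over $w_i\in\tilde\ga_i$ and using $|\xinu(z)|<|\xinu(w)|$ for $z\in\ga_i$, $w\in\tilde\ga_i$, each $\n$-term is bounded, uniformly in $m$, by a constant times $\delta^{n_1+\cdots+n_k}$ with $\delta<1$. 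This single estimate simultaneously gives the convergence of $\Pld h_0^{\mathrm{half}}$ and justifies the interchange of limit and summation, yielding $(\Pld\Pli G)(\z)=G(\z)$. In your write-up the contours $\tilde\ga_i$ are constructed but never used for any estimate, which is the concrete symptom of the missing step.
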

\begin{proof}
	The existence of such contours can be readily obtained
	similarly to \S \ref{sub:contour_}.

	Assume that $\rho>0$ as the case $\rho=0$ is simpler and can be obtained
	by analogy.
	Let $G(\z)$ denote the right-hand side of \eqref{pairing_h0_right}.
	Proposition \ref{prop:TW_qnu} with
	$c=q/(\nu\rho)$ after a shift of $n_j$'s by one
	implies that
	\begin{align}\label{pairing_h0_right_proof}
		h_0^{\mathrm{half}}(\n)=(\Pli G)(\n).
	\end{align}
	Since this function $G$ does not belong to $\Cc^{k}_{z}$,
	above we have extended the
	definition of $\Pli$
	by requiring that
	the contours $\ga_1,\ldots,\ga_k$
	must not include the point $c^{-1}\nu^{-1}=\rho/q$.

	We show that \eqref{pairing_h0_right_proof} implies
	\eqref{pairing_h0_right} using the spectral Plancherel
	formula (Theorem \ref{thm:spectral_Plancherel}) plus an approximation argument.
	
	Let $G_m\in\Cc^{k}_{z}$ be a sequence
	of functions so that as $m\to\infty$,
	$G_m(\z)$ converges to $G(\z)$
	uniformly over $\z$ with $z_i\in\ga_i$. As $G_m$ one could simply take
	truncations
	of the Laurent series for $G(\z)$ in variables $\xinu(\z)$.
	Theorem \ref{thm:spectral_Plancherel} implies that
	\begin{align}
		\label{pairing_h0_right_proof2}
		G_m(\z)=(\Pld\Pli G_m)(\z)=\sum_{\n\in\Weyl{k}}\Psir_{\z}(\n)(\Pli G_m)(\n).
	\end{align}
	for all $\z$ with $z_i\in\ga_i$ (in fact, the sum goes only over $\n$
	with all parts nonnegative because $G$ and $G_m$ do not have a pole at $z_k=1$,
	cf. the statement of Proposition \ref{prop:TW_qnu}).
	Consider the term corresponding to $\n$ in the summation above.
	Call $\w$ the integration variables
	involved in the definition of $(\Pli G_m)(\n)$.
	Clearly, these integration variables $\w$ can be chosen so that $w_i\in\tilde\ga_i$.
	With this choice of the variables $\w$ and $\z$, the
	$\n$ term can be bounded uniformly in $m$
	by a constant times $\delta^{n_1+\ldots+n_k}$ for some $\delta<1$
	(which is the maximal ratio of $|\xinu(z_i)/\xinu(w_i)|$
	over $z_i\in\ga_i$, $w_i\in\tilde\ga_i$ for all $1\le i\le k$).
	This bound implies that we can take $m\to\infty$
	in \eqref{pairing_h0_right_proof2}, and thus $(\Pld\Pli G)(\z)=G(\z)$
	as desired.
\end{proof}

\begin{corollary}\label{cor:TASEP_moments}
	For the half-stationary initial data with $0<\rho<q^{k}$,
	\begin{align}\label{half_moments}
		\begin{array}{>{\displaystyle}l>{\displaystyle}l}
		&
		\mathbb{E}^{\mathrm{half}}\left[\prod_{i=1}^{k}q^{x_{n_i}(t)+n_i}\right]=
		(-1)^{k}q^{\frac{k(k-1)}2}
		(\rho\nu q^{-1};q^{-1})_{k}
		\oint_{\ga_1}\frac{dz_1}{2\pi\i}
		\ldots
		\oint_{\ga_k}\frac{dz_k}{2\pi\i}
		\prod_{1\le A<B\le k}\frac{z_A-z_B}{z_A-qz_B}
		\\&\hspace{160pt}\times\prod_{j=1}^{k}
		\left(\frac{1-\mu z_j}{1-\nu z_j}\right)^{t}
		\prod_{j=1}^{k}
		\frac{1}{(z_j-\rho/q)(1-\nu z_j)}
		\left(\frac{1-z_j}{1-\nu z_j}\right)^{-n_j}.
		\end{array}
	\end{align}
	Here the integration contours are as in Definition \ref{def:contours}
	with an additional restriction that they do not include $\rho/q$
	(this is possible due to the restriction on $\rho$).
	
	In particular, for the step initial data (case $\rho=0$ of the above formula),
	\begin{align}\label{step_moments}
		\begin{array}{>{\displaystyle}l>{\displaystyle}l}
		&
		\mathbb{E}^{\mathrm{step}}\left[\prod_{i=1}^{k}q^{x_{n_i}(t)+n_i}\right]=
		(-1)^{k}q^{\frac{k(k-1)}2}
		\oint_{\ga_1}\frac{dz_1}{2\pi\i}
		\ldots
		\oint_{\ga_k}\frac{dz_k}{2\pi\i}
		\prod_{1\le A<B\le k}\frac{z_A-z_B}{z_A-qz_B}
		\\&\hspace{160pt}\times\prod_{j=1}^{k}
		\left(\frac{1-\mu z_j}{1-\nu z_j}\right)^{t}
		\prod_{j=1}^{k}
		\frac{1}{z_j(1-\nu z_j)}
		\left(\frac{1-z_j}{1-\nu z_j}\right)^{-n_j}.
		\end{array}
	\end{align}
	Here the integration contours must not include $0$.
\end{corollary}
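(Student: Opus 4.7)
The plan is to assemble the statement directly from the three preceding results: Corollary \ref{cor:moment_formulas}, Proposition \ref{prop:bwd_eqn}, and Proposition \ref{prop:analytic_argument}. First, I would invoke Corollary \ref{cor:moment_formulas} to express the mixed $q$-moment $\mathbb{E}^{\mathrm{half}}\big[\prod_{i=1}^k q^{x_{n_i}(t)+n_i}\big]$ as the value at time $t$ of the solution to the backward Kolmogorov equation \eqref{bwd_eqn_problem} with initial datum $h_0^{\mathrm{half}}$ (computed in Proposition \ref{prop:IC_computation}). Then I would apply the nested contour integral formula \eqref{bwd_solution} from Proposition \ref{prop:bwd_eqn}, so that the moment equals a $k$-fold nested contour integral whose integrand carries the factor $\llangle h_0^{\mathrm{half}},\Psir_{\z}\rrangle_{\Wc^k}=(\Pld h_0^{\mathrm{half}})(\z)$.

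The decisive input is \eqref{pairing_h0_right} of Proposition \ref{prop:analytic_argument}, which evaluates this pairing in closed form as
\[
(-1)^{k}q^{\frac{k(k-1)}{2}}(\rho\nu q^{-1};q^{-1})_{k}\prod_{j=1}^{k}\frac{1-z_j}{z_j-\rho/q}.
\]
Substituting into \eqref{bwd_solution}, the factor $\prod_j(1-z_j)$ cancels the corresponding factor in the denominator $\prod_j\frac{1}{(1-z_j)(1-\nu z_j)}$ of \eqref{bwd_solution}, leaving exactly the denominator $\prod_j\frac{1}{(z_j-\rho/q)(1-\nu z_j)}$ of \eqref{half_moments}. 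The step case is then the $\rho=0$ specialization, where $(\rho\nu q^{-1};q^{-1})_k=1$ and $z_j-\rho/q=z_j$, so \eqref{step_moments} is immediate.

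The one nontrivial point is that the extended $h_0^{\mathrm{half}}(\n)=\prod_j\mathbf{1}_{n_j>0}\left(\frac{1-\rho\nu q^{-j}}{1-\rho q^{-j}}\right)^{n_j}$ is not compactly supported, so strictly speaking neither \eqref{bwd_solution} nor the definition of the pairing $\llangle\,\cdot\,,\Psir_{\z}\rrangle_{\Wc^k}$ applies to it. The main obstacle in the proof is therefore to justify this extension. Two complementary justifications are available and I would use them in tandem: first, since $\Hbwd$ only moves particles to the left, truncating $h_0^{\mathrm{half}}$ to a compactly supported $h_0^M$ supported on $\{\n\colon n_1\le M\}$ leaves $\big((\Hbwd)^t h_0^M\big)(\n)$ unchanged on $\Weyl{k}_N$ once $M$ is sufficiently large relative to $t$; second, Proposition \ref{prop:analytic_argument} has already performed the necessary approximation argument (via truncation $G_m\in\Cc^{k}_{z}$ of the Laurent series of $G$ in the variables $\xinu(z_j)$) to give meaning to the pairing through the contour-gap condition on $\ga_i,\tilde\ga_i$.

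Finally, I would verify that the nested contours $\ga_1,\ldots,\ga_k$ of Definition \ref{def:contours} can simultaneously avoid the new pole at $z_j=\rho/q$; this is where the hypothesis $\rho<q^k$ enters, guaranteeing $\rho/q<q^{k-1}$ so that the image of $q\ga_B$ under successive $q$-dilations still fits between $1$ and $\rho/q$, exactly as arranged in Proposition \ref{prop:analytic_argument}. Once this contour choice is fixed, the combination of the three results gives \eqref{half_moments} verbatim, and \eqref{step_moments} by specialization.
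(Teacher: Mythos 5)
Your proposal is correct and follows essentially the same route as the paper: the corollary is obtained by combining the Markov duality statement (packaged in Corollary \ref{cor:moment_formulas}) with the solution formula \eqref{bwd_solution} of Proposition \ref{prop:bwd_eqn}, substituting the evaluation \eqref{pairing_h0_right} of the pairing from Proposition \ref{prop:analytic_argument}, and handling the non-compactly supported initial datum exactly as the paper does, via the triangular (leftward) action of $\Hbwd$ together with the approximation argument already built into Proposition \ref{prop:analytic_argument}. The only cosmetic difference is your emphasis on the contour constraint as the place where $\rho<q^{k}$ enters, whereas that restriction is also (and primarily) what makes the half-stationary moments in Proposition \ref{prop:IC_computation} finite; this does not affect the validity of the argument.
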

\begin{proof}
	This immediately follows from
	Markov duality (Theorem \ref{thm:qHahn_duality})
	combined with Proposition \ref{prop:bwd_eqn}
	allowing to solve backward evolution equations
	for the $q$-Hahn stochastic particle system.
\end{proof}

\begin{remark}
	Formula \eqref{step_moments}
	serves as a starting point
	towards a Fredholm determinantal expression
	for a $q$-Laplace transform of the
	position of any particle $x_n(t)$
	in the $q$-Hahn TASEP started from the step initial condition
	\cite{Corwin2014qmunu}.
	To get a Fredholm determinant, one must use moments \eqref{step_moments}
	for all $k=1,2,\ldots$.
	Thus, in the half-stationary case, the restriction $\rho<q^{k}$
	in \eqref{half_moments} (i.e., the lack of finiteness of moments)
	presents an impediment
	to getting a Fredholm determinantal formula.
\end{remark}

% subsection nested_contour_integral_formulas_for_the_q_hahn_tasep (end)

\subsection{Symmetrization identities from the spectral Plancherel theorem} % (fold)
\label{sub:symmetrization_formula}

The spectral Plancherel
formula (Theorem \ref{thm:spectral_Plancherel})
implies a family of nontrivial identities
involving the
summation over the
symmetric group $S(k)$. (This symmetrization comes from the definition of the
eigenfunctions, cf. \eqref{Psil}, \eqref{Psir}.)

\begin{proposition}\label{prop:general_summation}
	For any $G\in\Cc^{k}_z$ and $\vxi=(\xi_1,\ldots,\xi_k)\in(\C\setminus\{0\})^{k}$,
	\begin{align}
		\label{Gxi_symmetrization_S5}
		\sum_{\sigma\in S(k)}
		\prod_{B<A}
		\frac{\Sm(\xi_{\sigma(B)},\xi_{\sigma(A)})}
		{\xi_{\sigma(A)}-\xi_{\sigma(B)}}
		\Bigg(\sum_{\n\in\Weyl{k}}
		\st(\n)
		(\Pli G)(\n)
		\prod_{j=1}^{k}
		\xi_{\sigma(j)}^{n_j}\Bigg)
		=
		(q-1)^{-k}
		\left(\frac{1-\nu}{1-q\nu}\right)^{\frac{k(k-1)}{2}}G(\xinu(\vxi)),
	\end{align}
	where $\xinu$, $\Sm$, and $\st$ are given in
	\eqref{xinu}, \eqref{quadratic_S_matrix}, and \eqref{stationary_measure}, respectively.
\end{proposition}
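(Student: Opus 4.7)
The plan is to recognize that the left-hand side of \eqref{Gxi_symmetrization_S5} is, up to a simple overall constant, exactly the pairing $\sum_{\n} (\Pli G)(\n)\,\Psir_{\xinu(\vxi)}(\n)$, and then to apply the spectral Plancherel formula (Theorem \ref{thm:spectral_Plancherel}) in the form $\Pld\Pli G = G$.

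First I would rewrite the $\vxi$-form of the right eigenfunction in \eqref{Psir_xi}, namely
\begin{align*}
\Psir_{\xinu(\vxi)}(\n)
=(-1)^{k}(1-q)^{k}\st(\n)\left(\frac{1-q\nu}{1-\nu}\right)^{\frac{k(k-1)}{2}}\sum_{\sigma\in S(k)}\prod_{B<A}\frac{\Sm(\xi_{\sigma(B)},\xi_{\sigma(A)})}{\xi_{\sigma(A)}-\xi_{\sigma(B)}}\prod_{j=1}^{k}\xi_{\sigma(j)}^{n_j},
\end{align*}
and observe that the left-hand side of \eqref{Gxi_symmetrization_S5} is precisely
\begin{align*}
\frac{1}{(-1)^{k}(1-q)^{k}}\left(\frac{1-\nu}{1-q\nu}\right)^{\frac{k(k-1)}{2}}\sum_{\n\in\Weyl{k}}(\Pli G)(\n)\,\Psir_{\xinu(\vxi)}(\n),
\end{align*}
after interchanging the order of summation over $\n$ and $\sigma$ (justified because $(\Pli G)(\n)$ has finite support by the residue argument immediately following \eqref{Pli_transform}).

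Next, since $G\in\Cc^{k}_{z}$, Theorem \ref{thm:spectral_Plancherel} gives
\begin{align*}
\sum_{\n\in\Weyl{k}}(\Pli G)(\n)\,\Psir_{\z}(\n)=\llangle \Pli G,\Psir_{\z}\rrangle_{\Wc^{k}}=(\Pld\Pli G)(\z)=G(\z),
\end{align*}
and specializing $\z=\xinu(\vxi)$ converts the inner sum above into $G(\xinu(\vxi))$. A short arithmetic with $(-1)^{k}(1-q)^{k}=(q-1)^{k}$ then yields the asserted constant $(q-1)^{-k}\bigl(\tfrac{1-\nu}{1-q\nu}\bigr)^{k(k-1)/2}$ in front of $G(\xinu(\vxi))$.

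There is no real obstacle beyond bookkeeping of the prefactors and the (trivial) interchange of the finite $\n$-sum with the $\sigma$-sum; the substance is packaged into Theorem \ref{thm:spectral_Plancherel}. One small point of care is that the identity is stated for arbitrary $\vxi\in(\C\setminus\{0\})^{k}$, whereas Theorem \ref{thm:spectral_Plancherel} was proved with integration contours in a specific region; however, both sides of \eqref{Gxi_symmetrization_S5} are rational in $\xinu(\vxi)$ (since $G\in\Cc^{k}_{z}$ and $(\Pli G)$ has finite support), so the identity extends from the contour to all admissible $\vxi$ by analytic continuation.
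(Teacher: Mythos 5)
Your proposal is correct and follows essentially the same route as the paper: both proofs amount to expanding the spectral Plancherel identity $G=(\Pld\Pli G)(\z)=\sum_{\n}\Psir_{\z}(\n)(\Pli G)(\n)$ using the explicit form of $\Psir$ and then passing to the $\vxi$ variables via $\xinu$ (you simply use \eqref{Psir_xi} directly instead of changing variables at the end), with the prefactor bookkeeping matching the stated constant. The closing remark about extending from the contour to all $\vxi\in(\C\setminus\{0\})^{k}$ is fine, though hardly needed, since both sides are (symmetric combinations of) Laurent polynomials in the $\xi_j$ once $\Pli G$ is known to have finite support.
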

Note that the sum over $\n$ in the
left-hand side of \eqref{Gxi_symmetrization_S5} is finite.
\begin{proof}
	Theorem \ref{thm:spectral_Plancherel}
	states that the operator
	$\Plspectral=\Pld\Pli$
	\eqref{Plspectral}
	acts as the identity operator
	in the space $\Cc^{k}_{z}$ of symmetric Laurent polynomials
	in the variables $\frac{1-z_j}{1-\nu z_j}$.
	Expanding this statement, we can write
	\begin{align*}
		G(\z)&=
		(\Pld \Pli G)(\z)
		=
		\sum_{\n\in\Weyl{k}}
		\Psir_{\z}(\n)(\Pli G)(\n)
		\\&=\nonumber
		(-1)^k(1-q)^{k}q^{\frac{k(k-1)}{2}}
		\sum_{\sigma\in S(k)}
		\prod_{B<A}
		\frac{z_{\sigma(A)}-q^{-1}z_{\sigma(B)}}
		{z_{\sigma(A)}-z_{\sigma(B)}}
		\sum_{\n\in\Weyl{k}}
		\st(\n)
		(\Pli G)(\n)
		\prod_{j=1}^{k}
		\left(\frac{1-z_{\sigma(j)}}{1-\nu z_{\sigma(j)}}\right)^{n_j}.
	\end{align*}
	The change of variables $\xinu$ \eqref{xinu}
	yields the desired formula.
\end{proof}
Let us illustrate Proposition \ref{prop:general_summation}
with two examples.
For the first (simplest possible)
example, take
$G(\z)\equiv1$. Then
one can readily check (similarly to the proof of
Proposition \ref{prop:TW_qnu}) that
\begin{align*}
	(\Pli G)(\n)=
	\frac{(-1)^{k}}{(\nu;q)_{k}}
	\mathbf{1}_{n_1=\ldots=n_k=0}.
\end{align*}
Thus, the summation over $\n$ in the
left-hand side of \eqref{Gxi_symmetrization_S5} reduces to
a single term, and the whole identity becomes
\begin{align*}
	\sum_{\sigma\in S(k)}
	\prod_{B<A}
	\frac{\Sm(\xi_{\sigma(B)},\xi_{\sigma(A)})}
	{\xi_{\sigma(A)}-\xi_{\sigma(B)}}
	=
	\frac{(q;q)_{k}}{(1-q)^{k}}
	\left(\frac{1-\nu}{1-q\nu}\right)^{\frac{k(k-1)}{2}},
\end{align*}
which
(under the change of variables $\xinu$)
is the classical symmetrization identity listed as \eqref{Mac_symm_identity} above.

The second example is given in the following corollary:
\begin{corollary}\label{cor:summation_qnu}
	For any $k\ge1$, any $c\in\C\setminus\{\nu^{-1}, q^{-1}\nu^{-1},\ldots, q^{-(k-1)}\nu^{-1}\}$,
	and any $\xi_1,\ldots,\xi_k\in\C\setminus\{0\}$
	with $|\xi_i|<\left|\frac{1-c\nu q^{j-1}}{\nu(1-cq^{j-1})}\right|$ for $1\le i,j\le k$,
	\begin{align}\label{q_nu_symmetrization_identity}
		\begin{array}{ll}
			&\displaystyle
			\sum_{\sigma\in S(k)}
			\prod_{B<A}
			\frac{\Sm(\xi_{\sigma(B)},\xi_{\sigma(A)})}
			{\xi_{\sigma(A)}-\xi_{\sigma(B)}}
			\Bigg(\sum_{n_1\ge \ldots\ge n_k\ge0}
			\st(\n)
			\prod_{j=1}^{k}
			\left(\nu\xi_{\sigma(j)}\frac{1-cq^{j-1}}{1-c\nu q^{j-1}}\right)^{n_j}
			\Bigg)
			\\&\displaystyle\hspace{100pt}\rule{0pt}{25pt}=
			(c\nu;q)_{k}
			\left(\frac{1-\nu}
			{1-q}\right)^{k}
			\left(\frac{1-\nu}{1-q\nu}\right)^{\frac{k(k-1)}{2}}
			\prod_{j=1}^{k}
			\frac{1}{(1-c \nu)-\nu\xi_j(1-c)}.
		\end{array}
	\end{align}

	In particular, for $c=\infty$,
	\begin{align}\label{q_nu_symmetrization_identity_step}
		\begin{array}{ll}
			&\displaystyle
			\sum_{\sigma\in S(k)}
			\prod_{B<A}
			\frac{\Sm(\xi_{\sigma(B)},\xi_{\sigma(A)})}
			{\xi_{\sigma(A)}-\xi_{\sigma(B)}}
			\bigg(\sum_{n_1\ge \ldots\ge n_k\ge0}
			\st(\n)
			\prod_{j=1}^{k}
			\xi_{\sigma(j)}^{n_j}
			\bigg)
			\\&\hspace{170pt}\displaystyle=
			\left(\frac{1-\nu}
			{1-q}\right)^{k}
			\left(\frac{q(1-\nu)}{1-q\nu}\right)^{\frac{k(k-1)}{2}}
			\prod_{j=1}^{k}
			\frac{1}{1-\xi_j},
		\end{array}
	\end{align}
	provided that $|\xi_j|<1$ for $1\le j\le k$.
\end{corollary}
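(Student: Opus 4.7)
The plan is to derive the identity \eqref{q_nu_symmetrization_identity} as a direct application of Proposition \ref{prop:general_summation} to a specific rational function $G$, followed by a limiting argument for \eqref{q_nu_symmetrization_identity_step}. The key observation is that the inner sum over $\n$ appearing in \eqref{Gxi_symmetrization_S5} has exactly the geometric form appearing on the left of \eqref{q_nu_symmetrization_identity}, provided $(\Pli G)(\n)$ evaluates to a product $\prod_j r_j^{n_j}$. Proposition \ref{prop:TW_qnu} furnishes precisely such a closed-form evaluation.

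Concretely, I will take
\[
G(\z) = \prod_{j=1}^{k}\frac{1-\nu z_j}{1-c\nu z_j}.
\]
Matching \eqref{Pli_transform} to the integrand of Proposition \ref{prop:TW_qnu} (whose extra factor is $(1-c\nu z_j)^{-1}$) shows that
\[
(\Pli G)(\n) = \frac{(-1)^{k}\,\nu^{n_1+\cdots+n_k}}{(c\nu;q)_k}\prod_{j=1}^{k}\left(\frac{1-cq^{j-1}}{1-c\nu q^{j-1}}\right)^{n_j}\mathbf{1}_{n_k\ge 0}.
\]
A short computation using $z_j = \xinu(\xi_j)$ yields $G(\xinu(\vxi)) = \prod_j (1-\nu)/[(1-c\nu)-\nu(1-c)\xi_j]$. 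Substituting both expressions into \eqref{Gxi_symmetrization_S5} and moving the constant $(-1)^k/(c\nu;q)_k$ from the left to the right matches the two sides of \eqref{q_nu_symmetrization_identity} after collecting the factors $(1-\nu)^k$, $(1-q)^{-k}$, and $(1-\nu)^{k(k-1)/2}/(1-q\nu)^{k(k-1)/2}$.

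The one subtlety is that $G(\z)$ is not a symmetric Laurent polynomial in $\frac{1-z_j}{1-\nu z_j}$, so it is not literally an element of $\Cc^{k}_{z}$ and Proposition \ref{prop:general_summation} does not apply verbatim. I will handle this exactly as in the proof of Proposition \ref{prop:analytic_argument}: approximate $G$ by a sequence $G_m \in \Cc^{k}_{z}$ of Laurent truncations (in the variables $\xinu(z_j)$) that converges uniformly to $G$ on the contours $\ga_1,\ldots,\ga_k$ (adjusted to avoid the extra pole at $z_j=c^{-1}\nu^{-1}$, possible because of the hypothesis $c\notin\{q^{-j+1}\nu^{-1}\}$). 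The hypothesis $|\xi_i|<\big|(1-c\nu q^{j-1})/(\nu(1-cq^{j-1}))\big|$ guarantees that the geometric sums $\sum_{\n}\st(\n)\prod_j r_j^{n_j}$ converge and that one may pass to the limit $m\to\infty$ termwise on both sides of the identity obtained from Proposition \ref{prop:general_summation} applied to $G_m$. This is the main (and essentially only) technical obstacle.

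Finally, to deduce \eqref{q_nu_symmetrization_identity_step} I will pass to the limit $c\to\infty$ in \eqref{q_nu_symmetrization_identity}. On the inner sum, $\nu\,(1-cq^{j-1})/(1-c\nu q^{j-1})\to 1$, so the argument of the geometric series becomes $\xi_{\sigma(j)}$ and the sum converges under $|\xi_j|<1$. On the right-hand side, the leading behaviour $(c\nu;q)_k\sim(-c\nu)^k q^{k(k-1)/2}$ and $(1-c\nu)-\nu(1-c)\xi_j\sim -c\nu(1-\xi_j)$ combine to cancel the $c$-dependence and produce the factor $q^{k(k-1)/2}\prod_j(1-\xi_j)^{-1}$, giving exactly \eqref{q_nu_symmetrization_identity_step}. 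No new analytic estimate is needed here since both sides depend rationally on $c$ in this regime.
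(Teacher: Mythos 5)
Your overall route coincides with the paper's: take $G(\z)=\prod_{j=1}^{k}\frac{1-\nu z_j}{1-c\nu z_j}$, use Proposition \ref{prop:TW_qnu} to get $(\Pli G)(\n)$ in product form, feed this into Proposition \ref{prop:general_summation} (i.e.\ the spectral Plancherel formula), and obtain \eqref{q_nu_symmetrization_identity_step} as the $c\to\infty$ limit of \eqref{q_nu_symmetrization_identity}; your bookkeeping of the constants and the limit computation are correct. The genuine gap is at the approximation step. Since $G\notin\Cc^{k}_{z}$, you apply \eqref{Gxi_symmetrization_S5} to truncations $G_m$ and then assert that the hypothesis $|\xi_i|<\bigl|\frac{1-c\nu q^{j-1}}{\nu(1-cq^{j-1})}\bigr|$ lets you ``pass to the limit $m\to\infty$ termwise on both sides''. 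Termwise convergence $(\Pli G_m)(\n)\to(\Pli G)(\n)$ does not give convergence of the infinite sums over $\n$: you need a bound on $(\Pli G_m)(\n)$, uniform in $m$, that is summable against $\prod_j|\xi_j|^{n_j}$. The only bound coming from the contour-integral definition is of the form $C\,\bigl(\min_{w\in\text{contours}}|\xinu(w)|\bigr)^{-(n_1+\cdots+n_k)}$, which dominates only for $\xi$ in a contour-dependent polydisc that is in general strictly smaller than the stated $\xi$-domain. Worse, uniform convergence of the Laurent truncations on the nested contours of Definition \ref{def:contours} can fail outright for admissible complex $c$: the contour $\ga_1$ must enclose $q$, so $\max_{\ga_1}|\xinu|\ge \frac{1-q}{1-\nu q}$, while the radius of convergence of the expansion of $G$ in the variables $\xinu(z_j)$ is $\bigl|\frac{1-c\nu}{\nu(1-c)}\bigr|$, which is arbitrarily small when $c$ is close to $\nu^{-1}$.

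This is exactly why the paper first establishes \eqref{q_nu_symmetrization_identity} only for real $c>q^{-(k-1)}\nu^{-1}$, where the contour construction of Proposition \ref{prop:analytic_argument} (the gap condition $|\xinu(z)|<|\xinu(w)|$ plus avoidance of $c^{-1}\nu^{-1}$) supplies the missing uniform domination, and then extends to all admissible $c$ and the full $\xi$-domain by noting that both sides are rational in $c$ (cf.\ Remark \ref{rmk:rational_f_summation}); some continuation argument of this kind is unavoidable and is precisely what your write-up omits. If you add it --- either the paper's rationality-in-$c$ step, or an analytic continuation in $\xi$ from the small polydisc on which your dominated-convergence bound genuinely holds (the left-hand side is analytic on the stated polydomain since the series converges locally uniformly there, and the right-hand side is rational with no poles inside it) --- the proof closes; the final passage $c\to\infty$ to \eqref{q_nu_symmetrization_identity_step} is fine as written.
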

\begin{proof}
	Observe that our restrictions on $c$ and the $\xi_j$'s
	ensures that both sides of \eqref{q_nu_symmetrization_identity} are well defined.
	If $c$ is real and $c>q^{-(k-1)}\nu^{-1}$, the statement follows from
	Proposition \ref{prop:TW_qnu} plus approximation
	considerations similar to \S \ref{sub:nested_contour_integral_formulas_for_the_q_hahn_tasep}
	which allow to work with the function
	$G(\z)=\prod_{j=1}^{k}\frac{1-\nu z_j}{1-c\nu z_j}$
	not belonging to the space $\Cc^{k}_{z}$.
	(Note that the restriction $c>q^{-(k-1)}\nu^{-1}$ corresponds to
	the restriction $\rho<q^{k}$ in Proposition \ref{prop:analytic_argument}.)
	For other values of $c$, identity
	\eqref{q_nu_symmetrization_identity}
	holds because its both sides are rational functions in $c$
	(indeed, the summation over $\n$ in the left-hand side also results in a
	rational function, cf. Remark \ref{rmk:rational_f_summation} below).
	Finally, the identity \eqref{q_nu_symmetrization_identity_step} is
	a limit of \eqref{q_nu_symmetrization_identity} as $c\to\infty$.
\end{proof}
\begin{remark}\label{rmk:rational_f_summation}
	For general parameters $q$ and $\nu$
	it is not clear how to evaluate the sum over $\n$
	in the left-hand side of
	\eqref{q_nu_symmetrization_identity_step} (or
	\eqref{q_nu_symmetrization_identity}) in a closed form.
	The reason is that the quantities $\st(\n)$ \eqref{stationary_measure}
	depend on the structure of clusters of $\n$.
	The part of the sum corresponding to each fixed cluster structure
	of $\n$ (they are indexed by partitions $\la$ of $k$)
	reduces to a sum of several geometric sequences.
	For example, for $k=4$
	the terms corresponding to the cluster structure
	$\la=(3,1)$ are
	\begin{align*}
		&
		\sum_{m_1>m_2\ge0}
		\Big(
		(\xi_1\xi_2\xi_3)^{m_1}\xi_4^{m_2}
		+\xi_1^{m_1}(\xi_2\xi_3\xi_4)^{m_2}
		\Big)
		=
		\frac{\xi_1 (1 + \xi_2 \xi_3 - 2 \xi_1 \xi_2 \xi_3)}
		{(1-\xi_1)(1-\xi_1\xi_2\xi_3)(1-\xi_1\xi_2\xi_3\xi_4)}
	\end{align*}
	(we assumed $\sigma=id$ to simplify the notation).
	The whole sum $\sum_{n_1\ge \ldots\ge n_k\ge0}\st(\n)\xi_1^{n_1}\ldots\xi_k^{n_k}$
	is a rational function in the $\xi_j$'s with the common denominator
	$(1-\xi_1)(1-\xi_1\xi_2)\ldots(1-\xi_1 \ldots\xi_k)$, and
	it is not clear whether it is possible to express the
	numerator of this rational function
	in a closed form.

	However, for certain special values of $q$ and $\nu$,
	the coefficients $\st(\n)$
	simplify,
	which allows to sum the series in $\n$
	in a closed form. Then identities
	\eqref{q_nu_symmetrization_identity}--\eqref{q_nu_symmetrization_identity_step}
	reduce to
	Tracy-Widom symmetrization identities,
	see \S \ref{sub:tracy_widom_symmetrization_identities} below.
\end{remark}

% subsection symmetrization_formula (end)

% section the_q_mu_nu_boson_process_and_coordinate_bethe_ansatz (end)

\section{Conjugated $q$-Hahn operator} % (fold)
\label{sec:spectral_theory_for_the_conjugated_hahn_boson_operator}

In this section we briefly discuss
analogues of our main results
for the $q$-Hahn transition operator
conjugated in a certain manner.
The conjugated operator
is no longer
stochastic, but the main results can be
readily extended to its eigenfunctions.
Under a certain degeneration
(which we discuss in \S \ref{sec:application_to_six_vertex_model} below), these eigenfunctions
become related to the
six-vertex model and the Heisenberg XXZ quantum spin chain.
Relations between eigenfunctions of various models are indicated on Fig.~\ref{fig:big_scheme}.

\subsection{Analogues of main results} % (fold)
\label{sub:analogues_of_main_results}

Let $\Dil{\dilp}$, where $\dilp\in\C\setminus\{0\}$, be
the dilation operator acting on $\Wc^{k}$
by
\begin{align}\label{Dilation}
	(\Dil{\dilp}f)(\n):=\dilp^{n_1+\ldots+n_k}f(\n),
\end{align}
and consider the \emph{conjugated $q$-Hahn operator}
$\Dil{\dilp}^{-1}\Hbwd \Dil{\dilp}$. Note that for $\dilp\ne 1$
this operator is no longer stochastic.
As in the stochastic $\dilp=1$ case (cf. \S \ref{sub:formal_stationary_vector_pt_symmetry_and_the_right_eigenfunctions}), the conjugated operator
$\Dil{\dilp}^{-1}\Hbwd \Dil{\dilp}$
is not Hermitian symmetric.

Let us explain how
our main results (Plancherel formulas, spectral biorthogonality, symmetrization identities)
can be extended to left and right eigenfunctions
($\Dil{\dilp}^{-1}\Psil_{\z}$
and $\Dil{\dilp}\Psir_{\z}$,
respectively)
of the conjugated
$q$-Hahn operator.
All modified results below in this section
are equivalent to the corresponding results in \S \ref{sec:main_results},
\S \ref{sec:spectral_biorthogonality_of_eigenfunctions},
and \S \ref{sec:the_q_mu_nu_boson_process_and_coordinate_bethe_ansatz}.

A slightly different notation
for the modified functions
turns out to be convenient
(cf. \eqref{Psil} and \eqref{Psir}):
\begin{align}
	\Psild_{\z}(\n)&:=
	\sum_{\sigma\in S(k)}\prod_{B<A}
	\frac{z_{\sigma(A)}-qz_{\sigma(B)}}
	{z_{\sigma(A)}-z_{\sigma(B)}}\prod_{j=1}^{k}
	\left(\frac{\dilp-z_{\sigma(j)}}{1-\nu z_{\sigma(j)}}\right)^{-n_j},
	\label{Psild}\\
	\label{Psird}
	\Psird_{\z}(\n)&:=
	(-1)^k(1-q)^{k}q^{\frac{k(k-1)}{2}}\std(\n)
	\sum_{\sigma\in S(k)}\prod_{B<A}
	\frac{z_{\sigma(A)}-q^{-1}z_{\sigma(B)}}
	{z_{\sigma(A)}-z_{\sigma(B)}}\prod_{j=1}^{k}
	\left(\frac{\dilp-z_{\sigma(j)}}{1-\nu z_{\sigma(j)}}\right)^{n_j},
\end{align}
where $\std(\n):=\prod_{j=1}^{M(\n)}\frac{(\dilp \nu;q)_{c_j}}{(q;q)_{c_j}}$,
cf. \eqref{stationary_measure}.
The restrictions on parameters are $0<q<1$ and $0\le \mu\le \dilp\nu<1$.
When $\dilp=1$, the functions \eqref{Psild}--\eqref{Psird}
become the eigenfunctions $\Psil_{\z}$ and $\Psir_{\z}$
studied in the previous sections.

\begin{proposition}
	The functions
	$\Psild_{\z}$ and $\Psird_{\z}$
	are respectively the left and right eigenfunctions of
	the conjugated $q$-Hahn
	operator
	$\Dil{\dilp}^{-1}
	\mathcal{H}^{\mathrm{bwd}}_{q,\mu,\dilp\nu}
	\Dil{\dilp}$.
	The eigenvalue
	of either $\Psild_{\z}$ or $\Psird_{\z}$
	is equal to
	$\prod_{j=1}^{k}\frac{1-\mu z_j/\dilp}{1-\nu z_j}$.
\end{proposition}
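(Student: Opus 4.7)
The plan is to reduce the statement to the already-established eigenvalue equations for the unconjugated $\Psil$ and $\Psir$ (Propositions \ref{prop:Psibwd_eigenvalue}--\ref{prop:Psifwd_eigenvalue}), by rescaling the spectral variables as $z_j\mapsto z_j/\dilp$ and absorbing the resulting overall factor $\dilp^{n_1+\ldots+n_k}$ into the dilation operator $\Dil{\dilp}$ defined by \eqref{Dilation}. The decisive algebraic identity to exploit is
\begin{align*}
\frac{\dilp-z}{1-\nu z}\;=\;\dilp\cdot\frac{1-z/\dilp}{1-(\dilp\nu)(z/\dilp)},
\end{align*}
which recasts the one-particle base appearing in \eqref{Psild}--\eqref{Psird} as $\dilp$ times the one-particle base appearing in \eqref{Psil}--\eqref{Psir}, provided one takes $z/\dilp$ as the spectral argument and replaces $\nu$ by $\dilp\nu$.

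Next I would observe that the Bethe factors $(z_{\sigma(A)}-q^{\pm1}z_{\sigma(B)})/(z_{\sigma(A)}-z_{\sigma(B)})$ are invariant under a uniform rescaling $z\mapsto \dilp z$, and that $\std(\n)$ is by its very definition $\st(\n)$ evaluated at $\dilp\nu$ instead of $\nu$. A term-by-term comparison of the sums over $\sigma\in S(k)$ in \eqref{Psild}--\eqref{Psird} then yields the clean identities
\begin{align*}
\Psild_\z \;=\; \Dil{\dilp}^{-1}\,\Psil_{\z/\dilp}\big|_{\nu\to\dilp\nu}, \qquad
\Psird_\z \;=\; \Dil{\dilp}\,\Psir_{\z/\dilp}\big|_{\nu\to\dilp\nu},
\end{align*}
where the vertical-bar shorthand denotes the right-hand sides of \eqref{Psil}--\eqref{Psir} taken with $\nu$ replaced by $\dilp\nu$ throughout.

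From here the proposition will follow in one line. Proposition \ref{prop:Psibwd_eigenvalue} applied with $\dilp\nu$ in place of $\nu$ gives $\Hbwd_{q,\mu,\dilp\nu}\,\Psil_{\z/\dilp}\big|_{\nu\to\dilp\nu}=E(\z)\,\Psil_{\z/\dilp}\big|_{\nu\to\dilp\nu}$ with eigenvalue $E(\z)=\prod_{j=1}^k(1-\mu z_j/\dilp)/(1-(\dilp\nu)(z_j/\dilp))=\prod_{j=1}^k(1-\mu z_j/\dilp)/(1-\nu z_j)$; acting by $\Dil{\dilp}^{-1}$ on the left and recognizing the first displayed identity above produces $(\Dil{\dilp}^{-1}\Hbwd_{q,\mu,\dilp\nu}\Dil{\dilp})\,\Psild_\z=E(\z)\,\Psild_\z$. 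The statement for $\Psird_\z$ is entirely analogous: starting from Proposition \ref{prop:Psifwd_eigenvalue} (with $\nu\to\dilp\nu$) applied to $\Psir_{\z/\dilp}\big|_{\nu\to\dilp\nu}$ and acting by $\Dil{\dilp}$ on the left produces $(\Dil{\dilp}\,\Hfwd_{q,\mu,\dilp\nu}\,\Dil{\dilp}^{-1})\,\Psird_\z=E(\z)\,\Psird_\z$, and since $\Dil{\dilp}$ is a diagonal multiplication operator on $\Wc^k$ and hence self-transposed under the pairing \eqref{W_pairing}, one has $\Dil{\dilp}\,\Hfwd_{q,\mu,\dilp\nu}\,\Dil{\dilp}^{-1}=(\Dil{\dilp}^{-1}\Hbwd_{q,\mu,\dilp\nu}\Dil{\dilp})^{\mathrm{T}}$, exhibiting $\Psird_\z$ as the right-eigenfunction companion of $\Psild_\z$ with the same eigenvalue $E(\z)$. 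There is no genuine obstacle here: the proof is essentially a bookkeeping check that the rescaling of spectral variables and the conjugation by $\Dil{\dilp}$ compose compatibly, with the parameter shift $\nu\to\dilp\nu$ being the only additional input needed to invoke Propositions \ref{prop:Psibwd_eigenvalue}--\ref{prop:Psifwd_eigenvalue}.
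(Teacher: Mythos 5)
Your proposal is correct and follows essentially the same route as the paper: the paper's proof consists precisely of the two identities $\Psild_{\z}(\n)=\dilp^{-n_1-\ldots-n_k}\Psil_{\z/\dilp}(\n)\vert_{\nu\to\dilp\nu}$ and $\Psird_{\z}(\n)=\dilp^{n_1+\ldots+n_k}\Psir_{\z/\dilp}(\n)\vert_{\nu\to\dilp\nu}$, followed by the rescaling $z_j\to z_j/\dilp$, $\nu\to\dilp\nu$ in Propositions \ref{prop:Psibwd_eigenvalue} and \ref{prop:Psifwd_eigenvalue}. Your write-up merely spells out the same bookkeeping (including the transposition remark for the right eigenfunction) in more detail.
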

\begin{proof}
	One can readily check that
	\begin{align*}
		\Psild_{\z}(\n)=\dilp^{-n_1- \ldots-n_k}\Psil_{\z/\dilp}(\n)\vert_{\nu\to\dilp\nu},
		\qquad
		\Psird_{\z}(\n)=\dilp^{n_1+ \ldots+n_k}\Psir_{\z/\dilp}(\n)\vert_{\nu\to\dilp\nu},
	\end{align*}
	so the claim
	follows by
	considering the functions
	$\Dil{\dilp}^{-1}\Psil_{\z}$
	and $\Dil{\dilp}\Psir_{\z}$
	and
	rescaling the spectral variables $z_j\to z_j/\dilp$
	and the parameter $\nu\to \dilp\nu$.
\end{proof}
Note that
the eigenfunctions $\Psild_{\z}$ and $\Psird_{\z}$
\eqref{Psild}--\eqref{Psird}
can be constructed by applying the coordinate Bethe ansatz
to the conjugated operator $\Dil{\dilp}^{-1}
\mathcal{H}^{\mathrm{bwd}}_{q,\mu,\dilp\nu}
\Dil{\dilp}$ similarly
to what was done in \S \ref{sub:coordinate_bethe_ansatz_integrability_and_the_left_eigenfunctions}
and \S \ref{sub:formal_stationary_vector_pt_symmetry_and_the_right_eigenfunctions}.

\begin{remark}\label{rmk:generic_linear_fractional_not_more_general}
	Note that replacing
	$\frac{\dilp-z}{1-\nu z}$
	in \eqref{Psild}--\eqref{Psird}
	by a generic linear fractional expression
	$\frac{az+b}{cz+d}$
	does not introduce further generality.
	Indeed, the third independent parameter
	coming from a generic expression
	could be absorbed by rescaling the
	spectral variables $\z$.
	Thus, the conjugated $q$-Hahn eigenfunctions \eqref{Psild}--\eqref{Psird}
	are essentially the most general rational symmetric functions
	which can be obtained from the Hall-Littlewood polynomials
	\cite[Ch. III]{Macdonald1995}
	\begin{align*}
		P_\la(z_1,\ldots,z_k;t)=\textit{const}_{\la,k}\cdot\sum_{\sigma\in S(k)}
		\prod_{B<A}\frac{z_{\sigma(B)}-tz_{\sigma(A)}}
		{z_{\sigma(B)}-z_{\sigma(A)}}
		z_{\sigma(1)}^{\la_1}\ldots
		z_{\sigma(k)}^{\la_k},
		\qquad\la_1\ge \ldots\ge\la_k\ge0,\qquad\la_i\in\Z
	\end{align*}
	by replacing the power terms
	$z_{\sigma(j)}^{\la_j}$
	by arbitrary linear fractional expressions (raised to powers $\la_j$).
	See also \S \ref{ssub:_q_boson_to_van_diejen_s_delta_bose_gas}
	and \cite{vanDiejen2004HL}
	for a discussion of an interacting particle system diagonalized by the
	Hall-Littlewood polynomials.
\end{remark}

Define the modified map
$\xinu_{\dilp}(z):=\frac{\dilp-z}{1-\nu z}$ (cf. \eqref{xinu}). One readily checks that it is an involution, too:
$\xinu_{\dilp}(\xinu_{\dilp}(z))=z$.
Let
$\Cc^{k}_{z;\dilp}$
be the space of
symmetric Laurent polynomials in the
variables $\xinu_{\dilp}(z_1),\ldots,\xinu_{\dilp}(z_k)$.
The map $\xinu_{\dilp}$ is an isomorphism between
$\Cc^{k}_{z;\dilp}$ and the space $\Cc^{k}_{\xi}$
of symmetric Laurent polynomials
in the variables $\xi_1,\ldots,\xi_k$.
In this way the functions
\eqref{Psild}--\eqref{Psird}
induce the following functions
in the variables $\vxi$ (cf. \eqref{Phil_xi}--\eqref{Phir_xi}):
\begin{align}
	\Phild_{\vxi}(\n)&:=
	\sum_{\sigma\in S(k)}
	\sgn(\sigma)\prod_{B<A}\Smd(\xi_{\sigma(A)},\xi_{\sigma(B)})
	\prod_{j=1}^{k}\xi_{\sigma(j)}^{-n_j};
	\label{Phild}\\\label{Phird}
	\Phird_{\vxi}(\n)&:=
	\std(\n)\left(\frac{1-q}{1-\dilp\nu}\right)^{k}
	\sum_{\sigma\in S(k)}
	\sgn(\sigma)\prod_{B<A}\Smd(\xi_{\sigma(B)},\xi_{\sigma(A)})
	\prod_{j=1}^{k}\xi_{\sigma(j)}^{n_j},
\end{align}
where
\begin{align}\label{Sm_theta}
	\Smd(\xi_1,\xi_2):=
	\frac{\dilp(1-q)}{1-q\dilp\nu}+
	\frac{q-\dilp\nu}{1-q\dilp\nu}\xi_2+
	\frac{\nu(1-q)}{1-q\dilp\nu}\xi_1\xi_2
	-\xi_1.
\end{align}
One can readily check that
\begin{align}\label{Psi_Phi_relations}
	\begin{array}{>{\displaystyle}rc>{\displaystyle}l}
	 	\Psild_{\xinu_{\dilp}(\vxi)}(\n)&=&\frac{1}{\Vand(\vxi)}
		\left(\frac{1-q\dilp\nu}{1-\dilp\nu}\right)^{\frac{k(k-1)}{2}}
		\Phild_{\vxi}(\n),\\
		\Psird_{\xinu_{\dilp}(\vxi)}(\n)&=&\frac{(-1)^{\frac{k(k+1)}{2}}}{\Vand(\vxi)}
		(1-\dilp\nu)^{k}
		\left(\frac{1-q\dilp\nu}{1-\dilp\nu}\right)^{\frac{k(k-1)}{2}}
		\Phird_{\vxi}(\n).
	\end{array}
\end{align}
(this should be compared to formulas \eqref{Psil_xi}--\eqref{Psir_xi}).

\begin{remark}
	The quadratic cross-term $\Smd(\xi_1,\xi_2)$
	depends on three parameters, and so
	one may also parametrize the eigenfunctions
	by $\al=\frac{\nu(1-q)}{1-q\dilp\nu}$,
	$\be=\frac{q-\dilp\nu}{1-q\dilp\nu}$, and
	$\gamma=\frac{\dilp(1-q)}{1-q\dilp\nu}$,
	where now $\al+\be+\gamma$ is not necessarily equal to $1$
	(cf. Remark \ref{rmk:Povolotsky_parameters}).
	We see that the cross-term $\Smd$
	now takes a ``generic'' form,
	which is
	another indication
	towards the claim discussed in
	Remark~\ref{rmk:generic_linear_fractional_not_more_general} above.
	
	Note also that formulas expressing $(q,\nu,\dilp)$
	through $(\al,\be,\gamma)$ involve solving quadratic equations
	which greatly simplify when $\al+\be+\gamma=1$.
\end{remark}

\begin{definition}
	The modified transforms are defined as
	\begin{align}
		\label{Pldd}
		(\Plde^{q,\nu,\dilp} f)(\z)&:=
		\sum_{\n\in\Weyl{k}}f(\n)\Psird_{\z}(\n),\\
		\label{Plid}
		(\Plie^{q,\nu,\dilp} G)(\n)&:=
		\oint_{\ga_1^{\dilp}}\frac{dz_1}{2\pi\i}
		\ldots
		\oint_{\ga_k^{\dilp}}\frac{dz_k}{2\pi\i}
		\prod_{A<B}\frac{z_A-z_B}{z_A-qz_B}
		\prod_{j=1}^{k}
		\frac{1}{(\dilp-z_j)(1-\nu z_j)}
		\left(\frac{\dilp-z_j}{1-\nu z_j}\right)^{-n_j}
		G(\z).
	\end{align}
	The integration contours are
	as follows:
	$\ga_k^{\dilp}$
	is a small positively oriented contour around
	$\dilp$ not containing $q\dilp$,
	$\ga_A^{\dilp}$ contains $q\ga_B^{\dilp}$
	for all $1\le A<B\le k$, and,
	moreover, $\nu^{-1}$ is outside all contours
	(cf. Definition \ref{def:contours} of contours corresponding to the case $\dilp=1$).

	The operator $\Plde^{q,\nu,\dilp}$
	takes functions from $\Wc^{k}$ to
	$\Cc^{k}_{z;\dilp}$,
	and the operator $\Plie^{q,\nu,\dilp}$ acts in the opposite direction.
\end{definition}

\begin{theorem}[Plancherel isomorphisms; cf. Theorems \ref{thm:spatial_Plancherel} and \ref{thm:spectral_Plancherel}]\label{thm:Plancherel_isom_theta}
	The maps $\Plde^{q,\nu,\dilp}$ and $\Plie^{q,\nu,\dilp}$ are mutual inverses in the sense that
	$\Plspatiale^{q,\nu,\dilp}=\Plie^{q,\nu,\dilp}\Plde^{q,\nu,\dilp}$ acts as the identity
	operator on $\Wc^{k}$, and
	$\Plspectrale^{q,\nu\dilp}=\Plde^{q,\nu,\dilp}\Plie^{q,\nu,\dilp}$
	coincides with the identity operator on
	$\Cc^{k}_{z;\dilp}$.
\end{theorem}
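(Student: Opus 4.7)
The plan is to reduce Theorem \ref{thm:Plancherel_isom_theta} to the already established Theorems \ref{thm:spatial_Plancherel} and \ref{thm:spectral_Plancherel} by exploiting the two relations observed just after \eqref{Psird}: namely
\begin{align*}
	\Psild_{\z}(\n)=\dilp^{-(n_1+\ldots+n_k)}\,\Psil_{\z/\dilp}(\n)\big\vert_{\nu\to\dilp\nu},\qquad
	\Psird_{\z}(\n)=\dilp^{n_1+\ldots+n_k}\,\Psir_{\z/\dilp}(\n)\big\vert_{\nu\to\dilp\nu}.
\end{align*}
These say that, up to the dilation operator $\Dil{\dilp}$ of \eqref{Dilation} acting on $\Wc^k$, the conjugated eigenfunctions are precisely the original eigenfunctions after rescaling the spectral variables by $1/\dilp$ and the parameter by $\nu\mapsto\dilp\nu$. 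Since the hypotheses on parameters guarantee $0\le \dilp\nu<1$ and $0<q<1$, the pair $(q,\dilp\nu)$ lies in the range covered by Theorems \ref{thm:spatial_Plancherel} and \ref{thm:spectral_Plancherel}.

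First, I would introduce the auxiliary spectral rescaling $(T_\dilp G)(\w):=G(\dilp\w)$, which is a linear isomorphism from $\Cc^{k}_{z;\dilp}$ onto the space $\Cc^{k}_{w}$ of symmetric Laurent polynomials in $(1-w_j)/(1-\dilp\nu w_j)$, since $\xinu_{\dilp}(\dilp w)=\dilp(1-w)/(1-\dilp\nu w)$. Plugging the first identity above into \eqref{Pldd} and summing against $\Dil{\dilp}f$ yields the factorization
\begin{align*}
	\Plde^{q,\nu,\dilp} \;=\; T_\dilp^{-1}\circ\Pld^{(q,\dilp\nu)}\circ\Dil{\dilp},
\end{align*}
where $\Pld^{(q,\dilp\nu)}$ denotes the direct transform of \S \ref{sub:transform_and_inverse_transform} with parameters $(q,\dilp\nu)$. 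For the inverse transform, I would make the change of variables $z_j=\dilp w_j$ in \eqref{Plid}. The factors transform as $dz_j=\dilp\,dw_j$, $\dilp-z_j=\dilp(1-w_j)$, $1-\nu z_j=1-\dilp\nu w_j$, the cross-ratios $(z_A-z_B)/(z_A-qz_B)$ are invariant, and the contours $\ga_j^{\dilp}$ (nested around $\dilp$, avoiding $\nu^{-1}$) transform into contours nested around $1$ avoiding $(\dilp\nu)^{-1}$, i.e.\ into the contours $\ga_j$ of Definition \ref{def:contours} for the parameter $\dilp\nu$. After a bit of bookkeeping (collecting the $\dilp^{-(n_1+\ldots+n_k)}$ arising from the powers of $(\dilp-z_j)/(1-\nu z_j)$), this gives
\begin{align*}
	\Plie^{q,\nu,\dilp}\;=\;\Dil{\dilp}^{-1}\circ\Pli^{(q,\dilp\nu)}\circ T_\dilp.
\end{align*}

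With these two factorizations in hand, the theorem follows immediately. Indeed,
\begin{align*}
	\Plie^{q,\nu,\dilp}\Plde^{q,\nu,\dilp}\;=\;\Dil{\dilp}^{-1}\Pli^{(q,\dilp\nu)}(T_\dilp T_\dilp^{-1})\Pld^{(q,\dilp\nu)}\Dil{\dilp}\;=\;\Dil{\dilp}^{-1}\Dil{\dilp}\;=\;\Id_{\Wc^k}
\end{align*}
by the spatial Plancherel formula (Theorem \ref{thm:spatial_Plancherel}) at parameters $(q,\dilp\nu)$, and symmetrically
\begin{align*}
	\Plde^{q,\nu,\dilp}\Plie^{q,\nu,\dilp}\;=\;T_\dilp^{-1}\Pld^{(q,\dilp\nu)}\Pli^{(q,\dilp\nu)}T_\dilp\;=\;T_\dilp^{-1}T_\dilp\;=\;\Id_{\Cc^{k}_{z;\dilp}}
\end{align*}
by the spectral Plancherel formula (Theorem \ref{thm:spectral_Plancherel}).

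The only genuine obstacle is the bookkeeping in the change of variables for the inverse transform: one must check that the constant factors of $\dilp$ arising from $dz_j$, $\dilp-z_j$, and the powers $((\dilp-z_j)/(1-\nu z_j))^{-n_j}$ combine exactly into the operator $\Dil{\dilp}^{-1}$ (so that the $n$-dependence is absorbed), and that the deformation of the contours $\ga_j^{\dilp}$ under $z_j\mapsto z_j/\dilp$ lands on admissible contours for the $(q,\dilp\nu)$-inverse transform without crossing any singularities. Both verifications are routine, given that the condition $\dilp\nu<1$ keeps $(\dilp\nu)^{-1}$ outside the rescaled contours and that the nesting $\ga_A\supset q\ga_B$ is preserved by the dilation. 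Once this is recorded, the two diagrams close and the theorem is proved.
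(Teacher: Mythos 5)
Your proposal is correct and follows essentially the same route as the paper: the paper also obtains the conjugated results from Theorems \ref{thm:spatial_Plancherel} and \ref{thm:spectral_Plancherel} via the substitution $z_j\to z_j/\dilp$, $\nu\to\dilp\nu$ together with conjugation by the dilation operator $\Dil{\dilp}$ (cf.\ the derivation of \eqref{theta_small_contour} from \eqref{completeness_big_1}). Your explicit operator factorizations $\Plde^{q,\nu,\dilp}=T_\dilp^{-1}\Pld^{(q,\dilp\nu)}\Dil{\dilp}$ and $\Plie^{q,\nu,\dilp}=\Dil{\dilp}^{-1}\Pli^{(q,\dilp\nu)}T_\dilp$ are just a cleanly packaged form of that same change-of-variables argument.
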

We will need a modified version of identity \eqref{completeness_big_1}
with $f(\n)=\mathbf{1}_{\n=\x}$:
\begin{align}\label{theta_small_contour}
	\sum_{\la\vdash k}
	\oint_{\ga_k^{\dilp}}\ldots\oint_{\ga_k^{\dilp}}
	\dilp^{-k}d\Plm^{(q)}_\la(\w)
	\prod_{j=1}^{\ell(\la)}\frac{1}{(w_j/\dilp;q)_{\la_j}(\nu w_j;q)_{\la_j}}
	\Psild_{\w\circ\la}(\x)
	\Psird_{\w\circ\la}(\y)=\mathbf{1}_{\x=\y}
\end{align}
for all $\x,\y\in\Weyl{k}$. Identity \eqref{theta_small_contour} follows from
\eqref{completeness_big_1} via a change of variables $w_j\to w_j/\dilp$
plus the change of parameter $\nu\to\dilp\nu$
(note that $d\Plm^{(q)}_\la(\w/\dilp)=\dilp^{-k}d\Plm^{(q)}_\la(\w)$).
Note that \eqref{theta_small_contour} is
equivalent to the statement that $\Plspatiale^{q,\nu,\dilp}$
is the identity operator
(here one should use an analogue of Proposition \ref{prop:nesting_unnesting}).

\begin{remark}\label{rmk:theta_bilinear_pairings_not_needed}
	One can define a bilinear pairing
	on the space $\Cc^{k}_{z;\dilp}$
	similarly to \eqref{Cz_pairing}
	which would correspond to
	the standard bilinear pairing \eqref{W_pairing}
	on $\Wc^{k}$ in the sense of
	\eqref{Plspatial_isomorphism} and \eqref{Plspectral_isomorphism}.
	We will not write down the corresponding formulas.
\end{remark}

We record the spectral biorthogonality statements
using the formal notation as in Remark~\ref{rmk:spectral_biorthogonality_with_deltas}:

\begin{theorem}[Spectral biorthogonality; cf. Theorems \ref{thm:spectral_biorthogonality} and \ref{thm:spectral_biorthogonality_xi}]
	In the spectral coordinates $\z$ and $\vxi$, one has, respectively,
	\begin{align}
		\label{spec_bio_theta}
		\sum_{\n\in\Weyl{k}}
		\Psird_{\z}(\n)\Psild_{\w}(\n)
		\Vand(\z)\Vand(\w)&=
		(-1)^{\frac{k(k-1)}{2}}
		\prod_{j=1}^{k}(\dilp-z_j)(1-\nu z_j)
		\prod_{A\ne B}(z_A-qz_B)
		\det[\delta({z_i-w_j})]_{i,j=1}^{k},
		\\
		\label{spec_bio_theta_xi}
		\sum_{\n\in\Weyl{k}}
		\Phird_{\vxi}(\n)\Phild_{\vom}(\n)
		&=
		\prod_{j=1}^{k}\xi_j
		\prod_{A\ne B}\Smd(\xi_A,\xi_B)
		\det[\delta({\xi_i-\om_j})]_{i,j=1}^{k}.
	\end{align}
	Identity \eqref{spec_bio_theta} holds with test functions
	which are Laurent polynomials
	(not necessarily symmetric) in $\frac{\dilp-z_j}{1-\nu z_j}$
	and $\frac{\dilp-w_j}{1-\nu w_j}$, respectively, with the
	integration contours being small positively oriented circles around
	$\dilp$ (or negatively oriented circles around $\nu^{-1}$).

	Similarly, identity \eqref{spec_bio_theta_xi} holds with test functions
	which are Laurent polynomials in $\xi_j$ and $\om_j$, respectively,
	and with integration contours being any positively oriented circles around $0$.
\end{theorem}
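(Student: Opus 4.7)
The plan is to deduce the conjugated spectral biorthogonality directly from the unconjugated statements (Theorems \ref{thm:spectral_biorthogonality} and \ref{thm:spectral_biorthogonality_xi}) by the same change of variables that relates the two families of eigenfunctions. Recall from the previous proposition that
\begin{align*}
  \Psild_{\z}(\n)=\dilp^{-(n_1+\cdots+n_k)}\Psil_{\z/\dilp}(\n)\big\vert_{\nu\to\dilp\nu},\qquad
  \Psird_{\z}(\n)=\dilp^{\phantom{-}(n_1+\cdots+n_k)}\Psir_{\z/\dilp}(\n)\big\vert_{\nu\to\dilp\nu}.
\end{align*}
The crucial point is that the dilation prefactors are mutually inverse, so in the product $\Psird_{\z}(\n)\Psild_{\w}(\n)$ they cancel exactly, independently of $\n$. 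Consequently, after substituting $\z\to \dilp\z$, $\w\to\dilp\w$ and $\nu\to\dilp\nu$, the sum over $\n$ on the left-hand side of \eqref{spec_bio_theta} is formally identical to the corresponding sum in Theorem \ref{thm:spectral_biorthogonality}.

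Next I would track how the integration data transforms under the rescaling. Under $\z\mapsto\z/\dilp$ the small contour $\ga_k^{\dilp}$ around $\dilp$ becomes a small contour around $1$, and the exclusion of $\nu^{-1}$ translates into the exclusion of $(\dilp\nu)^{-1}$, which is precisely the data for the unconjugated theorem at parameter $\dilp\nu$. The factors $(\dilp-z_j)(1-\nu z_j)$ in the right-hand side of \eqref{spec_bio_theta}, after the substitution, yield $\dilp^{k}(1-z_j)(1-\dilp\nu z_j)$, and the extra $\dilp^{k}$ is absorbed by the Jacobian $d\z = \dilp^{k}d(\z/\dilp)$; the Vandermondes and the factors $(z_A-qz_B)$ are homogeneous and cancel in the obvious way. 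Combined with the observation above, this shows that \eqref{spec_bio_theta} is the image of \eqref{spectral_biorthogonality} under the rescaling, and therefore holds in the class of test functions described in the statement (Laurent polynomials in $\xinu_{\dilp}(z_j)=\frac{\dilp-z_j}{1-\nu z_j}$, for which the only singularities are $\dilp$ and $\nu^{-1}$, so the large contour $\ga$ from Theorem \ref{thm:spectral_biorthogonality} may be contracted to a small circle around either, as in Remark \ref{rmk:spectral_biorthogonality_over_1_nuin}).

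The $\vxi$-version \eqref{spec_bio_theta_xi} follows from \eqref{spec_bio_theta} by exactly the same route as in the unconjugated case, namely by the change of variables $\z=\xinu_{\dilp}(\vxi)$, $\w=\xinu_{\dilp}(\vom)$, and by using the explicit identities \eqref{Psi_Phi_relations} that convert the products $\Psird\cdot\Psild$ to $\Phird\cdot\Phild$ after one multiplies by $\Vand(\vxi)\Vand(\vom)$. The Jacobians and Vandermonde ratios recombine into the factors $\prod_j \xi_j\prod_{A\ne B}\Smd(\xi_A,\xi_B)$ appearing on the right-hand side of \eqref{spec_bio_theta_xi}, in complete analogy with the computations \eqref{spectral_biorthogonality_xi_proof}--\eqref{spectral_biorthogonality_xi_proof_2}; the only modification is that $\Sm$ gets replaced by $\Smd$ because the involution $\xinu_{\dilp}$ replaces the quadratic cross-term \eqref{quadratic_S_matrix} by \eqref{Sm_theta}. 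Convergence of the sum over $\n$ is guaranteed by the same contour-gap argument as in Proposition \ref{prop:convergence_of_biorthogonality}, applied to contours slightly perturbed from $\ga^\dilp_k$ in analogy with Definition \ref{def:contour_gap}.

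The main (and really the only) point that requires care is the choice of admissible test functions and the matching of the contour $\ga$ used in Theorem \ref{thm:spectral_biorthogonality} with the small contour $\ga_k^{\dilp}$ around $\dilp$ used here; the rest is bookkeeping. Once one restricts to Laurent polynomials in $\xinu_{\dilp}(\cdot)$ (or in $\xi_j,\om_j$ for the second version), the integrand has no singularities away from $\dilp$ and $\nu^{-1}$ (respectively, from $0$ and $\infty$), so the contour deformation is unobstructed, and the above rescaling argument converts one statement into the other without loss. Since Theorem \ref{thm:Plancherel_isom_theta} (conjugated Plancherel) is equivalent to the unconjugated one via the same substitution, no additional ingredients are needed.
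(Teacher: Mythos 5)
Your proposal is correct and follows essentially the same route the paper intends: the conjugated statement is obtained from Theorems \ref{thm:spectral_biorthogonality} and \ref{thm:spectral_biorthogonality_xi} via the substitution $z_j\to z_j/\dilp$, $\nu\to\dilp\nu$, using the cancellation of the dilation prefactors $\dilp^{\pm(n_1+\cdots+n_k)}$ in the product of left and right eigenfunctions, with the Jacobian/delta-function bookkeeping and the contour identification around $\dilp$ exactly as you describe. The paper simply asserts this equivalence without writing it out, and your write-up supplies precisely those details.
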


Note that the above theorem
also works with more general test functions
under the right choice of integration contours.
The appropriate statements can be readily formulated similarly to \S \ref{sub:formulations},
but we will not write them down.

% subsection analogues_of_main_results (end)

\subsection{Symmetrization identities} % (fold)
\label{sub:symmetrization_identities}

We will now give $\dilp$-modified analogues of the symmetrization identities
of Corollary \ref{cor:summation_qnu}. Recall that these identities
were based on an explicit computation
of the inverse transform of a certain function $G(\z)$.
The corresponding $\dilp$-modified integral looks as follows:

\begin{proposition}[cf. Proposition \ref{prop:TW_qnu}]
	For any $c\in\C\setminus\{\dilp^{-1}\nu^{-1}, q^{-1}\dilp^{-1}\nu^{-1},
	\ldots, q^{-(k-1)}\dilp^{-1}\nu^{-1}\}$
	and any $\n\in\Weyl{k}$, we have
	\begin{align*}
		&
		(\Plie^{q,\nu,\dilp}G)(\n)=\oint_{\ga_1^{\dilp}}\frac{dz_1}{2\pi\i}
		\ldots
		\oint_{\ga_k^{\dilp}}\frac{dz_k}{2\pi\i}
		\prod_{1\le A<B\le k}\frac{z_A-z_B}{z_A-qz_B}
		\prod_{j=1}^{k}\left(\frac{\dilp-z_j}{1-\nu z_j}\right)^{-n_j}
		\frac{1}{(\dilp-z_j)(1-c\nu  z_j)}
		\\&\hspace{220pt}=
		\frac{(-1)^{k}\nu^{n_1+n_2+\ldots+n_k}}{(c\dilp\nu;q)_{k}}\prod_{j=1}^{k}
		\left(\frac{1-cq^{j-1}}{1-c\dilp\nu q^{j-1}}\right)^{n_j}
		\mathbf{1}_{n_k\ge0},
	\end{align*}
	where
	$G(\z):=\prod_{j=1}^{k}\frac{1-\nu z_j}{1-c\nu z_j}$.
	Here the integration contours
	$\ga_1^{\dilp},\ldots,\ga_k^{\dilp}$ are as in
	\eqref{Plid}
	with an additional condition that they do
	not contain $c^{-1}\nu^{-1}$ (this is possible because of our restrictions on
	$c$, but this could also mean that each contour is a union of several disjoint
	simple contours).
\end{proposition}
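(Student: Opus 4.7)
The plan is to deduce this statement directly from Proposition \ref{prop:TW_qnu} by a change of variables, exploiting the same $\dilp$-rescaling that was used to construct $\Psild_{\z}$ and $\Psird_{\z}$ from $\Psil_{\z}$ and $\Psir_{\z}$ in \S \ref{sub:analogues_of_main_results}. Alternatively, one can simply mimic the direct residue calculus used in the proof of Proposition \ref{prop:TW_qnu}; either route works.

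First, I would write down Proposition \ref{prop:TW_qnu} with its parameter $\nu$ replaced by the temporary parameter $\tilde\nu:=\dilp\nu$, which is allowed since $0\le \dilp\nu<1$ under our standing assumptions. Then I would perform the substitution $z_j = w_j/\dilp$ in the integral (simultaneously rescaling contours, so that the small circles around $1$ become the circles $\ga_j^{\dilp}$ around $\dilp$, and $\tilde\nu^{-1}$ stays equal to $(\dilp\nu)^{-1}$ and is avoided by the new contours). The key algebraic identities to verify are
\begin{align*}
\frac{1-w_j/\dilp}{1-\tilde\nu w_j/\dilp}=\frac{1}{\dilp}\cdot\frac{\dilp-w_j}{1-\nu w_j},\qquad 1-c\tilde\nu\cdot w_j/\dilp=1-c\nu w_j,
\end{align*}
together with invariance of $\prod_{A<B}\frac{z_A-z_B}{z_A-qz_B}$ under the common rescaling $z_j\mapsto w_j/\dilp$. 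A straightforward bookkeeping of the factors of $\dilp$ coming from $dz_j$, from $(1-z_j)^{-1}$, and from the $-n_j$-th power then shows that the rescaled left-hand side equals $\dilp^{-\sum n_j}$ times the left-hand side of the present proposition. A matching $\dilp^{-\sum n_j}$ factor appears on the right-hand side because $\tilde\nu^{\sum n_j}=\dilp^{\sum n_j}\nu^{\sum n_j}$, and after cancellation the identity reduces exactly to the desired formula.

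If one instead prefers a self-contained argument, the plan is to repeat the inductive residue computation from the proof of Proposition \ref{prop:TW_qnu}, adjusted for the new pole structure. First observe that for $n_k<0$ the integrand has no pole at $z_k=\dilp$ (the factor $(\dilp-z_k)^{-1-n_k}$ becomes a zero), so the $z_k$-integral vanishes; hence one may assume $n_k\ge 0$, whence all $n_j\ge 0$ and there is no singularity at $z_j=\nu^{-1}$. Then the only pole of the integrand in $z_1$ outside $\ga_1^{\dilp}$ is at $z_1=c^{-1}\nu^{-1}$ (the integrand decays at infinity, giving no contribution there). Evaluating minus the residue at this point, one uses
\begin{align*}
\left.\frac{\dilp-z_1}{1-\nu z_1}\right|_{z_1=c^{-1}\nu^{-1}}=\frac{1-c\dilp\nu}{\nu(1-c)},
\end{align*}
which produces the prefactor $\nu^{n_1}\bigl(\tfrac{1-c}{1-c\dilp\nu}\bigr)^{n_1}\cdot\tfrac{-1}{1-c\dilp\nu}$, and reduces the problem to the same integral in $k-1$ variables $z_2,\dots,z_k$ with $c$ replaced by $qc$ (the cross-ratio $\prod_{A<B}\frac{z_A-z_B}{z_A-qz_B}$ contributes a factor $\prod_{j\ge 2}\frac{1-c\nu z_j}{1-qc\nu z_j}$ upon absorbing a telescoping ratio). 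Iterating $k$ times produces the claimed product and the Pochhammer symbol $(c\dilp\nu;q)_k$ in the denominator.

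The only potential obstacle is purely bookkeeping: one must be careful that after rescaling contours the restriction ``$\ga_j^{\dilp}$ do not contain $c^{-1}\nu^{-1}$'' is preserved (this corresponds to ``$\ga_j$ do not contain $c^{-1}\tilde\nu^{-1}$'' in the old variables, which is exactly the hypothesis on $c$ in Proposition \ref{prop:TW_qnu}). The rest is mechanical.
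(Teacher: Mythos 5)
Your proposal is correct and takes essentially the same route as the paper: the paper states this proposition without a separate proof precisely because it follows from Proposition \ref{prop:TW_qnu} by the change of variables $z_j\mapsto z_j/\dilp$ together with $\nu\mapsto\dilp\nu$ (the same rescaling used throughout \S\ref{sec:spectral_theory_for_the_conjugated_hahn_boson_operator} to transfer $\dilp=1$ results to the conjugated setting), and your alternative residue computation simply reproduces the paper's proof of Proposition \ref{prop:TW_qnu} with the pole at $z_j=\dilp$ in place of $z_j=1$. The only slip is cosmetic: the common factor appearing on both sides after rescaling is $\dilp^{+(n_1+\cdots+n_k)}$ rather than $\dilp^{-(n_1+\cdots+n_k)}$, which does not affect the cancellation or the conclusion.
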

The symmetrization identities (analogues of Corollary \ref{cor:summation_qnu})
we obtain in the
$\dilp$-modified case turn out to be equivalent
to the ones for $\dilp=1$
(up to rescaling of $\nu$ and of the $\xi_j$'s).
Therefore, we will not write them down.

One could also readily formulate an analogue of Proposition \ref{prop:general_summation}
for the general function $G(\z)$, but we will not do that.

\section{Application to ASEP} % (fold)
\label{sec:application_to_asep}

In this section we discuss a degeneration of the
$q$-Hahn eigenfunctions to those of
the ASEP. Note that
in contrast with the previous section,
we work at the level of
stochastic particle systems (cf. Fig.~\ref{fig:big_scheme}).
Our main results of \S \ref{sec:main_results},
\S \ref{sec:spectral_biorthogonality_of_eigenfunctions},
and \S \ref{sec:the_q_mu_nu_boson_process_and_coordinate_bethe_ansatz}
carry over to the ASEP case (with certain modifications which are not always straightforward).
This in particular leads to different proofs of some of the results
for the ASEP first obtained by Tracy and Widom \cite{TW_ASEP1}, \cite{TW_ASEP4}.

\subsection{ASEP and its Bethe ansatz eigenfunctions} % (fold)
\label{sub:asep_and_its_bethe_ansatz_eigenfunctions}

Let $k\ge1$ be an integer.
The $k$-particle ASEP is a continuous-time stochastic particle system
on $\Z$. Its state space
\begin{align}\label{tWeyl}
	\tWeyl{k}:=\{\x=(x_1<x_2<\ldots<x_k)\colon x_i\in\Z\}
\end{align}
consists of ordered $k$-tuples of distinct integers.\footnote{Our ordering of
distinct coordinates in $\x\in\tWeyl{k}$
differs from the one
for the space $\Weyl{k}$
(\S \ref{sub:spatial_and_spectral_variables}). This is done to better
reflect the notation for the ASEP
used in \cite{TW_ASEP1}, \cite{BorodinCorwinSasamoto2012}.
We continue to use this ordering in \S \ref{sec:application_to_six_vertex_model}~as~well.}

Each particle in the ASEP has an independent exponential clock with mean $1$.
When the clock of a particle rings, it immediately tries to
jump to the right with probability $\pasep>0$, or to the left with probability $\qasep>0$,
where $\pasep+\qasep=1$
(note that this $\qasep$ differs from the parameter $q$
in the $q$-Hahn system).
If the destination of the jump is already occupied, then the jump is blocked
(see Fig.~\ref{fig:asep}).
We assume that $\pasep<\qasep$ and denote $\tau:=\pasep/\qasep$, so $\tau\in(0,1)$
and $\pasep=\frac{\tau}{1+\tau}$, $\qasep=\frac{1}{1+\tau}$.
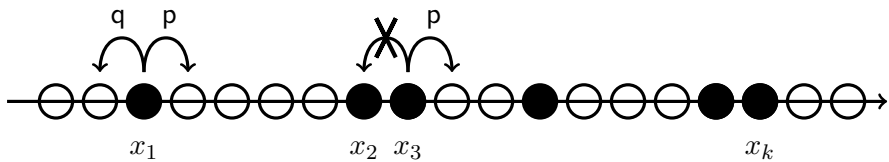
\begin{figure}[htbp]
	\begin{center}
	\begin{tikzpicture}
		[scale=1.3,very thick]
			\def\pt{.17}
			\def\ee{.1}
			\def\h{.45}
			\draw[->] (-.5,0) -- (8.5,0);
			\foreach \ii in {(0,0),(\h,0),(3*\h,0),(4*\h,0),(5*\h,0),(6*\h,0) ,(8*\h,0),(10*\h,0),(9*\h,0),(12*\h,0),(13*\h,0),(14*\h,0),(15*\h,0),(16*\h,0),(17*\h,0),(18*\h,0)}
			{
				\draw \ii circle(\pt);
			}
			\foreach \ii in {(2*\h,0),(7*\h,0),(11*\h,0),(15*\h,0),(8*\h,0),(16*\h,0)}
			{
				\draw[fill] \ii circle(\pt);
			}
			\node at (2*\h,-5*\ee) {$x_1$};
			\node at (7*\h,-5*\ee) {$x_{2}$};
			\node at (8*\h,-5*\ee) {$x_{3}$};
			\node at (16*\h,-5*\ee) {$x_{k}$};
		    \draw[->, very thick] (2*\h,.3) to [in=180,out=90] (2.5*\h,.65) to [in=90, out=0] (3*\h,.3) node [xshift=-7,yshift=20] {$\pasep$};
		    \draw[->, very thick] (2*\h,.3) to [in=0,out=90] (1.5*\h,.65) to [in=90, out=180] (1*\h,.3) node [xshift=7,yshift=20] {$\qasep$};
		    \draw[->, very thick] (8*\h,.3) to [in=0,out=90] (7.5*\h,.65) to [in=90, out=180] (7*\h,.3);
		    \draw[ultra thick] (7.5*\h,.65)--++(.1,.2)--++(-.2,-.4)--++(.1,.2)--++(-.1,.2)
		    --++(.2,-.4);
		    \draw[->, very thick] (8*\h,.3) to [in=180,out=90] (8.5*\h,.65) to [in=90, out=0] (9*\h,.3) node [xshift=-7,yshift=20] {$\pasep$};
	\end{tikzpicture}
	\end{center}
  	\caption{The ASEP particle system.}
  	\label{fig:asep}
\end{figure}

By $\tWc^{k}$ denote the space of all compactly supported functions
on $\tWeyl{k}$.
The Markov generator of the ASEP acts on
$\tWc^{k}$ as follows (here we use notation similar to $\n_i^{\pm}$ in
\S \ref{sub:_q_hahn_boson_process}):
\begin{align}\label{H_ASEP}
	(\Hasep f)(\x)=\sum_{i}
	\pasep(f(\x^{+}_{i})-f(\x))
	+\sum_{j}
	\qasep(f(\x^{-}_{j})-f(\x)),
\end{align}
where the sums are taken over all $i$ and $j$
for which the resulting configurations
$\x^{+}_{i}$ and $\x^{-}_{j}$, respectively,
belong to $\tWeyl{k}$.

Let $\Refl$ be the reflection operator
whose action on
$\tWc^{k}$
is given by \eqref{reflection_operator}.
One can readily see that
\begin{align}\label{ASEP_generator_PT}
	\Hasep=\Refl (\Hasep)^{\text{transpose}}\Refl^{-1},
\end{align}
because both the matrix transposition and the
conjugation by the
space reflection
$\Refl$
switch the roles of $\pasep$ and $\qasep$.
This means that the ASEP is PT-symmetric (as in Proposition \ref{prop:PT}), but with
the corresponding
$\mathfrak{m}$ being the identity operator.\footnote{The property $\mathfrak{m}\equiv 1$
is related to the fact that
Bernoulli measures are stationary for the ASEP \cite{Liggett1985}.
Also note the relation between
$\mathfrak{m}$ \eqref{stationary_measure}
and the stationary gaps distribution \eqref{full_q_geometric_distribution}
in the $q$-Hahn case.}

The application of the coordinate Bethe ansatz to the ASEP
dates back at least to the work of Gwa and Spohn
\cite{GwaSpohn1992}, and was
continued by Sch\"utz
\cite{Schutz1997exact}
and by Tracy and Widom
\cite{TW_ASEP1}.
One of the formulations of this integrability is that
the true
generator $\Hasep$ is equivalent (in the exact same sense
as in Proposition \ref{prop:CBA_qHahn}) to the following
free generator
\begin{align}\label{L_ASEP}
	(\Lasep u)(\x):=\sum_{i=1}^{k}[\nasep]_i u(\x),\qquad
	(\nasep u)(y):= \pasep u(y+1)+\qasep u(y-1)-u(y),\quad y\in\Z,
\end{align}
plus $k-1$ two-body boundary conditions
\begin{align}\label{BC_ASEP}
	\left.\Big(
		\pasep u(\x^{+}_{i})+\qasep u(\x^{-}_{i+1})-u(\x)
	\Big)\right\vert_{\x\in\Z^{k}\colon x_{i+1}=x_{i}+1}=0
\end{align}
for all $1\le i\le k-1$.

This reduction of the operator $\Hasep$
allows to construct its eigenfunctions
in the same way as in \S \ref{sub:coordinate_bethe_ansatz_integrability_and_the_left_eigenfunctions}
and \S \ref{sub:formal_stationary_vector_pt_symmetry_and_the_right_eigenfunctions}.
This leads to the following left (=~backward) eigenfunctions
\begin{align}\label{PsiASEP_bwd}
	\Psiasep_{\z}(\x):=\sum_{\sigma\in S(k)}
	\prod_{1\le B<A\le k}\frac{z_{\sigma(B)}-\tau z_{\sigma(A)}}{z_{\sigma(B)}-z_{\sigma(A)}}\prod_{j=1}^{k}\left(\frac{1+z_{\sigma(j)}}{1+z_{\sigma(j)}/\tau}\right)^{-x_j},\qquad \x\in\tWeyl{k}.
\end{align}
The right (=~forward) eigenfunctions are related to the above ones by a reflection,
and they are
\begin{align}\label{PsiASEP_fwd}
	(\Refl\Psiasep_{\z})(\x)=
	\sum_{\sigma\in S(k)}
	\prod_{1\le B<A\le k}\frac{z_{\sigma(A)}-\tau z_{\sigma(B)}}{z_{\sigma(A)}-z_{\sigma(B)}}
	\prod_{j=1}^{k}\left(\frac{1+z_{\sigma(j)}}{1+z_{\sigma(j)}/\tau}\right)^{x_j}
	,\qquad \x\in\tWeyl{k}.
\end{align}
Note also that (up to a constant factor plus a change of the spectral variables) the eigenfunctions
$\Refl\Psiasep_{\z}$ are obtained from $\Psiasep_{\z}$
by replacing $\tau\to\tau^{-1}$.
Let us summarize the properties of eigenfunctions:
\begin{proposition}\label{prop:ASEP_eigenfunctions}
	For all $z_1,\ldots,z_k\in\C\setminus\{-1,-\tau\}$,
	the function $\Psiasep_{\z}$ is an eigenfunction of the
	free ASEP generator $\Lasep$
	with the eigenvalue
	\begin{align}\label{ASEP_ev}
		-\frac{(1-\tau)^2}{1+\tau}\sum_{j=1}^{k}
		\frac{1}{ (1+z_j) (1+\tau/z_j)}
	\end{align}
	Moreover, $\Psiasep_{\z}$ satisfies the $(k-1)$ two-body boundary conditions \eqref{BC_ASEP}.
	Consequently,
	$\Psiasep_{\z}(\x)$ restricted to $\x\in\tWeyl{k}$
	is an eigenfunction of the
	true ASEP generator $\Hasep$ with the same eigenvalue \eqref{ASEP_ev}.

	Similarly, $\Refl\Psiasep_{\z}$
	is an eigenfunction of $(\Hasep)^{\text{transpose}}$
	with eigenvalue \eqref{ASEP_ev}.
\end{proposition}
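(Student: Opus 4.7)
The plan is to apply the coordinate Bethe ansatz recipe already executed for the $q$-Hahn operator in \S\ref{sub:coordinate_bethe_ansatz_integrability_and_the_left_eigenfunctions}--\S\ref{sub:formal_stationary_vector_pt_symmetry_and_the_right_eigenfunctions}, with $\Lasep$ and the two-body boundary operator implicit in \eqref{BC_ASEP} playing the roles previously assigned to $\Lbwd$ and $\Bbwd$. I first establish the statement for $\Psiasep_\z$ and then deduce the claim for $\Refl\Psiasep_\z$ from PT-symmetry.

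First I would verify the one-particle eigenvalue relation for the plane wave $f_z(y):=\xi^{-y}$, where $\xi:=(1+z)/(1+z/\tau)$: a direct computation gives $\nasep f_z=\lambda(z)f_z$ with $\lambda(z)=\pasep\xi^{-1}+\qasep\xi-1$, and substituting $\pasep=\tau/(1+\tau)$ and $\qasep=1/(1+\tau)$ simplifies this algebraically to
\[
\lambda(z)=-\frac{(1-\tau)^{2}}{1+\tau}\cdot\frac{1}{(1+z)(1+\tau/z)}.
\]
Since $\Psiasep_\z(\x)$ is a linear combination of products of such plane waves and the total eigenvalue $\sum_j\lambda(z_j)$ is symmetric in $\z$, the function $\Psiasep_\z$ is an eigenfunction of $\Lasep=\sum_i[\nasep]_i$ with the eigenvalue \eqref{ASEP_ev}, irrespective of the coefficients $L_\sigma(\z)$.

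The substantive step is the verification of the $(k-1)$ two-body boundary conditions \eqref{BC_ASEP}. For $k=2$ and $i=1$ a direct calculation on a two-particle plane wave $u(x_1,x_2):=f_{z_1}(x_1)f_{z_2}(x_2)$ evaluated at $x_2=x_1+1$ yields
\[
\pasep\, u(x_1+1,x_2)+\qasep\, u(x_1,x_2-1)-u(x_1,x_2)
=\frac{(1-\tau)(\tau z_1-z_2)}{\tau(1+\tau)(1+z_1)(1+z_2/\tau)}\,u(x_1,x_2).
\]
Adding this to the analogous contribution with $z_1\leftrightarrow z_2$, weighted by the coefficients $L_{\mathrm{id}}=\tfrac{z_1-\tau z_2}{z_1-z_2}$ and $L_{(1\,2)}=\tfrac{z_2-\tau z_1}{z_2-z_1}$ prescribed by \eqref{PsiASEP_bwd}, the two terms cancel exactly thanks to the algebraic identity $(z_1-\tau z_2)(\tau z_1-z_2)=(z_2-\tau z_1)(\tau z_2-z_1)$ (both sides equal $\tau(z_1^2+z_2^2)-(1+\tau^2)z_1z_2$). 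The general $k$ case reduces to this $k=2$ computation by the standard pair-swap argument: for each $1\le i\le k-1$ the contributions of $\sigma$ and $\sigma\cdot(i,i+1)$ to the $i$-th boundary relation cancel pairwise, since all remaining factors (those involving $z_{\sigma(j)}$ for $j\notin\{i,i+1\}$) are invariant under the swap. Combined with the ASEP analogue of Proposition~\ref{prop:CBA_qHahn} recalled around \eqref{L_ASEP}--\eqref{BC_ASEP}, this yields $\Hasep\Psiasep_\z=\mathrm{ev}\cdot\Psiasep_\z$ on $\tWeyl{k}$.

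The claim about $\Refl\Psiasep_\z$ follows immediately from PT-symmetry \eqref{ASEP_generator_PT}: if $\Hasep\Psiasep_\z=\mathrm{ev}\cdot\Psiasep_\z$ then $(\Hasep)^{\text{transpose}}(\Refl\Psiasep_\z)=\Refl\Hasep\Psiasep_\z=\mathrm{ev}\cdot\Refl\Psiasep_\z$. The main obstacle in this program is purely algebraic bookkeeping -- landing on the clean form \eqref{ASEP_ev} after simplification and recognising that the boundary computation reduces to the symmetric-polynomial identity above -- but nothing conceptually new arises beyond the $q$-Hahn template.
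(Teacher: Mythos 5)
Your proposal is correct and follows essentially the same route as the paper, which states Proposition \ref{prop:ASEP_eigenfunctions} as an instance of the coordinate Bethe ansatz template of \S\ref{sub:coordinate_bethe_ansatz_integrability_and_the_left_eigenfunctions}--\S\ref{sub:formal_stationary_vector_pt_symmetry_and_the_right_eigenfunctions} (free generator plus two-body boundary conditions, then the pair-swap cancellation, then PT-symmetry \eqref{ASEP_generator_PT} for $\Refl\Psiasep_{\z}$). The only point worth making explicit in your two-particle cancellation is that the plane-wave values at the coincident point contribute the extra factors $\xi_2^{-1}$ and $\xi_1^{-1}$ (with $\xi_j=(1+z_j)/(1+z_j/\tau)$), which are exactly what symmetrizes the denominators $(1+z_1)(\tau+z_2)$ and $(1+z_2)(\tau+z_1)$ so that your numerator identity $(z_1-\tau z_2)(\tau z_1-z_2)=(z_2-\tau z_1)(\tau z_2-z_1)$ finishes the job.
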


Let us also write down the eigenfunctions \eqref{PsiASEP_bwd}--\eqref{PsiASEP_fwd}
in terms of the other spectral variables $\vxi$.
Consider the linear fractional map
\begin{align}\label{xinu_ASEP}
	\xiasep(z):=\frac{1+z}{1+z/\tau},\qquad
	\xiasep^{-1}(\xi)=
	-\frac{1-\xi}{1-\xi/\tau}.
\end{align}
For $z_j=\xiasep^{-1}(\xi_j)$,
one has
\begin{align*}
	\frac{z_B-\tau z_A}{z_B-z_A}=
	\frac{\Smasep(\xi_A,\xi_B)}{\xi_B-\xi_A},
\end{align*}
where
\begin{align}\label{Sm_ASEP}
	\Smasep(\xi_1,\xi_2):=\tau-(1+\tau)\xi_1+\xi_1\xi_2
	=(1+\tau)(\pasep-\xi_1+\qasep \xi_1\xi_2)
\end{align}
is the corresponding cross-term
(up to a constant factor it is the same as in \cite{TW_ASEP1}).

Define
\begin{align}\label{PhiASEP_bwd}
	\Phiasep_{\vxi}(\x):=
	\sum_{\sigma\in S(k)}\sgn(\sigma)
	\prod_{1\le B<A\le k}
	\Smasep(\xi_{\sigma(A)},\xi_{\sigma(B)})\prod_{j=1}^{k}\xi_{\sigma(j)}^{-x_j},
\end{align}
and note that
\begin{align}\label{PhiASEP_fwd}
	(\Refl\Phiasep_{\vxi})(\x)=
	(-1)^{\frac{k(k-1)}{2}}
	\sum_{\sigma\in S(k)}\sgn(\sigma)
	\prod_{1\le B<A\le k}
	\Smasep(\xi_{\sigma(B)},\xi_{\sigma(A)})\prod_{j=1}^{k}\xi_{\sigma(j)}^{x_j}.
\end{align}
Clearly, the eigenfunctions $\Psiasep_{\z}$ and
$\Phiasep_{\vxi}$ are related in the following way:
\begin{align*}
	\Psiasep_{\xiasep^{-1}(\vxi)}(\x)=(\Vand(\vxi))^{-1}\Phiasep_{\vxi}(\x),
	\qquad
	(\Refl\Psiasep_{\xiasep^{-1}(\vxi)})(\x)=(\Vand(\vxi))^{-1}(\Refl\Phiasep_{\vxi})(\x),
\end{align*}
where $\Vand(\vxi)$ is the Vandermonde determinant.

The eigenvalues \eqref{ASEP_ev} in the coordinates $\vxi$
take a simpler form
\begin{align}\label{ASEP_ev_xi}
	\evasep(\vxi):=\sum_{j=1}^{k}\big(\pasep\xi_j^{-1}+\qasep\xi_j-1\big).
\end{align}

% subsection asep_and_its_bethe_ansatz_eigenfunctions (end)

\subsection{Relation to $q$-Hahn eigenfunctions} % (fold)
\label{sub:relation_to_q_hahn_boson_eigenfunctions}

There are two different ways
to specialize the parameters
$(q,\nu)$
in the
$q$-Hahn eigenfunctions
which lead to the
ASEP eigenfunctions.
Indeed, the ASEP cross-term $\Smasep$ \eqref{Sm_ASEP}
has only one nontrivial linear term, while the
corresponding $q$-Hahn system's cross-term
$\Sm$ \eqref{quadratic_S_matrix}
has two linear terms.
Thus, there are two ways to specialize
$\Sm$ into $\Smasep$
by forcing one of the linear terms in $\Sm$ to
vanish:\footnote{Clearly,
multiplying cross-terms by constants
is allowed as it reduces to
constant multiplicative factors
in front of the
eigenfunctions.}

\medskip

\par\noindent{\textbf{First degeneration}.}
If $q=1/\nu=\tau$ (where $\tau$ is the ASEP parameter),
	then
	\begin{align}\label{ASEP_first_degen}
		\big((1-q \nu)\Sm(\xi_1,\xi_2)\big)\vert_{q=\tau,\;\nu=\tau^{-1}}
		=\frac{1- \tau}{\tau}
		\big(
		\tau-(1+\tau)\xi_2+\xi_1\xi_2\big)=\frac{1- \tau}{\tau}\Smasep(\xi_2,\xi_1).
	\end{align}
	In this case note that for all $\n\in\Weyl{k}$,
	\begin{align}\label{st_ASEP}
		\st(\n)\vert_{q=\tau,\;\nu=\tau^{-1}}=
		\begin{cases}
			(-\tau)^{-k},&\text{if $n_1> \ldots>n_k$};\\
			0,&\text{otherwise}.
		\end{cases}
	\end{align}
	One can readily check that for any integers $x_1\le x_2\le \ldots\le x_k$
	we have
	(formally extending the definition of the
	ASEP eigenfunctions to weakly increasing sequences of indices)
	\begin{align}\label{Psi_qHahn_to_ASEP_main}
		\begin{array}{>{\displaystyle}rc>{\displaystyle}l}
			\Psiasep_{\z}(x_1,\ldots,x_k)
			&=&\Psil_{-\z}(x_k,\ldots,x_1)\vert_{q=\nu^{-1}=\tau},
			\\\rule{0pt}{15pt}
			(\Refl\Psiasep_{\z})(x_1,\ldots,x_k)
			\cdot\mathbf{1}_{x_1<\ldots<x_k}&=&(\tau^{-1}-1)^{-k}
			\Psir_{-\z}(x_k,\ldots,x_1)\vert_{q=\nu^{-1}=\tau}
			\\\rule{0pt}{15pt}
			\Phiasep_{\vxi}(x_1,\ldots,x_k)
			&=&(1-\tau^{-1})^{-\frac{k(k-1)}{2}}
			\\&&\hspace{10pt}\times
			\big((1-q\nu)^{\frac{k(k-1)}{2}}
			\Phil_{\vxi}(x_k,\ldots,x_1)\big)\vert_{q=\nu^{-1}=\tau}
			\\\rule{0pt}{15pt}
			(\Refl\Phiasep_{\vxi})(x_1,\ldots,x_k)
			\cdot\mathbf{1}_{x_1<\ldots<x_k}&=&
			(\tau^{-1}-1)^{-\frac{k(k-1)}{2}}
			\\&&\hspace{10pt}\times
			\big((1-q\nu)^{\frac{k(k-1)}{2}}
			\Phir_{\vxi}(x_k,\ldots,x_1)\big)\vert_{q=\nu^{-1}=\tau}.
		\end{array}
	\end{align}
	We will mainly work in the $\z$ spectral variables,
	the third and forth
	formulas for $\Phiasep_{\vxi}$ above are
	given for illustration.

\medskip

\par\noindent{\textbf{Second degeneration}.}
If $q=\nu=1/\tau$,
	then
	\begin{align}\label{ASEP_second_degen}
		\Sm(\xi_1,\xi_2)\vert_{q=\nu=\tau^{-1}}
		=\frac{1}{1+\tau}
		\big(
		\tau-(1+\tau)\xi_1+\xi_1\xi_2\big)=\frac{1}{1+\tau}\Smasep(\xi_1,\xi_2).
	\end{align}
	In this case $\st(\n)=1$ for all $\n\in\Weyl{k}$,
	so this second way of degeneration
	does not lead to any formulas
	with strictly ordered spatial variables.
	Therefore, our main results (Plancherel formulas, spectral biorthogonality)
	for $q=\nu=\tau^{-1}$
	will not directly correspond to formulas for the ASEP particle system.
	
	On the other hand, for $q=\nu=\tau^{-1}$
	one also could write down formulas relating the $q$-Hahn and the ASEP
	eigenfunctions
	similarly to \eqref{Psi_qHahn_to_ASEP_main}.
	Thus, our main results would imply certain other identities
	for the ASEP eigenfunctions, but with \emph{weakly ordered}
	spatial coordinates. One could readily
	write down such statements
	because the degeneration $q=\nu$ does not involve
	problems with integration contours as in
	the case $q=\nu^{-1}$
	(for the latter cf. \S \ref{sub:plancherel_formulas_for_asep} below).
	The role of some of these identities
	remains unclear, and
	we will not pursue this direction
	except for
	symmetrization identities (see
	\S \ref{sub:tracy_widom_symmetrization_identities}
	below).

\medskip

See Fig.~\ref{fig:big_scheme} on how the ASEP eigenfunctions
fit into the general picture of
Hall-Littlewood type eigenfunctions
of Bethe ansatz solvable particle systems.

% subsection relation_to_q_hahn_boson_eigenfunctions (end)

\subsection{Spectral biorthogonality of the ASEP eigenfunctions} % (fold)
\label{sub:ASEP_spectral_biorthogonality}

Let us obtain an ASEP analogue of the
spectral biorthogonality statement of
Theorem \ref{thm:spectral_biorthogonality}.
Let $\widetilde\ga_{-1}$
denote a small positively oriented
closed circle around $(-1)$ which does not encircle $(-\tau)$.
Let also $\widetilde\ga_{-1}'$ be circle around
$(-1)$
containing $\widetilde\ga_{-1}$
such that for all $z\in\widetilde\ga_{-1}$,
$w\in\widetilde\ga_{-1}'$,
one has
\begin{align*}
	\left|\frac{1+z}{1+z/\tau}\right|<\left|\frac{1+w}{1+w/\tau}\right|.
\end{align*}
The existence of $\widetilde\ga_{-1}'$ can be established similarly to
\S \ref{sub:contour_}.

\begin{theorem}\label{thm:ASEP_biorthogonality}
	Let $F(\z)$ be
	a function
	such that
	for $M$ large enough,
	\begin{align*}
		\Vand(\z)F(\z)\prod_{j=1}^{k}\left(\frac{1+z_j}{1+z_j/\tau}\right)^{-M}
	\end{align*}
	is holomorphic in the closed exterior of the contour $\widetilde\ga_{-1}$
	(including $\infty$).
	Let $G(\w)$ be such that $\Vand(\w)G(\w)$
	is holomorphic in the closed
	region between
	$\widetilde\ga_{-1}$
	and $\widetilde\ga_{-1}'$.
	Then
	\begin{align*}
		&\sum_{\x\in\tWeyl{k}}
		\left(
		\oint_{\widetilde\ga_{-1}}\ldots\oint_{\widetilde\ga_{-1}}
		\frac{d\z}{(2\pi\i)^{k}}
		(\Refl\Psiasep_{\z})(\x)\Vand(\z)F(\z)\right)
		\left(
		\oint_{\widetilde\ga_{-1}}\ldots\oint_{\widetilde\ga_{-1}}
		\frac{d\w}{(2\pi\i)^{k}}
		\Psiasep_{\w}(\x)\Vand(\w)G(\w)
		\right)
		\\&\hspace{11pt}=
		\oint_{\widetilde\ga_{-1}}\ldots\oint_{\widetilde\ga_{-1}}
		\frac{d\z}{(2\pi\i)^{k}}
		(-1)^{\frac{k(k-1)}{2}}
		\prod_{j=1}^{k}\frac{(1+z_j)(1+z_j/\tau)}{1-1/\tau}
		\prod_{A\ne B}(z_A-\tau z_B)\sum_{\sigma\in S(k)}
		\sgn(\sigma)
		F(\z)G(\sigma \z).
	\end{align*}
\end{theorem}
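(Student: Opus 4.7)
The plan is to derive Theorem \ref{thm:ASEP_biorthogonality} as a direct specialization of the $q$-Hahn spectral biorthogonality (Theorem \ref{thm:spectral_biorthogonality}) under the first ASEP degeneration $q=\tau$, $\nu=\tau^{-1}$, followed by the change of variables $z_j\to -z_j$, $w_j\to -w_j$. The identifications to exploit are the first two relations in \eqref{Psi_qHahn_to_ASEP_main}, which hold without any $(1-q\nu)$ issues (in contrast to the $\Phi$ versions), together with the observation \eqref{st_ASEP} that $\st(\n)|_{q=\tau,\nu=\tau^{-1}}$ vanishes unless $n_1>\cdots>n_k$. Thus the sum over $\Weyl{k}$ automatically restricts to strictly ordered configurations, which via the order-reversing bijection $x_j=n_{k+1-j}$ corresponds to a sum over $\tWeyl{k}$; and the scalar $(-\tau)^{-k}$ coming from $\st(\n)$ combines with the $(\tau^{-1}-1)^{-k}$ appearing in the identification of $\Refl\Psiasep_{\z}$ with $\Psir_{-\z}$ to produce the factor $\prod_j (1-1/\tau)^{-1}$ on the right-hand side of the desired identity.

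First I would rewrite Theorem \ref{thm:spectral_biorthogonality} at the specialized parameters. Under the substitution $z\to -z$, a small positively oriented circle around $1$ avoiding $\nu^{-1}=\tau$ transforms into a small positively oriented circle around $-1$ avoiding $-\tau$, i.e.\ exactly the contour $\widetilde\ga_{-1}$; likewise the gap contour $\ga'$ becomes $\widetilde\ga_{-1}'$, and the contour-gap condition \eqref{contour_gap_condition} transforms via $\xinu(-z)=\frac{1+z}{1-\nu(-z)}=\frac{1+z}{1+z/\tau}=\xiasep(z)$ (with $\nu=\tau^{-1}$) into the condition used in the ASEP statement. The holomorphicity hypotheses on the test functions $F$ and $G$ match under this same substitution. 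One must check that, for this class of test functions, the large contour $\ga$ in Theorem \ref{thm:spectral_biorthogonality} can be deformed to a small circle around $1$ (equivalently, around $-1$ after $z\to -z$); this follows since the singularities of the integrand, after removing the eigenfunctions, are only at $1$ and $\nu^{-1}$, exactly as in Remark \ref{rmk:spectral_biorthogonality_over_1_nuin}, and the hypotheses on $F,G$ control the behaviour of the $\n$-series as the contours shrink.

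Next I would carry out the substitution $z_j\to -z_j$, $w_j\to -w_j$ throughout \eqref{spectral_biorthogonality}. The Vandermonde factors pick up $(-1)^{k(k-1)/2}$ each, which cancel; the measure $d\z$ gives $(-1)^k$, which is absorbed into the factors $(1-z_j)(1-\nu z_j)\to -(1+z_j)\cdot\tau^{-1}(1+z_j/\tau)$ on the right and absorbed similarly on the left; the cross-term $\prod_{A\ne B}(z_A-qz_B)$ is invariant since $(-1)^{k(k-1)}=1$ and becomes $\prod_{A\ne B}(z_A-\tau z_B)$; finally the $\sum_\sigma \sgn(\sigma)F(\z)G(\sigma\z)$ structure is preserved. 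Using \eqref{Psi_qHahn_to_ASEP_main} to rewrite $\Psil_{-\z}(n_k,\ldots,n_1)=\Psiasep_{\z}(x_1,\ldots,x_k)$ and $\Psir_{-\z}(n_k,\ldots,n_1)=(\tau^{-1}-1)^k(\Refl\Psiasep_{\z})(\x)\mathbf{1}_{\x\in\tWeyl{k}}$, and collecting the scalar prefactors, yields precisely \eqref{spectral_biorthogonality} in the ASEP form claimed.

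The main obstacle I anticipate is not conceptual but bookkeeping: one must track the various scalar factors $(\tau^{-1}-1)^{-k}$, $(-\tau)^{-k}$ (from $\st(\n)$), $\tau^{-k}$ (from rewriting $(1-\nu z_j)|_{\nu=1/\tau}=-(z_j-\tau)/\tau$ after $z\to -z$), and the sign $(-1)^{k(k-1)/2}$ from the relation between $\Psir$ ordered as $(n_k,\ldots,n_1)$ versus the reflection $\Refl\Psiasep$, and verify that they combine to the single factor $\prod_{j=1}^k (1-1/\tau)^{-1}$ on the right-hand side. A secondary technical point is the analytic continuation in $\nu$: the original Theorem \ref{thm:spectral_biorthogonality} was stated for $0\le\nu<1$, whereas $\nu=\tau^{-1}>1$. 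However both sides of \eqref{spectral_biorthogonality} are rational in $\nu$ and $q$, and the contours used remain valid at the specialized parameters, so the identity extends to $(\tau,\tau^{-1})$ by analytic continuation; the convergence of the $\n$-sum at these parameters follows from the contour gap condition exactly as in Proposition \ref{prop:convergence_of_biorthogonality}.
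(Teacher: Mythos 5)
Your overall strategy --- degenerating the $q$-Hahn spectral biorthogonality (Theorem \ref{thm:spectral_biorthogonality}) at $q=\tau$, $\nu=\tau^{-1}$, negating the variables, and using \eqref{Psi_qHahn_to_ASEP_main} together with \eqref{st_ASEP} --- is the same as the paper's. The genuine gap is in how you justify the small-circle form of the identity and the parameter specialization for the \emph{general} test functions of the statement. Your assertion that the contour $\ga$ can be deformed to a small circle around $1$ because ``the singularities of the integrand, after removing the eigenfunctions, are only at $1$ and $\nu^{-1}$'' is true only when $F$ and $G$ are Laurent polynomials in $\frac{1-z_j}{1-\nu z_j}$; that is precisely the content of Remark \ref{rmk:spectral_biorthogonality_over_1_nuin}. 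For the general classes admitted in Theorem \ref{thm:spectral_biorthogonality} (and in Theorem \ref{thm:ASEP_biorthogonality}), $F$ may have singularities anywhere strictly inside the contour --- for instance the test function used in the proof of Theorem \ref{thm:spectral_Plancherel} has poles at $z_i=qz_j$ --- and $\Vand(\w)G(\w)$ is only assumed holomorphic in an annular neighborhood of the contour, so the proposed deformation fails as stated. Note also that the test functions of the ASEP theorem are constrained only relative to $\widetilde\ga_{-1}$ and $\widetilde\ga_{-1}'$, so they need not satisfy the hypotheses of Theorem \ref{thm:spectral_biorthogonality} relative to the large contour $\ga$ at all; one cannot literally ``apply the theorem and then shrink.''

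The analytic-continuation step suffers from the same problem: for fixed general $F,G$ the left-hand side of \eqref{spectral_biorthogonality} is an infinite sum over $\n$ whose terms, hypotheses and contours all depend on $\nu$, so ``both sides are rational in $\nu$ and $q$'' is unjustified; moreover the contour $\ga$ of Definition \ref{def:contours} does not even exist at $\nu=1/q$ (it must contain $q$, being a contour through $1$ containing $q\ga$, yet exclude $\nu^{-1}=q$), so there is no large-contour statement to continue unless one has already passed to the small-circle form --- which, as above, is only established for Laurent polynomials. The paper closes exactly this gap in two moves that are missing from your proposal: it first proves the identity for Laurent-polynomial test functions, where Remark \ref{rmk:spectral_biorthogonality_over_1_nuin} legitimizes the small circles and the specialization $q=\tau$, $\nu=1/\tau$ with negated variables is a finite, rational identity; it then extends to the general classes by verifying convergence of the $\x$-series directly at the ASEP parameters (as in Proposition \ref{prop:convergence_of_biorthogonality}, using the gap between $\widetilde\ga_{-1}$ and $\widetilde\ga_{-1}'$, which you did correctly identify) and approximating $F$ and $G$ uniformly by Laurent polynomials via Runge's theorem. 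Your scalar bookkeeping through \eqref{Psi_qHahn_to_ASEP_main} and \eqref{st_ASEP} is fine; what must replace your contour-deformation and rationality claims is this Laurent-polynomial reduction plus the Runge approximation argument.
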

\begin{proof}
	If the test functions $F(\z)$ and $G(\w)$ are Laurent polynomials
	in
	$\frac{1+z_j}{1+z_j/\tau}$
	and
	$\frac{1+w_j}{1+w_j/\tau}$,
	respectively, then the statement readily follows from
	the spectral biorthogonality for the $q$-Hahn eigenfunctions.
	Indeed, for Laurent polynomials one can let the integration contours
	in Theorem \ref{thm:spectral_biorthogonality}
	to be
	small positively oriented closed circles $\ga_k$ around $1$ (see Remark \ref{rmk:spectral_biorthogonality_over_1_nuin}).
	Then,
	using \eqref{Psi_qHahn_to_ASEP_main}, we can
	specialize
	$q=\tau$, $\nu=1/\tau$, and this will turn
	contours $\ga_k$
	into our contours $\widetilde\ga_{-1}$.
	Note also that the
	negation of all variables $\z$
	and $\w$ introduces an extra factor of $(-1)^{k}$ in the right-hand side.

	For more general test functions satisfying the above properties, one sees that the
	series in $\x$ in the spectral biorthogonality identity
	converges similarly to Proposition \ref{prop:convergence_of_biorthogonality}.
	Then the desired claim follows by
	approximating test functions by Laurent polynomials,
	by virtue of Runge's theorem.
\end{proof}

One can also formulate an analogue
of Theorem \ref{thm:spectral_biorthogonality_xi}
concerning the eigenfunctions $\Phiasep_{\vxi}$,
but we will not write it down.

% subsection ASEP_spectral_biorthogonality (end)

\subsection{Plancherel formulas for the ASEP} % (fold)
\label{sub:plancherel_formulas_for_asep}

By $\Cc^{k}_{\z;\mathrm{ASEP}}$ let us denote the space of symmetric Laurent
polynomials in $\frac{1+z_j}{1+z_j/\tau}$, $1\le j\le k$.
Define the following direct and (candidate)
inverse transforms:
\begin{align}
	\label{Pld_asep}
	(\Plde^{\mathrm{ASEP}} f)(\z)&:=
	\sum_{\x\in\tWeyl{k}}f(\x)(\Refl\Psiasep_{\z})(\x),\\
	\label{Pli_asep}
	(\Plie^{\mathrm{ASEP}} G)(\x)&:=
	\oint_{\widetilde\ga_{-1}}\frac{dz_1}{2\pi\i}
	\ldots
	\oint_{\widetilde\ga_{-1}}\frac{dz_k}{2\pi\i}
	\prod_{B<A}\frac{z_A-z_B}{z_A-\tau z_B}
	\prod_{j=1}^{k}
	\frac{1-1/\tau}{(1+z_j)(1+z_j/\tau)}
	\left(\frac{1+z_j}{1+z_j/\tau}\right)^{-x_j}
	G(\z).
\end{align}
The contours $\widetilde\ga_{-1}$ are small positive
circles around $(-1)$.
The operator $\Plde^{\mathrm{ASEP}}$
takes functions from $\tWc^{k}$ to
$\Cc^{k}_{z;\mathrm{ASEP}}$,
and the operator $\Plie^{\mathrm{ASEP}}$ acts in the opposite direction.

Since all integration contours in \eqref{Pli_asep}
are the same and $G$ is symmetric, we can symmetrize
that formula (cf. the proof of Proposition \ref{prop:nesting_unnesting}), and rewrite the ASEP
inverse transform as
\begin{align}\label{Pli_asep_symmetrized}
	(\Plie^{\mathrm{ASEP}} G)(\x)&=
	\oint_{\widetilde\ga_{-1}}
	\ldots
	\oint_{\widetilde\ga_{-1}}
	d\Plm^{(\tau)}_{(1^{k})}(\z)
	\prod_{j=1}^{k}
	\frac{1-1/\tau}{(1+z_j)(1+z_j/\tau)}
	\Psiasep_{\z}(\x)
	G(\z),
\end{align}
where $\displaystyle d\Plm_{(1^{k})}^{(\tau)}(\z)$
is the Plancherel measure
(Definition \ref{def:Plancherel_measure}, see also \eqref{dmu_large})
with $q$ replaced by $\tau$ as it should be under our degeneration.

The results of \cite{TW_ASEP1} imply the 
spatial Plancherel formula for the ASEP:

\begin{theorem}[The spatial Plancherel formula]\label{thm:spatial_Plancherel_ASEP}
	The map
	$\Plspatiale^{\mathrm{ASEP}}=\Plie^{\mathrm{ASEP}}\Plde^{\mathrm{ASEP}}$ acts as the identity
	operator on $\tWc^{k}$. Equivalently,
	for any $\x,\y\in\tWeyl{k}$,
	\begin{align}
		\label{spatial_plancherel_formula_ASEP_identity}
		\oint_{\widetilde\ga_{-1}}\ldots\oint_{\widetilde\ga_{-1}}
		d\Plm^{(\tau)}_{(1^{k})}(\z)\prod_{j=1}^{k}\frac{1-1/\tau}{(1+z_j)(1+z_j/\tau)}
		\Psiasep_{\z}(\x)(\Refl\Psiasep_{\z})(\y)=\mathbf{1}_{\x=\y}.
	\end{align}
\end{theorem}
\begin{proof}
	We derive the statement of the theorem from
	\cite[Theorem 2.1]{TW_ASEP1}.
	The desired identity 
	can be rewritten
	using \eqref{Pli_asep}
	in the following form:
	\begin{align}\label{ASEP_Plancherel_desymm}
		\oint_{\widetilde\ga_{-1}}\frac{dz_1}{2\pi\i}
		\ldots
		\oint_{\widetilde\ga_{-1}}\frac{dz_k}{2\pi\i}
		\prod_{B<A}\frac{z_A-z_B}{z_A-\tau z_B}
		\prod_{j=1}^{k}
		\frac{1-1/\tau}{(1+z_j)(1+z_j/\tau)}
		\left(\frac{1+z_j}{1+z_j/\tau}\right)^{-x_j}
		(\Refl\Psiasep_{\z})(\y)=\mathbf{1}_{\x=\y}.
	\end{align}
	For $\sigma\in S(k)$, denote by
	\begin{align*}
		\Across_\si(\z):=\prod_{B<A}\frac{z_A-z_B}{z_A-\tau z_B}
		\prod_{B<A}\frac{z_{\sigma(A)}-\tau z_{\sigma(B)}}{z_{\sigma(A)}-z_{\sigma(B)}}=
		\prod_{B<A\colon \sigma(B)>\sigma(A)}\left(-\frac{z_{\sigma(A)}-\tau z_{\sigma(B)}}{z_{\sigma(B)}-\tau z_{\sigma(A)}}\right)
	\end{align*}
	the term which arises from the cancellation of
	two double products, one inside the integral, and another
	coming from $(\Refl\Psiasep_{\z})$
	(the minus sign is the ratio of two expressions without $\tau$).
	Then \eqref{ASEP_Plancherel_desymm} takes the form
	\begin{align}\label{ASEP_Plancherel_desymm2}
		\sum_{\sigma\in S(k)}
		\oint_{\widetilde\ga_{-1}}\ldots\oint_{\widetilde\ga_{-1}}
		\Across_{\sigma}(\z)
		\prod_{j=1}^{k}\frac{1-1/\tau}{(1+z_j)(1+z_j/\tau)}
		\left(\frac{1+z_j}{1+z_j/\tau}\right)^{-x_j+y_{\sigma^{-1}(j)}}\frac{dz_j}{2\pi\i}
		=\mathbf{1}_{\x=\y}.
	\end{align}
	Under the change of variables $\xiasep$ \eqref{xinu_ASEP},
	we have
	\begin{align*}
		\Across_{\sigma}(\xiasep^{-1}(\vxi))=\prod_{A<B\colon \sigma(A)>\sigma(B)}S_{\sigma(A)\sigma(B)},
		\qquad
		S_{\al\be}=-\frac{\pasep+\qasep \xi_\al\xi_\be-\xi_\al}{\pasep+\qasep \xi_\al\xi_\be-\xi_\be}
		=-\frac{\Smasep(\xi_\al,\xi_\be)}{\Smasep(\xi_\be,\xi_\al)},
	\end{align*}
	so $\Across_{\sigma}$ is exactly the same product over inversions in the permutation $\sigma$
	as in \cite[\S2]{TW_ASEP1}.
	Thus, we may rewrite \eqref{ASEP_Plancherel_desymm2} in the coordinates
	$\vxi$ as follows:
	\begin{align}\label{ASEP_Plancherel_TW}
		\sum_{\sigma\in S(k)}
		\oint_{\widetilde\ga_{0}}\ldots\oint_{\widetilde\ga_{0}}
		\Across_{\sigma}(\xiasep^{-1}(\vxi))
		\prod_{j=1}^{k}
		\xi_{\sigma(j)}^{-x_{\sigma(j)}+y_{j}-1}\frac{d\xi_j}{2\pi\i}
		=\mathbf{1}_{\x=\y},
	\end{align}
	where the integration contours $\widetilde\ga_{0}$ are now small positively oriented
	closed circles around $0$.
	This is exactly the Tracy-Widom's formula
	\cite[(2.3)]{TW_ASEP1} for time $t=0$
	up to swapping $\x\leftrightarrow\y$.
\end{proof}

\begin{remark}[Transition probabilities for the ASEP]
	\label{rmk:ASEP_transition_probabilities}
	Identity \eqref{ASEP_Plancherel_TW} to which we have reduced the desired 
	identity of Theorem \ref{thm:spatial_Plancherel_ASEP}
	is a $t=0$ version of the general formula for the
	transition probabilities in the ASEP 
	\cite[(2.3)]{TW_ASEP1}:
	\begin{align*}
		\mathbb{P}(\text{from $\x$ to $\y$ in time $t$})=
		\sum_{\sigma\in S(k)}
		\oint_{\widetilde\ga_{0}}\ldots\oint_{\widetilde\ga_{0}}
		\Across_{\sigma}(\xiasep^{-1}(\vxi))
		e^{t\cdot \evasep(\vxi)}\prod_{j=1}^{k}
		\xi_{\sigma(j)}^{-x_{\sigma(j)}+y_{j}-1}\frac{d\xi_j}{2\pi\i}.
	\end{align*}
	These transition probabilities
	solve the forward Kolmogorov equation
	for the ASEP (cf. Proposition~\ref{prop:fwd_eqn}
	for its $q$-Hahn generalization). 
	This explains the appearance of the
	eigenvalues \eqref{ASEP_ev_xi} under the integral.
\end{remark}
\begin{remark}
	\label{rmk:update_ASEP_statement}
	The ``proof'' of Theorem~\ref{thm:spatial_Plancherel_ASEP}
	given in the previous version of this paper claimed to
	deduce \eqref{spatial_plancherel_formula_ASEP_identity}
	from the spatial biorthogonality of the $q$-Hahn eigenfunctions 
	(Corollary \ref{cor:C_biorthogonality})
	by plugging $\nu=1/q$ into the small contour
	formula (with all integration contours equal to $\ga_k$ which are small circles
	around~$1$). 
	Indeed, identity \eqref{spatial_plancherel_formula_ASEP_identity} looks
	as if one takes the $q$-Hahn small contour formula, 
	removes all terms corresponding to 
	partitions $\lambda\ne (1^k)$,
	and then plugs in $\nu=1/q$, $q=\tau$.
	Theorem \ref{thm:spatial_Plancherel_ASEP} 
	\emph{a posteriori} implies that under this specialization, the contribution
	of all additional terms with $\lambda\ne (1^k)$
	vanishes. 
	
	First, observe that the substitution $\nu=1/q$ before the integration
	might change the value of the integral because 
	of the factors of the form $\frac{1}{1-q \nu w_i}$ in the integrand for $\lambda\ne (1^k)$.
	Before the substitution $\nu=1/q$ the residue at $w_i=(q \nu)^{-1}$ was not picked
	while after the substitution we have $1-q \nu w_i=1-w_i$, so this
	factor adds an extra pole inside the integration contour.
	
	With the agreement that the substitution $\nu=1/q$ 
	occures after the integration, 
	the ``proof'' of 
	Theorem \ref{thm:spatial_Plancherel_ASEP} 
	presented in an earlier version of this paper
	asserted a stronger statement:
	For each individual
	$\lambda\ne(1^k)$ and any two permutations
	$\sigma, \omega\in S(k)$ 
	(coming from $\Psi_{\vec z}^\ell$ and $\Psi_{\vec z}^{r}$, respectively)
	the corresponding term vanishes after setting $\nu=1/q$.
	This assertion is wrong.
	(We are grateful to Yier Lin for pointing this out to us.)

	For example, take
	$\vec x=(10,9,8,7,6,5)$ and $\vec y=(5,4,3,2,1,0)$.
	The summand in the integrand in the small contour formula (before setting $q=1/\nu=\tau$)
	corresponding to $\lambda=(3,2,1)$, and permutations
	$\sigma=321546$ and $\omega=645123$ has the form
	\begin{align*}
		&
		\mathrm{const}\cdot
		\frac
		{(1-\nu  q w_1)^7 (1-\nu  q w_2)^3}
		{(1-w_1)^7 (1-w_2)^3 (1-w_3)}
		\\&\hspace{20pt}\times
		\frac
		{
			(q w_1-w_2) 
			\left(q^2 w_1-w_2\right)^2 
			\left(q^3 w_1-w_2\right) 
			\left(q^2 w_1-w_3\right) 
			\left(q^3 w_1-w_3\right) 
			(q w_2-w_3) 
			\left(q^2 w_2-w_3\right) 
		}
		{
			(w_1-w_2) 
			(w_1-w_3) 
			(w_2-w_3) 
			(q w_2-w_1)^2 
			\left(q^2 w_2-w_1\right) 
			(q w_3-w_1) 
			(q w_3-w_2) 
		}
		\\&\hspace{20pt}\times
		f_1(w_1)f_2(w_2)f_3(w_3).
	\end{align*}
	Here $f_1(w_1)$ is independent of $w_2,w_3$ and has no 
	zeroes or poles at $w_1=1$ and $w_1=1/(q\nu)$, 
	and similarly for $f_2(w_2)$ and $f_3(w_3)$.
	One can check that the residue of this term at 
	$w_3=1$, $w_2=1$, and $w_1=1$\footnote{In this order. Note that the
	result of the integration depends on the order of taking the residues
	for individual summands due to the presence of the factors of the form $w_i-w_j$
	in the denominators. These factors cancel out after summing over all permutations $\sigma,\omega$, and 
	each summand indexed by $\lambda$ is independent of the order of integration.}
	does not vanish when setting $q=1/\nu$.

	We thus have replaced the incorrect ``proof'' of 
	Theorem~\ref{thm:spatial_Plancherel_ASEP}
	by its reduction to the earlier result of Tracy and Widom \cite{TW_ASEP1}.
	We have also removed a spatial biorthogonality
	statement for the XXZ eigenfunctions 
	present in the previous version of this paper in
	\S\ref{sub:formulas_for_svd_1_and_complex_q_dilp_}
	because it was based on a similar direct degeneration which is incorrect.
\end{remark}
\begin{remark}
	Let us discuss another (possibly related)
	subtlety in the spatial biorthogonality of the ASEP eigenfunctions
	concerning
	contributions of individual permutations $\sigma$
	to formulas \eqref{ASEP_Plancherel_desymm2} and \eqref{ASEP_Plancherel_TW}.
	Recall that
	in the $q$-Hahn setting we have proved a similar identity
	\eqref{Plancherel_identity_representation}
	by showing that the contribution of each individual
	$\sigma\ne id$ vanishes (Lemma \ref{lemma:Plspatial_permutation_vanishing}).

	This is not the case for the ASEP:
	For example, if $k=3$, $\sigma=(321)$ is the transposition $1\leftrightarrow 3$,
	$\x=(0,1,2)$, $\y=(-2,-1,0)$, then
	\begin{align*}
		\oint_{\widetilde\ga_{-1}}\ldots\oint_{\widetilde\ga_{-1}}
		\Across_{\sigma}(\z)
		\prod_{j=1}^{3}\frac{1-1/\tau}{(1+z_j)(1+z_j/\tau)}
		\left(\frac{1+z_j}{1+z_j/\tau}\right)^{-x_j+y_{\sigma^{-1}(j)}}\frac{dz_j}{2\pi\i}
		=-\frac{(1+\tau)^{2}}{\tau^4},
	\end{align*}
	which is nonzero.

	In fact, the above
	contribution is compensated by
	the summand corresponding to
	$\sigma=(312)$, which has the opposite sign.
	The proof of
	\eqref{ASEP_Plancherel_TW}
	in \cite{TW_ASEP1}
	employs nontrivial combinatorics
	to determine such cancellations in general.
\end{remark}

\begin{theorem}[The spectral Plancherel formula]\label{thm:spectral_Plancherel_ASEP}
	The map $\Plspectrale^{\mathrm{ASEP}}=\Plde^{\mathrm{ASEP}}\Plie^{\mathrm{ASEP}}$
	acts as the identity
	operator on $\Cc^{k}_{\z;\mathrm{ASEP}}$.
\end{theorem}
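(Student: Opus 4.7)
The plan is to adapt the proof of Theorem \ref{thm:spectral_Plancherel} (the $q$-Hahn spectral Plancherel formula) to the ASEP setting, using the ASEP spectral biorthogonality (Theorem \ref{thm:ASEP_biorthogonality}) in place of Theorem \ref{thm:spectral_biorthogonality}. A simplification in the ASEP case is that the symmetrized form \eqref{Pli_asep_symmetrized} already expresses $\Plie^{\mathrm{ASEP}}$ as an integral over a single contour $\widetilde\ga_{-1}$, so no analogue of the nesting/unnesting Proposition \ref{prop:nesting_unnesting} is required.

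First I would expand
\begin{align*}
(\Plspectrale^{\mathrm{ASEP}} G)(\w) = \sum_{\x\in\tWeyl{k}} (\Refl\Psiasep_{\w})(\x)\,(\Plie^{\mathrm{ASEP}}G)(\x)
\end{align*}
using \eqref{Pli_asep_symmetrized} for the inner factor. To prove $\Plspectrale^{\mathrm{ASEP}} G = G$ for $G\in \Cc^{k}_{\z;\mathrm{ASEP}}$, it suffices (as in the proof of Theorem \ref{thm:spectral_Plancherel}) to verify the integrated identity $\oint F\cdot \Plspectrale^{\mathrm{ASEP}}G = \oint F\cdot G$ against an arbitrary $F\in\Cc^{k}_{\z;\mathrm{ASEP}}$, with all $\w$- and $\z$-integrations over $\widetilde\ga_{-1}$. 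I would then interchange the summation over $\x$ with the $\w$- and $\z$-integrations, justified exactly as in Proposition \ref{prop:convergence_of_biorthogonality}: first deform the $\z$-contour outward from $\widetilde\ga_{-1}$ to $\widetilde\ga_{-1}'$, so that the strict inequality $|\xiasep(z)|<|\xiasep(w)|$ yields a uniform geometric bound on the $\x$-series. The resulting triple integral is then precisely of the shape handled by Theorem \ref{thm:ASEP_biorthogonality}.

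Applying Theorem \ref{thm:ASEP_biorthogonality} with the test functions
\begin{align*}
\widetilde G(\z) := \frac{(-1)^{\frac{k(k-1)}{2}}\,\Vand(\z)\,G(\z)}{\prod_{A\ne B}(z_A - \tau z_B)}\prod_{j=1}^{k}\frac{1 - 1/\tau}{(1+z_j)(1+z_j/\tau)},\qquad \widetilde F(\w) := \frac{F(\w)}{\Vand(\w)},
\end{align*}
the explicit factors $\prod_{j}(1+z_j)(1+z_j/\tau)/(1-1/\tau)$ and $\prod_{A\ne B}(z_A-\tau z_B)$ appearing on the right-hand side of Theorem \ref{thm:ASEP_biorthogonality} cancel with those built into $\widetilde G$, and the symmetrization $\sum_\sigma \sgn(\sigma)\widetilde F(\sigma\z)\,\Vand(\z)/\Vand(\sigma\z)$ collapses to $k!\,F(\z)/\Vand(\z)$. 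After absorbing the factor $1/k!$ present in $d\Plm^{(\tau)}_{(1^{k})}$, one is left with $\oint F(\w) G(\w)\, d\Plm^{(\tau)}_{(1^{k})}(\w)$, which, for symmetric $FG$, is equal to $\oint F(\w)G(\w)\, d\w/(2\pi\i)^{k}$ on the single contour $\widetilde\ga_{-1}$, as desired.

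The main obstacle will be verifying the analyticity hypotheses of Theorem \ref{thm:ASEP_biorthogonality} for $\widetilde F$ and $\widetilde G$. For $\widetilde F$ one needs $\Vand(\w)\widetilde F(\w)\,\prod_j \xiasep(w_j)^{-M} = F(\w)\,\prod_j \xiasep(w_j)^{-M}$ to be holomorphic outside $\widetilde\ga_{-1}$, which is immediate for $M$ large since $F$ is a Laurent polynomial in the $\xiasep(w_j)$. For $\widetilde G$ one needs $\Vand(\w)\widetilde G(\z)$ to be holomorphic in the closed annular region between $\widetilde\ga_{-1}$ and $\widetilde\ga_{-1}'$; the poles at $z_j = -1$ lie inside $\widetilde\ga_{-1}$, those at $z_j=-\tau$ lie outside $\widetilde\ga_{-1}'$ (provided both contours are chosen as sufficiently small circles around $-1$, which is possible because $\tau<1$), and the potentially troublesome poles at $z_A = \tau z_B$ are avoided by shrinking $\widetilde\ga_{-1}'$ enough that $\tau z_B$ lies strictly inside $\widetilde\ga_{-1}$ for every $z_B$ in the annulus---possible since $\tau\cdot(-1) = -\tau$ is separated from $-1$. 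Making these geometric choices compatible with the inequality \eqref{contour_gap_condition} (transferred to the ASEP setting) is a small technicality that parallels the construction in \S\ref{sub:contour_}.
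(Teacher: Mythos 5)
Your proposal is correct and is exactly the paper's route: the paper proves this theorem by a one-line reduction to the ASEP spectral biorthogonality (Theorem \ref{thm:ASEP_biorthogonality}), carried out just as in the proof of Theorem \ref{thm:spectral_Plancherel}, with precisely the test functions $F(\w)/\Vand(\w)$ and $\Vand(\z)G(\z)\prod_{A\ne B}(z_A-\tau z_B)^{-1}\prod_j\frac{1-1/\tau}{(1+z_j)(1+z_j/\tau)}$ that you write down, using the already-symmetrized single-contour form \eqref{Pli_asep_symmetrized} so that no nesting/unnesting step is needed. Two small slips worth fixing: the poles at $z_A=\tau z_B$ are avoided not because $\tau z_B$ can be made to lie inside $\widetilde\ga_{-1}$ (for $z_B$ near $-1$ it sits near $-\tau$, outside both contours) but because $-\tau$ is bounded away from $-1$, so sufficiently small circles around $-1$ keep the closed annulus free of these poles; and after the cancellations the right-hand side of Theorem \ref{thm:ASEP_biorthogonality} reduces directly to the plain integral $\oint F(\w)G(\w)\,\frac{d\w}{(2\pi\i)^k}$ rather than to $\oint F(\w)G(\w)\,d\Plm^{(\tau)}_{(1^{k})}(\w)$, which differs from it by the Plancherel density and is not equal to it for general symmetric $FG$.
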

\begin{proof}
	This follows from the spectral biorthogonality
	(Theorem \ref{thm:ASEP_biorthogonality})
	in a way similar to Theorem \ref{thm:spectral_biorthogonality}.
\end{proof}

\begin{remark}
	One could equip spaces $\tWc^{k}$
	and $\Cc^{k}_{z;\mathrm{ASEP}}$
	with suitable bilinear pairings
	which correspond to each other under the
	spatial and the spectral Plancherel
	isomorphism theorems (\ref{thm:spatial_Plancherel_ASEP}
	and \ref{thm:spectral_Plancherel_ASEP}), but we will not write
	these pairings down.
\end{remark}

\subsection{Tracy-Widom symmetrization identities} % (fold)
\label{sub:tracy_widom_symmetrization_identities}

The Tracy-Widom symmetrization identity for the ASEP
\cite[(1.6)]{TW_ASEP1}
can be obtained as a corollary of our main results.
In an exactly similar manner one could
get a generalization
corresponding to the step Bernoulli
(i.e., half-stationary)
initial condition, first obtained as
\cite[(9)]{TW_ASEP4}.
Both combinatorial
identities served as crucial steps
towards the asymptotic analysis of the ASEP,
see
\cite{TW_ASEP4},
\cite{BorodinCorwinSasamoto2012},
\cite{CorwinQuastel2013},
\cite{BCG6V}.
To shorten the formulas, we will only focus on the step case,
i.e., on the identity \cite[(1.6)]{TW_ASEP1}
which is equivalent to
\begin{align}\label{TW_16}
	\begin{array}{l>{\displaystyle}l}
		&\sum_{\sigma\in S(k)}
		\prod_{i<j}\frac{\Smasep(\xi_{\sigma(i)},\xi_{\sigma(j)})}
		{\xi_{\sigma(j)}-\xi_{\sigma(i)}}
		\frac{\xi_{\sigma(2)}\xi_{\sigma(3)}^{2}\ldots \xi_{\sigma(k)}^{k-1}}
		{(1-\xi_{\sigma(1)}\xi_{\sigma(2)}\ldots \xi_{\sigma(k)})
		(1-\xi_{\sigma(2)}\ldots \xi_{\sigma(k)})\ldots
		(1-\xi_{\sigma(k)})}
		\\&\hspace{300pt}=
		\frac{\tau^{\frac{k(k-1)}2}}{\prod_{j=1}^{k}(1-\xi_j)}.
	\end{array}
\end{align}
This identity can be derived from the spectral
Plancherel formula for the ASEP (Theorem \ref{thm:spectral_Plancherel_ASEP})
in a way similar to \S \ref{sub:symmetrization_formula},
but a shorter way to achieve it is to simply specialize the existing $(q,\nu)$
identity \eqref{q_nu_symmetrization_identity_step}.
We will consider both ways to specialize the parameters
as in \S \ref{sub:relation_to_q_hahn_boson_eigenfunctions}.

\medskip

\par\noindent{\textbf{First degeneration}.}
For $q=1/\nu=\tau$, the sum over $\n$
in \eqref{q_nu_symmetrization_identity_step} becomes
(see \eqref{st_ASEP})
\begin{align*}
	\sum_{n_1\ge \ldots\ge n_k\ge0}
	\st(\n)
	\prod_{j=1}^{k}
	\xi_{\sigma(j)}^{n_j}
	&=
	(-\tau)^{-k}
	\sum_{n_1> \ldots> n_k\ge0}
	\prod_{j=1}^{k}
	\xi_{\sigma(j)}^{n_j}
	\\&=
	(-\tau)^{-k}
	\frac{\xi_{\sigma(1)}^{k-1}\xi_{\sigma(2)}^{k-2}\ldots\xi_{\sigma(k-1)}}
	{(1-\xi_{\sigma(1)})(1-\xi_{\sigma(1)}\xi_{\sigma(2)})
	\ldots
	(1-\xi_{\sigma(1)}\ldots\xi_{\sigma(k)})}.
\end{align*}
Using \eqref{ASEP_first_degen}, we can readily rewrite \eqref{q_nu_symmetrization_identity_step} as
\eqref{TW_16} (note that it involves multiplying the permutation $\sigma$
over which we are summing
by the permutation $j\leftrightarrow k+1-j$, $1\le j\le k$).

\medskip

\par\noindent{\textbf{Second degeneration}.}
For $q=\nu=1/\tau$, we have in \eqref{q_nu_symmetrization_identity_step}:
\begin{align*}
	\sum_{n_1\ge \ldots\ge n_k\ge0}
	\st(\n)
	\prod_{j=1}^{k}
	\xi_{\sigma(j)}^{n_j}
	&=
	\sum_{n_1\ge \ldots\ge n_k\ge0}
	\prod_{j=1}^{k}
	\xi_{\sigma(j)}^{n_j}
	\\&=
	\frac{1}
	{(1-\xi_{\sigma(1)})(1-\xi_{\sigma(1)}\xi_{\sigma(2)})
	\ldots
	(1-\xi_{\sigma(1)}\ldots\xi_{\sigma(k)})}.
\end{align*}
Therefore, with the help of \eqref{ASEP_second_degen}, we can rewrite \eqref{q_nu_symmetrization_identity_step} as
\begin{align*}
	&\displaystyle
	\sum_{\sigma\in S(k)}
	\prod_{B<A}
	\frac{\Smasep(\xi_{\sigma(B)},\xi_{\sigma(A)})}
	{\xi_{\sigma(A)}-\xi_{\sigma(B)}}
	\frac{1}
	{(1-\xi_{\sigma(1)})(1-\xi_{\sigma(1)}\xi_{\sigma(2)})
	\ldots
	(1-\xi_{\sigma(1)}\ldots\xi_{\sigma(k)})}
	=
	\prod_{j=1}^{k}
	\frac{1}{1-\xi_j},
\end{align*}
which is equivalent to identity \cite[(1.7)]{TW_ASEP1},
and can be obtained from \eqref{TW_16} by interchanging $\pasep\leftrightarrow\qasep$
and renaming $\xi_i\to\xi_{k+1-i}^{-1}$.

We see that while
all results under
the first degeneration $q=1/\nu=\tau$
are directly relevant to the ASEP (as discussed earlier in this section),
the second degeneration
$q=\nu=1/\tau$
provides a symmetrization identity
for the ASEP, too (however, the two identities thus obtained are
equivalent to each other).
A similar effect can be observed
in identities
like \cite[(9)]{TW_ASEP4}
(corresponding
to
the step Bernoulli
initial condition) which follow from the
$q$-Hahn level identity \eqref{q_nu_symmetrization_identity}.

% subsection tracy_widom_symmetrization_identities (end)

% section application_to_asep (end)

\section{Application to six-vertex model and XXZ spin chain} % (fold)
\label{sec:application_to_six_vertex_model}

In this section we briefly explain how
the eigenfunctions of the conjugated $q$-Hahn operator
are related to eigenfunctions
of the transfer matrix of the (asymmetric) six-vertex model.
The six-vertex model is one of the most well-known solvable models
in statistical physics.
Its first solution was obtained by
Lieb
\cite{Lieb67}.
See also the book by Baxter
\cite{baxter2007exactly}
and the lecture notes by Reshetikhin
\cite{reshetikhin2010lectures}
for details and perspectives.
In \S \ref{sub:xxz} we describe the connection
to eigenfunctions of the Heisenberg XXZ quantum spin chain
(which is a certain degeneration of the
six-vertex model).
Then we discuss how our main results are
applied to eigenfunctions of the six-vertex model and the XXZ spin chain.

\subsection{Transfer matrix and its eigenfunctions} % (fold)
\label{sub:transfer_matrix_and_its_eigenfunctions}

We will work only in ``infinite volume'' (i.e., on the lattice $\Z$),
which is similar to the setup of the rest of the present paper.
In the \emph{line representation} of the six-vertex model, configurations
$\x\in \tWeyl{k}$ (see \eqref{tWeyl}) encode locations of vertical lines
in a horizontal slice of the infinite square grid. We assume that there are
$k$ such vertical lines, see Fig.~\ref{fig:6v1}
(one of the properties of the six-vertex model is that
the number of vertical lines is preserved).
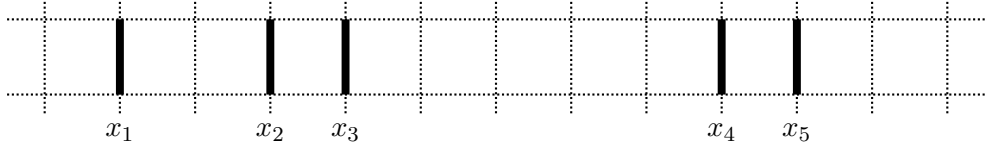
\begin{figure}[htb]
	\begin{center}
		\begin{tikzpicture}
		[scale=1, thick]
		\def\e{.5}
		\foreach \i in {-6, ..., 6}
		{
			\draw[densely dotted] (\i,-1.5*\e) -- (\i,1.5*\e);
		}
		\draw[densely dotted] (-6.5,\e) -- (6.5,\e);
		\draw[densely dotted] (-6.5,-\e) -- (6.5,-\e);
		\foreach \i in {-5,-3,-2,3,4}
		{
			\draw[line width=3] (\i,-\e) -- (\i,\e);
		}
		\node at (-5,-2*\e) {$x_1$};
		\node at (-3,-2*\e) {$x_2$};
		\node at (-2,-2*\e) {$x_3$};
		\node at (3,-2*\e) {$x_4$};
		\node at (4,-2*\e) {$x_5$};
		\end{tikzpicture}
	\end{center}
  	\caption{A configuration of vertical lines in one ``slice'' of the six-vertex model.}
  	\label{fig:6v1}
\end{figure}
Define the \emph{Boltzmann weights}
at each vertex of the square grid depending on the
configuration of lines at this vertex, see Fig.~\ref{fig:6v2}.
\begin{figure}[htb]
	\begin{center}
		\begin{tikzpicture}
			[scale=1, thick]
			\def\e{.8}
			\def\p{.15}
			\foreach \i in {0,2.5,5,7.5,10,12.5}
			{
				\draw[densely dotted] (\i,-\e) -- (\i,\e);
				\draw[densely dotted] (\i-\e,0) -- (\i+\e,0);
			}
			\draw[line width=3] (2.5,-\e)--++(0,\e-\e*\p)--++(\p*\e,\p*\e)--++(\e-\e*\p,0);
			\draw[line width=3] (2.5,\e)--++(0,-\e+\e*\p)--++(-\p*\e,-\p*\e)--++(-\e+\e*\p,0);
			\draw[line width=3] (5,-\e)--++(0,2*\e);
			\draw[line width=3] (7.5-\e,0)--++(2*\e,0);
			\draw[line width=3] (10,-\e)--++(0,\e-\e*\p)--++(\p*\e,\p*\e)--++(\e-\e*\p,0);
			\draw[line width=3] (12.5,\e)--++(0,-\e+\e*\p)--++(-\p*\e,-\p*\e)--++(-\e+\e*\p,0);
			\node at (0,-1.5*\e) {$a_1$};
			\node at (2.5,-1.5*\e) {$a_2$};
			\node at (5,-1.5*\e) {$b_1$};
			\node at (7.5,-1.5*\e) {$b_2$};
			\node at (10,-1.5*\e) {$c_1$};
			\node at (12.5,-1.5*\e) {$c_2$};
		\end{tikzpicture}
	\end{center}
  	\caption{Boltzmann weights that depend on the configuration of
  	lines at a vertex.}
  	\label{fig:6v2}
\end{figure}
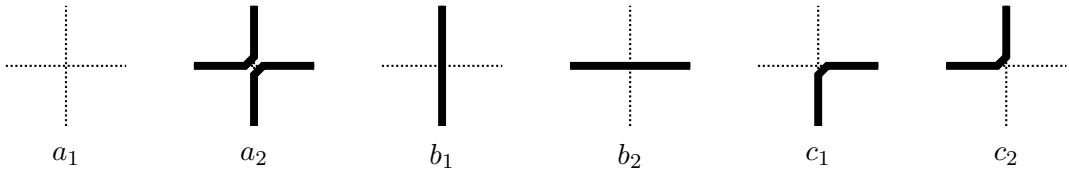
Here $a_1,a_2,b_1,b_2,c_1$, and $c_2$
are some \emph{positive} real parameters.
All other configurations of lines at a vertex are forbidden.

Broadly speaking, the \emph{six-vertex model} on a \emph{finite} subset of the
square grid
assigns the weight
$W(\mathsf{conf})=a_1^{\#[a_1]}
a_2^{\#[a_2]}
b_1^{\#[b_1]}
b_2^{\#[b_2]}
c_1^{\#[c_1]}
c_2^{\#[c_2]}$
to each allowed configuration $\mathsf{conf}$ of lines inside this subset
(with certain specified boundary conditions).
Here $\#[a_1]$
is the number of vertices of type $a_1$ inside this finite
subset, etc.
Since weights of all configurations are positive,
this can be interpreted as
a probabilistic model
(i.e., one can speak about random configuration of lines).

We will not define the six-vertex model
in our infinite setting
(see \cite{BCG6V}). Instead we focus
on the corresponding transfer matrix which is well-defined
for configurations such that there are finitely many
lines on each horizontal slice. We put
$a_1=1$, as this is the weight
of the empty line configuration,
the only one that
repeats infinitely often.

The rows and columns of the
\emph{transfer matrix}
$T_k(\x,\y)$
are indexed by line configurations
$\x,\y\in\tWeyl{k}$.
We assume that these configurations are
put on top of each other (see the picture below).
Define
\begin{align*}
	T_k(\x,\y):=\begin{cases}
		a_2^{\#[a_2]}
		b_1^{\#[b_1]}
		b_2^{\#[b_2]}
		c_1^{\#[c_1]}
		c_2^{\#[c_2]},
		&\parbox{.55\textwidth}{if there is a configuration of horizontal lines
		connecting $\x$ to $\y$ (such a configuration is unique if it exists);}\\
		0,&\text{otherwise}.
	\end{cases}
\end{align*}
For example, the configuration of lines on
Fig.~\ref{fig:6v3} represents a
particular element of the transfer
matrix equal to $T_k(\x,\y)=a_2b_1b_2^{3}c_1^{3}c_2^{3}$.
\begin{figure}[htb]
	\begin{center}
	\begin{tikzpicture}
		[scale=1, thick]
		\def\e{.5}
		\def\p{.2}
		\foreach \i in {-6, ..., 6}
		{
			\draw[densely dotted] (\i,-1.5*\e) -- (\i,3.5*\e);
		}
		\draw[densely dotted] (-6.5,\e) -- (6.5,\e);
		\draw[densely dotted] (-6.5,-\e) -- (6.5,-\e);
		\draw[densely dotted] (-6.5,3*\e) -- (6.5,3*\e);
		\node at (-5,-2*\e) {$x_1$};
		\node at (-3,-2*\e) {$x_2$};
		\node at (-2,-2*\e) {$x_3$};
		\node at (3,-2*\e) {$x_4$};
		\node at (4,-2*\e) {$x_5$};
		\node at (-4,4*\e) {$y_1$};
		\node at (-2,4*\e) {$y_2$};
		\node at (1,4*\e) {$y_3$};
		\node at (3,4*\e) {$y_4$};
		\node at (6,4*\e) {$y_5$};
		\draw[line width=3] (-5,-\e)--++(0,\e)--++(0,\e-\e*\p)--++(\p*\e,\p*\e)--++(2*\e-2*\e*\p,0)--++(\p*\e,\p*\e)--(-4,3*\e);
		\draw[line width=3] (-3,-\e)--++(0,\e)--++(0,\e-\e*\p)--++(\p*\e,\p*\e)--++(2*\e-2*\e*\p,0)--++(\p*\e,\p*\e)--(-2,3*\e);
		\draw[line width=3] (-2,-\e)--++(0,\e)--++(0,\e-\e*\p)--++(\p*\e,\p*\e)--(1-\e*\p,\e)--++(\p*\e,\p*\e)--(1,3*\e);
		\draw[line width=3] (3,-\e)--(3,3*\e);
		\draw[line width=3] (4,-\e)--++(0,\e)--++(0,\e-\e*\p)--++(\p*\e,\p*\e)--(6-\e*\p,\e)--++(\p*\e,\p*\e)--(6,3*\e);
	\end{tikzpicture}
	\end{center}
  	\caption{Two consecutive ``slices'' of the six-vertex model defining
  	an element of the transfer matrix.}
  	\label{fig:6v3}
\end{figure}
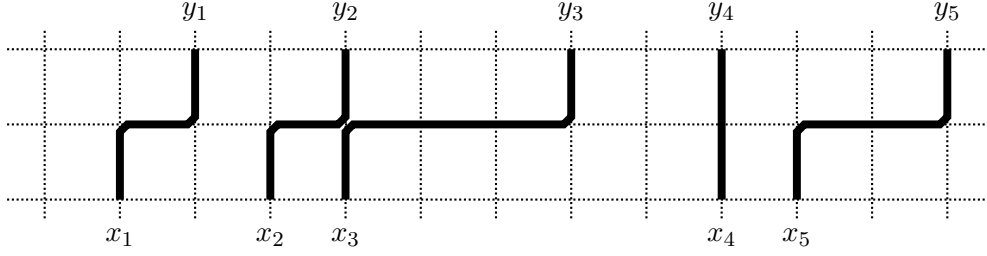
\begin{remark}
	The transfer matrix (and the corresponding six-vertex model)
	possesses a clear asymmetry because of the very different roles played by
	$a_1$ and $a_2$ vertices.
\end{remark}

Following the coordinate Bethe ansatz approach
of \cite{Lieb67}, \cite{baxter2007exactly}, the eigenfunctions
of the transfer matrix can be computed (note that no
Bethe equations are needed as we work on the infinite lattice).
Let us define
\begin{align}\label{Sm_6V}
	\Smv(\xi_1,\xi_2):=1- \frac{a_2+b_1b_2-c_1c_2}{b_1}\xi_1+\frac{a_2b_2}{b_1}\xi_1\xi_2,
\end{align}
and
\begin{align}\label{Phi_6V}
	\Phi_{\vxi}^{6V}(\x):=\sum_{\sigma\in S(k)}
	\sgn(\sigma)\prod_{1\le B<A\le k}\Smv(\xi_{\sigma(A)},\xi_{\sigma(B)})
	\prod_{j=1}^{k}\xi_{\sigma(j)}^{-x_j},\qquad \x\in\tWeyl{k}.
\end{align}

\begin{proposition}[\cite{BCG6V}]\label{prop:6v_ef}
	For all sufficiently small complex $\xi_j$'s (such
	that the series in $\x$ below converges), the
	functions \eqref{Phi_6V} are eigenfunctions of the transposed
	transfer matrix $T_k$ in the sense that
	\begin{align*}
		\sum_{\x\in\tWeyl{k}}
		\Phi_{\vxi}^{6V}(\x)T_k(\x,\y)
		=
		\left(\prod_{j=1}^{k}\frac{b_1+(c_1c_2-b_1b_2)\xi_j}{1-b_2\xi_j}\right)
		\Phi_{\vxi}^{6V}(\y),\qquad \y\in\tWeyl{k}.
	\end{align*}
\end{proposition}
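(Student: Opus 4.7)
The plan is to prove this by the coordinate Bethe ansatz, in direct parallel with the development in §\ref{sub:coordinate_bethe_ansatz_integrability_and_the_left_eigenfunctions} for the $q$-Hahn operator (or, alternatively, by degenerating the conjugated $q$-Hahn eigenfunctions from §\ref{sec:spectral_theory_for_the_conjugated_hahn_boson_operator}). The strategy is to (i) diagonalize a single-particle transfer matrix to read off the eigenvalue, (ii) introduce a free $k$-particle transfer matrix $T_k^{\mathrm{free}}$ acting on all of $\Z^k$ whose eigenfunctions are symmetrized plane waves, and (iii) show that the choice of cross-term $\Smv$ in \eqref{Sm_6V} is precisely what forces $(k-1)$ two-body boundary conditions guaranteeing that the free action agrees with the true action of $T_k$ on $\tWeyl{k}$.

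For the first step I would compute $\sum_{x\in\Z}\xi^{-x}T_1(x,y)$ by summing the geometric series corresponding to segments of horizontal line (alternating $c_2,b_2,\ldots,b_2,c_1$ inside a finite interval and $a_1=1$ elsewhere), together with the diagonal $b_1$ contribution; this yields the factor $\frac{b_1+(c_1c_2-b_1b_2)\xi}{1-b_2\xi}$ claimed as the single-particle eigenvalue. For the second step, the ``free'' transfer matrix $T_k^{\mathrm{free}}$ is obtained by summing over all (possibly out-of-order or coincident) intermediate horizontal line configurations, and one readily sees that $\prod_j\xi_{\sigma(j)}^{-x_j}$ is an eigenfunction with eigenvalue $\prod_j\lambda(\xi_{\sigma(j)})$, hence so is any symmetrized combination like $\Phi^{6V}_{\vxi}$. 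For the third step, one has to locate where $T_k$ and $T_k^{\mathrm{free}}$ disagree: this happens exactly on the ``diagonals'' $x_i+1=x_{i+1}$, where the true model forbids two vertical lines at the same column while the free action does not. The resulting discrepancy produces a two-body operator $\mathcal{B}_i$ acting on functions of $(x_i,x_{i+1})$, and matching the ansatz with $\mathcal{B}_i\Phi^{6V}_{\vxi}|_{x_i+1=x_{i+1}}=0$ fixes the ratio $\Smv(\xi_A,\xi_B)/\Smv(\xi_B,\xi_A)$ that appears in the antisymmetrization in \eqref{Phi_6V}.

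The main obstacle is in step three: one must carefully bookkeep the six vertex weights at neighboring columns with multiple horizontal segments, and verify that the quadratic polynomial \eqref{Sm_6V}, with its specific coefficients $-(a_2+b_1b_2-c_1c_2)/b_1$ and $a_2b_2/b_1$, is indeed (up to a scalar) the unique solution of the resulting two-body functional equation. A cleaner route that avoids this bookkeeping is to match $\Smv$ with the conjugated $q$-Hahn cross-term $\Smd$ of \eqref{Sm_theta} under the specialization $\dilp\nu=q$ (so the $\xi_2$ coefficient of $\Smd$ vanishes), after which Proposition \ref{thm:Plancherel_isom_theta} and the construction of $\Phild_{\vxi}$ immediately yield the eigenfunction property; under this same specialization, $\std(\n)$ vanishes on non-strict configurations, which is what collapses $\Weyl{k}$ to $\tWeyl{k}$ and identifies the conjugated $q$-Hahn operator (suitably rescaled) with $T_k^{\mathrm{transpose}}$. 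The computation of the eigenvalue in either route reduces to the single-particle analysis in step one.
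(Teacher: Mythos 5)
First, note that the paper does not prove this statement: Proposition \ref{prop:6v_ef} is quoted from \cite{BCG6V}, so the only available comparison is with the direct verification carried out there. Your first route (diagonalize the one-line transfer matrix, then reduce the $k$-line action to a free action plus two-body conditions) is the right kind of argument and is in the spirit of the cited proof, and your single-particle step is correct: since $T_1(x,y)=b_1\mathbf{1}_{y=x}+c_1c_2\,b_2^{\,y-x-1}\mathbf{1}_{y>x}$, one gets $\sum_{x}\xi^{-x}T_1(x,y)=\xi^{-y}\bigl(b_1+\tfrac{c_1c_2\,\xi}{1-b_2\xi}\bigr)=\xi^{-y}\,\tfrac{b_1+(c_1c_2-b_1b_2)\xi}{1-b_2\xi}$ for $|b_2\xi|<1$. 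However, the genuinely hard step is only sketched, and the interaction is not located where you put it: the true and free actions differ not on the diagonal $x_{i+1}=x_i+1$ of the incoming configuration, but whenever the horizontal segment of line $i$ reaches the column $x_{i+1}$ (the constraint $y_i\le x_{i+1}$, with the single vertex weight $a_2$ replacing $c_1c_2$ when $y_i=x_{i+1}$). Because each line can travel arbitrarily far in one application of $T_k$ (parallel, non-local update), setting up and verifying the correct two-body reduction is precisely the content of the proposition, and your proposal does not carry it out.

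The proposed ``cleaner route'' through \S\ref{sec:spectral_theory_for_the_conjugated_hahn_boson_operator} does not close this gap and contains a concrete error. With $\dilp\nu=q$ the $\xi_2$-coefficient of $\Smd$ in \eqref{Sm_theta} indeed vanishes, but then $\std(\n)\equiv1$ for \emph{all} $\n\in\Weyl{k}$ (one needs $(\dilp\nu;q)_c=0$ for $c\ge2$, i.e.\ $\dilp\nu=q^{-1}$, to kill clusters), so there is no mechanism collapsing $\Weyl{k}$ to $\tWeyl{k}$; the specialization that does this is $\nu=1/(q\dilp)$, which is the one actually used in \S\ref{ssub:matching_the_cross_terms}. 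More importantly, even with the correct specialization the shortcut is circular: nothing in this paper identifies $T_k^{\text{transpose}}$ (suitably rescaled or not) with a specialization of the conjugated $q$-Hahn operator, and indeed $\dilp\nu=q^{-1}$ lies outside the admissible range $0\le\mu\le\dilp\nu<1$, so only the eigenfunctions, not the operator, degenerate (compare the remark that the ASEP itself is not a degeneration of the $q$-Hahn system, even though its eigenfunctions are). Matching the cross-term \eqref{Sm_6V} shows that $\Phi^{6V}_{\vxi}$ is a limit of $\Phild_{\vxi}$; it does not show that $\Phi^{6V}_{\vxi}$ diagonalizes the six-vertex transfer matrix, which is exactly the assertion at stake and the reason the paper cites \cite{BCG6V} for it. To make your argument complete you would have to carry out the direct verification of route one.
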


The eigenfunctions of the transposed transfer matrix
correspond to left eigenfunctions
in the notation of the rest of the paper.
One could similarly write down the statement for the
right eigenfunctions:
\begin{corollary}[\cite{BCG6V}]
	For all sufficiently small complex $\xi_j$'s (such
	that the series in $\y$ below converges),
	the eigenfunctions of the transfer matrix
	$T_k$ are
	\begin{align}\label{PhiR_6V}
		(\Refl\Phi_{\vxi}^{6V})(\x)=(-1)^{\frac{k(k-1)}{2}}\sum_{\sigma\in S(k)}
		\sgn(\sigma)\prod_{1\le B<A\le k}\Smv(\xi_{\sigma(B)},\xi_{\sigma(A)})
		\prod_{j=1}^{k}\xi_{\sigma(j)}^{x_j},
	\end{align}
	in the sense that
	\begin{align*}
		\sum_{\y\in\tWeyl{k}}
		T_k(\x,\y)
		(\Refl\Phi_{\vxi}^{6V})(\y)
		=
		\left(\prod_{j=1}^{k}\frac{b_1+(c_1c_2-b_1b_2)\xi_j}{1-b_2\xi_j}\right)
		(\Refl\Phi_{\vxi}^{6V})(\x),\qquad \x\in\tWeyl{k},
	\end{align*}
	where $\Refl$ is the space reflection operator
	\eqref{reflection_operator}.
\end{corollary}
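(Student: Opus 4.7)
The plan is to deduce the right eigenfunction statement directly from Proposition \ref{prop:6v_ef} by exploiting a $180^\circ$-rotation symmetry of the transfer matrix. The key structural observation is that the matrix element $T_k(\x,\y)$ is invariant under simultaneous reflection and swap of top/bottom strips:
\begin{align*}
T_k(\x,\y)=T_k(\Refl\y,\Refl\x)\qquad\text{for all }\x,\y\in\tWeyl{k}.
\end{align*}
This is because rotating the two-row line configuration between $\x$ and $\y$ by $180^\circ$ preserves the number of $a_2$, $b_1$, and $b_2$ vertices but swaps the counts of $c_1$ and $c_2$ vertices. Since $\x$ and $\y$ are finite configurations, every maximal horizontal segment inside the strip has one endpoint of type $c_1$ and one of type $c_2$, so $\#[c_1]=\#[c_2]$ in any admissible configuration, and the Boltzmann weight depends on $c_1,c_2$ only through the symmetric combination $(c_1 c_2)^{\#[c_1]}$.

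First I would verify that the explicit formula \eqref{PhiR_6V} for $(\Refl\Phi^{6V}_\vxi)(\x)$ really is the image of $\Phi^{6V}_\vxi$ under the spatial reflection \eqref{reflection_operator}. This is a direct symmetric-group manipulation: substituting $\sigma=\tau\circ r$, where $r$ is the reversal permutation $j\mapsto k+1-j$, changes signs by $(-1)^{k(k-1)/2}$, converts $\prod_j\xi_{\sigma(j)}^{-x_{k+1-j}}$ into $\prod_j\xi_{\tau(j)}^{x_j}$, and after relabeling $B'=k+1-A$, $A'=k+1-B$ converts $\prod_{B<A}\Smv(\xi_{\sigma(A)},\xi_{\sigma(B)})$ into $\prod_{B'<A'}\Smv(\xi_{\tau(B')},\xi_{\tau(A')})$, matching \eqref{PhiR_6V} exactly.

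Then, writing $\Lambda(\vxi):=\prod_{j=1}^k\frac{b_1+(c_1c_2-b_1b_2)\xi_j}{1-b_2\xi_j}$ for the eigenvalue from Proposition \ref{prop:6v_ef}, I would compute
\begin{align*}
\sum_{\y\in\tWeyl{k}}T_k(\x,\y)(\Refl\Phi^{6V}_\vxi)(\y)
&=\sum_{\y}T_k(\Refl\y,\Refl\x)\,\Phi^{6V}_\vxi(\Refl\y)\\
&=\sum_{\y'}\Phi^{6V}_\vxi(\y')\,T_k(\y',\Refl\x)
=\Lambda(\vxi)\,\Phi^{6V}_\vxi(\Refl\x)\\
&=\Lambda(\vxi)\,(\Refl\Phi^{6V}_\vxi)(\x),
\end{align*}
where the first step uses the $180^\circ$ symmetry, the second is the change of summation variable $\y'=\Refl\y$, and the third invokes Proposition \ref{prop:6v_ef}. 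Convergence for sufficiently small $|\xi_j|$ is inherited from the convergence in Proposition \ref{prop:6v_ef} since $\Refl$ acts on the spatial variables only.

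The main obstacle is not the algebra — which is essentially a transpose/reflection bookkeeping — but rather a careful geometric justification of the identity $T_k(\x,\y)=T_k(\Refl\y,\Refl\x)$, i.e.\ the statement that the unique line configuration between $\x$ (bottom) and $\y$ (top) has equal numbers of $c_1$ and $c_2$ vertices. This needs the conservation-of-segments argument above, together with the fact that the $180^\circ$ rotation of an admissible configuration is again admissible with the vertex-type permutation $(a_1,a_2,b_1,b_2,c_1,c_2)\mapsto(a_1,a_2,b_1,b_2,c_2,c_1)$.
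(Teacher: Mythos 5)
Your proposal is correct and follows essentially the same route as the paper: the paper's proof is precisely Proposition \ref{prop:6v_ef} combined with the symmetry relation $T_k\big((y_1,\ldots,y_k),(x_1,\ldots,x_k)\big)=T_k\big((-x_k,\ldots,-x_1),(-y_k,\ldots,-y_1)\big)$, which is your identity $T_k(\x,\y)=T_k(\Refl\y,\Refl\x)$ after relabeling. The only difference is that the paper states this symmetry without proof, while you supply the (correct) $180^\circ$-rotation and $\#[c_1]=\#[c_2]$ counting justification, together with the routine check that \eqref{PhiR_6V} is indeed $\Refl$ applied to $\Phi^{6V}_{\vxi}$.
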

\begin{proof}
	This follows from Proposition \ref{prop:6v_ef}
	and a simple symmetry relation for the
	transfer matrix: $T_k\big((y_1,\ldots,y_k),(x_1,\ldots,x_k)\big)=
	T_k\big((-x_k,\ldots,-x_1),(-y_k,\ldots,-y_1)\big)$.
\end{proof}

% subsection transfer_matrix_and_its_eigenfunctions (end)

\subsection{Connection to eigenfunctions of the conjugated $q$-Hahn operator} % (fold)
\label{sub:connection_to_eigenfunctions_of_the_conjugated_q_hahn_boson_operator}

\subsubsection{Matching the cross-terms} % (fold)
\label{ssub:matching_the_cross_terms}

We will now match the eigenfunctions
of the six-vertex model to
certain degenerations
of the eigenfunctions
of the conjugated
$q$-Hahn operator.
This degeneration is very similar
to the passage from the $q$-Hahn eigenfunctions
to the ASEP eigenfunctions (see \S \ref{sub:relation_to_q_hahn_boson_eigenfunctions}).

Recall that the conjugated $q$-Hahn eigenfunctions
$\Phild_{\vxi}(\n)$ \eqref{Phild} are given
by the same formula as \eqref{Phi_6V},
but with a different cross-term \eqref{Sm_theta}.
Slightly rewriting this cross-term, we arrive at
\begin{align}\label{Sm_theta_to_6V}
	\frac{1-q\dilp\nu}{\dilp(1-q)}\Smd(\xi_1,\xi_2)=
	1-\frac{1-q\dilp\nu}{\dilp(1-q)}\xi_1+
	\frac{q-\dilp\nu}{\dilp(1-q)}\xi_2+
	\frac{\nu}{\dilp}\xi_1\xi_2.
\end{align}
We can set $\nu=1/(q\dilp)$ or $\nu=q/\dilp$ to kill one of the linear terms in
\eqref{Sm_theta_to_6V}, and arrive at an expression which one can hope to match
to \eqref{Sm_6V}. We will not consider the second degeneration
because it leads to weakly ordered spatial variables
(see the discussion in \S \ref{sub:relation_to_q_hahn_boson_eigenfunctions}).
Thus, we are left with
\begin{align}\label{6V_cross_term_xi}
	\left.\left(\frac{1-q\dilp\nu}{\dilp(1-q)}\Smd(\xi_1,\xi_2)\right)\right\vert_{\nu=1/(q\dilp)}=
	1-
	\frac{1+q^{-1}}{\dilp}\xi_2+
	\frac{q^{-1}}{\dilp^{2}}\xi_1\xi_2,
\end{align}
and so for any $x_1\le \ldots\le x_k$:
\begin{align}\label{Phi_theta_to_6V}
	\begin{array}{>{\displaystyle}l>{\displaystyle}l}
		&\bigg(\left(-\frac{1-q\dilp\nu}{\dilp(1-q)}\right)^{\frac{k(k-1)}{2}}
		\Phild_{\vxi}(x_k,\ldots,x_1)
		\bigg)\bigg\vert_{\nu=1/(q\dilp)}
		\\&\hspace{80pt}=
		\sum_{\sigma\in S(k)}
		\sgn(\sigma)\prod_{B<A}
		\left(1-
		\frac{1+q^{-1}}{\dilp}\xi_{\sigma(A)}+
		\frac{q^{-1}}{\dilp^{2}}\xi_{\sigma(A)}\xi_{\sigma(B)}\right)
		\prod_{j=1}^{k}\xi_{\sigma(j)}^{-x_{j}}.
	\end{array}
\end{align}

It is now possible to match the cross-term in the right-hand side
of \eqref{Phi_theta_to_6V}
to the six-vertex cross-term \eqref{Sm_6V}.
Let us denote
(we include $a_1=1$ in formulas below for symmetry)
\begin{align}\label{delta_delta_6V}
	\svd:=\frac{a_1a_2+b_1b_2-c_1c_2}{2\sqrt{a_1a_2b_1b_2}},
	\qquad
	\svpar:=\frac{a_2 b_2}{a_1 b_1}.
\end{align}
\begin{proposition}
	If the pair of parameters $(\dilp,q)$ takes one of
	the two values $(\dilp_{\pm},q_{\pm})$, where
	\begin{align}\label{q_theta_6V}
		\begin{array}{>{\displaystyle}rc>{\displaystyle}l}
		\dilp_{\pm}&:=&\frac{1}{\sqrt{\svpar}}\left(\svd\pm\sqrt{\svd^{2}-1}\right);\\
		q_{\pm}&:=&-1+2\svd^{2}\mp2\svd\sqrt{\svd^{2}-1},
		\end{array}
	\end{align}
	then for any integers $x_1\le \ldots\le x_k$:
	\begin{align}\label{Phi_6V_degeneration}
		\begin{array}{>{\displaystyle}rc>{\displaystyle}l}
		\bigg(\left(-\frac{1-q\dilp\nu}{\dilp(1-q)}\right)^{\frac{k(k-1)}{2}}
		\Phild_{\vxi}(x_k,\ldots,x_1)
		\bigg)\bigg\vert_{\nu=1/(q\dilp)}&=&
		\Phi_{\vxi}^{6V}(x_1,\ldots,x_k),\\
		\bigg(\left(\frac{1-q\dilp\nu}{\dilp(1-q)}\right)^{\frac{k(k-1)}{2}}
		\Phird_{\vxi}(x_k,\ldots,x_1)
		\bigg)\bigg\vert_{\nu=1/(q\dilp)}&=&
		(\Refl\Phi_{\vxi}^{6V})(x_1,\ldots,x_k)\cdot\mathbf{1}_{x_1<\ldots<x_k}.
		\end{array}
	\end{align}
\end{proposition}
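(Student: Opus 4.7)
\medskip

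The plan is to prove both identities in \eqref{Phi_6V_degeneration} by direct algebraic matching of cross-terms; no new analytic input beyond what is in \eqref{Phild}--\eqref{Sm_theta}, \eqref{Phi_6V}--\eqref{PhiR_6V}, and the reduction \eqref{Sm_theta_to_6V}--\eqref{6V_cross_term_xi} will be needed.

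First, I would handle the space reversal. Starting from \eqref{Phild} with $n_j = x_{k+1-j}$, I would substitute $\tau = \sigma \circ w_0$ where $w_0(j) = k+1-j$; since $\sgn(w_0) = (-1)^{k(k-1)/2}$ and the condition $B < A$ is preserved under the involution $A \mapsto k+1-A$, $B \mapsto k+1-B$ (modulo swapping which variable is which), this converts the product of $\Smd(\xi_{\sigma(A)}, \xi_{\sigma(B)})$ into a product of $\Smd(\xi_{\tau(B)}, \xi_{\tau(A)})$ and the product $\prod_j \xi_{\sigma(j)}^{-n_j}$ into $\prod_j \xi_{\tau(j)}^{-x_j}$. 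Factoring $\left(\frac{\dilp(1-q)}{1-q\dilp\nu}\right)^{k(k-1)/2}$ out of each cross-term and using \eqref{Sm_theta_to_6V}, the specialization $\nu = 1/(q\dilp)$ kills the standalone $-\xi_1$ coefficient (since $1 - q\dilp\nu = 0$) and produces the degenerate cross-term in \eqref{6V_cross_term_xi}.

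Next, I would match the resulting expression to \eqref{Phi_6V}. With the cross-term now $1 - \frac{1+q^{-1}}{\dilp}\xi_{\tau(A)} + \frac{q^{-1}}{\dilp^2}\xi_{\tau(B)}\xi_{\tau(A)}$ (the argument order after reindexing puts the linear term on the larger index), equality with $\Smv(\xi_{\tau(A)}, \xi_{\tau(B)})$ reduces to the two scalar equations
\begin{equation*}
\frac{1+q^{-1}}{\dilp} = \frac{a_2 + b_1 b_2 - c_1 c_2}{b_1} = 2\svd\sqrt{\svpar}, \qquad \frac{q^{-1}}{\dilp^2} = \frac{a_2 b_2}{b_1} = \svpar,
\end{equation*}
where I used $a_1 = 1$ and the definitions \eqref{delta_delta_6V}. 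The second equation gives $q\dilp^2 = 1/\svpar$; substituting into the first and writing $\eta := \sqrt{\svpar}\,\dilp = 1/(\sqrt{q}\,\dilp)^{-1}\cdot\ldots$ yields the quadratic $\eta^2 - 2\svd\eta + 1 = 0$, whose roots are precisely $\eta_\pm = \svd \pm \sqrt{\svd^2 - 1}$; this gives $\dilp_\pm = \eta_\pm/\sqrt{\svpar}$ and, from $q\dilp^2 = 1/\svpar$, $q_\pm = 1/\eta_\pm^2 = -1 + 2\svd^2 \mp 2\svd\sqrt{\svd^2 - 1}$, matching \eqref{q_theta_6V}. Tracking the overall constant factor $(-1)^{k(k-1)/2}\bigl(\frac{\dilp(1-q)}{1-q\dilp\nu}\bigr)^{k(k-1)/2}$ coming from reversal and normalization, multiplying $\Phild$ by $\bigl(-\frac{1-q\dilp\nu}{\dilp(1-q)}\bigr)^{k(k-1)/2}$ cancels everything and leaves exactly $\Phi_{\vxi}^{6V}$, proving the first identity.

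For the right eigenfunction, the argument is parallel with one essential extra point, which I expect to be the main conceptual (if not computational) step: the prefactor $\std(\n) = \prod_{j=1}^{M(\n)} (\dilp\nu;q)_{c_j}/(q;q)_{c_j}$ degenerates sharply at $\dilp\nu = 1/q$. Indeed $(1/q;q)_{c_j}$ contains the factor $1 - q^{0} = 0$ whenever $c_j \ge 2$, so $\std(\n)$ vanishes unless every cluster size equals $1$, i.e., unless $n_1 > \cdots > n_k$ strictly; in the strict case only $c_j = 1$ factors survive and $\std(\n) = \bigl((1-1/q)/(1-q)\bigr)^k = (-1/q)^k$. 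Reversing the spatial variables and repeating the reindexing/matching above then produces $(\Refl\Phi_{\vxi}^{6V})$ times the indicator $\mathbf{1}_{x_1 < \cdots < x_k}$, with the sign $(-1)^{k(k-1)/2}$ in \eqref{PhiR_6V} accounted for by the reindexing; collecting the constants (including $(-1)^k(1-q)^k q^{k(k-1)/2}$ from \eqref{Psird}, the $\std$ value $(-1/q)^k$, and the normalization $\bigl(\frac{1-q\dilp\nu}{\dilp(1-q)}\bigr)^{k(k-1)/2}$) gives the stated identity for $\Phird$. The main obstacle will be careful bookkeeping of the many multiplicative constants and signs, but no deeper difficulty arises.
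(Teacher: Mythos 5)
Your proposal is correct and follows essentially the same route as the paper: reverse the spatial variables via $\tau=\sigma\circ w_0$, absorb $\sgn(w_0)=(-1)^{k(k-1)/2}$ into the normalizing prefactor, specialize $\nu=1/(q\dilp)$ to kill one linear term of the normalized cross-term \eqref{Sm_theta_to_6V}, and match the surviving coefficients with \eqref{Sm_6V}, which yields the quadratic $\eta^2-2\svd\eta+1=0$ for $\eta=\sqrt{\svpar}\,\dilp$ and hence \eqref{q_theta_6V}; for the right eigenfunction the vanishing of $\std(\n)$ at $\dilp\nu=1/q$ unless all cluster sizes equal $1$ produces the indicator $\mathbf{1}_{x_1<\cdots<x_k}$, exactly as in the paper. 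One small correction to your bookkeeping: the constant to track for the second identity is $\left(\frac{1-q}{1-\dilp\nu}\right)^k=(-q)^k$ from \eqref{Phird} itself, which cancels $\std=(-1/q)^k$, rather than the prefactor $(-1)^k(1-q)^kq^{k(k-1)/2}$ of \eqref{Psird}, which does not enter $\Phird$ at all.
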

This degeneration of eigenfunctions
should be compared to the ASEP setting
(second two formulas in \eqref{Psi_qHahn_to_ASEP_main}).
\begin{proof}
	The values \eqref{q_theta_6V} are obtained by matching coefficients
	in cross-terms in \eqref{Phi_theta_to_6V} and \eqref{Sm_6V},
	which leads to a quadratic equation. Formulas \eqref{Phi_6V_degeneration}
	then follow from \eqref{Phild}--\eqref{Phird}.
\end{proof}

One can readily see that for the solutions \eqref{q_theta_6V}
we have $q_+q_-=1$, and, moreover,
\begin{enumerate}[$\bullet$]
	\item If $\svd>1$, then $0<q_+<1$ and $\dilp_{\pm}>0$;
	\item If $-1<\svd<1$, then $q_{\pm}$ are complex numbers lying on the unit circle.
	They have nonzero imaginary part except $\svd=0$ when $q_-=q_+=-1$.
	The numbers $\dilp_{\pm}$
	also have nonzero imaginary part;
	\item If $\svd<-1$, then $0<q_-<1$ and $\dilp_{\pm}<0$;
	\item If $\svd=\pm1$, then $q_{+}=q_{-}=1$ and $\dilp_+=\dilp_-=\pm\svpar^{-1/2}$.
\end{enumerate}

% \begin{remark}
% 	The case $|\svd|>1$ is sometimes called (e.g., see
% 	\cite{Gutkin2000}) the \emph{hyperbolic} regime,
% 	and $|\svd|<1$ is referred to as the \emph{elliptic} regime.
% 	The boundary $\svd=\pm1$ is said to be \emph{parabolic}.
% \end{remark}

\begin{remark}[Stochastic six-vertex model]
	If $\svd>1$, then it is possible to choose $\svpar>0$ so that $\dilp_+=1$.
	This means that the eigenfunctions of the transfer matrix
	of the six-vertex model coincide with those of the ASEP
	(see Fig.~\ref{fig:big_scheme}).
	Moreover, the transfer matrix itself can be interpreted as a
	Markov transition operator associated with a certain discrete-time
	stochastic particle system --- the (asymmetric) stochastic
	six-vertex model.
	This model is studied in detail in
	\cite{BCG6V}.
	All results discussed in \S \ref{sec:application_to_asep}
	(Plancherel isomorphisms, spectral biorthogonality, symmetrization identities)
	thus apply to
	eigenfunctions
	of the stochastic six-vertex model.
\end{remark}

% subsubsection matching_the_cross_terms (end)

\subsubsection{A change of spectral variables} % (fold)
\label{ssub:a_change_of_spectral_variables}

In the rest of the paper we will assume that $\svd\ne\pm1$.
Then there exists a change of spectral variables
$\vxi\to\z$
similar to the involution $\xinu_{\dilp}$ from \S \ref{sub:analogues_of_main_results},
which leads to a linear cross-term in the eigenfunctions.
More precisely, set
\begin{align*}
	\xinu^{6V}(z):=\frac{\dilp-z}{1-z/(q\dilp)}.
\end{align*}
Note that this map is an involution (in contrast with $\xinu^{\mathrm{ASEP}}$
from \S \ref{sub:asep_and_its_bethe_ansatz_eigenfunctions}).
\begin{definition}\label{def:q_theta_parameter_series}
	Whenever we want to take parameters
	$(q,\dilp)$
	depending on the parameters $(\svd,\svpar)$
	\eqref{delta_delta_6V}
	of the six-vertex model, we will set
	\begin{align}\label{q_theta_parameter_series}
		(q,\dilp):=\begin{cases}
		(q_+,\dilp_+)&\mbox{if $\svd>1$ or $-1<\svd<1$};
		\\
		(q_-,\dilp_-)&\mbox{if $\svd<-1$},
	\end{cases}
	\end{align}
	where $q_{\pm}$ and $\dilp_{\pm}$ are defined in
	\eqref{q_theta_6V}.
\end{definition}
Define
\begin{align}\label{Psi_6V}
	\Psi_{\z}^{6V}(\x):=
	\sum_{\sigma\in S(k)}\prod_{B<A}
	\frac{z_{\sigma(B)}-qz_{\sigma(A)}}
	{z_{\sigma(B)}-z_{\sigma(A)}}\prod_{j=1}^{k}
	\left(\frac{\dilp-z_{\sigma(j)}}{1-z_{\sigma(j)}/(q\dilp)}\right)^{-x_j},
	\qquad\x\in\tWeyl{k}.
\end{align}
The reflection of the above function gives
\begin{align}\label{PsiR_6V}
	(\Refl\Psi_{\z}^{6V})(\x)=
	\sum_{\sigma\in S(k)}\prod_{B<A}
	\frac{z_{\sigma(A)}-qz_{\sigma(B)}}
	{z_{\sigma(A)}-z_{\sigma(B)}}\prod_{j=1}^{k}
	\left(\frac{\dilp-z_{\sigma(j)}}{1-z_{\sigma(j)}/(q\dilp)}\right)^{x_j}.
\end{align}
Then one can readily check that the functions
\eqref{Psi_6V}--\eqref{PsiR_6V} are related to
\eqref{Phi_6V} and \eqref{PhiR_6V} as follows:
\begin{align*}
	\Psi_{\xinu^{6V}(\vxi)}^{6V}(\x)=
	(q\dilp)^{\frac{k(k-1)}{2}}
	(\Vand(\vxi))^{-1}\Phi_{\vxi}^{6V}(\x),
	\qquad
	(\Refl\Psi_{\xinu^{6V}(\vxi)}^{6V})(\x)=
	(q\dilp)^{\frac{k(k-1)}{2}}
	(\Vand(\vxi))^{-1}(\Refl\Phi_{\vxi}^{6V})(\x),
\end{align*}
where, as usual, $\Vand(\vxi)$ is the Vandermonde determinant.
Also, the functions \eqref{Psi_6V}--\eqref{PsiR_6V}
arise from the conjugated $q$-Hahn eigenfunctions
\eqref{Psild}--\eqref{Psird} as
\begin{align}\label{Psi_Psi_qhahn_6V}
	\begin{array}{>{\displaystyle}rc>{\displaystyle}l}
	\Psi_{\z}^{6V}(x_1,\ldots,x_k)&=&
	\Psild_{\z}(x_k,\ldots,x_1)\vert_{\nu=1/(q\dilp)}
	;
	\\\rule{0pt}{15pt}
	(\Refl\Psi_{\z}^{6V})(x_1,\ldots,x_k)\cdot
	\mathbf{1}_{x_1<\ldots<x_k}&=&
	(q^{-1}-1)^{-k}
	\Psird_{\z}(x_k,\ldots,x_1)\vert_{\nu=1/(q\dilp)}
	,
	\end{array}
\end{align}
where $x_1\le \ldots\le x_k$.
The degeneration of eigenfunctions \eqref{Psi_Psi_qhahn_6V} should be compared to the
first two formulas in \eqref{Psi_qHahn_to_ASEP_main}.

% subsubsection a_change_of_spectral_variables (end)

% subsection connection_to_eigenfunctions_of_the_conjugated_q_hahn_boson_operator (end)

\subsection{Heisenberg XXZ spin chain} % (fold)
\label{sub:xxz}

Consider the transfer matrix $T_k$ (where $k$ is the number of vertical lines in the
configuration)
defined in \S \ref{sub:transfer_matrix_and_its_eigenfunctions},
and let
\begin{align*}
	a_1=a_2=1,
	\qquad
	b_1=b_2=\epsilon b,\qquad
	c_1=c_2=1-\epsilon c,
\end{align*}
where $\epsilon>0$ is a small parameter, $b>0$, and $c\in\R$.
Then one readily sees that
\begin{align}\label{Tk_XXZ_k_spin}
	T_k(\x,\y)=\mathbf{1}_{\y=\x+1}+
	\epsilon b\cdot \tHxxz(\x,\y)+O(\epsilon^{2}),
	\qquad \x,\y\in\tWeyl{k},
\end{align}
where $\x+1$ means increasing all coordinates by 1,
and
$\tHxxz$ is the
following matrix:
\begin{align}\label{tHxxz}
	\tHxxz(\x,\y+1)=\sum_{i}
	(\mathbf{1}_{\y=\x^{-}_{i}}-\svd\cdot\mathbf{1}_{\y=\x})
	+\sum_{j}
	(\mathbf{1}_{\y=\x^{+}_{j}}-\svd\cdot\mathbf{1}_{\y=\x}),
	\qquad\svd=\frac{c}{b}
\end{align}
(the notation $\x_{i}^{\pm}$ is as in
\S \ref{sub:asep_and_its_bethe_ansatz_eigenfunctions}).
The sums above are taken over all
$i$ and $j$ such that $\x^{-}_{i}$ and $\x^{+}_{j}$, respectively,
belong to $\tWeyl{k}$.
Note that $\svd$ above is simply the $\epsilon\to0$
limit of the corresponding six-vertex parameter $\svd$ \eqref{delta_delta_6V}.
In the same limit, the second parameter $\svpar$
of the six-vertex eigenfunctions turns into $1$.

We thus arrive at
the
Hamiltonian of the (spin-$\frac12$) Heisenberg XXZ quantum spin chain
on the infinite lattice:
\begin{align}\label{Hxxz}
	(\Hxxz f)(\x)=
	\sum_{i}\big(f(\x_{i}^{-})-\svd f(\x)\big)
	+
	\sum_{j}\big(f(\x_{j}^{+})-\svd f(\x)\big)
\end{align}
(both sums are over allowed configurations as in \eqref{tHxxz}).
The operator $\Hxxz$ acts on (compactly supported) functions
in $k$ variables $\x\in\tWeyl{k}$,
according to the transfer matrix $T_k$
in \eqref{Tk_XXZ_k_spin}.
The integers $x_1,\ldots,x_k$ are traditionally understood as
encoding positions of up spins (or magnons), and all other
lattice points correspond to down spins,
see Fig.~\ref{fig:XXZ}
(note that $\Hxxz$
preserves the number of up spins).
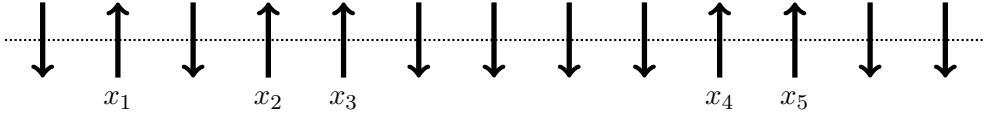
\begin{figure}[htb]
	\begin{center}
	\begin{tikzpicture}
		[scale=1, thick]
		\def\e{.5}
		\draw[densely dotted] (-6.5,0) -- (6.5,0);
		\foreach \i in {-5,-3,-2,3,4}
		{
			\draw[->,line width=2] (\i,-\e) -- (\i,\e);
		}
		\foreach \i in {-6,-4,-1,0,1,2,5,6}
		{
			\draw[->,line width=2] (\i,\e) -- (\i,-\e);
		}
		\node at (-5,-1.6*\e) {$x_1$};
		\node at (-3,-1.6*\e) {$x_2$};
		\node at (-2,-1.6*\e) {$x_3$};
		\node at (3,-1.6*\e) {$x_4$};
		\node at (4,-1.6*\e) {$x_5$};
	\end{tikzpicture}
	\end{center}
  	\caption{A configuration of spins in the XXZ spin chain.}
  	\label{fig:XXZ}
\end{figure}
\begin{remark}
	On a finite lattice with periodic boundary,
	$\Hxxz$ can be rewritten in a more traditional
	way involving nearest-neighbor quantum interactions.
	Namely, encode spin configurations
	on a finite lattice $\{1,\ldots,L\}$
	by vectors in $(\C^{2})^{\otimes L}$,
	where the basis in $\C^{2}$
	consists of vectors
	$|{\uparrow}\rangle$
	and
	$|{\downarrow}\rangle$.
	Let $\sigma^{x}$, $\sigma^{y}$, and $\sigma^{z}$ be the Pauli matrices
	acting in $\C^{2}$
	\begin{align*}
		\sigma^{x}=\begin{pmatrix}
			0&1\\1&0
		\end{pmatrix}
		\qquad
		\sigma^{y}=\begin{pmatrix}
			0&-i\\i&0
		\end{pmatrix}
		\qquad
		\sigma^{z}=\begin{pmatrix}
			1&0\\0&-1
		\end{pmatrix}.
	\end{align*}
	The XXZ spin chain Hamiltonian on the finite
	lattice is
	the following operator in $(\C^{2})^{\otimes L}$:
	\begin{align}\label{Hxxz_quantum}
		-\frac12\sum_{j=1}^{L}
		\big(\sigma_{j}^{x}\sigma_{j+1}^{x}+
		\sigma_{j}^{y}\sigma_{j+1}^{y}
		+\svd\sigma_{j}^{z}\sigma_{j+1}^{z}\big),
		\qquad \sigma_{L+1}^{a}\equiv \sigma_{1}^{a}.
	\end{align}
	Here $\sigma_{j}^{a}$ ($a\in\{x,y,z\}$)
	denotes the operator acting as $\sigma^{a}$
	in the $j$-th copy of $\C^{2}$
	(and trivially in all other copies).
	
	The operator \eqref{Hxxz_quantum} preserves the number of up
	spins.
	When restricted to the subspace of $(\C^{2})^{\otimes L}$
	corresponding to exactly $k$ up spins,
	\eqref{Hxxz_quantum}
	coincides (up to an additive constant)
	with the analogue of $\Hxxz$ \eqref{Hxxz} on the finite lattice
	with periodic boundary.
\end{remark}
The eigenfunctions of the XXZ spin chain
can be computed independently by applying the coordinate
Bethe ansatz to the operator $\Hxxz$. This
was first performed in
\cite{Bethe1931} for $\svd=1$ and
\cite{YangYang1966} for all $\svd$, see also
\cite{baxter2007exactly} for a general perspective.
On the other hand, these
are simply the six-vertex eigenfunctions
specialized at $\svpar=1$. In this case
one readily checks that $q=1/\dilp^{2}$, and
so, for example,
\eqref{Psi_6V} turns into
\begin{align}\label{XXZ}
	\Psi_{\z}^{\mathrm{XXZ}}(\x)=
	\sum_{\sigma\in S(k)}\prod_{B<A}
	\frac{z_{\sigma(B)}-\dilp^{-2}z_{\sigma(A)}}
	{z_{\sigma(B)}-z_{\sigma(A)}}\prod_{j=1}^{k}
	\left(\frac{1-\dilp^{-1}z_{\sigma(j)}}{\dilp^{-1}-z_{\sigma(j)}}\right)^{-x_j},
	\qquad\x\in\tWeyl{k},
\end{align}
where $\dilp$ is related to $\svd$ via
\eqref{q_theta_6V} which can be restated as follows
(taking into account the agreement of Definition \ref{def:q_theta_parameter_series}):
\begin{align}\label{Delta_theta_XXZ_relation}
	\svd=\frac{1}{2}\left(\dilp+\frac{1}{\dilp}\right)=
	\frac{1}{2}\left(\frac1{\sqrt q}+{\sqrt q}\right).
\end{align}
In the same way \eqref{PsiR_6V} turns into
\begin{align}\label{XXZ_R}
	(\Refl\Psi_{\z}^{\mathrm{XXZ}})(\x)=
	\sum_{\sigma\in S(k)}\prod_{B<A}
	\frac{z_{\sigma(A)}-\dilp^{-2}z_{\sigma(B)}}
	{z_{\sigma(A)}-z_{\sigma(B)}}\prod_{j=1}^{k}
	\left(\frac{1-\dilp^{-1}z_{\sigma(j)}}{\dilp^{-1}-z_{\sigma(j)}}\right)^{x_j}.
\end{align}
Note that the cross-term \eqref{6V_cross_term_xi}
in the $\vxi$
variables  becomes
$1-2\svd\xi_2+\xi_1\xi_2$ for the XXZ model.
One can also readily
specialize the eigenfunctions \eqref{Phi_6V} and \eqref{PhiR_6V},
but we will not use them.

Plancherel formulas for the XXZ eigenfunctions
were obtained in
\cite{BabbittThomas}, \cite{BabbittGutkin},
\cite{Gutkin}. Those formulas
involved real and positive
Plancherel measures governing Plancherel decompositions,
as it should be for a model with a Hermitian symmetric Hamiltonian
(cf. Remark \ref{rmk:name_for_Plancherel}).
In \S \ref{sub:formulas_for_}
and \S \ref{sub:formulas_for_svd_1_and_complex_q_dilp_}
below we discuss other Plancherel formulas for the XXZ model
which follow from
our main results of \S \S \ref{sec:main_results}--\ref{sec:the_q_mu_nu_boson_process_and_coordinate_bethe_ansatz}.

\begin{remark}\label{rmk:ASEP_XXZ_relation}
If $|\svd|>1$, then the XXZ Hamiltonian $\Hxxz$ is conjugate
to the stochastic ASEP generator $\Hasep$ \eqref{H_ASEP}:
\begin{align*}
	\frac{1}{2\svd}\Dil{\dilp}\Hxxz\Dil{\dilp}^{-1}=
	\Hasep,
	\qquad \tau=1/{\dilp^{2}},
	\qquad \begin{cases}
		\dilp>1&\text{if $\svd>1$};\\
		\dilp<-1&\text{if $\svd<-1$},
	\end{cases}
\end{align*}
where $\dilp$ is related to $\svd$ as in \eqref{Delta_theta_XXZ_relation},
and the operator $\Dil{\dilp}$ is defined in \eqref{Dilation}. This property
was noted in \cite{Schutz1997exact}.

In the case $|\svd|<1$, one could also perform a similar conjugation,
but this will lead to complex parameter $\tau$ in the ASEP generator,
so the latter will no longer be stochastic. We will discuss the relevant
results for the eigenfunctions in \S \ref{sub:formulas_for_svd_1_and_complex_q_dilp_} below.
\end{remark}

% subsection xxz (end)

\subsection{Case $|\svd|>1$, real $q$ and $\dilp$} % (fold)
\label{sub:formulas_for_}

Observe that the six-vertex and XXZ eigenfunctions
(\eqref{Psi_6V} and \eqref{XXZ}, respectively)
can be both obtained from the ASEP eigenfunctions
\eqref{PsiASEP_bwd}
by applying the dilation operator $\Dil{a}$
with suitable $a\ne 0$ and by possibly rescaling
the $\z$ variables (cf. Fig.~\ref{fig:big_scheme}).\footnote{This relation between XXZ and ASEP also extends to their generators,
see Remark \ref{rmk:ASEP_XXZ_relation}.}
These operations do not change the cross-term
parameter of the eigenfunctions, so the
six-vertex parameter $q$ and the XXZ parameter $\dilp^{-2}$
are the same as the ASEP parameter $\tau$.
So if $|\svd|>1$, then all these parameters are real and between $0$
and $1$.
One issue which arises for $\svd<-1$
is that the parameter $\dilp$ in six-vertex or XXZ eigenfunctions
is negative, but this is readily resolved by applying the dilation $\Dil{-1}$
and negating the variables $\z$.

Therefore,
all ASEP results,
namely,
spectral biorthogonality (Theorem \ref{thm:ASEP_biorthogonality}),
Plancherel formulas (Theorems \ref{thm:spatial_Plancherel_ASEP}
and \ref{thm:spectral_Plancherel_ASEP}),
and symmetrization identities
(\S \ref{sub:tracy_widom_symmetrization_identities}),
are readily seen to be equivalent
to the corresponding results
for the six-vertex and XXZ eigenfunctions
for $|\svd|>1$.

For example, the XXZ Plancherel formula
takes the following form (cf. \eqref{ASEP_Plancherel_desymm}):
\begin{theorem}\label{thm:XXZ_Plancherel}
	For $|\svd|>1$ and any $\x,\y\in\tWeyl{k}$,
	\begin{align}\label{XXZ_Plancherel_formula}
		\begin{array}{>{\displaystyle}l>{\displaystyle}l}
		&\oint_{\widetilde\ga_{\dilp}}\frac{dz_1}{2\pi\i}
		\ldots
		\oint_{\widetilde\ga_{\dilp}}\frac{dz_k}{2\pi\i}
		\prod_{B<A}\frac{z_A-z_B}{z_A-\dilp^{-2} z_B}
		\\
		&\hspace{60pt}\times\prod_{j=1}^{k}
		\frac{1-\dilp^{-2}}{(1-\dilp^{-1}z_j)(\dilp^{-1}-z_j)}
		\left(\frac{1-\dilp^{-1}z_j}{\dilp^{-1}-z_j}\right)^{-x_j}
		(\Refl\Psi^{\mathrm{XXZ}}_{\z})(\y)=\mathbf{1}_{\x=\y}.
		\end{array}
	\end{align}
	The integration contours
	$\widetilde\ga_{\dilp}$
	are
	positively oriented small circles around $\dilp$.
	The parameter $\dilp\in\R$ is related to $\svd$ via \eqref{Delta_theta_XXZ_relation},
	and, moreover,
	$|\dilp|>1$.
\end{theorem}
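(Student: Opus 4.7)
The plan is to deduce the XXZ spatial Plancherel identity directly from the corresponding ASEP identity (Theorem \ref{thm:spatial_Plancherel_ASEP}, or equivalently its desymmetrized nested-contour form \eqref{ASEP_Plancherel_desymm}) by a change of variables. Since $|\svd|>1$ forces $\dilp\in\R$ with $|\dilp|>1$ via the relation \eqref{Delta_theta_XXZ_relation}, the ASEP asymmetry parameter $\tau=\dilp^{-2}$ lies in $(0,1)$ so all ASEP quantities are well defined. The key observation (implicit in Remark \ref{rmk:ASEP_XXZ_relation} and the discussion of Fig.~\ref{fig:big_scheme}) is that the XXZ and ASEP eigenfunctions are related by a dilation and a rescaling of the spectral variable. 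Concretely, a direct computation shows
\[
\frac{1+z^{\mathrm{ASEP}}}{1+z^{\mathrm{ASEP}}/\tau}\Bigg|_{z^{\mathrm{ASEP}}=-z/\dilp}=\dilp^{-1}\,\frac{1-\dilp^{-1}z}{\dilp^{-1}-z},
\]
and that under the same substitution the cross-term ratio $(z_{\sigma(B)}-\tau z_{\sigma(A)})/(z_{\sigma(B)}-z_{\sigma(A)})$ is left invariant (numerator and denominator scale by the same factor). Consequently
\[
\Psiasep_{\z^{\mathrm{ASEP}}}(\x)=\dilp^{x_1+\cdots+x_k}\,\Psi^{\mathrm{XXZ}}_{\z}(\x),\qquad (\Refl\Psiasep_{\z^{\mathrm{ASEP}}})(\y)=\dilp^{-y_1-\cdots-y_k}\,(\Refl\Psi^{\mathrm{XXZ}}_{\z})(\y).
\]

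The next step is to implement $z_j^{\mathrm{ASEP}}=-z_j/\dilp$ inside the integral \eqref{ASEP_Plancherel_desymm}. I will track each factor: (i) the $\prod_{B<A}(z_A-z_B)/(z_A-\tau z_B)$ is invariant, since the map rescales every difference of this type by the same constant; (ii) the product $\prod_j(1+z_j^{\mathrm{ASEP}})(1+z_j^{\mathrm{ASEP}}/\tau)$ becomes $\dilp^k\prod_j(1-\dilp^{-1}z_j)(\dilp^{-1}-z_j)$, and using $1-1/\tau=-\dilp^2(1-\dilp^{-2})/(-1)$ one identifies the prefactor as $(-\dilp)^k(1-\dilp^{-2})^k/\dilp^k$ after combining with the denominator; (iii) the monomial factor acquires $\dilp^{x_1+\cdots+x_k}$; (iv) the Jacobian $\prod_j dz_j^{\mathrm{ASEP}}$ contributes $(-\dilp)^{-k}\prod_j dz_j$; (v) the contour $\widetilde\ga_{-1}$ is carried by the affine map $z\mapsto -\dilp z$ to a small circle around $\dilp$, with orientation preserved (conformal maps preserve orientation for any real $\dilp\neq 0$, whether positive or negative), yielding $\widetilde\ga_\dilp$.

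Multiplying all these factors together, the $\dilp$-dependent prefactors collapse to $\dilp^{\sum x_i-\sum y_i}$, and the transformed left-hand side equals exactly the left-hand side of \eqref{XXZ_Plancherel_formula} times this factor. Since the right-hand side is $\mathbf{1}_{\x=\y}$, on which $\sum x_i=\sum y_i$ automatically, this residual power equals $1$, yielding the desired identity. The case $\svd<-1$ (so $\dilp<-1$) is handled identically: one still substitutes $z^{\mathrm{ASEP}}=-z/\dilp$, and the only subtlety is checking that $\widetilde\ga_\dilp$ (now a small positively oriented circle around a negative point) is the image of $\widetilde\ga_{-1}$ under an orientation-preserving affine map, which it is.

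I do not anticipate any deep obstacle in carrying this out: the main point is the algebraic observation that the ASEP one-body factor $(1+z)/(1+z/\tau)$ and the XXZ one-body factor $(1-\dilp^{-1}z)/(\dilp^{-1}-z)$ differ only by the constant $\dilp$ after the substitution $z^{\mathrm{ASEP}}=-z/\dilp$, and that this constant is absorbed by the dilation $\Dil{\dilp}$ relating $\Hxxz$ and $\Hasep$ (Remark \ref{rmk:ASEP_XXZ_relation}). The potentially delicate bookkeeping is the accumulation of sign and power-of-$\dilp$ factors from the five sources listed above; the cleanest way to organize it is to note that the two Plancherel measures $d\Plm^{(\tau)}_{(1^k)}$ for ASEP and the Vandermonde-squared cross-term for XXZ are identical under the substitution (because the ratio is homogeneous), so all accounting reduces to the one-body and differential pieces.
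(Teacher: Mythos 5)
Your proposal is correct and follows essentially the same route as the paper: the paper obtains Theorem \ref{thm:XXZ_Plancherel} precisely by observing that the XXZ eigenfunctions arise from the ASEP ones by a dilation and rescaling of the spectral variables (with $\tau=\dilp^{-2}$, and the extra dilation $\Dil{-1}$ handling $\svd<-1$), so that the identity is the change-of-variables image of the ASEP spatial Plancherel formula in its desymmetrized form \eqref{ASEP_Plancherel_desymm}; your substitution $z^{\mathrm{ASEP}}=-z/\dilp$ carries this out explicitly. The only blemish is the garbled intermediate expression $(-\dilp)^k(1-\dilp^{-2})^k/\dilp^k$ for the one-body prefactor (it should be $(-\dilp)^k$ times the XXZ density $\prod_j\frac{1-\dilp^{-2}}{(1-\dilp^{-1}z_j)(\dilp^{-1}-z_j)}$), but your final accounting that all factors collapse to $\dilp^{\sum x_i-\sum y_i}$, which equals $1$ on the support of $\mathbf{1}_{\x=\y}$, is correct.
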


In the case $|\svd|>1$
the Plancherel formulas in
\cite{BabbittThomas},
\cite{BabbittGutkin}, \cite{Gutkin}
(stated in our spectral variables $\z$)
involve string specializations
$\z=\w\circ\la$ \eqref{w_circ_la}
corresponding to partitions $\la\vdash k$
(as in, e.g., \eqref{completeness_big_1}--\eqref{completeness_big_2}).
Moreover, the integration in those XXZ Plancherel formulas
is performed over
circles centered at $0$
with radii $\dilp^{\ell}$, $\ell=0,1,\ldots$
(and the Plancherel measures understood in a suitable way are positive on such contours).
It seems plausible that our Plancherel formula
\eqref{XXZ_Plancherel_formula}
(which involves integration over
small circles around $\dilp$
and no string specializations)
can be brought to a form with large integration contours, and then matched to formulas
existing in the literature cited above.
Indeed, in the process of contour deformation (from small to large contours),
one would need to pick residues
corresponding to poles of the integrand in \eqref{XXZ_Plancherel_formula}
at $z_A=\dilp^{-2}z_B$, $B<A$. This should lead to a residue expansion
employing string specializations (as in Proposition \ref{prop:nesting_unnesting},
see also \cite[\S 7.2]{BorodinCorwinPetrovSasamoto2013}).
We do not pursue this computation here.

\medskip

We will not write down other XXZ or six-vertex formulas with real parameters $(q,\dilp)$.

% subsection formulas_for_ (end)

\subsection{Case $|\svd|<1$, complex $q$ and $\dilp$} % (fold)
\label{sub:formulas_for_svd_1_and_complex_q_dilp_}

When $|\svd|<1$, the cross-term parameters in our models
($\dilp^{-2}$ in XXZ or $q$ in the six-vertex model)
become complex numbers of modulus $1$.
For definiteness, we will consider only the XXZ
case.\footnote{The six-vertex formulas
will be equivalent to the ones for the XXZ spin chain (cf. the beginning of \S \ref{sub:formulas_for_}).
Similarly one could also write formulas for the ASEP
eigenfunctions with complex $\tau$
of modulus $1$.}
We start with the
conjugated $q$-Hahn eigenfunctions
with complex $q$, $|q|=1$
(they can be treated
similarly to the
$q$-Boson ones with complex $q$
discussed in \cite[\S5]{BorodinCorwinPetrovSasamoto2013}), and then specialize them to the
XXZ eigenfunctions by putting
$q=\dilp^{-2}$, $\nu=\dilp$
(cf. Fig.~\ref{fig:big_scheme}).

Fix an integer $k\ge1$ (this is the number of particles in our particle system),
and assume that $q\in\C$, $|q|=1$, and $q^{j}\ne 1$ for all $j=1,2,\ldots,k-1$.\footnote{In
particular, this condition excludes the case $\svd=0$.}
The conjugated $q$-Hahn
eigenfunctions $\Psild_{\z}$ and $\Psird_{\z}$
\eqref{Psild}--\eqref{Psird},
as well as the corresponding direct transform
$\Plde^{q,\nu,\dilp}$
\eqref{Pldd},
are defined in the same way as for real $q$.
Recall that the integration contours $\ga_1^{\dilp},\ldots,\ga_k^{\dilp}$
used in the definition of
the inverse transform
$\Plie^{q,\nu,\dilp}$ \eqref{Plid}
are such that
$\ga_k^{\dilp}$
is a small positively oriented contour around
$\dilp$ not containing $\dilp q$,
$\ga_A^{\dilp}$ contains $q\ga_B^{\dilp}$
for all $1\le A<B\le k$, and,
moreover, $\nu^{-1}$ is outside all contours.
For complex $q$ satisfying our assumptions, and possibly complex $\dilp$,
these contours exist but
differ from the ones
in the real case, see Fig.~\ref{fig:complex_contours_1}
(and Fig.~\ref{fig:contours} for the real case).

\begin{figure}[htb]
	\begin{center}
	\scalebox{.9}{\begin{tikzpicture}
		[scale=3.3, x=1cm,y=1cm]
		\def\pt{0.025}
		\def\dd{(0.411, 0.911)}
		\def\ddq{(0.411, -0.911)}
		\def\ddqq{(-0.955, 0.295)}
		\def\ddqqq{(0.853, 0.520)}
		\draw[->, thick] (-1.2,0) -- (1.5,0);
	  	\draw[->, thick] (0,-1.2) -- (0,1.2);
	  	\draw[fill] (1,0) circle (\pt) node [below right] {1};
	  	\draw[fill] (0,0) circle (\pt) node [below left] {$0$};
	  	\draw[fill] (1.3,0) circle (\pt) node [above, yshift=0,xshift=3] {$\nu^{-1}$};
	  	\draw[very thick] (0,0) circle (1);
	  	\draw[fill] \dd circle (\pt) node [above] {$\dilp$};
	  	\draw[fill] \ddq circle (\pt) node [below] {$\dilp^{-1}$};
	  	\draw[fill] \ddqq circle (\pt) node [left] {$\dilp^{-3}$};
	  	\draw[fill] \ddqqq circle (\pt) node [below left, xshift=4] {$\dilp^{-5}$};
	  	\draw[ultra thick] \dd circle (.3) node (ddcont) {};
	  	\draw[thick, dotted] \ddq circle (.3);
	  	\draw[ultra thick] \ddq circle (.33) node (ddqcont) {};
	  	\draw[thick, dotted] \ddqq circle (.33);
	  	\draw[ultra thick] \ddqq circle (.36) node (ddqqcont) {};
	  	\draw[thick, dotted] \ddqqq circle (.36);
		\def\dds{(0, .95)}
	  	\draw[ultra thick, rotate=-45] \dds ellipse (.7 and .5) node (ddqqqcont) {};
		\draw[->, very thick, dashed] (1.5,1) -- ([shift={(.25,0.08)}]ddcont.east) node [pos=-.15] {$\ga_4^{\dilp}$};
		\draw[->, very thick, dashed] (.6,-.2) -- ([shift={(0,-.25)}]ddcont.south) node [pos=-.1] {$\ga_3^{\dilp}$};
		\draw[->, very thick, dashed] (.6,-.38) -- ([shift={(0.1,.28)}]ddqcont.north);
		\draw[->, very thick, dashed] (-.6,-.2) -- ([shift={(.15,-.31)}]ddqqcont.east) node [pos=-.3] {$\ga_2^{\dilp}$};
		\draw[->, very thick, dashed] (-.6,-.2) -- ([shift={(-.20,-.2)}]ddcont.west);
		\draw[->, very thick, dashed] (-.55,-.36) -- ([shift={(-.22,.22)}]ddqcont.west);
		\draw[->, very thick, dashed] (-.5,1) -- ([shift={(0,.38)}]ddqqcont.east) node [pos=-.2] {$\ga_1^{\dilp}$};
		\draw[->, very thick, dashed] (-.45,.97) -- ([shift={(.05,.35)}]ddqcont.west);
		\draw[->, very thick, dashed] (-.33,1.05) -- ([shift={(-.31,.-.05)}]ddcont.west);
	\end{tikzpicture}}
	\end{center}
  	\caption{A possible choice of integration contours $\ga_1^{\dilp},\ga_2^{\dilp},\ga_3^{\dilp},\ga_4^{\dilp}$ for $k=4$ and $q=\dilp^{-2}$, where $\dilp$ is on the unit circle.
  	In the picture, $\dilp$ is relatively close to $e^{\i\pi/3}$, so $\dilp q^{3}=\dilp^{-5}$
  	is close to $\dilp$. The contour $\ga_4^{\dilp}$
  	is a single small circle around $\dilp$. The contour
  	$\ga_3^{\dilp}$ is a union of $\ga_4^{\dilp}$
  	and a slightly larger contour around $\dilp q=\dilp^{-1}$ (so
  	$\ga_3^{\dilp}$ contains $q\ga_4^{\dilp}$).
  	The contour
  	$\ga_2^{\dilp}$ is a union of $\ga_3^{\dilp}$
  	and a yet slightly larger contour around $\dilp q^{2}=\dilp^{-3}$
  	(so
  	$\ga_2^{\dilp}$ contains both $q\ga_3^{\dilp}$ and $q\ga_4^{\dilp}$).
  	The contour $q\ga_{2}^{\dilp}$
  	intersects, $\ga_4^{\dilp}$,
  	so
	the contour
  	$\ga_1^{\dilp}$ must be a union of
  	circles around $\dilp^{-1}$, $\dilp^{-3}$,
  	and of an ellipse around $\dilp$ and $\dilp q^{3}=\dilp^{-5}$
  	which contains $\ga_4^{\dilp}$
  	(in this way, $\ga_1^{\dilp}$ contains $q\ga_2^{\dilp}$, $q\ga_3^{\dilp}$,
  	and $q\ga_4^{\dilp}$).
  	The images of circles under multiplication by $q=\dilp^{-2}$ are shown dotted.
  	}
  	\label{fig:complex_contours_1}
\end{figure}

With these modifications the spatial Plancherel formula
(i.e., the first half of Theorem \ref{thm:Plancherel_isom_theta}
stating that
$\Plspatiale^{q,\nu,\dilp}=\Plie^{q,\nu,\dilp}\Plde^{q,\nu,\dilp}$
acts as the identity operator on $\Wc^{k}$)
continues to hold.
In this statement we can set $q=\dilp^{-2}$
(where $\dilp\in\C$, $|\dilp|=1$, and $\dilp^{2j}\ne 1$
for all $j=1,\ldots,k-1$)
right away,
and the spatial Plancherel theorem will continue to hold.

Now we must further specialize
$\nu=\dilp$. This
requires deforming all contours to $\ga_k^{\dilp}$, because
the ``nested'' integration contours $\ga_1^{\dilp},\ldots,\ga_k^{\dilp}$
must not contain $\nu^{-1}=\dilp^{-1}=q\dilp$ (for real $q$
a similar
obstacle
was encountered in the ASEP case,
see Theorem \ref{thm:spatial_Plancherel_ASEP}).
The required contour deformation
can be performed using an analogue of
the second part of Proposition \ref{prop:nesting_unnesting}
which requires one modification.
Let us fix a small positively oriented circle $\ga_k^{\dilp}$
around $\dilp$ which does not contain $q\dilp=\dilp^{-1}$.
The image of this contour
$\ga_k^{\dilp}$ under the multiplication
by some power of $q$ can intersect with
$\ga_k^{\dilp}$ (as this happens in Fig.~\ref{fig:complex_contours_1}),
and this affects which residues are contributing
when contours are deformed to $\ga_k^{\dilp}$.
Define the following subsets of the contour $\ga_k^{\dilp}$
for $\ell\ge1$ (see Fig~\ref{fig:arcs}):
\begin{align*}
	\Gamma_{\ell}(\dilp^{-2}):=\big\{
	z\in\ga_k^{\dilp}\colon
	\text{for all $1\le j<\ell$,
	$\dilp^{-2j}z$ is outside $\ga_k^{\dilp}$}
	\big\}.
\end{align*}

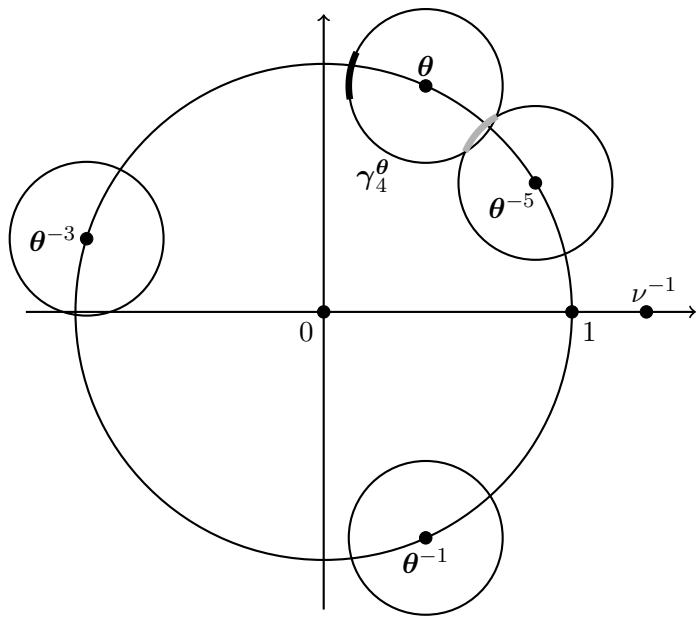
\begin{figure}[htb]
	\begin{center}
	\begin{tikzpicture}
		[scale=3.3, x=1cm,y=1cm]
		\def\pt{0.025}
		\def\dd{(0.411, 0.911)}
		\def\ddq{(0.411, -0.911)}
		\def\ddqq{(-0.955, 0.295)}
		\def\ddqqq{(0.853, 0.520)}
		\draw[->, thick] (-1.2,0) -- (1.5,0);
	  	\draw[->, thick] (0,-1.2) -- (0,1.2);
	  	\draw[fill] (1,0) circle (\pt) node [below right] {1};
	  	\draw[fill] (0,0) circle (\pt) node [below left] {$0$};
	  	\draw[fill] (1.3,0) circle (\pt) node [above, yshift=0,xshift=3] {$\nu^{-1}$};
	  	\draw[thick] (0,0) circle (1);
	  	\draw[fill] \dd circle (\pt) node [above] {$\dilp$};
	  	\draw[fill] \ddq circle (\pt) node [below] {$\dilp^{-1}$};
	  	\draw[fill] \ddqq circle (\pt) node [left] {$\dilp^{-3}$};
	  	\draw[fill] \ddqqq circle (\pt) node [below left, xshift=4] {$\dilp^{-5}$};
	  	\draw[ thick] \dd circle (.31) node (ddcont) {};
	  	\draw[ thick] \ddq circle (.31) node (ddqcont) {};
	  	\draw[ thick] \ddqq circle (.31) node (ddqqcont) {};
	  	\draw[ thick] \ddqqq circle (.31) node (ddqqqcont) {};
		\draw [white!40!gray, line width=2.5,domain=120:156] plot ({0.853+.31*cos(\x)}, {0.520+.31*sin(\x)});
		\draw [line width=2.5,domain=154:190] plot ({0.411+.31*cos(\x)}, {0.911+.31*sin(\x)});
		\node at (.2,.55) {$\ga_4^{\dilp}$};
	\end{tikzpicture}
	\end{center}
  	\caption{
  	Integration contours in Proposition \ref{prop:XXZ_Pli_expanded}
  	for $k=4$ and $\dilp$ relatively close to $e^{\i\pi/3}$.
  	The contour $\ga_4^{\dilp}$ is a fixed circle around $\dilp$.
  	The contours $\Gamma_{1}(\dilp^{-2})$,
  	$\Gamma_{2}(\dilp^{-2})$,
  	$\Gamma_{3}(\dilp^{-2})$
  	are all equal to $\ga_4^{\dilp}$, i.e., they are full
  	circles. The contour
  	$\Gamma_{4}(\dilp^{-2})$
  	is $\ga_4^{\dilp}$
  	without the thick black arc, because for $z$
  	in this arc, points $q^{3}z=\dilp^{-6}z$
  	are inside $\ga_4^{\dilp}$
  	(this is the thick gray arc of the contour $\dilp^{-6}\ga_4^{\dilp}$).
  	}
  	\label{fig:arcs}
\end{figure}

\begin{proposition}\label{prop:XXZ_Pli_expanded}
	The inverse transform
	$\Plie^{\dilp^{-2},\nu,\dilp}$
	can be written in the following form:
	\begin{align}\label{preXXZ_Pli_expanded}
		\begin{array}{>{\displaystyle}l>{\displaystyle}l}
		&(\Plie^{\dilp^{-2},\nu,\dilp}G)(\n)=
		\sum_{\la\vdash k}
		\oint_{\Gamma_{\la_1}(\dilp^{-2})}\ldots\oint_{\Gamma_{\la_{\ell(\la)}}(\dilp^{-2})}
		\dilp^{-k}d\Plm_\la^{(\dilp^{-2})}(\w)
		\\&\hspace{140pt}\times
		\prod_{j=1}^{\ell(\la)}\frac{1}{(w_j/\dilp;\dilp^{-2})_{\la_j}(\nu w_j;\dilp^{-2})_{\la_j}}
		\Psild_{\w\circ\la}(\n)
		G(w\circ\la).
		\end{array}
	\end{align}
	Consequently, the spatial Plancherel formula looks as
	\begin{align}\label{preXXZ_Plancherel}
		\sum_{\la\vdash k}
		\oint_{\Gamma_{\la_1}(\dilp^{-2})}\ldots\oint_{\Gamma_{\la_{\ell(\la)}}(\dilp^{-2})}
		\dilp^{-k}d\Plm^{(\dilp^{-2})}_\la(\w)
		\prod_{j=1}^{\ell(\la)}\frac{1}
		{(w_j/\dilp;\dilp^{-2})_{\la_j}(\nu w_j;\dilp^{-2})_{\la_j}}
		\Psild_{\w\circ\la}(\n)
		\Psird_{\w\circ\la}(\vec m)=\mathbf{1}_{\n=\vec m}.
	\end{align}
	Here $\n,\vec m\in\Weyl{k}$, and
	in both formulas in the eigenfunctions we set
	$q=\dilp^{-2}$.
\end{proposition}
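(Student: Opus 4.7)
The plan is to repeat the contour-collapsing argument from the proof of Proposition \ref{prop:nesting_unnesting}(2) — starting from the nested contour formula \eqref{Plid} and sequentially deforming each outer contour $\ga_j^{\dilp}$ inward onto the single small circle $\ga_k^{\dilp}$ — while carefully tracking which residues at $z_A = q z_B$ are actually crossed when $q = \dilp^{-2}$ lies on the unit circle. Once \eqref{preXXZ_Pli_expanded} is in place, the Plancherel formula \eqref{preXXZ_Plancherel} is immediate: take $G(\z) = \Psird_{\z}(\vec m)$, which is $\Plde^{\dilp^{-2},\nu,\dilp}$ applied to $f(\n) = \mathbf{1}_{\n = \vec m}$, and invoke the nested-contour version of the spatial Plancherel formula $\Plie^{\dilp^{-2},\nu,\dilp}\Plde^{\dilp^{-2},\nu,\dilp} = \mathrm{Id}$, whose validity for complex $q$ on the unit circle (under the assumptions on $\dilp$) is asserted in the text immediately preceding the proposition.

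The combinatorial mechanism by which the residues collapse into partition-indexed terms is identical to the $|q|<1$ case. Deforming $\ga_1^{\dilp}$ inward introduces residues at $z_1 = q z_B$ for $B>1$; iterating on the remaining variables produces chains $z_{i_1} = q z_{i_2} = q^2 z_{i_3} = \cdots$, i.e., strings of the form \eqref{w_circ_la} indexed by a partition $\la \vdash k$. The way the cross-terms $\frac{z_A - z_B}{z_A - q z_B}$ and the weights $\prod_j \frac{1}{(\dilp - z_j)(1 - \nu z_j)}$ recombine into the Plancherel measure $d\Plm_\la^{(\dilp^{-2})}(\w)$ together with the normalizing factors $(w_j/\dilp;\dilp^{-2})_{\la_j}(\nu w_j;\dilp^{-2})_{\la_j}$ is exactly the Cauchy-determinant calculation carried out in \cite[Lemma 3.4]{BorodinCorwinPetrovSasamoto2013}; it remains valid here because our standing assumption $q^j \ne 1$ for $1 \le j < k$ keeps the points in each string distinct.

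The new ingredient is purely geometric: for $|q|=1$ the image $q \ga_k^{\dilp}$ overlaps with $\ga_k^{\dilp}$ itself (see Fig.~\ref{fig:complex_contours_1} and Fig.~\ref{fig:arcs}), whereas for $|q|<1$ it was strictly inside. A residue at $z_1 = qw$ with $w \in \ga_k^{\dilp}$ is therefore genuinely crossed during the inward deformation only when $qw$ lies in the exterior of $\ga_k^{\dilp}$; if $qw$ is already inside the small disk bounded by $\ga_k^{\dilp}$, that pole was already enclosed by the inner contour and contributes no new residue. Iterating, a string of length $\la_j$ based at $w_j$ is validly generated precisely when $q w_j,\, q^2 w_j,\, \ldots,\, q^{\la_j - 1} w_j$ all lie outside $\ga_k^{\dilp}$, which is exactly the condition defining $w_j \in \Gamma_{\la_j}(\dilp^{-2})$. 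This is what replaces the unrestricted circle appearing in the real-parameter Corollary \ref{cor:completeness}.

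The principal obstacle will be the geometric bookkeeping of the preceding paragraph: one must certify that, at every stage of the multi-step deformation of $\ga_1^{\dilp}, \ga_2^{\dilp}, \ldots$ down to $\ga_k^{\dilp}$, the set of newly enclosed poles consists precisely of those strings whose internal points $q w_j, \ldots, q^{\la_j - 1} w_j$ lie in the exterior of $\ga_k^{\dilp}$, with no spurious contributions from the pole at $\nu^{-1}$ or from simultaneous collisions of three or more $z$-variables. The non-resonance condition $q^j \ne 1$ for $1 \le j < k$ rules out length-$j$ strings closing up into a cyclic configuration and is what makes the resulting combinatorial decomposition clean; granting this, the residue computation and its final assembly into \eqref{preXXZ_Pli_expanded} follow verbatim from the real-parameter argument.
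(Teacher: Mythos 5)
Your proposal is correct and follows essentially the same route as the paper: the paper's proof simply cites the complex-$|q|=1$ unnesting argument of \cite[Lemma 5.1]{BorodinCorwinPetrovSasamoto2013}, which is exactly the residue/contour-deformation bookkeeping (with strings surviving only when $qw_j,\ldots,q^{\la_j-1}w_j$ lie outside $\ga_k^{\dilp}$, i.e.\ $w_j\in\Gamma_{\la_j}(\dilp^{-2})$) that you spell out. Your derivation of \eqref{preXXZ_Plancherel} by applying the rewritten $\Plie^{\dilp^{-2},\nu,\dilp}$ to $G=\Psird(\vec m)=\Plde^{\dilp^{-2},\nu,\dilp}\mathbf{1}_{\n=\vec m}$ and invoking the spatial Plancherel identity, already established for complex $q$ in the preceding text, is likewise the paper's argument.
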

\begin{proof}
	The first expression
	follows from the same argument
	as in \cite[Lemma 5.1]{BorodinCorwinPetrovSasamoto2013}.
	This implies the form of the spatial Plancherel
	theorem given in the second formula.
\end{proof}
\begin{remark}
	When $\dilp$ is a root of unity, $\dilp^{-2j}z$ and $z$
	may belong to the same contour for some $j$, so some
	of the contours $\Gamma_{\ell}(\dilp^{-2})$
	would be empty.
	On the other hand, formulas \eqref{preXXZ_Pli_expanded}--\eqref{preXXZ_Plancherel}
	contain expressions of the form $\dilp^{-2j}z_A-z_B$
	in the denominator, which leads to additional singularities.
	While we expect that
	these issues can be resolved by
	a suitable regularization, we do not pursue this direction here.
\end{remark}

Putting $\nu=\dilp$,
one can readily check that the conjugated $q$-Hahn eigenfunctions
are related to the XXZ eigenfunctions \eqref{XXZ} and \eqref{XXZ_R} as
\begin{align*}
	\Psild_{\z}(x_k,\ldots,x_1)\vert_{q=\dilp^{-2},\,\nu=\dilp}
	&=\Psi^{\mathrm{XXZ}}_{\z}(x_1,\ldots,x_k),\\
	\Psird_{\z}(x_k,\ldots,x_1)\vert_{q=\dilp^{-2},\,\nu=\dilp}
	&=\mathbf{1}_{x_1<\ldots<x_k}\cdot(\dilp^{2}-1)^{k}(\Refl\Psi^{\mathrm{XXZ}}_{\z})(x_1,\ldots,x_k)
\end{align*}
for all integers $x_1\le \ldots\le x_k$.

It is not immediately clear how to degenerate the 
spatial Plancherel formula \eqref{preXXZ_Plancherel} when setting $\nu=\dilp$ (cf. Remark~\ref{rmk:update_ASEP_statement}).
In the case $|\svd|<1$, there are Plancherel
formulas for the XXZ model
in the literature
\cite{BabbittGutkin}, \cite{Gutkin} which
involve integration over so-called Chebyshev
circles.
These are certain circular arcs in the spectral variables $\vxi$,
which in the variables $\z$ translate to rays
starting at the origin.
Plancherel measures (understood in a suitable way) 
are positive on such contours. It seems plausible that 
one can 
deform the integration contours in
\eqref{preXXZ_Plancherel}
so that 
after setting 
$\nu=\dilp$
they match the earlier known formulas, 
but we do not explore this direction here.

\medskip

We expect that the spectral Plancherel formula and the spectral biorthogonality
statement for XXZ eigenfunctions
hold (with some modifications) in the case $|\svd|<1$ (and thus complex
$\dilp$), but we do not pursue this direction here.

% subsection formulas_for_svd_1_and_complex_q_dilp_ (end)

% section application_to_six_vertex_model (end)

\section{Appendix. Further degenerations} % (fold)
\label{sub:further_degenerations_of_eigenfunctions}

The eigenfunctions discussed in the present paper
admit further scaling limits (in other words, degenerations),
all the way to the Bethe ansatz eigenfunctions
of the continuous delta Bose gas.
These degenerations correspond to
dashed arrows on Fig.~\ref{fig:big_scheme},
and in this appendix we briefly discuss them.
We will focus on the ``Boson side'' (as in \S \ref{sub:_q_hahn_boson_process}) and on the
corresponding left eigenfunctions (the right eigenfunctions scale in a very similar
way).
That is, we will not describe in detail the ``TASEP side''
dual to the ``Boson side'' as in \S \ref{sub:_q_hahn_tasep_and_markov_duality}.

\begin{remark}
	In principle, limit relations between various Bethe ansatz eigenfunctions
	we present in this appendix
	could be used to establish Plancherel theorems for them.
	However, for more degenerate eigenfunctions these
	results are already known, see the corresponding references for each case below.
\end{remark}

\subsection{$q$-Boson eigenfunctions} % (fold)
\label{sub:boson_eigenfunctions}

If we set $\nu=0$ in the $q$-Hahn TASEP (\S \ref{sub:_q_hahn_tasep_and_markov_duality}),
we obtain the discrete-time geometric $q$-TASEP of \cite{BorodinCorwin2013discrete}.
If we further pass to continuous time
(with the help of the parameter $\mu$), we arrive at the $q$-TASEP
introduced in \cite{BorodinCorwin2011Macdonald}.
As shown in \cite{BorodinCorwinSasamoto2012},
the dual process to the $q$-TASEP (in the same way as in \S \ref{sub:_q_hahn_tasep_and_markov_duality})
is the stochastic $q$-Boson particle system of \cite{SasamotoWadati1998}.

The generator of the stochastic $q$-Boson particle system
is equivalent to the following free generator
\begin{align*}
	(\mathcal{L}^{\text{$q$-Boson}}u)(\n)=
	(1-q)\sum_{j=1}^{k}(\nabla_{j}u)(\n),
	\qquad (\nabla f)(x):=f(x-1)-f(x),
\end{align*}
subject to two-body boundary conditions
\begin{align*}
	\left.\big(\nabla_{i}-q\nabla_{i+1}
	\big)u\right\vert_{\n\in\Z^{k}\colon n_i=n_{i+1}}=0
\end{align*}
for any $1\le i\le k-1$.
Note that the above boundary conditions are simply the $\nu=0$
degenerations of the $q$-Hahn boundary conditions \eqref{backward_two_body}.

The (left) eigenfunctions of the $q$-Boson particle system
are the $\nu=0$ degenerations of the $q$-Hahn eigenfunctions
$\Psil_{\z}(\n)$ \eqref{Psil}:
\begin{align}\label{q_Boson_eigenfunctions}
	\Psi^{\ell,\text{$q$-Boson}}_{\z}(\n)
	=
	\sum_{\sigma\in S(k)}\prod_{1\le B<A\le k}
	\frac{z_{\sigma(A)}-qz_{\sigma(B)}}
	{z_{\sigma(A)}-z_{\sigma(B)}}\prod_{j=1}^{k}
	\left(1-z_{\sigma(j)}\right)^{-n_j},
	\qquad\n\in\Weyl{k}.
\end{align}
They depend on a parameter $q\in(0,1]$
and on our discrete spatial variables $\n$.
These eigenfunctions can be constructed by applying the coordinate Bethe ansatz
to the $q$-Boson generator, see \cite[\S2.3]{BorodinCorwinPetrovSasamoto2013}.
Note that the system dual to the discrete-time geometric $q$-TASEP
also possesses the same eigenfunctions \eqref{q_Boson_eigenfunctions}.

The Plancherel spectral theory corresponding to the
$q$-Boson eigenfunctions was developed in \cite{BorodinCorwinPetrovSasamoto2013}
using different ideas, see Remarks \ref{rmk:difference_in_proofs_of_Plancherel}
and \ref{rmk:difference_with_BCPS_spectral} for details.
On the other hand, main results of \cite{BorodinCorwinPetrovSasamoto2013} can be
obtained as a rather straightforward $\nu=0$
degeneration of the corresponding results of the present paper.

% subsection boson_eigenfunctions (end)

\subsection{$q$-Boson $\to$ Semi-discrete delta Bose gas} % (fold)
\label{ssub:boson_to_semi_discrete_directed_polymer}

In a scaling limit,
the $q$-Boson generator converges to a certain
semi-discrete delta Bose gas
operator
governing evolution of the moments of
the
semi-discrete stochastic heat equation.
The latter equation is satisfied by the partition function
of the O'Connell--Yor semi-discrete directed polymer
introduced in \cite{OConnellYor2001}
(see also \cite{Oconnell2009_Toda}).
Details on passing from the $q$-TASEP level
to the semi-discrete polymer level are
discussed in \cite{BorodinCorwin2011Macdonald},
\cite{BorodinCorwinSasamoto2012},
\cite[\S6.3]{BorodinCorwinPetrovSasamoto2013},
see also
\cite{BorodinPetrov2013Lect} for a brief general account.

The semi-discrete delta Bose gas operator (related to the semi-discrete stochastic heat equation via a duality)
is equivalent to the following
free operator
\begin{align*}
	(\mathcal{L}^{\text{S--D}}u)(\n)=
	\sum_{j=1}^{k}(\nabla_{j}u)(\n)
\end{align*}
(with the same $\nabla_i$ as in \S \ref{sub:boson_eigenfunctions})
subject to two-body boundary conditions
\begin{align*}
	\left.\big(\nabla_{i}-\nabla_{i+1}-c
	\big)u\right\vert_{\n\in\Z^{k}\colon n_i=n_{i+1}}=0
\end{align*}
for any $1\le i\le k-1$. Here $c\in\R\setminus\{0\}$
is the so-called coupling constant which is a parameter of the model.

The coordinate Bethe ansatz yields the following eigenfunctions
at the semi-discrete polymer level (they depend on $c$ and on discrete
spatial variables):
\begin{align}\label{S_D_eigenfunctions}
	\Psi^{\ell,\text{S--D}}_{\z}(\n)=
	\sum_{\sigma\in S(k)}\prod_{B<A}
	\frac{z_{\sigma(A)}-z_{\sigma(B)}-c}
	{z_{\sigma(A)}-z_{\sigma(B)}}\prod_{j=1}^{k}
	z_{\sigma(j)}^{-n_j},\qquad \n\in\Weyl{k}.
\end{align}
The above eigenfunctions appeared earlier in \cite{Takeyama2012SD}.
Plancherel theory
and completeness of the Bethe ansatz
for the above setting
is discussed in \cite[\S6.3]{BorodinCorwinPetrovSasamoto2013}.

One can readily check the following convergence of
eigenfunctions \eqref{q_Boson_eigenfunctions} to \eqref{S_D_eigenfunctions}:
\begin{align*}
	\lim_{\epsilon\to0}\;
	(\epsilon/c)^{n_1+\ldots+n_k}
	\left.\Big(\Psi^{\ell,\text{$q$-Boson}}_{\z}(\n)\right
	\vert_{q=e^{-\epsilon},\; z_j=e^{-w_j\epsilon/c}}\Big)
	=\Psi^{\ell,\text{S--D}}_{\w}(\n).
\end{align*}

% subsubsection boson_to_semi_discrete_directed_polymer (end)

\subsection{Semi-discrete delta Bose gas $\to$ Continuous delta Bose gas} % (fold)
\label{ssub:_semi_discrete_directed_polymer_to_kpz}

A further degeneration of the semi-discrete delta Bose gas
described in \S \ref{ssub:boson_to_semi_discrete_directed_polymer}
takes us to the level of continuous delta Bose gas
\cite{BorodinCorwin2011Macdonald},
\cite{BorodinCorwinFerrari2012},
\cite{BorodinCorwinSasamoto2012},
\cite{MorenoQuastelRemenik2014OCYKPZ}.
The latter system is also referred to as the
Lieb--Liniger system, and it is dual
\cite{BertiniCancrini1995}, \cite[\S6]{BorodinCorwin2011Macdonald}
to the
stochastic heat equation, or, via the Hopf-Cole transform, to the KPZ
equation.

It is standard in physics literature
(e.g., see \cite{DotsenkoTW_onedim}, \cite{Calabrese_LeDoussal_Rosso})
to reduce the continuous delta Bose gas operator to the following
free operator
\begin{align}\label{KPZ_free_operator}
	(\mathcal{L}^{\text{KPZ}}u)(x_1,\ldots,x_k)=
	\frac{1}{2}\sum_{j=1}^{k}\frac{\partial^{2}}{\partial x_j^{2}}
	u(x_1,\ldots,x_k),\qquad
	x_1\le \ldots\le x_k,\quad x_i\in\R,
\end{align}
subject to two-body boundary conditions
(here $\tilde c\in\R\setminus\{0\}$
is the coupling constant)
\begin{align}\label{KPZ_BC}
	\left.\left(\frac{\partial}{\partial x_i}
	-\frac{\partial}{\partial x_{i+1}}-\tilde c
	\right)u\right\vert_{\x\in\R^{k}\colon x_i+0=x_{i+1}}=0
\end{align}
for any $1\le i\le k-1$ (where $x_i+0=x_{i+1}$
means the limit as $x_i\to x_{i+1}$ from below).
To our knowledge, this reduction has not been rigorously justified.
The system \eqref{KPZ_free_operator}--\eqref{KPZ_BC}
is referred to as the Yang system (of type A) in \cite{HeckmannOpdam1997},
and dates back to \cite{YangSystem1967}, \cite{YangSystem1968}.
See also \cite{Gaudin1971}, \cite{HeckmannOpdam1997}
for other root systems.

Applying the coordinate Bethe ansatz to
\eqref{KPZ_free_operator}--\eqref{KPZ_BC},
one constructs the eigenfunctions
(note that they depend on $\tilde c$ and on continuous spatial variables)
\begin{align}\label{KPZ_eigenfunctions}
	\Psi^{\ell,\text{KPZ}}_{\z}(\x)=
	\sum_{\sigma\in S(k)}\prod_{B<A}
	\frac{z_{\sigma(A)}-z_{\sigma(B)}-\tilde c}
	{z_{\sigma(A)}-z_{\sigma(B)}}\prod_{j=1}^{k}
	e^{x_jz_{\sigma(j)}}
	,\qquad
	x_1\le \ldots\le x_k,\quad x_i\in\R.
\end{align}
These eigenfunctions were first written down
by Lieb and Liniger \cite{LiebLiniger}.
The corresponding spatial Plancherel formula
was proven in various forms
in
\cite{Oxford1979},
\cite{HeckmannOpdam1997},
\cite{ProlhacSpohn2011}.
There are also certain accounts of spectral biorthogonality results
for the eigenfunctions \eqref{KPZ_eigenfunctions}
in the physics literature, e.g., see
\cite{CalabreseCaux2007},
\cite{DotsenkoTW_onedim}, \cite{Dotsenko2010_universal}.
Details on the spatial Plancherel formula in the
language similar to the rest of the present paper can
be found in \cite[\S7.1]{BorodinCorwinPetrovSasamoto2013}.
See also
\cite[Remark 6.2.5]{BorodinCorwin2011Macdonald}
for more historical background (in particular, concerning
completeness of the Bethe ansatz).

One can readily check the following convergence of
eigenfunctions \eqref{S_D_eigenfunctions} to
\eqref{KPZ_eigenfunctions}:
\begin{align*}
	\lim_{\epsilon\to0}\;
	g^{\sum_{j=1}^{k}[-x_j\frac g \epsilon]}
	\left.\Big(\Psi^{\ell,\text{S--D}}_{\z}
	\big(
	[{-x_k}\tfrac{g}{\epsilon}],\ldots,[-x_1 \tfrac{g}{\epsilon}]\big)\right
	\vert_{c=-\epsilon\tilde c,\; z_j=g+w_j \epsilon}\Big)
	=
	\Psi^{\ell,\text{KPZ}}_{\w}(x_1,\ldots,x_k).
\end{align*}
Here $x_i\in\R$ are as above, and $g>0$
is an arbitrary fixed constant.
The notation $[\cdots]$ means integer part.
Note that
the rescaling of the spectral variables is performed around
a point $g\ne0$, i.e., not around a singularity of the multiplicative
terms $z_{\sigma(j)}^{-n_j}$ in \eqref{S_D_eigenfunctions}.
This situation is different from the
rescaling of the spectral variables in
\S \ref{ssub:boson_to_semi_discrete_directed_polymer} and
in \S \ref{ssub:_q_boson_to_van_diejen_s_delta_bose_gas} below.

% subsubsection _semi_discrete_directed_polymer_to_kpz (end)

\subsection{$q$-Boson $\to$ Van Diejen's delta Bose gas} % (fold)
\label{ssub:_q_boson_to_van_diejen_s_delta_bose_gas}

Another scaling
\cite[\S 6.2]{BorodinCorwinPetrovSasamoto2013}
of the $q$-Boson generator (\S \ref{sub:boson_eigenfunctions})
takes us to a semi-discrete delta Bose gas
studied by Van Diejen
\cite{vanDiejen2004HL} (other models
of a similar nature are discussed in \cite{VanDiejen2013_2}, \cite{VanDiejen2013_1}).
The limiting operator is equivalent to the following
free operator subject to two-body boundary conditions
(we use notation of \S \ref{sub:_q_hahn_boson_process})
\begin{align*}
	(\mathcal{L}^{\text{VD}}u)(\n)=
	\sum_{j=1}^{k}u(\n_{i}^{-}),\qquad
	\left.\big(u(\n_i^{-})-qu(\n_{i+1}^{-})
	\big)\right\vert_{\n\in\Z^{k}\colon n_i=n_{i+1}}=0
\end{align*}
for any $1\le i\le k-1$.

Applying the coordinate Bethe ansatz, one arrives at the following eigenfunctions:
\begin{align}\label{VD_eigenfunctions}
	\Psi^{\ell,\text{VD}}_{\z}(\n)=
	\sum_{\sigma\in S(k)}\prod_{B<A}
	\frac{z_{\sigma(A)}-qz_{\sigma(B)}}
	{z_{\sigma(A)}-z_{\sigma(B)}}\prod_{j=1}^{k}
	z_{\sigma(j)}^{-n_j},\qquad
	\n\in\Weyl{k},
\end{align}
depending on $q\in(0,1]$
and on discrete spatial variables.
One can readily identify these eigenfunctions with the
Hall-Littlewood symmetric polynomials \cite[Ch. III]{Macdonald1995}.
Spatial Plancherel formula corresponding to the eigenfunctions \eqref{VD_eigenfunctions}
(and thus the completeness
of the Bethe ansatz)
was obtained in \cite{vanDiejen2004HL},
it is essentially equivalent to a similar statement
for Macdonald's spherical functions \cite{Macdonald1971padic},
\cite{Macdonald2000orthpoly}.
See also
\cite[\S 6.2]{BorodinCorwinPetrovSasamoto2013}
for the spectral biorthogonality statement which is implied by
the Cauchy identity for the Hall-Littlewood polynomials
\cite[Ch. III]{Macdonald1995}.

One can readily check the following
convergence of the $q$-Boson eigenfunctions
\eqref{q_Boson_eigenfunctions} to those of Van Diejen's model
\eqref{VD_eigenfunctions}:
\begin{align*}
	\lim_{\epsilon\to0}\;
	(-\epsilon)^{-n_1-\ldots-n_k}
	\left.\Big(\Psi^{\ell,\text{$q$-Boson}}_{\z}(\n)\right
	\vert_{z_j=w_j/\epsilon}\Big)
	=\Psi^{\ell,\text{VD}}_{\w}(\n).
\end{align*}

% subsubsection _q_boson_to_van_diejen_s_delta_bose_gas (end)

\subsection{Van Diejen's delta Bose gas $\to$ Continuous delta Bose gas} % (fold)
\label{ssub:van_diejen_s_delta_bose_gas_to_kpz}

A rigorous treatment
of convergence
of Van Diejen's semi-discrete delta Bose gas
(\S \ref{ssub:_q_boson_to_van_diejen_s_delta_bose_gas})
to the
continuous delta Bose gas (\S \ref{ssub:_semi_discrete_directed_polymer_to_kpz})
is performed in
\cite{vanDiejen2004HL}.

Let us record the convergence of the corresponding eigenfunctions,
\eqref{VD_eigenfunctions} to \eqref{KPZ_eigenfunctions}:
\begin{align*}
	\lim_{\epsilon\to0}\;
	\left.\Big(\Psi^{\ell,\text{VD}}_{\z}
	\big(
	[{x_1}\tfrac{\tilde c}{\epsilon}],\ldots,[x_k \tfrac{\tilde c}{\epsilon}]
	\big)
	\right
	\vert_{q=e^{-\epsilon},\; z_j=e^{-w_j\epsilon/\tilde c}}\Big)
	=\Psi^{\ell,\text{KPZ}}_{\w}(x_1,\ldots,x_k),
\end{align*}
where $x_i\in\R$ and $x_1\le \ldots\le x_k$.
We assumed that $\tilde c>0$,
but a similar statement can be readily written down for $\tilde c<0$.
Note that here (as in \S \ref{ssub:_semi_discrete_directed_polymer_to_kpz})
the spectral variables are rescaled around $1$, which
is not a singularity of the factors
$z_{\sigma(j)}^{-n_j}$ in \eqref{VD_eigenfunctions}.
One could also insert an arbitrary fixed constant $g$
around which the spectral variables are rescaled,
but we will not write this down.

% subsubsection van_diejen_s_delta_bose_gas_to_kpz (end)

\subsection{ASEP / XXZ $\to$ Continuous delta Bose gas} % (fold)
\label{ssub:asep_and_xxz_to_kpz_}

Convergence of the ASEP
to the KPZ equation (equivalently, to the logarithm of the stochastic heat equation)
was established in various senses in
\cite{bertiniGiacomin1997stochastic},
\cite{AmirCorwinQuastel2011}.
See also
\cite{ImamuraSasamotoSpohn2011} for a brief account
related to the corresponding ``Boson side''.

The ASEP eigenfunctions \eqref{PsiASEP_bwd} converge to
the continuous delta Bose ones \eqref{KPZ_eigenfunctions}
in the following way. Let $g\in\R$
be an arbitrary fixed constant such that $\frac{\tilde c}{g(1-g)}>0$,
$\tau=e^{-\sqrt \epsilon}$, $z_j=(g^{-1}-1)e^{w_j\sqrt \epsilon/\tilde c}$,
and
\begin{align*}
	x_j=\left[\frac{y_j}\epsilon\cdot\frac{\tilde c}{g(1-g)}\right],\qquad
	j=1,\ldots,k.
\end{align*}
Here $\x=(x_1<\ldots<x_k)$, $x_i\in\Z$ are the spatial variables for the ASEP,
and $\y=(y_1\le \ldots\le y_k)$, $y_i\in\R$
are the spatial variables for the continuous delta Bose gas.
Then we have
\begin{align*}
	\tau^{x_j(1-g)}\left(\frac{1+z_{\sigma(j)}}{1+z_{\sigma(j)}/\tau}\right)^{-x_j}
	\sim e^{\frac{1}{2} y_j \tilde c}e^{y_jw_{\sigma(j)}},
	\qquad
	j=1,\ldots,k.
\end{align*}
Therefore, under the scaling just described, we have
\begin{align}\label{ASEP_to_KPZ}
	\lim_{\epsilon\to0}\;
	\tau^{(1-g)(x_1+\ldots+x_k)}\Psiasep_{\z}(\x)=\
	e^{\frac 12\tilde c(y_1+\ldots+y_k)}\Psi^{\ell,\text{KPZ}}_{\w}(\y).
\end{align}
Note that as in \S \ref{ssub:_semi_discrete_directed_polymer_to_kpz} and \S \ref{ssub:van_diejen_s_delta_bose_gas_to_kpz} above, the spectral variables are rescaled away from the singularity.

The XXZ eigenfunctions are
related to the ASEP ones in a straightforward way (cf. Remark
\ref{rmk:ASEP_XXZ_relation}), and their convergence
to the eigenfunctions at the continuous delta Bose level
is very similar to \eqref{ASEP_to_KPZ}. We will not write down this statement.

% subsubsection asep_and_xxz_to_kpz_ (end)

% subsection further_degenerations_of_eigenfunctions (end)

\end{document}